\documentclass[11pt]{article}
\pdfoutput=1

\usepackage[margin=1.25in]{geometry}
\usepackage[utf8]{inputenc} 
\usepackage{libertine} 
\usepackage{amsmath,amssymb}
\usepackage{amsthm}
\usepackage{mathtools}
\usepackage{stmaryrd}
\usepackage{thmtools,thm-restate}
\usepackage{xparse}
\usepackage[table,dvipsnames]{xcolor}
\usepackage{hyperref}
\usepackage[capitalize,nameinlink,noabbrev]{cleveref}
\usepackage{crossreftools}
\usepackage{xspace}
\usepackage{comment}
\usepackage{graphicx}
\usepackage{caption}
\usepackage{enumerate}
\usepackage{enumitem}
\usepackage{makecell}
\usepackage{extarrows}
\usepackage{float}
\usepackage{todonotes}
\usepackage{setspace}
\usepackage[natbib,backend=biber]{biblatex}

\DeclareDelimFormat{multicitedelim}{\addcomma\space} 
\AtEveryBibitem{
  \clearfield{url}
  \clearfield{eprint}
}
\DeclareFieldFormat{doi}{
  DOI:\space
  \addcolon\space\href{https://doi.org/#1}{\nolinkurl{#1}}%
}
\definecolor{tealblue}{HTML}{0052A3}
\definecolor{asparagus}{rgb}{0.53, 0.66, 0.42}
\definecolor{ao(english)}{rgb}{0.0, 0.5, 0.0}

\graphicspath{ {pictures} }

\usepackage[xcolor,hyperref,notion,electronic,makeidx]{knowledge}
\knowledgestyle{notion}{color=tealblue}
\knowledgestyle{intro notion}{color=tealblue,italic}
\knowledge{notion}
    | Ordered
    | Ordered PCSP
    | Ordered PCSPs
\knowledge{notion}  
    | Multiple choice hardness condition
    | multiple choice hardness condition
    | Multiple choice condition
    | multiple choice condition
\knowledge{notion}
    | Linear Threshold@PCSP
    | Linear Threshold PCSP
    | Linear Threshold PCSPs
\knowledge{notion}
    | Linear threshold function
    | Linear threshold functions
    | Linear threshold@func
    | Linear thresholds@func
    | linear threshold function
    | linear threshold@func
    | linear threshold functions
    | linear thresholds@func
\knowledge{notion}
    | Unate@PCSP
    | Unate PCSP
    | Unate PCSPs
\knowledge{notion}
    | Unate@func
    | unate@func
    | unate
\knowledge{notion}
    | Polynomial Threshold@PCSP
    | Polynomial Threshold PCSP
    | Polynomial Threshold PCSPs

\knowledge{notion}
    | universe
    | Universe
\knowledge{notion}
    | relation
    | relations
\knowledge{notion}
    | similar
\knowledge{notion}
    | Constraint Satisfaction Problem
    | CSP
\knowledge{notion}
    | structure
    | structures
    | relational structure
    | relational structures
    | relational templates
\knowledge{notion}
    | homomorphism
    | homomorphically
\knowledge{notion}
    | CSP template
    | template@CSP
    | templates@CSP
\knowledge{notion}
    | Promise Constraint Satisfaction Problem
    | PCSP
\knowledge{notion}
    | PCSP template
    | template@PCSP
    | templates@PCSP
\knowledge{notion}
    | polymorphism
    | polymorphisms
    | compatible with
    | compatibility with
\knowledge{notion}
    | minion
    | minions
    | (function) minion
\knowledge{notion}
    | negated
    | negation
\knowledge{notion}
    | minor
    | minors
    | Minor
    | Minors
\knowledge{notion}
    | minor map
    | Minor map
    | minor maps
    | Minor maps
\knowledge{notion}
    | 2-to-1
    | 2-to-1 map
    | 2-to-1 minor map
    | 2-to-1 minor
    | 2-to-1 minors
    | 2-to-1 minor maps
\knowledge{notion}
    | projection
    | projections
    | dictator
\knowledge{notion}
    | idempotent
    | idempotency
\knowledge{notion}
    | essential
    | non-essential
    | essentiality
\knowledge{notion}
    | minion generated by
\knowledge{notion}
    | folded
    | foldedness
\knowledge{notion}
    | increasing@func
    | increasingness@func
\knowledge{notion}
    | increasing
    | decreasing
\knowledge{notion}
    | influence
    | Influence
\knowledge{notion}
    | Total influence
    | total influence
    | average sensitivity
\knowledge{notion}
    | Margulis-Russo formula
\knowledge{notion}
    | Sharp threshold condition
    | sharp threshold condition
\knowledge{notion}
    | Random 2-to-1 hardness condition
    | random 2-to-1 hardness condition
    | random 2-to-1 condition
    | Random 2-to-1 condition
\knowledge{notion}
    | satisfy@MC
    | satisfied@MC
    | satisfies@MC
    | satisfiable@MC
    | assignment@MC
\knowledge{notion}
    | trivial@MC
    | triviality@MC

\knowledge{notion}
    | Biased distribution
    | biased distribution
    | biased
    | Biased
    | biased distributions
    | Biased distributions
\knowledge{notion}
    | Fourier characters
    | Fourier character
\knowledge{notion}
    | Fourier decomposition
    | Fourier expansion
\knowledge{notion}
    | Fourier coefficients
\knowledge{notion}
    | Parseval-Plancherel identity
    | Parseval-Plancherel
    | Parseval identity
    | Plancherel identity
    | Parseval
    | Plancherel

\knowledge{notion}
    | junta
\knowledge{notion}
    | Friedgut's Theorem

\knowledge{notion}
    | Shapley distribution
    | Shapley values
    | Shapley value
    | Shapley
\knowledge{notion}
    | pull-back distribution
    | Pull-back distribution
    | pull-back
    | Pull-back
\knowledge{notion}
    | symmetry
    | symmetric
    | Symmetry
    | Symmetric
\knowledge{notion}
    | flat
    | flatness
    | Flat
    | Flatness
\knowledge{notion}
    | smoothness
    | smooth
    | Smoothness
    | Smooth
\knowledge{notion}
    | consistency
    | consistent
    | Consistency
    | Consistent
\knowledge{notion}
    | pull-back compatibility
    | pull-back compatible
    | Pull-back compatibility
    | Pull-back compatible
\knowledge{notion}
    | reasonable
    | reasonability
    | Reasonable
    | Reasonability
    | reasonable distributions
    | Reasonable distributions
\knowledge{notion}
    | Influence Preservation Lemma
\knowledge{notion}
    | Cube Influence Preservation
\knowledge{notion}
    | Shapley Influence Preservation

\knowledge{notion}
    | Polynomial threshold function
    | Polynomial threshold functions
    | polynomial threshold function
    | polynomial threshold functions
    | polynomial threshold
    | Polynomial Threshold
    | polynomial thresholds
    | Polynomial thresholds
\knowledge{notion}
    | low-degree
    | Low-degree concentration
    | low-degree concentration
    | low-degree concentrated

\knowledge{notion}
    | Noise operator
    | noise
    | noise operator
    | Noise
\knowledge{notion}
    | noisy distribution
    | Noisy distribution
\knowledge{notion}
    | Noise sensitivity
    | noise sensitivity
    
\knowledge{notion}
    | Reasonable parameters
    | reasonable parameters

\knowledge{notion}
    | Sign-representation
    | sign-representation
\knowledge{notion}
    | Normalized@PTF
    | normalized@PTF
    | normalization@PTF
    | Normalization@PTF
\knowledge{notion}  
    | Regularity@PTF
    | regularity@PTF
    | regular@PTF
    | Regular@PTF
\knowledge{notion}
    | Invariance Principle
\knowledge{notion}
    | Carbery-Wright anti-concentration
    | Carbery-Wright anti-concentration bound
    | Carbery-Wright bound
\knowledge{notion}  
    | Hypercontractivity
    | hypercontractivity
    | (2,4)-hypercontractivity
\knowledge{notion}
    | Two-function hypercontractivity
    | two-function hypercontractivity
\knowledge{notion}  
    | Paley-Zygmunt inequality
\knowledge{notion}
    | restriction@PTF
    | Restriction@PTF
    | Restrictions@PTF
    | restrictions@PTF
\knowledge{notion}
    | Critical index
    | critical index
\knowledge{notion}
    | Determined
    | determined

\knowledge{notion}
    | Label Cover
    | label cover
\knowledge{notion}
    | Gap Label Cover
    | gap label cover
\knowledge{notion}
    | Richness
    | richness
\knowledge{notion}
    | Rich 2-to-1 Conjecture
    | rich 2-to-1 conjecture
\knowledge{notion}
    | Rich 2-to-1 Gap Label Cover
\knowledge{notion}
    | Rich 2-to-1 Label Cover
\knowledge{notion}
    | Minor condition
    | minor condition
\knowledge{notion}
    | Promise Minor Condition
\knowledge{notion}
    | trivial
    | Trivial
\knowledge{notion}
    | satisfied
    | Satisfied
\knowledge{notion}
    | symbol interpretation
    | interpretation
    | Symbol interpretation
    | Interpretation
\knowledge{notion}
    | constraints
    | Constraints
\knowledge{notion}
    | labeling
    | Labeling
    | labelings
    | Labelings
\knowledge{notion}
    | identities
    | Identities
\knowledge{notion}
    | completeness
    | Completeness
\knowledge{notion}  
    | soundness
    | Soundness

\hypersetup{
  colorlinks=true,
  linkcolor=ao(english),
  citecolor=magenta
}

\DeclareUnicodeCharacter{0301}{\'a}

\newcommand{\PTF}{\ensuremath{\mathsf{PTF}}}

\renewcommand{\P}{\ensuremath{\mathsf{P}}}
\newcommand{\NP}{\ensuremath{\mathsf{NP}}}
\newcommand{\CSP}{\ensuremath{\mathsf{CSP}}}
\newcommand{\PCSP}{\ensuremath{\mathsf{PCSP}}}

\DeclarePairedDelimiter\floor{\lfloor}{\rfloor}
\DeclarePairedDelimiter\ceil{\lceil}{\rceil}
\DeclarePairedDelimiter\norm{\lVert}{\rVert}
\DeclarePairedDelimiter\ham{\lvert}{\rvert}

\DeclareMathOperator{\ar}{ar}
\DeclareMathOperator{\Pol}{Pol}

\DeclareMathOperator{\Sel}{\mathsf{Sel}}

\knowledgenewrobustcmd{\NAE}[1]{\cmdkl{\text{NAE}_{#1}}}
\newrobustcmd{\kinl}[2]{#1\text{in}#2}
\newcommand{\minion}[1]{\mathcal{#1}}

\knowledgenewrobustcmd{\AT}{\cmdkl{\mathrm{AT}}}
\knowledgenewrobustcmd{\MAX}{\cmdkl{\mathrm{MAX}}}
\knowledgenewrobustcmd{\MIN}{\cmdkl{\mathrm{MIN}}}
\knowledgenewrobustcmd{\THR}[1]{\cmdkl{\mathrm{THR}_{#1}}}
\knowledgenewrobustcmd{\XOR}{\cmdkl{\mathrm{XOR}}}

\newcommand{\tuple}[1]{\overline{#1}}
\newcommand{\relstr}[1]{{\mathbb{#1}}}

\newcommand{\A}{\relstr A}
\newcommand{\B}{\relstr B}
\newcommand{\Bool}{\{0,1\}}
\newcommand{\richconj}{Rich 2-to-1 Conjecture}

\newrobustcmd\gen[2]{\withkl{\kl[\gen{#1}]}{\cmdkl{\mathrm{#1}^{(#2)}}}}
\knowledge{\gen{max}}{notion}
\knowledge{\gen{min}}{notion}
\knowledge{\gen{at}}{notion}
\knowledge{\gen{xor}}{notion}
\knowledge{\gen{fan}}{notion}
\knowledge{\gen{maj}}{notion}

\knowledgenewrobustcmd{\genthr}[2]{\cmdkl{\mathrm{thr}_{#1}^{(#2)}}}

\knowledgenewrobustcmd{\gentribes}[2]{\cmdkl{\mathrm{tribes}^{(#1,#2)}}}

\newcommand{\GapRich}{\ensuremath{\mathsf{Gap\text{-}Rich\text{-}2\text{-}to\text{-}1}}}
\newcommand{\Gap}{\ensuremath{\mathsf{Gap\text{-}Label\text{-}Cover}}}

\declaretheoremstyle{mystyle}

[style=plain,numberwithin=section,shaded={bgcolor=green!10},name=Theorem]

\declaretheorem{theorem}
[style=plain,numberwithin=section]

\declaretheorem{lemma,corollary,proposition}
[style=plain,sharenumber=theorem]

\declaretheorem{claim,observation,fact}
[style=mystyle,sharenumber=theorem]

\declaretheorem{definition}
[style=definition,numbered=no]

\declaretheorem{example}
[style=definition,sharenumber=theorem]

[style=definition,numbered=no]

\declaretheorem{remark}
[style=remark,numbered=no]

\declaretheorem{conjecture,question}
[style=plain]

\newcounter{restatecounter}
\makeatletter
\NewDocumentEnvironment{restate}{ m m o }{
  \begingroup
    \def\restate@env{#1}
    \setcounter{restatecounter}{\value{\restate@env}}
    \expandafter\renewcommand\csname the#1\endcsname{\ref{#2}}
    \IfValueTF{#3}
      { \begin{\restate@env}[#3] }
      { \begin{\restate@env}     }
}{
    \end{\restate@env}
    \setcounter{\restate@env}{\value{restatecounter}}
  \endgroup
}
\makeatother

\definecolor{Empty}{gray}{0.85}
\definecolor{Yellow}{rgb}{0.8,0.6,0.2}
\definecolor{Blue}{rgb}{0.2,0.6,0.8}
\definecolor{Neg}{rgb}{0.85,0.2,0.2}
\definecolor{Pos}{rgb}{0.2,0.8,0.2}
\definecolor{Emph}{rgb}{0.6,0.2,0.4}

\newcolumntype{E}{>{\columncolor{Empty}}c}
\newcolumntype{Y}{>{\columncolor{Yellow!80}}c}
\newcolumntype{B}{>{\columncolor{Blue!80}}c}
\newcolumntype{N}{>{\columncolor{Neg!80}}c}
\newcolumntype{P}{>{\columncolor{Pos!80}}c}

\newcommand{\relation}[2]{
\begin{tabular}{#1}
    #2
\end{tabular}
}

\let\emptyset\varnothing

\DeclareMathOperator{\distOp}{dist}
\knowledgenewrobustcmd{\dist}[2]{\cmdkl{\distOp_{\infty}(#1,#2)}}

\newcommand{\minor}[2]{{#1}^{#2}}

\DeclareMathOperator{\domOP}{dom}

\knowledgenewrobustcmd{\domainup}[1]{\cmdkl{\domOP^{\uparrow}(#1)}}
\knowledgenewrobustcmd{\domaindown}[1]{\cmdkl{\domOP^{\downarrow}(#1)}}

\newcommand{\ip}[2]{\left\langle #1, #2 \right\rangle}
\newcommand{\sgn}{\mathsf{sgn}}

\NewDocumentCommand{\cube}{>{\SplitArgument{1}{,}}m}{%
  \cubeaux#1%
}
\NewDocumentCommand{\cubeaux}{mm}{%
  \IfNoValueTF{#2}{\ensuremath{\mu_{#1}}}{\ensuremath{\mu_{#1}^{#2}}}%
}

\NewDocumentCommand{\chardist}{>{\SplitArgument{1}{,}}m}{%
  \chardistaux#1%
}
\NewDocumentCommand{\chardistaux}{mm}{%
  \IfNoValueTF{#2}{\ensuremath{X_{#1}}}{\ensuremath{X_{#1}^{#2}}}%
}

\newcommand{\shap}[1][]{\Phi_{#1}}

\NewDocumentCommand{\pullback}{>{\SplitArgument{1}{,}}m}{%
  \pullbackaux#1%
}
\NewDocumentCommand{\pullbackaux}{mm}{%
  \IfNoValueTF{#2}{\ensuremath{\nu_{#1}}}{\ensuremath{\nu_{#1}^{(#2)}}}%
}

\newcommand{\gpullback}[1]{\ensuremath{\mathbf{PB}^{#1}}}

\newcommand{\rest}[3]{\ensuremath{#1^{#2 \to #3}}}

\ExplSyntaxOn

\DeclareMathOperator*{\opE}{\mathbb E}
\NewDocumentCommand{\E}{o m}
 {
  \ensuremath{
    \IfNoValueTF{#1}
      {\opE\left[#2\right]}
      {{\opE\sb{#1}}{\left[#2\right]}}
  }
 }

\newcommand{\opNS}{\mathbf{NS}}
\NewDocumentCommand{\NS}{o >{\SplitArgument{1}{,}}m}
 {
   \__NS:nnn { #1 } #2
 }
\cs_new:Nn \__NS:nnn
 {
   \ensuremath{
     \IfNoValueTF{#1}
        {\opNS\sb{#3}\left[#2\right]}
        {\opNS\sp{#1}\sb{#3}\left[#2\right]}
   }
 }

\newcommand{\opTN}{\mathbf{T}}
\NewDocumentCommand{\TN}{o >{\SplitArgument{1}{,}}m}
 {
   \__TN:nnn { #1 } #2
 }
\cs_new:Nn \__TN:nnn
 {
   \ensuremath{
     \IfNoValueTF{#1}
        {\opTN\sb{#3}\left[#2\right]}
        {\opTN\sp{#1}\sb{#3}\left[#2\right]}
   }
 }

\newcommand{\opN}{\mathbf{N}}
\NewDocumentCommand{\N}{o >{\SplitArgument{1}{,}}m}
 {
   \__N:nnn { #1 } #2
 }
\cs_new:Nn \__N:nnn
 {
   \ensuremath{
     \IfNoValueTF{#1}
        {\opN\sb{#3}(#2)}
        {\opN\sp{#1}\sb{#3}(#2)}
   }
 }

\newcommand{\opInf}{\mathbf{Inf}}
\NewDocumentCommand{\Inf}{o >{\SplitArgument{1}{,}}m}
 {
   \__Inf:nnn { #1 } #2
 }
\cs_new:Nn \__Inf:nnn
 {
   \ensuremath{
     \IfNoValueTF{#1}
        {{\opInf\sb{#3}}{\left[#2\right]}}
        {{\opInf\sp{#1}\sb{#3}}{\left[#2\right]}}
   }
 }
 
\newcommand{\opI}{\mathbf{I}}
\NewDocumentCommand{\I}{o >{\SplitArgument{1}{,}}m}
 {
   \__I:nnn { #1 } #2
 }
 \cs_new:Nn \__I:nnn
 {
   \ensuremath{
    \opI\sp{\IfNoValueTF{#1}{}{#1}}\sb{\IfNoValueTF{#3}{}{#3}}{\left[#2\right]}
   }
 }

\ExplSyntaxOff

\newcommand{\curve}[2]{C_{#1}[#2]}

\begin{document}

\title{Boolean PCSPs through the lens of Fourier Analysis}

\author{
Demian Banakh \thanks{Faculty of Mathematics and Computer Science, and Doctoral School of Exact and Natural Sciences; Jagiellonian University, Krak{\'o}w, Poland. Supported by the National Science Centre, Poland under the Weave-UNISONO call in the Weave programme 2021/03/Y/ST6/00171.} \\ {\footnotesize demian.banakh@doctoral.uj.edu.pl}
\and
Katzper Michno \thanks{Faculty of Mathematics, Informatics and Mechanics, and Doctoral School of Exact and Natural Sciences; University of Warsaw, Poland. This work was carried out while the author was supported by the National Science Centre, Poland under the Weave-UNISONO call in the Weave programme 2021/03/Y/ST6/00171.} \\ {\footnotesize katzper.michno@gmail.com}
}

\date{April 24, 2026}

\maketitle

\begin{abstract}
    We develop an analytical framework for Boolean Promise Constraint Satisfaction Problems (PCSPs) that studies polymorphisms through the notion of influence from Fourier analysis of Boolean functions. Extending the work of Brakensiek, Guruswami, and Sandeep [ICALP'21] on Ordered PCSPs, we identify two general phenomena in Boolean minions indicative of hardness or tractability: (1) preservation of coordinate influence under random 2-to-1 minors and (2) the presence of sharp thresholds. We demonstrate that these phenomena occur in broader settings than previously established, yielding new hardness/tractability results for minions consisting of unate or polynomial threshold functions.
\end{abstract}

\tableofcontents

\section{Introduction}\label{sec:intro}
A key question in theoretical computer science is why some computational problems can be solved in polynomial time while others remain hard.
The computational class \NP~consists of a vast variety of computational problems, and currently there is no unified framework that could, for every problem in \NP, explain the underlying mathematical structure that governs its tractability or hardness.
One subclass of \NP, for which such a framework does exist is the class of Constraint Satisfaction Problems (\CSP s).
In a \CSP, the input consists of a set of variables along with constraints on their values, and the objective is to find an evaluation of the variables satisfying all constraints.
The allowed constraints are usually restricted to a fixed set of relations, known as the \CSP\ \kl(CSP){template}, defined over a fixed finite domain, called the \kl{universe}.

\CSP s capture many classic computational problems, such as 3-SAT and graph coloring. For instance, the \CSP\ template for graph coloring consists of a single binary relation: disequality.
In 1978, Schaefer showed that, assuming $\P \neq \NP$, every template over a 2-element (Boolean) domain defines a \CSP\ which is either \NP-hard or solvable in polynomial time~\cite{schaefer1978}. Consequently, Boolean \CSP s contain no \NP-intermediate problems.
The next milestone was established by Hell and Ne\v{s}et\v{r}il, who showed that an analogous dichotomy is true for graph \CSP s, i.e. whose constraints are limited to a single binary and symmetric relation~\cite{hellnesetril}. Several years later, Feder and Vardi conjectured that that the same dichotomy holds for every finite \CSP\ template~\cite{federvardi}.
Progress on this conjecture was limited until the development of the \emph{algebraic approach to \CSP s}~\cite{jeavons1997, jeavons1998,BJK05}, which analyses \CSP\ complexity via algebraic closure properties of the template, particularly through multivariate generalizations of endomorphisms known as \emph{polymorphisms}. This inspired an active research programme, culminating in 2017 with Bulatov and Zhuk independently proving the \CSP\ Dichotomy Conjecture~\cite{zhuk2017proof, bulatov2017dichotomy}.

The \CSP\ Dichotomy Theorem tells us that \CSP s, which form a rather significant chunk of the class \NP, do not contain \NP-intermediate problems.
This motivates the question: can the tools developed for \CSP s be used to understand more general subclasses of \NP?
One extension of \CSP s, which has been gaining a lot of research attention in recent years, is the class of Promise \CSP s (\PCSP s) originally introduced in~\cite{austrin20172+sat}, which bridge the gap between decision and approximation problems by acting as ``qualitative'' approximation variant of \CSP s.

A \PCSP\ \kl(PCSP){template} is defined by two sets of constraints: a weak one and a strong one. Given an instance which is satisfiable with strong constraints, the task is to produce a solution that satisfies the weak constraints. A well-known example of a \PCSP\ is the \emph{Approximate Graph Coloring} problem~\cite{Guruswami_2000, Safra_2000, dinur2006approxcol,  GS2020agc, brakensiek2016approxcol, Bible}. In this problem, one is given a graph that is promised to be $k$-colourable and must colour it with $\ell$ colours, where $k \leq \ell$ are fixed parameters. Clearly, the problem is tractable if $k \leq 2$, but it is conjectured to be \NP-hard in all other cases. Despite substantial effort, this conjecture remains wide open, with only partial cases resolved; the complexity of the Approximate Graph Coloring is not known\footnote{Unless one assumes additional hardness hypotheses, such as the $d$-to-1 Conjecture~\cite{dinur2006approxcol, GS2020agc}.} even for $k = 3$ and $\ell = 6$.

\sloppy Another important example is $\PCSP(\mathsf{1in3, NAE})$ which, given a 3-uniform hypergraph promised to admit a red/blue coloring in which every hyperedge has exactly one blue vertex, asks for a red/blue coloring in which no hyperedge is monochromatic.
Although separately $\mathsf{1in3}$-$\mathsf{SAT}$ and $\mathsf{NAE}$-$\mathsf{SAT}$ are both \NP-hard problems, the aforementioned promise problem is solvable in polynomial time~\cite{brakensiek2017promise}.
Notably, it is also an example of a \emph{Boolean} \PCSP.

A crucial advance in the study of \PCSP s was the development of the algebraic approach in \cite{austrin20172+sat,brakensiek2017promise,Bible}.
Similarly to the algebraic theory of \CSP s, it focuses on polymorphisms. Given a \PCSP\ template $(\A, \B)$, by $
\Pol(\A, \B)$ we denote the set of its polymorphisms, which are multivariate homomorphisms from the strong relational structure $\A$ to the weak structure $\B$.
Most of the existing results on \PCSP s can be understood in terms of the algebraic structure of its template's polymorphisms. In broad terms, a \PCSP\ is usually shown to be tractable if its template admits infinitely many polymorphisms with no ``distinguished'' coordinates. Conversely, hardness is usually established when every polymorphism possesses a small set of ``important'' coordinates.
For example, Brakensiek et al.~\cite{blp+aip} discovered that a \PCSP\ is efficiently solvable if it has infinitely many \emph{symmetric} polymorphisms, that is, which are invariant under permutations of their arguments. On the other hand, the following \emph{multiple choice condition} has been widely utilized to obtain hardness results of multiple \PCSP s~\cite{austrin20172+sat, brakensiek2017promise, ficak2019symmetric, KO, brandts2021llc}.
\begin{definition}[\intro{Multiple choice hardness condition} \cite{barto2022babypcp}] \AP
    We say that $\Pol(\A, \B)$ satisfies the \emph{multiple choice condition} if, there exists $\mathcal C > 0$ and a mapping $\Sel$ that assigns to each $f \in \Pol(\A, \B)$ a subset of its coordinates such that
    \begin{enumerate}[leftmargin=*]
        \item $\lvert\Sel(f)\rvert \le \mathcal C$, and
        \item if $g$ is a \kl{minor}\footnote{We define minors formally later; in plain words, it is a function obtained from $f$ by identifying, permuting, and introducing dummy variables.} of $f$ given by a map $\pi$, then 
        \[\pi(\Sel(f)) \cap \Sel(g) \neq \emptyset. \]
    \end{enumerate}
\end{definition}

A complete complexity classification of all \PCSP s remains open. The history of \CSP s suggests that studying restricted subclasses of \PCSP\ can bring partial results and reveal the underlying mathematical structure that leads to tractability or hardness.
However, even analogues of (1) Hell-Ne\v{s}et\v{r}il's dichotomy for graph \CSP s and (2) Schaefer's Boolean \CSP\ dichotomy remain open in the broader setting of \PCSP s.

In this paper, we focus on the latter: Boolean \PCSP s.
Despite several partial results~\cite{austrin20172+sat, brandts2021pcsp1in3nae, brakensiek2017promise, ficak2019symmetric, brakensiek2021conditional, injhardness, austrin2025usefulness}, the complexity classification of all Boolean \PCSP s seems to currently be out of reach.
The most general result obtained so far is a complexity dichotomy for the so-called \kl{Ordered} Boolean \PCSP s (that is, \PCSP s that allow to impose inequalities between variables of the input instance) in \cite{brakensiek2021conditional}, conditioned on the recently proposed \richconj~\cite{onrich2to1}.
The unconditional results are limited and challenging.
Therefore, employing perfect-completeness extensions of Khot's Unique Games Conjecture~\cite{khot2002dto1} --- such as \richconj~\cite{onrich2to1} or the d-to-1 Conjecture~\cite{khot2002dto1} --- as the source of hardness offers a way to bypass the limitations of the available hardness tools and gain a deeper understanding of the underlying mathematical structure of hard Boolean \PCSP s. In this sense, \richconj\ may serve a role similar to that of the Unique Games Conjecture in many inapproximability results.

The aforementioned work of \cite{brakensiek2021conditional} on Boolean \kl{Ordered PCSPs} also pioneered\footnote{To be precise, the authors of \cite{brakensiek2021conditional} took inspiration from a seminar talk by Libor Barto~\cite{barto2018cyclic}.} the use of Fourier analysis to study Boolean polymorphisms, although in an indirect way.
We believe that using the full power of Fourier analysis will push the understanding of Boolean \PCSP s closer towards resolving their complexity classification, and also has the potential to unify most known hardness and tractability results on Boolean \PCSP s.
In this paper, the first steps in this direction are presented. We build the foundation of an analytical framework for studying Boolean \PCSP s; as an application of the new theory, combinatorial arguments of \cite{brakensiek2021conditional} are substituted with our analytical approach, extending their observations beyond the landscape of \kl{Ordered PCSPs}.

\paragraph{Coordinate influence}
The cornerstone of our research is the notion of \emph{influence} introduced in \cite{collective_coin_flipping}, which is one of the most fundamental concepts of Fourier analysis of Boolean functions.
In the simplest setting, for a function $f : \Bool^n \to \Bool$, the influence of a coordinate $i \in [n]$, denoted by $\Inf{f,i}$, is the fraction of inputs $\tuple x \in \Bool^n$ such that flipping the $i$-th entry $x_i$ flips the output $f(\tuple x)$.
For example, in the function $\tuple x \mapsto \max_i x_i$, each coordinate has influence $2^{1-n}$, which is quite small.
On the other hand, in the \emph{dictatorship} function $\tuple x \mapsto x_i$, the dictating coordinate $i$ has dominating influence $1$, while the rest of coordinates are irrelevant --- their influence is $0$.

As previously mentioned, there seems to be a connection between the \PCSP\ complexity and occurrence of \emph{significant coordinates} in its polymorphisms.

Thus, one may surmise that admitting polymorphisms with diminishing maximum influences characterizes tractability of Boolean \PCSP s.
Unfortunately, such a guess would be too simplistic and false.
A counterexample is the problem of solving linear equations over $\mathbb{Z}_2$, which is solvable in poly-time by standard Gaussian Elimination. However, the only non-trivial polymorphisms of the \CSP\ template defining this problem are odd-arity parity functions $\tuple x \mapsto \sum x_i \, (\text{mod } 2)$ as polymorphisms. Since flipping any $x_i$ always flips the output, the influence of every coordinate is 1. Therefore, large influences in all polymorphisms are not necessarily an obstruction to tractability.

Measuring a coordinate influence solely under the uniform distribution of the input space is usually insufficient, because inputs with different proportions of 0s and 1s are severely underrepresented. To address this issue, we usually incorporate other probability measures, such as the (product) $p$-biased distribution (also called Binomial distribution), denoted by $\cube{p}$. 
To be more precise, given a function $f$ and $p \in (0,1)$, we define the influence of coordinate $i$ under the $p$-biased distribution as the probability that flipping the $i$-th entry $x_i$ flips the output $f(\tuple x)$, given that every value of the tuple $\tuple x$ is independently set to 1 with probability $p$ --- we denote this value by $\Inf[(p)]{f,i}$.
It is worth mentioning that, for classifying the Boolean \kl{Ordered PCSPs} in \cite{brakensiek2021conditional}, the authors measure the coordinate influences under a yet different distribution, called the \emph{Shapley distribution}, which we will also consider in this paper.

\paragraph{Subclasses of Boolean \PCSP s}
As the understanding of all Boolean \PCSP s is still shallow, the research community has been focusing on smaller classes, enforcing structural or algebraic restrictions on \PCSP\ templates.

For Boolean \PCSP s, the first substantial work in this direction was the conditional dichotomy for \kl{Ordered PCSPs} of \cite{brakensiek2021conditional}.
In algebraic terms, these \PCSP s amount to restricting the polymorphisms to \kl(func){increasing} functions\footnote{While the authors call them \emph{monotone} functions, we use the name \kl{increasing} in order to avoid confusion with decreasing functions.}, i.e., such that $f(\tuple x) \leq f(\tuple y)$ for any pair of tuples $\tuple x \leq \tuple y$\footnote{By $\tuple x \le \tuple y$ we mean the coordinatewise order, i.e., $x_i \le y_i$ for all $i$.}. This was followed by the work in \cite{injhardness}, where the authors provided a complete complexity dichotomy for \kl{Linear Threshold PCSPs}. These are \PCSP s whose all polymorphisms are \kl{linear threshold functions}, which take the form $\sum a_i \cdot x_i > t$. In this paper, we investigate through an analytic lens the following two extensions of \kl{Ordered} and \kl(PCSP){Linear Threshold} \PCSP s:

\begin{enumerate}[leftmargin=*]
    \itemAP \kl{Unate PCSPs} are Boolean \PCSP s whose polymorphisms are all \kl(func){unate} functions, meaning that each coordinate is \emph{increasing} or \emph{decreasing}: a coordinate $i$ of a function $f\colon \Bool^n \to \Bool$ is \intro{increasing} if $f(a_1, \dots, 0, \dots, a_n) \le f(a_1, \dots, 1, \dots, a_n)$ for every $a_1, \dots, a_n \in \Bool$; it is \intro{decreasing} if the inequality is reversed.
    Unate functions have been prominent in switching theory \cite{unate1961}, computational learning theory \cite{unate1998}, and complexity theory \cite{unate2023}.
    Austrin, H\r{a}stad and Martinsson \cite{austrin2025usefulness} have recently investigated unate functions in the context of \PCSP s from a combinatorial perspective.

    \itemAP \kl{Polynomial Threshold PCSPs} are \PCSP s whose polymorphisms are bounded-degree \kl{polynomial threshold functions} (\PTF s), meaning that each such polymorphism can be represented as the sign of a bounded-degree real multilinear polynomial $Q$. While \PTF s have been extensively studied in learning theory~\cite{ptf_learning_1, learning_ptf_2, learning_ptf_3, learning_ptf_4, learning_ptf_5} and from structural and extremal perspectives~\cite{gotsman_linial_ptfs, extremal_properties_of_ptfs, Harsha2009BoundingTS, bounding_ptfs_2, bounding_ptfs_3, bounding_ptfs_4}, Polynomial Threshold \PCSP s remain unexplored in their full generality.
\end{enumerate}

\paragraph{Our contribution} The main contribution of this paper is a general analytical theory for studying polymorphisms of Boolean \PCSP s based on coordinate influence. We identify two general phenomena which hint toward hardness or tractability of a Boolean \PCSP.

\begin{enumerate}[leftmargin=*]
    \item \textbf{Preservation of influence through random 2-to-1 minors.} If the sum of coordinate influences of a Boolean functions is not too large over a \emph{reasonable}\footnote{This includes the biased distribution and the \emph{Shapley distribution} indirectly studied in \cite{brakensiek2021conditional}.} probability distribution and one randomly identifies its coordinates in pairs, then any coordinate influential in the original function retains a constant fraction of its influence in the obtained function, with constant probability. In many cases, this allows for a poly-time reduction from the \richconj, following a pattern similar to \cite[Theorem 4.8]{brakensiek2021conditional}. Our result vastly generalizes \cite[Lemma 4.4]{brakensiek2021conditional}. As a showcase of its applicability, we prove a hardness result for both classes of \kl(PCSP){Unate} and \kl(PCSP){Polynomial Threshold} \PCSP s based on influence preservation under $\cube{p}$.
    \begin{theorem}\label{main:theorem:hardness_for_ptfs_and_unate_biased_distribution}
        Suppose that $(\A, \B)$ is either a \kl(PCSP){Unate} or \kl(PCSP){Polynomial Threshold} \PCSP\ \kl(PCSP){template}. If
        \[
            \exists \, p, \delta > 0 : \forall \, f \in \Pol(\A, \B) : \max_i \, \Inf[(p)]{f, i} \geq \delta,
        \]
        then $\PCSP(\A, \B)$ is \NP-hard, assuming the \richconj.
    \end{theorem}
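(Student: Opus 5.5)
The plan is to reduce from the \richconj\ along the lines of \cite[Theorem 4.8]{brakensiek2021conditional}, using the multiple-choice-style selection $\Sel(f) = \{ i : \Inf[(p)]{f,i} \geq \tau \}$ for a suitable threshold $\tau$. Three ingredients are needed.

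\textbf{Step 1 (small selection).} We need $\lvert\Sel(f)\rvert$ bounded by a constant. It suffices to bound the total influence $\I[(p)]{f} = \sum_i \Inf[(p)]{f,i}$ by a constant $K$ depending only on $p$ and the class; then $\lvert \Sel(f)\rvert \le K/\tau$. For \kl(func){unate} functions I would flip the decreasing coordinates; this turns $f$ into an increasing function under $\cube{p}$ with the flipped coordinates now $(1-p)$-biased. For such functions each influence equals a signed correlation with a normalized coordinate, so Cauchy--Schwarz together with Parseval under the product measure gives total influence $O(\sqrt n)$. That is not constant. So for unate functions I would instead use that $\Pol(\A,\B)$ is closed under \kl{minors}, together with the hypothesis, and apply the influence-preservation lemma only to influential coordinates rather than bound total influence globally. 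For degree-$d$ \PTF s, total influence under $\cube{p}$ is $O_{d,p}(\sqrt n)$ as well (Gotsman--Linial/Kane type bounds). Both bounds grow with $n$, so Step 1 has to be handled differently.

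\textbf{Step 2 (random 2-to-1 minors).} The fix is to work with the \kl{random 2-to-1 hardness condition}. For a random 2-to-1 minor $g = f^\pi$, the influence-preservation lemma under $\cube{p}$ (a reasonable distribution) should give the following: if $\Inf[(p)]{f,i} \ge \delta$, then with constant probability $\Inf[(p)]{g,\pi(i)} \ge c\delta$. This holds provided $f$ has total influence bounded relative to its arity in the appropriate sense. The hypothesis guarantees that every polymorphism, and in particular every minor, has some coordinate of influence at least $\delta$. Combined with a \kl{Rich 2-to-1 Label Cover} instance, a random labeling strategy works: each vertex receives a uniformly random coordinate among those of influence at least $c\delta$. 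This satisfies a constant fraction of constraints in expectation, provided the number of such coordinates is bounded.

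\textbf{Main obstacle.} The hard part is bounding the number of $c\delta$-influential coordinates, i.e.\ controlling total influence for unate and \PTF\ polymorphisms. I would first try a \kl{junta}-type argument: use \kl{Friedgut's Theorem} after moving to a nearby bias $p'$ via the \kl{Margulis-Russo formula}. For a unate function, the derivative in $p$ of $\mu_p(f)$ (with decreasing coordinates flipped) equals a weighted total influence, and it integrates to at most $1$. Hence for most $p'$ in a small interval around $p$ the total influence is $O(1)$. Since the hypothesis holds only at $p$, I would argue that influences are continuous in $p$ for these classes, or choose the interval small enough relative to $\delta$, so that large influences persist at $p'$.

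For \PTF s I would use \kl{regularity@PTF} and the \kl{critical index}. Either $f$ is regular, in which case the \kl{Invariance Principle} and the \kl{Carbery-Wright anti-concentration bound} force all influences to be small, contradicting the hypothesis. Or $f$ has a bounded set of heavy coordinates whose \kl{restrictions@PTF} make $f$ nearly \kl{determined}. That bounded head set then serves as $\Sel(f)$.
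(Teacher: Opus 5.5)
Your overall frame is right: verify the random 2-to-1 condition at the fixed bias $p$, with $\mathsf{Sel}(f)$ the coordinates of large $\mu_p$-influence, then use the Rich 2-to-1 reduction. You also already hold what Cube Influence Preservation needs. It only asks for $\mathbf{I}^{(p)}[f]\le\gamma\cdot 2n$ for large $n$, so your $O(\sqrt n)$ bound for unate functions discharges the caveat in your Step 2.

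The gap is in what you call the main obstacle. You treat bounding $|\mathsf{Sel}(f)|$ as a matter of controlling \emph{total} influence, and that is the wrong quantity. For unate functions you essentially derived the key identity yourself: $\mathbf{Inf}^{(p)}_i[f]=|\hat f(\{i\})|/\sqrt{p(1-p)}$. No flipping is needed, since $f(\bar x^{i\to1})-f(\bar x^{i\to0})$ has constant sign. You then summed this with Cauchy--Schwarz instead of squaring it. Parseval gives $\sum_i\hat f(\{i\})^2\le 1$, hence at most $1/(p(1-p)\delta^2)$ coordinates have influence $\ge\delta$.

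Your proposed workaround does not work, for three reasons. First, after flipping the decreasing coordinates, Margulis--Russo bounds the total influence under the product measure in which those coordinates are $(1-p)$-biased, not under $\mu_p$ (unless $p=1/2$). Second, influences are not continuous in $p$ uniformly in $n$, so the hypothesis at $p$ does not transfer to a nearby $p'$. For the unate LTF $x_1\wedge\mathrm{Maj}(x_2,\dots,x_n)$, coordinate $1$ has $\mu_{1/2}$-influence $\approx 1/2$, yet all $\mu_{p'}$-influences vanish as $n\to\infty$ for any fixed $p'<1/2$. Third, the selection must live at the same bias at which preservation is applied.

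For \PTF s, the regularity/critical-index dichotomy does not produce a usable $\mathsf{Sel}$. The head of a sign-representation is not tied to the Boolean $\mu_p$-influences. The restriction lemmas give regularity or determinedness only with some constant probability, so tail coordinates can remain influential. Nothing shows that a random 2-to-1 $\pi$ maps the head of $f$ into the head of $f^{\pi}$, and identifying variables can wipe out heavy polynomial weights, as the $\mathrm{fan}^{(n)}$ example shows.

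The missing ingredient is low-degree Fourier concentration of degree-$k$ \PTF s under $\mu_p$, uniformly for $p$ bounded away from $0$ and $1$. The paper derives it from a $\mu_p$-version of the known noise-sensitivity bound for \PTF s, and that proof is where regularity, critical index, invariance and Carbery--Wright belong. Suppose $\sum_{|S|>d}\hat f(S)^2\le p(1-p)\delta/2$. Since $\mathbf{Inf}^{(p)}_i[f]=\frac{1}{p(1-p)}\sum_{S\ni i}\hat f(S)^2$, each $\delta$-influential coordinate carries weight at least $p(1-p)\delta/2$ on sets of size $\le d$. Summing over the set $A$ of such coordinates gives
\[
|A|\cdot\frac{p(1-p)\,\delta}{2}\;\le\;\sum_{|S|\le d}|S|\,\hat f(S)^2\;\le\; d .
\]
The same concentration yields $\mathbf{I}^{(p)}[f]=o(n)$. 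So you also need not import Gotsman--Linial-type bounds, which are stated for the uniform measure, in order to apply Cube Influence Preservation.
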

    \item \textbf{Sharp thresholds imply tractability}.
    Sharp threshold is a phenomenon where a property of a random system transitions from being almost certainly absent to almost certainly present over a negligible change in a parameter.
    
    In the context of Boolean functions, we are interested in the property ``$f(\tuple x) = 1``$.

    The tractability result of \cite{brakensiek2021conditional} essentially asserts that an \kl{Ordered PCSP} is solvable in poly-time as long as its polymorphisms admit a sharp threshold. We demonstrate that a similar connection arises in the larger class of \kl{Unate PCSPs}, with an appropriately defined notion of sharp threshold.
    \begin{theorem}[Informal]
        Suppose that $(\A, \B)$ is a \kl{Unate PCSP} \kl(PCSP){template}. If $\Pol(\A, \B)$ admits a \emph{sharp threshold}, then $\PCSP(\A, \B)$ is solvable in polynomial time.
    \end{theorem}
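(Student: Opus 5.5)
The statement is informal, so I will first fix a formal notion of sharp threshold suited to \kl(func){unate} functions and then prove tractability through the symmetric-polymorphism criterion of \cite{blp+aip}, as in \cite{brakensiek2021conditional}. For a unate $f\colon\Bool^n\to\Bool$ with sign pattern $\sigma\in\{\pm1\}^n$, let $\cube{p}^{\sigma}$ denote the product distribution in which coordinate $i$ equals $1$ with probability $p$ if $i$ is \kl{increasing} and with probability $1-p$ if it is \kl{decreasing}. Under this reorientation $p\mapsto \Pr_{x\sim\cube{p}^{\sigma}}[f(x)=1]$ is nondecreasing. I define ``$\Pol(\A,\B)$ admits a sharp threshold'' to mean: for every $\varepsilon>0$ there is a polymorphism $f$ (of some arity) and a point $p^\ast$ such that this probability rises from below $\varepsilon$ to above $1-\varepsilon$ on an interval $[p^\ast-\varepsilon,p^\ast+\varepsilon]$. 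By the \kl{Margulis-Russo formula}, this is equivalent to large \kl{total influence} near $p^\ast$, and hence, via \kl{Friedgut's Theorem}, to the absence of a small set of dominating coordinates.

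The construction follows the template of \cite{brakensiek2021conditional}. Given a sharp-threshold polymorphism $f$ of arity $n$, I build symmetric polymorphisms of arbitrarily large arity $L$. On input $\tuple y\in\Bool^L$ with Hamming weight $w$, sample each coordinate of $f$ independently: after reorienting by $\sigma$, an increasing coordinate reads a uniformly random entry of $\tuple y$ and a decreasing coordinate reads the negation of a random entry. Since $\tuple y$ enters only through $w/L$, the resulting random minor $g_{\tuple y}$ depends only on $w$. The key claim is that, for every constraint tuple in $\A$, the sharp threshold forces, with probability close to $1$, the distribution to lie away from $p^\ast$ on the relevant coordinates, so $f$ behaves almost deterministically there. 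From this I derive a rule $F(\tuple y)=[\,w/L \ge p^\ast\,]$ (with a suitable consistent tie-break) and check that it is a polymorphism of $(\A,\B)$. The argument runs by contradiction: if some constraint of $\A$ is mapped outside $\B$ by $F$, then averaging over the random minors yields, with positive probability, a genuine minor of $f$ violating $\B$. This is impossible because minors of polymorphisms are polymorphisms. Since $F$ is symmetric and exists for infinitely many $L$, \cite{blp+aip} gives tractability via BLP+AIP.

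I expect the main obstacle to be the unate twist. A coordinate of $\A$'s relation fed through an increasing coordinate sees bias $p$, while one fed through a decreasing coordinate effectively sees $1-p$, so a single threshold in $w/L$ may fail to yield a well-defined symmetric rule. My first attempt would be to exploit foldedness-type symmetry. If both orientations occur in an essential way, I would show that the threshold must sit at $p^\ast=1/2$ and take $F$ to be majority with odd $L$. If only one orientation is present, the problem reduces to the ordered case of \cite{brakensiek2021conditional}. Should this dichotomy fail, the fallback is to strengthen the definition of sharp threshold so that it holds uniformly under both $\cube{p}$ and $\cube{1-p}$, which makes the contradiction argument go through coordinate by coordinate.
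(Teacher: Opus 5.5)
There is a genuine gap, located exactly at the step you flag as the ``unate twist''. Your random map lets a decreasing coordinate $i$ of $f$ read $1-y_{\pi(i)}$. That is not a minor: minors only identify, permute and add dummy variables, and $\Pol(\A,\B)$ is not closed under negating individual arguments. Adding $\neq$ to the template only gives foldedness, i.e.\ negating \emph{all} arguments and the output at once. So neither the random functions $g_{\tuple y}$ nor the rule $F$ are known to be polymorphisms, and your final contradiction (``averaging yields a genuine minor of $f$ violating $\B$'') has nothing to rest on.

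Your one-parameter reorientation ($\mu_p$ on increasing, $\mu_{1-p}$ on decreasing coordinates) is precisely the ``invert the decreasing coordinates'' notion that the paper explicitly discards. Genuine one-block minors only probe the diagonal $q=p$ of the two-parameter expectation $\mathbb{E}_{p,q}[f]$, never the anti-diagonal $q=1-p$. Your fallback of demanding the threshold under both $\mu_p$ and $\mu_{1-p}$ does not remove this obstruction. The split by ``which orientation is present'' is also not meaningful at the level of a minion, whose members and their minors mix orientations.

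Worse, the intermediate target is false in general: a unate minion with a sharp threshold need not contain symmetric functions of large arity, so no argument aimed at symmetric polymorphisms can succeed. Take the paper's own unate template $(\relstr T^{\neq},\relstr H_2^{\neq})$. It contains every $\mathrm{at}^{(m)}$, which has a sharp threshold both in your sense (at $p^*=1/2$) and in the paper's two-dimensional sense. Yet suppose $\tuple x\mapsto s(|\tuple x|)$ were a polymorphism of arity $m\ge 2$. Then $\neq$ forces $s(m-k)=1-s(k)$. The $\mathsf{1in3}$ weight triples $(1,i,m-1-i)$ then force $s(i+1)=s(1)$ for every $i$, hence $s(m-1)=s(1)$, a contradiction. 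In particular majority fails: three $\mathsf{1in3}$ rows of weight $\approx L/3$ are all mapped to $0$. Moreover, $\mathbb{E}_{p,p}[\mathrm{at}^{(m)}]\to 1/2$ for every fixed $p\in(0,1)$, so one-block random minors extract nothing here.

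The missing idea is the paper's two-block construction:
\begin{itemize}
\item Send increasing coordinates to a block of size $w$ and decreasing ones to a separate block of size $h$ (\cref{lem:random-minor}). A sharp threshold of $\mathbb{E}_{p,q}[f]$ across a monotone curve $\mathbf T$ (\cref{lem:regions-to-curve}) then yields block-symmetric minors $g_{w,h}$.
\item Use minors identifying halves of the flipped coordinates (\cref{lem:boundaryisline}) to show that $\mathbf T\cap(0,1)^2$ is a line segment.
\item Read off $\mathrm{MAX}$, $\mathrm{MIN}$, $\mathrm{THR}_t$ or $\mathrm{AT}$ from the segment's position relative to the diagonal, and conclude via \cref{th:tractablelist} rather than via symmetric polymorphisms.
\end{itemize}
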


    The study of sharp threshold behavior occupies a central role in probabilistic combinatorics and statistical physics.
    The formal mathematical investigation began in the late 1950s, when Erd\"os and R\'enyi discovered that many properties of random graphs (e.g. connectivity or appearance of a ``giant component'') do not emerge gradually as the density increases but rather depend on it in a ``sharp threshold manner'' \cite{erdos1959, erdos1960}.
    In fact, the later works of Bollob\'as and Thomason \cite{bollobas1987}, as well as Friedgut and Kalai \cite{friedgut1996sharp}, proved that \emph{every monotone graph property exhibits a sharp threshold behavior}\footnote{A graph property is monotone if it is closed under adding new edges.}.
    
    Sharp threshold behavior can also be observed in computational complexity theory, for the transition from ``easy'' to ``hard'' computational problems is usually abrupt.
    For example, Friedgut \cite{friedgut1999sat} discovered that the satisfiability probability of a random $k$-CNF formula on $n$ variables exhibits a sharp threshold when the ``clause density'' increases.
    In particular, sufficiently sparse $k$-CNF formulas are almost always satisfiable, while sufficiently dense ones are almost never satisfiable.
    Therefore the hardest formulas are those with ``clause density'' close to this critical threshold point.
\end{enumerate}

\paragraph{Related work}
There have been attempts at obtaining an analytical proof of the CSP Dichotomy since 2009 (see \cite{analyticalCSP,correlationdecay}).
So far, these attempts have had limited success.

To our knowledge, Libor Barto was the first to suggest studying \PCSP\ polymorphisms using analytical tools. In particular, he showed how to obtain a symmetric threshold function of any arity from any cyclic increasing function of sufficiently large arity as an identification minor~\cite{barto2018cyclic}. This was later incorporated into the proof of a conditional dichotomy for Boolean \kl{Ordered PCSPs} by Brakensiek et al.~\cite{brakensiek2021conditional}. 

Another notable contribution on Boolean \PCSP s, although not relying on Fourier analysis, is the dichotomy for \kl{Linear Threshold PCSPs} established by Banakh and Kozik \cite{injhardness}.

Beyond the Boolean domain, Braverman et al. \cite{multislices} recently provided a general pipeline which, given a perfect-completeness \emph{dictatorship test} for a \CSP\ template, produces a reduction from the \richconj\ to the problem of approximating (quantitatively) the corresponding \CSP. Although not directly focused on \PCSP s, their hardness analysis shares several features with our results in \cref{sec:general}.

\section{Preliminaries and notation}\label{sec:notation}
For any positive integer $n$, by $[n]$ we denote the set $\{1,\dots, n\}$. Typically, we write tuples with a bar above, e.g. $\tuple x, \tuple p$. If a tuple $\tuple x$ is indexed by $[n]$, then $x_i$ is the $i$-th entry of $\tuple x$. For any $J \subseteq [n]$, we denote by $\tuple x_J$ the subtuple of $\tuple x$ indexed by $J$.

By $f(n) \leq \mathcal{O}_{a,b}(g(n))$ we denote that $f(n) \leq C \cdot g(n)$ for all sufficiently large $n$ and some positive constant $C$ depending on $a,b$. The other standard asymptotical bounds are defined similarly: $o_{a,b}(\cdot)$, $\Omega_{a,b}(\cdot)$, and $\Theta_{a,b}(\cdot)$. 

\subsection{Constraint Satisfaction Problem}

This section briefly defines \CSP s, \PCSP s, and related notions. For a more elaborate overview of this topic, we refer the reader to \cite{Bible}.

\AP
Given a \intro{universe} $A$ and $r \ge 1$, an $r$-ary \intro{relation} over $A$ is a set $R \subseteq A^r$.
A \intro{relational structure} $\relstr A$ is a tuple $(A; R_1^{\relstr A}, \dots, R_\ell^{\relstr A})$ where each $R_i^{\relstr A}$ is a relation over $A$.\footnote{Relational structures are usually denoted using the blackboard font, and the underlying universe is denoted with the same letter in the normal font.}
We call two relational structures \intro{similar} if they have the same number of relations ($\ell$), and the $i$-th relation has the same arity in both, for every $i \in [\ell]$.
For two similar structures $\relstr A$ and $\relstr B$, a \intro{homomorphism} from $\relstr A$ to $\relstr B$ is a map $h : A \to B$ that maps each tuple $(a_1, \dots, a_r) \in R_i^{\relstr A}$ to $(h(a_1), \dots, h(a_r)) \in R_i^{\relstr B}$, for every $i$.
We write $\A \to \B$ to denote that $\A$ admits a homomorphism to $\B$.

This is sufficient to define \CSP s.
In the \nameref{sec:intro}, we briefly introduced \CSP s in the language of variables and constraints.
Here we will use the equivalent formulation in terms of relational structures homomorphisms.

\begin{definition}[\intro{CSP}]\AP
    Given a \kl{relational structure} $\relstr B$, the problem $\CSP(\relstr B)$ is defined in two variants. For a given input structure $\relstr I$ which is \kl{similar} to $\relstr B$, it consists in
    \begin{itemize}
        \itemindent=4em
        \item[(decision)] deciding whether $\relstr I \to \relstr B$, or
        \item[(search)] finding a \kl{homomorphism} from $\relstr I$ to $\relstr B$.
    \end{itemize}
    We say that $\relstr B$ is the \intro(CSP){template} of $\CSP(\relstr B)$.
\end{definition}

Interestingly, the decision and search variants of every \CSP\ are poly-time equivalent~\cite{polymorphisms_and_how_to_use_them}.
Similarly, we define two versions of \PCSP s:

\begin{definition}[\intro{PCSP}]\AP
    Given two similar structures $\relstr A$ and $\relstr B$ such that $\relstr A \to \relstr B$, 
    the problem $\PCSP(\relstr A, \relstr B)$, given an input structure $\relstr I$ similar $\relstr A$ and $\relstr B$, is
    \begin{itemize}
        \itemindent=4em
        \item[(decision)] output \textsf{YES} if $\relstr I \to \relstr A$, and output \textsf{NO} if $\relstr I \not\to \relstr B$, or
        \item[(search)] find a \kl{homomorphism} from $\relstr I$ to $\relstr B$, provided that $\relstr I$ is promised to have a homomorphism to $\relstr A$.
    \end{itemize}
    We say that the pair $(\relstr A, \relstr B)$ is the \intro(PCSP){template} of $\PCSP(\relstr A, \relstr B)$.
\end{definition}

Search variant of a \PCSP\ is always at least as hard as the decision counterpart, but the converse relation is currently unknown\footnote{Recently, there was a certain progress in the negative direction by \cite{larrauri2025}.}.
We will be providing algorithms for the search version and proofs of hardness for the decision version.

\paragraph{Polymorphisms}
\AP
A function $f$ is a \intro{polymorphism} of, or \reintro{compatible with}, the template $(\relstr A, \relstr B)$ if it is a \kl{homomorphism} of the form $\relstr A^n \to \relstr B$.
A power of a relational structure is defined on a power of its universe by taking powers of the corresponding relations, where the $n$-th power of a relation $R \subseteq A^r$ is $R^n \subseteq (A^n)^r$ such that $(\tuple a^1, \dots, \tuple a^r) \in R^n$ iff $(a^1_i, \dots, a^r_i) \in R$ for each $i$.
We say that a set of functions $\mathcal F$ is \reintro{compatible with} the template $(\A, \B)$ if every member of $\mathcal F$ is.

For a fixed template $(\A, \B)$, the set of all polymorphisms is denoted by $\Pol(\relstr A, \relstr B)$.
Note that $\Pol(\relstr A, \relstr B)$ effectively captures the complexity of $\PCSP(\relstr A, \relstr B)$:
\begin{theorem}[\cite{brakensiek2017promise}]
    If $\Pol(\relstr A, \relstr B) \subseteq \Pol(\relstr A', \relstr B')$, then $\PCSP(\relstr A', \relstr B')$ is polynomial-time reducible to $\PCSP(\relstr A, \relstr B)$.
\end{theorem}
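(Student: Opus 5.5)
The plan is to go through the standard \emph{pp-definability / pp-construction} route. The key fact is the Galois correspondence for promise templates: if $\Pol(\A,\B) \subseteq \Pol(\A',\B')$, then the pair $(\A',\B')$ is \emph{pp-definable} from $(\A,\B)$. Concretely, for every relation $R'$ of $\A'$ (say of arity $r$) there is a single primitive positive formula $\phi_{R'}(x_1,\dots,x_r)$ in the signature of $\A$ and $\B$ such that
\begin{itemize}
    \item interpreted in $\A$, the formula $\phi_{R'}$ defines a relation contained in $R'^{\A'}$...
\end{itemize}
I would set this up more carefully as follows. Take $\phi_{R'}^{\A}$ to be the relation defined by $\phi_{R'}$ in $\A$, and $\phi_{R'}^{\B}$ the one it defines in $\B$. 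The goal is a single $\phi_{R'}$ with
\[
R'^{\A'} \subseteq \phi_{R'}^{\A}
\quad\text{and}\quad
\phi_{R'}^{\B} \subseteq R'^{\B'} .
\]
This assumes the universes match, $A=A'$ and $B=B'$, which is implicit since polymorphism sets can only be compared then.

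To build $\phi_{R'}$, list the tuples of $R'^{\A'}$ as the columns of an $r\times m$ matrix, with $m=|R'^{\A'}|$. Consider the \emph{canonical instance}: the $m$-th power $\A^m$, viewed as an instance with variables $A^m$ and constraints given by the relations of $\A^m$. Its free variables are the $r$ rows of the matrix, which are elements of $A^m$; all other elements of $A^m$ are existentially quantified. The two inclusions then follow from the polymorphism condition.
\begin{itemize}
    \item \emph{First inclusion.} Projection to the $j$-th coordinate is a homomorphism $\A^m\to\A$, and it sends the rows to the $j$-th column. Hence every tuple of $R'^{\A'}$ lies in $\phi_{R'}^{\A}$.
    \item \emph{Second inclusion.} A satisfying assignment of $\phi_{R'}$ in $\B$ is exactly a homomorphism $f\colon\A^m\to\B$, that is, an $m$-ary polymorphism of $(\A,\B)$. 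By hypothesis $f\in\Pol(\A',\B')$. Applying $f$ coordinatewise to the columns, which all lie in $R'^{\A'}$, yields a tuple of $R'^{\B'}$. So the values $f$ takes on the rows form a tuple in $R'^{\B'}$.
\end{itemize}

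With these formulas, the reduction is a local gadget replacement. Given an instance $\relstr I$ of $\PCSP(\A',\B')$, replace each constraint $R'(v_1,\dots,v_r)$ by a fresh copy of $\phi_{R'}(v_1,\dots,v_r)$, giving new variables for the quantified elements. Since $\A',\B'$ are fixed, each gadget has constant size, so the output $\relstr J$ has polynomial size.
\begin{itemize}
    \item \emph{Completeness.} If $h\colon\relstr I\to\A'$, then each constraint tuple maps into $R'^{\A'}\subseteq\phi_{R'}^{\A}$. Choosing witnesses gadget by gadget extends $h$ to a homomorphism $\relstr J\to\A$.
    \item \emph{Soundness.} A homomorphism $\relstr J\to\B$ restricts to the original variables. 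By $\phi_{R'}^{\B}\subseteq R'^{\B'}$, this restriction is a homomorphism $\relstr I\to\B'$.
\end{itemize}
Together these handle the decision version, and the search version too, because a solution for $\relstr J$ is converted into one for $\relstr I$ by restriction.

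The main obstacle is getting the pp-definition right in the promise setting, where a single formula must work simultaneously in both structures. The canonical power construction solves this, at the cost of exponential but constant-size gadgets.

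A minor subtlety remains. If some $R'^{\A'}$ is empty, the construction degenerates because $m=0$. In that case, map the instance to a fixed NO instance whenever such a constraint occurs, which is correct since then $\relstr I\not\to\A'$.
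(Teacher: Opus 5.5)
The paper gives no proof of this statement; it only cites \cite{brakensiek2017promise}. Your argument is the standard Galois-correspondence proof from that line of work, and it is correct. The canonical instance on $\A^m$ yields a single pp-formula $\phi_{R'}$ with $R'^{\A'}\subseteq\phi_{R'}^{\A}$ and $\phi_{R'}^{\B}\subseteq R'^{\B'}$, so $(\A',\B')$ is a homomorphic relaxation of a template pp-definable from $(\A,\B)$; gadget replacement then gives the reduction. Both inclusions, completeness, soundness, and the search-version remark check out.

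Two details should be made explicit.

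\emph{Coinciding rows.} Distinct rows of your matrix need not be distinct elements of $A^m$. If rows $i$ and $j$ coincide, $\phi_{R'}$ really contains the equality $x_i=x_j$. So ``a fresh copy of $\phi_{R'}(v_1,\dots,v_r)$'' only makes sense after you identify $v_i$ with $v_j$ in $\relstr J$, i.e.\ take the quotient by the generated equivalence relation. This is harmless. Every tuple of $R'^{\A'}$ agrees in positions $i$ and $j$, so any $h\colon\relstr I\to\A'$ factors through the quotient, and soundness is unaffected.

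\emph{The case $m=0$.} Your fix presupposes that $\PCSP(\A,\B)$ has a NO-instance. If it has none, then $\B$ has an element $b$ with $(b,\dots,b)$ in every relation. All constant maps then lie in $\Pol(\A,\B)\subseteq\Pol(\A',\B')$, so every instance using no relation that is empty in $\A'$ maps to $\B'$ by a constant. Hence $\PCSP(\A',\B')$ is trivially solvable: answer NO iff such an empty relation is used. However, a many-one reduction may then not exist at all. Take the one-element universe $\{0\}$ with unary relations $U^{\A}=U^{\B}=\{0\}$ and $E^{\A'}=E^{\B'}=\emptyset$. Both polymorphism sets contain every map, but only the second problem has NO-instances. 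In this degenerate case the statement must be read with a Turing reduction, or under the usual convention that relations are nonempty.

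For comparison, the paper's appendix works in the other standard framework from \cite{Bible}, promise minor conditions. There the statement follows as a chain of three reductions:
\begin{itemize}
\item $\PCSP(\A',\B')$ reduces to $\mathsf{PMC}_N(\Pol(\A',\B'))$;
\item the identity map reduces that to $\mathsf{PMC}_N(\Pol(\A,\B))$, because a minor condition satisfied in $\Pol(\A,\B)$ is satisfied in the larger minion;
\item $\mathsf{PMC}_N(\Pol(\A,\B))$ reduces to $\PCSP(\A,\B)$.
\end{itemize}
That route works unchanged for any minion homomorphism $\Pol(\A,\B)\to\Pol(\A',\B')$, not just inclusion. Yours is more elementary, needs no intermediate problem, and produces explicit constant-size gadgets.
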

\noindent\AP For any $\relstr A$ and $\relstr B$, the set $\Pol(\relstr A, \relstr B)$ is a \intro{minion}, i.e. a subset of $\{f : A^n \to B \mid n \ge 1\}$ closed under taking \emph{minors}:
\begin{definition}[\intro{Minor}]\AP
    A function $g : A^m \to B$ is a \emph{minor} of $f : A^n \to B$ given by a \intro{minor map} $\pi : [n] \to [m]$ if
    \begin{equation*}
        g(a_1, \dots, a_m) = f(a_{\pi(1)}, \dots, a_{\pi(n)}), 
        \text{ for all  } a_1, \dots, a_m \in A.
    \end{equation*}
    In this case we say that $g$ is a $\pi$-minor of $f$ and denote this fact by $f \xrightarrow{\pi} g$ or $g = \minor{f}{\pi}$.
\end{definition}

\noindent\AP
A function $f:A^n\rightarrow A$ is \intro{idempotent} if $f(a,\dotsc,a)=a$ for all $a\in A$.
It is a \intro{projection}, or \reintro{dictator}, if, for some $i$, $f(a_1,\dotsc,a_n) = a_i$ for all $a_1,\dotsc,a_n \in A$.
Every minor of a projection is a projection, and every minor of an idempotent function is idempotent.
A minion is \reintro{idempotent} if its members are idempotent.
Finally, we say that a coordinate $i$ of a function $f : A^n \to B$ is \intro{essential} if there exist $a_1, \dots, a_n, b_i \in A$ such that $f(a_1, \dots, a_i, \dots, a_n) \neq f(a_1, \dots, b_i, \dots a_n)$. A function $f$ doesn't depend on its non-essential coordinates.
The arity $\ar(f)$ of a function $f : A^n \to B$ is the number of its arguments, that is, $n$.

In our hardness tools, a particular role play \emph{2-to-1 minors}, i.e., minors given by 2-to-1 minor maps:
\begin{definition}[\intro{2-to-1 map}]\AP
    A map $\pi : [2n] \to [n]$ is called \intro{2-to-1} if $|\pi^{-1}(i)| = 2$ for all $i \in [n]$.
\end{definition}

In \cite{brakensiek2021conditional}, the authors derive hardness of \kl{Ordered PCSPs} from the \richconj. We abstract out their reduction into the following algebraic condition that facilitates the soundness of the reduction.

\begin{definition}[\intro{Random 2-to-1 hardness condition}]\AP
    Given a minion $\minion M$,
    there exist numbers $\mathcal{C}, \tau > 0$ and a mapping $\Sel : \minion M \to \mathcal P(\mathbb N)$ such that for every $f \in \minion M$:
    \begin{enumerate}
        \item $\Sel(f) \subseteq [\ar(f)]$ with $\lvert\Sel(f)\rvert \leq \mathcal{C}$, and 
        \item if $f$ has even arity $2n$, then
        \[
            \Pr_{\pi} \left[ \pi(\Sel(f)) \cap  \Sel(\minor{f}{\pi}) \neq \emptyset \right] \geq \tau,
        \]
        where $\pi : [2n] \to [n]$ is a uniformly random \kl{2-to-1} \kl{minor map}. 
    \end{enumerate}
\end{definition}

In this language, the approach of \cite{brakensiek2021conditional} can be summarized as follows.
\begin{theorem}[Based on \cite{brakensiek2021conditional}]\label{prelims:theorem:random_2_to_1_condition_implies_hardness}
    Let $(\relstr A, \relstr B)$ be a \PCSP\ \kl(PCSP){template}.
    If $\Pol(\relstr A, \relstr B)$ satisfies the \kl{random 2-to-1 condition}, then $\PCSP(\relstr A, \relstr B)$ is \NP-hard, assuming the \richconj.
\end{theorem}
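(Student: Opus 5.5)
We prove this by a reduction from the \richconj, in the form of \kl{Rich 2-to-1 Gap Label Cover}, to $\PCSP(\A,\B)$, going through the \kl{Promise Minor Condition} problem for the minion $\minion M = \Pol(\A,\B)$. By the standard algebraic theory of \cite{Bible}, $\PCSP(\A,\B)$ is polynomial-time equivalent to the promise minor-condition problem. Its \textsf{YES} instances are minor conditions \kl{satisfied} by projections; its \textsf{NO} instances are those not satisfiable in $\minion M$. So it suffices to map Rich 2-to-1 Label Cover instances to minor conditions, preserving completeness and soundness.

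\textbf{Reduction.} Given a Label Cover instance, bipartite with left vertices $u$ (label set $[2n]$), right vertices $v$ (label set $[n]$) and 2-to-1 constraint maps $\pi_{uv}$, we introduce:
\begin{itemize}
    \item a function symbol $f_u$ of arity $2n$ for each $u$;
    \item a function symbol $g_v$ of arity $n$ for each $v$;
    \item an identity $g_v(x_1,\dots,x_n) \approx f_u(x_{\pi_{uv}(1)},\dots,x_{\pi_{uv}(2n)})$ for each edge.
\end{itemize}

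\textbf{Completeness.} A labeling satisfying every constraint yields an interpretation by projections: $f_u$ becomes the projection onto the label of $u$, and $g_v$ the projection onto the label of $v$.

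\textbf{Soundness.} Suppose some interpretation in $\minion M$ satisfies the condition. We decode a random labeling as follows: give $u$ a uniformly random label from $\Sel(f_u)$ and $v$ a uniformly random label from $\Sel(g_v)$. An empty $\Sel$ means we pick an arbitrary label. The richness property says the constraint maps around a vertex are distributed like a uniformly random 2-to-1 map. The random 2-to-1 condition then gives, for a random edge, probability at least $\tau$ that $\pi_{uv}(\Sel(f_u)) \cap \Sel(g_v)\neq\emptyset$. When this intersection is nonempty, the decoded labels agree with probability at least $1/\mathcal C^2$. So the expected fraction of satisfied edges is at least $\tau/\mathcal C^2$, a constant.

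Taking the soundness parameter of the conjecture below $\tau/\mathcal C^2$ yields a contradiction, so \textsf{NO} instances of Label Cover map to \textsf{NO} instances of the minor condition.

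\textbf{Main obstacle.} The delicate step is matching the probabilistic condition to the precise form of richness in the conjecture. The random 2-to-1 condition is stated for a uniformly random $\pi$ at a fixed $f$. Richness, however, is a statement about the distribution of constraints around each left vertex (or a weighted version), and we must check it converts to an averaged statement over edges.

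We would handle this by averaging over $u$. For each fixed $u$, the function $f_u$ is fixed, and the neighbouring maps $\pi_{uv}$ are uniform (as richness guarantees) over 2-to-1 maps up to relabeling of $v$'s labels. Such a relabeling only permutes the coordinates of $g_v$, so the condition applies verbatim. Averaging over $u$ then gives the bound.

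A secondary issue is adjusting arities, for instance when the conjecture's label sets are not exactly $[2n]$ and $[n]$. We would handle this by padding with dummy coordinates.
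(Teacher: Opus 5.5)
Your proposal is correct and follows the paper's proof essentially step for step: the same minor condition built from the Rich 2-to-1 instance, completeness via projections, soundness by decoding uniformly random labels from $\Sel$ to satisfy an expected $\tau/\mathcal C^2$ fraction of constraints (averaged over left vertices), and then the reduction from the promise minor condition problem to $\PCSP(\A,\B)$ from \cite{Bible}. Your ``main obstacle'' and padding remarks are not needed, because the paper's definition of richness already makes the constraints around each left vertex uniform over all 2-to-1 maps $[2n]\to[n]$; note also that for a partition-only notion of richness the condition would not apply ``verbatim'', since $\Sel$ is not assumed to be equivariant under permuting coordinates, so you would need an extra averaging over random relabelings of the right-hand labels.
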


\cref{prelims:theorem:random_2_to_1_condition_implies_hardness} can be proved by following the standard reduction framework of \cite{Bible}, first introduced in \cite{austrin20172+sat,brakensiek2017promise}; we provide the reasoning in \cref{appendix:hardness_reduction} for completeness.

The main focus of this paper is on Boolean structures, functions, and \kl{minions}.
A \kl{relational structure} $\relstr A$ is called Boolean if $A = \{0,1\}$, and a function is Boolean if it is of the form $\Bool^n \to \Bool$.

\paragraph{Tractability of Boolean \PCSP s}
In the current literature on Boolean \PCSP s, algorithms solving the search variants of \PCSP s usually rely on the occurrence of one of the following \kl{minions}:
\begin{itemize}[leftmargin=*]
  \item the minion of all constant 0 functions $\tuple x \mapsto 0$,
  \itemAP the minion $\intro*\MAX$ generated by the set of all max functions:
  \[
  \intro*\gen{max}{m}(x_1, \dots, x_m) = \max\{x_1, \dots, x_m\},
  \]
  \itemAP the minion $\intro*\MIN$ generated by the set of all min functions:\footnote{The minions $\MAX$ and $\MIN$ are also sometimes called $\mathrm{OR}$ and $\mathrm{AND}$, respectively.}
  \[
  \intro*\gen{min}{m}(x_1, \dots, x_m) = \min\{x_1, \dots, x_m\},
  \]
  \itemAP the minion $\intro*\XOR$ generated by the set of all odd-arity parity functions:
  \[
  \intro*\gen{xor}{m}(x_1, \dots, x_{2m+1}) = \sum_i x_i \bmod 2,
  \]
  \itemAP the minion $\intro*\AT$ generated by the set of all (odd-arity) alternating threshold functions:
  \[
  \intro*\gen{at}{m}(x_1, \dots, x_{m+1}, y_1, \dots, y_m) = \begin{cases}
      1, &\quad\text{if } \sum_i x_i > \sum_i y_i \\
      0, &\quad\text{otherwise}
  \end{cases}
  \]
  \itemAP minions $\intro*\THR{t}$ parameterized by a number $t \in (0,1)$, each containing all $m$-ary $t$-threshold functions where $tm \notin \mathbb N$:
  \[
  \intro*\genthr{t}{m}(x_1, \dots, x_m) = \begin{cases}
      1, &\quad\text{if } \sum_i x_i > tm \\
      0, &\quad\text{otherwise}
  \end{cases}
  \]
\end{itemize}
\AP The list is complemented with the \emph{negated} versions of these minions:
the negation of $f$ is $\tuple x\mapsto 1-f(\tuple x)$ and the \intro{negation} of a minion consists of negations of all its members.
\begin{theorem}[\cite{brakensiek2017promise,ficak2019symmetric,injhardness}]
\label{th:tractablelist}
    If $\Pol(\A, \B)$ includes one of the \kl{minions} on the list above, then search $\PCSP(\A, \B)$ is solvable in polynomial time.
\end{theorem}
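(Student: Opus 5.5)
The statement is a list of known tractability results, so the plan is to reduce each case to an algorithm whose correctness is certified by the given minion. I will use the standard fact from the algebraic theory of \cite{Bible}: if $\Pol(\A,\B)$ contains a minion $\minion M$, then $\PCSP(\A,\B)$ is solved by any algorithm whose soundness is witnessed by $\minion M$.

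The template is Boolean and each listed minion consists of concrete functions. So I will first turn ``$\Pol(\A,\B)$ contains $\minion M$'' into explicit closure properties of the pair of relations $(R^{\A}, R^{\B})$. For example, containing $\MAX$ means that the coordinatewise OR of any tuples from $R^{\A}$ lies in $R^{\B}$.

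The cases then go as follows.

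\begin{itemize}
\item \textbf{Constant $0$.} Every relation of $\B$ contains the all-$0$ tuple, so the all-$0$ assignment is a solution.
\item \textbf{$\MAX$ and $\MIN$.} These are handled by arc consistency, or equivalently by a direct argument. For $\MAX$: for each variable, decide via unit propagation whether some solution into $\A$-like constraints forces it; then take the OR of witness assignments. Concretely, run the arc-consistency (AC) relaxation and output, for each variable, the maximum value left in its AC domain. The $\MAX$ polymorphisms of all arities show that this assignment satisfies every $\B$-constraint. $\MIN$ is symmetric.
\item \textbf{$\XOR$.} Solve the affine relaxation, i.e.\ the AIP: Gaussian elimination over $\mathbb{Z}$ or $\mathbb{Z}_2$ on the affine hull of the $\A$-relations. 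A solution can be rounded using an odd-arity parity polymorphism, since an integer affine combination with odd total weight becomes a parity of that arity.
\item \textbf{$\AT$.} This is exactly the AIP characterization of \cite{blp+aip}: alternating thresholds of all odd arities imply AIP solvability. An AIP solution gives integer weights summing to $1$, and splitting them into positive and negative parts yields $x$'s and $y$'s with $\sum x - \sum y = 1$. Evaluating $\gen{at}{m}$ turns the solution into a $\B$-assignment.
\item \textbf{$\THR{t}$.} Solve the basic LP relaxation (BLP). The rounding sets $x_v = 1$ iff the LP value of $v$ exceeds $t$. Rational LP solutions with common denominator $m$ correspond to $m$-ary symmetric functions; choosing $m$ with $tm \notin \mathbb N$ (perturbing if needed) makes the rounding agree with $\genthr{t}{m}$, which lies in $\Pol(\A,\B)$.
\item \textbf{Negated minions.} Apply the same algorithms with values flipped in the output, i.e.\ replace $\B$ by its complement-renamed copy.
\end{itemize}

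The main obstacle is the $\THR{t}$ case. The BLP solution must have a denominator $m$ that avoids $tm \in \mathbb N$, and the threshold rounding must be strict. To handle this, I would pick a BLP solution in the relative interior, or scale the denominator by a large factor and perturb the solution slightly within the polytope, so that no LP value equals $t$ exactly. This is the subtlety addressed in \cite{injhardness}, and I would follow it.

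Search, not just decision, follows throughout because each step produces an explicit assignment.
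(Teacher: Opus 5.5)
The paper does not prove this theorem itself. It cites \cite{brakensiek2017promise,ficak2019symmetric,injhardness} and only remarks that BLP+AIP covers every case except (negated) $\XOR$. Your case-by-case algorithms are a reasonable substitute, and most cases go through:
\begin{itemize}
\item the constant case;
\item $\MAX$/$\MIN$ via arc consistency with max-rounding, since the $r$-ary max of the $r$ support tuples of a constraint is exactly the rounded tuple;
\item $\XOR$ via Gaussian elimination over $\mathbb Z_2$ against the affine hulls of the $\A$-relations, which lie inside the $\B$-relations;
\item $\AT$ via an integral AIP solution rounded by $x_v=1$ iff $w_{v,1}>0$, witnessed on each constraint by $\gen{at}{m}$ with $m$ depending on the constraint;
\item the negated minions.
\end{itemize}
The gap is in the $\THR{t}$ case, exactly at the step you flag, and neither of your proposed fixes works as stated.

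\emph{Why your fixes fail.} For rational $t=a/b$, a relative-interior BLP point can have marginals exactly equal to $t$. Take the template $(\mathsf{1in3},\mathsf{NAE}_3)$, whose polymorphisms contain $\THR{1/3}$, and the instance consisting of a single $\mathsf{1in3}$ constraint. The barycenter of the BLP polytope has all three marginals equal to $1/3$, and comparing marginals with $t$, strictly or not, outputs a constant tuple, which is not in $\mathsf{NAE}_3$. Scaling the common denominator $m$ does not help either, since $b\mid m$ implies $b\mid km$ for every $k$.

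\emph{What is missing.} Perturbing within the polytope $P$ is the right idea, but you must explain why no marginal $\lambda_{v,1}$ is pinned to $t$ on all of $P$. This is exactly where the promise enters: a homomorphism $\relstr I\to\A$ gives a $0/1$ point of $P$. The fix then runs as follows.
\begin{itemize}
\item For each variable $v$, maximizing or minimizing $\lambda_{v,1}$ over $P$ yields a point of $P$ with $\lambda_{v,1}\neq t$.
\item A suitably generic convex combination of these points removes all ties simultaneously.
\item Constraint by constraint, approximate the local distribution by one with denominator $M$, where $tM\notin\mathbb N$. Make the approximation close enough that every marginal stays on the same side of $t$.
\item Then $\genthr{t}{M}$ certifies that the global rounding ($x_v=1$ iff $\lambda_{v,1}>t$) satisfies the constraint in $\B$.
\end{itemize}

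\emph{Alternative, as in \cite{blp+aip}.} Take a relative-interior point $\lambda$ with common denominator $m$, and an integral AIP solution $w$ supported on the support of $\lambda$; such a $w$ exists by the promise. Apply $\genthr{t}{Km+1}$ to the nonnegative integer weights $Km\lambda+w$, with $K$ large and $b\mid K$, so that $t(Km+1)\notin\mathbb N$. Ties $\lambda_{v,1}=t$ are then broken by $w_{v,1}\ge 1$.
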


\noindent
In fact, in this case $\PCSP(\A, \B)$ is solvable by the algorithm called BLP+AIP \cite{blp+aip} unless $\Pol(\A, \B)$ only includes $\XOR$ or its \kl{negation}.

\subsection{Fourier analysis of Boolean functions}

We now introduce basic notions from the field of Fourier Analysis of Boolean Functions. We refer to \cite{odonnell} for a more comprehensive treatment of this topic.\footnote{Boolean functions in \cite{odonnell} are of the form $\{-1,1\}^n \to \{-1,1\}$, in contrast to our functions of the form $\{0,1\}^n \to \{0,1\}$. The difference is mostly cosmetic; our presentation is similar to e.g. \cite{hypergraphremovallemmas}.}

\paragraph{Extended notation} If $\tuple x \in \Bool^n$, then $\tuple x \oplus i$ is a tuple obtained from $\tuple x$ by flipping the $i$-th entry; similarly $\tuple x \oplus I$ is obtained by flipping the $i$-th entry for each $i \in I$.
The Hamming weight (number of 1's) of $\tuple x$ is denoted with $\ham{\tuple x}$.
For tuples $\tuple x \in \Bool^n$ and $\tuple y \in \Bool^m$, we denote their concatenation by $\tuple x \tuple y \in \Bool^{n+m}$.
In particular, $\tuple x  0, \tuple x 1 \in \{0,1\}^{n+1}$ are tuples equal to $\tuple x$ with $0$ or $1$ appended.
Furthermore, we write $\tuple x ^{\, i \to 0}$ and $\tuple x^{\, i \to 1}$ to denote tuples equal to $\tuple x$ with the value of $x_i$ set to $0$ or $1$, respectively.
Given a probability distribution $\Omega$ over $\Bool^n$, we write $\tuple x \sim \Omega$ to denote that $\tuple x$ is sampled according to $\Omega$ and for a specific $\tuple x \in \{0,1\}^n$, by $\Omega(\tuple x)$ we denote the measure of $\{\tuple x\}$ in the distribution $\Omega$. If $A \subseteq \{0,1\}^n$, by $\Omega(A)$ we denote the measure of $A$, i.e., the sum of measures of all elements of $A$. Given a function $f : \Bool^n \to \mathbb{R}$, we denote by $\E[\tuple x \sim \Omega]{f(\tuple x)}$ or $\E[\Omega]{f}$ the expected value of $f(\tuple x)$ when $\tuple x \sim \Omega$.
\color{black}

\paragraph{Fourier analysis over the biased cube} \AP Let $\cube{p}$ denote the $p$-\intro{biased distribution}\footnote{This distribution is also known as the \emph{Bernoulli distribution}.} over $\Bool$, i.e., $\cube{p}(1) = p$. Then $\cube{p, n}$ is the product $p$-biased distribution
\footnote{This distribution is also known as the \emph{binomial distribution}.} over $\{0,1\}^n$, where each bit is drawn independently from $\cube{p}$. For conciseness, we sometimes write $\cube{p}$ instead of $\cube{p, n}$ if the dimension is irrelevant or clear from the context. Moreover, we use $\E[p]{\cdot}$ and $\Pr_p[\cdot]$ as shorthands for $\E[\tuple x \sim \cube{p}]{\cdot}$ and $\Pr_{\tuple x \sim \cube{p}}[\cdot]$.

\AP
Suppose that $p \in (0,1)$. We consider $L^2(\Bool^n, \cube{p})$ — the Hilbert space of functions $f : \{0,1\}^n \to \mathbb{R}$ equipped with an inner product $\ip{f}{g} = \E[p]{f(\tuple x) \cdot g(\tuple x)}$ and norm $\norm{f} = \sqrt{\ip{f}{f}}$. We distinguish a set of functions in $L^2(\{ 0,1\}^n, \cube{p})$ called \emph{Fourier characters}, defined as follows.

\begin{definition}[\intro{Fourier characters}]\AP
    Let $p \in (0, 1)$ and $i \in [n]$. The \kl{Fourier character} corresponding to singleton $\{i\}$ is the function $\chi_i^{(p)} \in L^2(\Bool^n, \cube{p})$ defined as 
    \[
        \chi_i^{(p)}(\tuple x) = \frac{x_i - p}{\sqrt{p(1-p)}}.
    \]
    \noindent
    In general, the \kl{Fourier character} corresponding to set $S$ is $\chi_S^{(p)} = \prod_{i \in S} \chi_i^{(p)}$, for every $S \subseteq [n]$. If the distribution is clear from the context, we simply write $\chi_i$ and $\chi_S$.
\end{definition}
\AP
It is well known that \kl{Fourier characters} form an orthonormal basis of $L^2(\Bool^n, \cube{p})$. Therefore, every function $f : \Bool^n \to \mathbb{R}$, seen as an element of $L^2(\{0,1\}^n, \cube{p})$, has a unique representation $f = \sum_{S \subseteq[n]} \hat{f}(S) \chi_S$, where $\hat{f}(S) = \ip{f}{\chi_S}$. To emphasize the underlying distribution, we sometimes write $f^{(p)}$ and $\hat{f}^{(p)}(S)$. This representation is called the \intro{Fourier decomposition}, while the collection $\{ \hat{f}(S) : S \subseteq [n]\}$ is referred as \intro{Fourier coefficients}. Observe that $\E[p]{\chi_S(\tuple x)} = 0$ for every non-empty $S$, which implies that $\E[p]{f} = \hat{f}(\emptyset)$. Orthonormality of \kl{Fourier characters} yields the classic \intro{Parseval-Plancherel identity}:
\begin{gather*}
    \ip{f}{g} = \sum_{S \subseteq [n]} \hat{f}(S) \cdot \hat{g}(S) \qquad \qquad \norm{f}^2 = \sum_{S \subseteq [n]} \hat{f}(S)^2.
\end{gather*}

We also note that \kl{Fourier decomposition} extends to general product distributions $\cube{\tuple p}$ for any $\tuple p = (p_1,\dots,p_n) \in [0,1]^n$.

\paragraph{Influence} Moving forward, we recall the notion of \emph{influence} mentioned in \nameref{sec:intro}. 

\begin{restatable}[\intro{Influence}]{definition}{definitionofinfluence}\AP
    Suppose that $f: \Bool^n \to \mathbb{R}$ and $i \in [n]$. The \emph{influence} of coordinate $i$ in $f$ under a probability distribution $\Omega$ is defined as 
    \[
        \Inf[\Omega]{f, i} = \E[\tuple x \sim \Omega]{\left( f(\tuple x) -  f(\tuple x \oplus i) \right)^2}.
    \]
    The \intro{total influence} of $f$, also sometimes called its \emph{average sensitivity}, is the sum of individual influences of all coordinates, i.e. $\I[\Omega]{f} = \sum_{i=1}^n \Inf[\Omega]{f,i}$.
\end{restatable}

If the function $f$ is Boolean, the influence of coordinate $i$ can be interpreted as the probability that flipping the value of $i$-th coordinate changes the function value. Next, we show that the influence over $p$-biased distributions is tightly connected to the \kl{Fourier decomposition}.

\begin{restatable}{proposition}{alternativedefinitionofinfluence}\label{proposition:influence:alternative_definitions_of_inf}
    Suppose that $p \in (0,1)$ and $f: \Bool^n \to \mathbb{R}$. For every $i \in [n]$ we have
    \[
        \Inf[(p)]{f, i} = \frac{1}{p(1-p)} \cdot \sum_{S \ni i} \hat{f}^{(p)}(S)^2.
    \]
\end{restatable}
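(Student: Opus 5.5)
The plan is to compute the difference $f(\tuple x) - f(\tuple x \oplus i)$ directly from the \kl{Fourier decomposition} of $f$ under $\cube{p}$, and then apply \kl{Parseval-Plancherel}. Write $f = \sum_{S \subseteq [n]} \hat{f}^{(p)}(S) \chi_S$ and split the sum according to whether $S$ contains $i$. Every character with $S \not\ni i$ does not depend on $x_i$, so these terms cancel in the difference. For $S \ni i$ we have $\chi_S = \chi_i \cdot \chi_{S \setminus \{i\}}$, and $\chi_{S\setminus\{i\}}$ is unaffected by flipping $x_i$.

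The key observation is how a flip acts on $\chi_i$. If $x_i = 1$ then $\chi_i(\tuple x) - \chi_i(\tuple x \oplus i) = \frac{(1-p) - (-p)}{\sqrt{p(1-p)}} = \frac{1}{\sqrt{p(1-p)}}$. If $x_i = 0$ the difference is the negative of this. Hence
\[
    f(\tuple x) - f(\tuple x \oplus i) = \frac{\pm 1}{\sqrt{p(1-p)}} \sum_{S \ni i} \hat{f}^{(p)}(S)\, \chi_{S \setminus \{i\}}(\tuple x),
\]
where the sign depends only on $x_i$. Squaring removes the sign, so
\[
    \left(f(\tuple x) - f(\tuple x \oplus i)\right)^2 = \frac{1}{p(1-p)} \, g(\tuple x)^2, \qquad g = \sum_{S \ni i} \hat{f}^{(p)}(S)\, \chi_{S\setminus\{i\}}.
\]

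Next, take the expectation over $\tuple x \sim \cube{p}$. Here $g$ is a function whose Fourier expansion uses the distinct characters $\chi_{S \setminus \{i\}}$, one for each $S \ni i$; the map $S \mapsto S\setminus\{i\}$ is injective on sets containing $i$. By orthonormality, that is by \kl{Parseval-Plancherel} (applied on $\Bool^n$, where $g$ simply ignores coordinate $i$), we get $\E[p]{g^2} = \sum_{S \ni i} \hat{f}^{(p)}(S)^2$. This yields the claimed identity.

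There is no real obstacle here. The only points needing care are the sign bookkeeping, which squaring disposes of, and the fact that the influence is defined via the expectation under $\cube{p}$ while $\tuple x \oplus i$ is not $\cube{p}$-distributed. The latter causes no problem because we first expand pointwise and only then average over $\tuple x$.
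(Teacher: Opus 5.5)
Your proof is correct, but it takes a somewhat different route from the paper's. You work directly with the discrete derivative. Flipping $x_i$ only changes the factor $\chi_i$ in each $\chi_S$ with $S \ni i$, and $\chi_i(1)-\chi_i(0) = 1/\sqrt{p(1-p)}$, so you get
\[
f(\tuple x) - f(\tuple x \oplus i) = \pm \frac{1}{\sqrt{p(1-p)}} \sum_{S \ni i} \hat f^{(p)}(S)\, \chi_{S\setminus\{i\}}(\tuple x).
\]
Parseval over the distinct characters $\chi_{S\setminus\{i\}}$ then finishes in one step.

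The paper instead uses the ``Laplacian'' $g(\tuple x) = f(\tuple x) - \mathbb{E}_{s\sim\mu_p}[f(\tuple x^{\,i\to s})]$. Its expansion is $\sum_{S\ni i}\hat f^{(p)}(S)\chi_S$, with the characters left intact, so Plancherel gives $\|g\|^2 = \sum_{S\ni i}\hat f^{(p)}(S)^2$. The paper then needs a second, pointwise computation to relate $\|g\|^2$ to the influence. Writing $\Delta(\tuple x) = f(\tuple x^{\,i\to 1}) - f(\tuple x^{\,i\to 0})$, one has $g = -p\,\Delta$ or $(1-p)\,\Delta$ depending on $x_i$. Conditioning on $x_i$, and using that $\Delta$ does not depend on $x_i$, gives $\|g\|^2 = \bigl((1-p)p^2 + p(1-p)^2\bigr)\,\mathbb{E}[\Delta^2] = p(1-p)\,\mathbb{E}[\Delta^2]$.

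Your version is shorter: the factor $p(1-p)$ comes at once from $(\chi_i(1)-\chi_i(0))^2$, and no conditioning step is needed. The paper's version passes through the identity $\|f - \mathbb{E}_i f\|^2 = p(1-p)\cdot\mathbf{Inf}^{(p)}[f,i]$. That expresses influence as an expected conditional variance, which is the form that still makes sense on general product spaces where there is no notion of flipping a bit.
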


\begin{proof}
    Fix any $i \in [n]$. Consider the function $g : \{0,1\}^n\to \mathbb{R}$ defined as $g(\tuple x) = f(\tuple x) - \E[s \sim \cube{p}]{f(\tuple x^{\, i \to s})}$. We want to find the \kl{Fourier decomposition} of $g$. From the fact that $\E[p]{\chi_i} = 0$ we obtain that
    \[
        \E[s \sim \cube{p}]{ f(\tuple x^{\, i \to s}) } = \sum_{S \subseteq [n]} \hat{f}(S) \cdot \E[s \sim \cube{p}]{ \chi_S(\tuple x^{\, i \to s}) } = \sum_{S \not \ni i} \hat{f}(S) \cdot \chi_S(\tuple x).
    \]
    Therefore, we have $g = \sum_{S \ni i} \hat{f}(S) \cdot \chi_S$ and the \kl{Plancherel identity} gives $\norm{g}^2 = \sum_{S \ni i} \hat{f}(S)^2$. Let $\Delta(\tuple x) = f(\tuple x^{\, i \to 1}) - f(\tuple x^{\, i \to 0})$. Observe that $\Delta(\tuple x)$ does not depend on $x_i$. It follows that
    \[
        g(\tuple x) = \begin{cases}
            - p \cdot \Delta(\tuple x) & \text{ if } x_i = 0, \\
            (1-p) \cdot \Delta(\tuple x) & \text{ if } x_i = 1.      
        \end{cases}
    \]
    We finish by rewriting $\norm{g}^2$ as:
    \begin{align*}
        (1-p) &\E[p]{ p^2 \Delta(\tuple x)^2 \, \big | \, x_i = 0 } + p \E[p]{ (1-p)^2 \Delta(\tuple x)^2 \, \big | \, x_i = 1 } \\ 
         &= \big[(1-p) p^2  + (1-p)^2 p \big] \cdot \E[p]{ \Delta(\tuple x)^2 } \\
         &= p(1-p) \cdot \E[p]{ \big( f(\tuple x) - f(\tuple x \oplus i) \big)^2 }. \qedhere
    \end{align*}
\end{proof}

In particular, \cref{proposition:influence:alternative_definitions_of_inf} implies the following connection between \kl{total influence} and \kl{Fourier decomposition} over the $p$-biased distribution.

\begin{corollary}\label{corollary:influence:total_influence_formula}
    Suppose that $p \in (0,1)$ and $f : \{0,1\}^n \to \mathbb{R}$. Then, the \kl{total influence} of $f$ is
    \[
        \I[(p)]{f} = \frac{1}{p(1-p)} \cdot \sum_{S \subseteq [n]} |S| \cdot \hat{f}^{(p)}(S)^2.
    \]
\end{corollary}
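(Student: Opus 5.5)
This corollary follows by summing the identity of \cref{proposition:influence:alternative_definitions_of_inf} over all coordinates $i \in [n]$. By definition, $\I[(p)]{f} = \sum_{i=1}^n \Inf[(p)]{f,i}$. Substituting the proposition for each term gives
\[
    \I[(p)]{f} = \frac{1}{p(1-p)} \sum_{i=1}^n \sum_{S \ni i} \hat{f}^{(p)}(S)^2.
\]
The factor $\frac{1}{p(1-p)}$ does not depend on $i$, so it can be pulled out of the outer sum.

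The only step is to exchange the order of summation in this finite double sum. Summing over pairs $(i, S)$ with $i \in S \subseteq [n]$ is the same as summing over all $S \subseteq [n]$ and then over $i \in S$. The inner sum over $i \in S$ of the constant $\hat{f}^{(p)}(S)^2$ contributes exactly $|S|$ copies of that value. This yields
\[
    \I[(p)]{f} = \frac{1}{p(1-p)} \sum_{S \subseteq [n]} |S| \cdot \hat{f}^{(p)}(S)^2,
\]
as claimed. The term $S = \emptyset$ appears on the right with weight $0$, which matches the fact that it never occurs on the left.

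There is no real obstacle. The only points to check are that the normalizing factor in the proposition is uniform in $i$, and that the sums are finite, so reordering them is trivially justified.
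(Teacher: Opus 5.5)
Your proof is correct and is exactly the argument the paper intends: the paper gives no separate proof and presents the corollary as an immediate consequence of \cref{proposition:influence:alternative_definitions_of_inf}, obtained by summing that identity over $i \in [n]$ and exchanging the order of the finite double sum so that each $S$ is counted $|S|$ times.
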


\paragraph{The Shapley distribution} Another probability distribution crucial to our research is the \reintro{Shapley distribution}. The influences over this distribution, usually called \emph{Shapley values}, have been originally introduced by Shapley and Shubik in the study of voting systems \cite{shapley_values_introduction}. It has since found multiple applications in fields such as game theory \cite{shapley_values_game_theory_1, shapley_values_game_theory_2} and, interestingly for us, in the study of \kl{Ordered PCSPs} \cite{brakensiek2021conditional}.

Shapley values are usually considered in the context of \kl(func){increasing} Boolean functions, as they have a neat combinatorial definition in this setting. Suppose that $f : \{0,1\}^n \to \{0,1\}$ is \kl(func){increasing}. We uniformly draw a permutation $\sigma$ of $[n]$ and then, starting from the tuple $(0, \dots, 0)$, we flip the entries from $0$ to $1$ one by one, in the order determined by $\sigma$, until we reach $(1, \dots, 1)$. Since $f$ is \kl(func){increasing}, there will be at most one moment at which the value of $f$ evaluated on the current tuple shifts from $0$ to $1$. Then, the Shapley value of a coordinate $i \in [n]$ is the probability that this moment coincides with flipping the value at the $i$-th coordinate. Observe that this definition immediately implies that the sum of Shapley values in an \kl(func){increasing} function is at most $1$. 

Nevertheless, one of our goals is to study the Shapley values beyond increasing functions. To this end, we utilize the analytical definition of Shapley values based on the notion of \kl{influence}, which coincides with the definition above in the case of increasing functions.

\begin{definition}[\intro{Shapley distribution}]
    For every $n \geq 1$, by $\shap[n]$ we denote the \textit{Shapley distribution} over $\{0,1\}^n$, which is defined as 
    \[
        \shap[n](\tuple x) = \left( (n + 1) \cdot \binom{n}{|\tuple x|} \right)^{-1}.
    \]
    By the \textit{Shapley value} of a coordinate $i \in [n]$, we mean its \kl{influence} under the Shapley distribution: $\Inf[\shap]{f, i}$.
\end{definition}

Furthermore, we make extensive use of the following connection between influence over $p$-\kl{biased distributions} and the \kl{Shapley distribution}, which was observed by Owen \cite{owen1972} for increasing functions.

\begin{restatable}[Based on \cite{owen1972}]{proposition}{connectionshapleybiased}\label{shapley:claim:connection_between_shapley_and_biased}
    Suppose that $f : \{0,1\}^n \to \{0,1\}$ and $i \in [n]$. Then
    \[
        \Inf[\shap]{f, i} = \int_0^1 \Inf[(p)]{f, i} \, dp.
    \]
\end{restatable}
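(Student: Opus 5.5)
The plan is to write both sides as sums over the tuples $\tuple y \in \Bool^{n-1}$ of the remaining coordinates, and then match them term by term using a Beta integral. Concretely, for $\tuple y \in \Bool^{[n]\setminus\{i\}}$, let $\tuple y^{\,0}, \tuple y^{\,1} \in \Bool^n$ denote the tuples obtained by inserting $0$ or $1$ at position $i$. Define $D(\tuple y) = \bigl(f(\tuple y^{\,1}) - f(\tuple y^{\,0})\bigr)^2 \in \{0,1\}$. Since $f$ is Boolean, flipping coordinate $i$ changes the output exactly when $D(\tuple y)=1$, independently of the current value of $x_i$.

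The right-hand side is the easier one. Under $\cube{p}$ the coordinate $x_i$ is independent of $\tuple y$, so
\[
\Inf[(p)]{f,i} = \sum_{\tuple y} D(\tuple y)\, p^{|\tuple y|}(1-p)^{n-1-|\tuple y|}.
\]
Integrating over $p$ and applying the Beta integral $\int_0^1 p^k(1-p)^{n-1-k}\,dp = \frac{k!\,(n-1-k)!}{n!} = \bigl(n\binom{n-1}{k}\bigr)^{-1}$ gives
\[
\int_0^1 \Inf[(p)]{f,i}\,dp = \sum_{\tuple y} D(\tuple y)\,\Bigl(n\tbinom{n-1}{|\tuple y|}\Bigr)^{-1}.
\]

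For the left-hand side, I expand the expectation over $\tuple x \sim \shap[n]$. Each $\tuple y$ with $|\tuple y| = k$ arises from exactly two tuples $\tuple x$: one with $x_i=0$ and weight $k$, and one with $x_i=1$ and weight $k+1$. Both contribute $D(\tuple y)$ to the expectation. Hence
\[
\Inf[\shap]{f,i} = \sum_{\tuple y} D(\tuple y)\left[\frac{1}{(n+1)\binom{n}{k}} + \frac{1}{(n+1)\binom{n}{k+1}}\right], \qquad k = |\tuple y|.
\]
It remains to check the binomial identity
\[
\frac{1}{(n+1)\binom{n}{k}} + \frac{1}{(n+1)\binom{n}{k+1}} = \frac{1}{n\binom{n-1}{k}}.
\]
Writing the left side as $\frac{k!\,(n-k-1)!}{(n+1)!}\bigl[(n-k) + (k+1)\bigr] = \frac{k!\,(n-k-1)!}{n!}$ confirms it, and this matches the right-hand side term by term.

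No step is a real obstacle. The only care needed is in the bookkeeping for the left-hand side: each $\tuple y$ receives two Shapley weights, at Hamming weights $k$ and $k+1$, and the combined weight must reduce exactly to the Beta coefficient. The identity also works for non-increasing $f$, since it relies only on $D$ being independent of $x_i$, and it would extend to real-valued $f$ without change.
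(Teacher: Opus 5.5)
Your proof is correct and uses the same key tool as the paper: the Beta integral $\int_0^1 p^{a}(1-p)^{b}\,dp = a!\,b!/(a+b+1)!$, which turns integrated biased weights into Shapley weights. The only difference is bookkeeping. The paper sums directly over the pivotal tuples $\tuple x \in \Bool^n$, where $\int_0^1 \cube{p}(\tuple x)\,dp = \shap[n](\tuple x)$ holds pointwise. It therefore needs no analogue of your binomial identity, which is just this pointwise identity summed over the two values of $x_i$.
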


\begin{proof}
    \newcommand{\Piv}[0]{\mathbf{Piv}}
    Fix a function $f$ of arity $n$ and a coordinate $i \in [n]$. Let $\Piv(i)$ be the set of points $\tuple x \in \{0,1\}^n$ such that $f(\tuple x) \neq f(\tuple x \oplus i)$. We rewrite the integral on the right-hand side of the statement as follows:
    \begin{align*}
        \int_0^1 \Inf[(p)]{f, i} \, dp &= \int_0^1 \sum_{\tuple x \in \Piv(i)} \cube{p}(\tuple x) \, dp  
        \\ 
        &= \sum_{\tuple x \in \Piv(i)} \int_0^1 \cube{p}(\tuple x) \, dp = \sum_{\tuple x \in \Piv(i)} \int_0^1 p^{|\tuple x|}(1-p)^{n - |\tuple x|} \, dp.
    \end{align*}
    Notice that we obtained the Beta function $B(z_1, z_2) = \int_0^1 t^{z_1 - 1}(1-t)^{z_2-1} \, dt$. Using the well known fact that whenever $z_1, z_2$ are non-negative integers, we have $B(z_1+1,z_2+1) = (z_1 ! \cdot z_2!)/(z_1 + z_2 + 1)!$, we obtain the statement:
    \begin{align*}
        \sum_{\tuple x \in \Piv(i)} &\int_0^1 p^{\ham{\tuple x}}(1-p)^{n - \ham{\tuple x}} \, dp \\ 
        &= \sum_{\tuple x \in \Piv(i)} B(\ham{\tuple x} + 1, n - \ham{\tuple x} + 1)  \\
        &= \sum_{\tuple x \in \Piv(i)}\frac{\ham{\tuple x}! \cdot (n - \ham{\tuple x})!}{(n + 1)!} = \Inf[\shap]{f, i}. \qedhere
    \end{align*}
\end{proof}

\paragraph{Noise operator and sensitivity} The last two notions from Fourier analysis we introduce are the \textit{noise operator} and \textit{noise sensitivity}.

\begin{definition}[\intro{Noise operator}]\AP
    Given $\delta, p \in [0,1]$ and $\tuple x \in \{0,1\}^n$, the $(p, 1-\delta)$-\intro{noisy distribution} of $\tuple x$ denotes the distribution over elements $\tuple y \in \{0,1\}^n$ defined as follows: for every $i \in [n]$ independently, let 
    \[
        y_i = \begin{cases}
            x_i & \text{ with probability $1-\delta$}, \\
            \text{drawn from } \cube{p} & \text{ with probability } \delta.
        \end{cases}
    \]
    The $(p, 1-\delta)$-noisy distribution of $\tuple x$ is denoted by $\N[(p)]{\tuple x, 1-\delta}$.

    The \kl{noise operator} $\mathbf{T}_{1-\delta}^{(p)}$ on the space $L^2(\Bool^n, \cube{p})$ is the operator that assigns to every function $f$ a corresponding function defined as 
    \[
        \TN[(p)]{f,1-\delta}(\tuple x) = \E[\tuple y \sim \N[(p)]{\tuple x,1-\delta}]{ f(\tuple y)}.
    \]
\end{definition}

We note that if $\tuple x \sim \cube{p, n}$ and $\tuple y \sim \N[(p)]{\tuple x, 1-\delta}$, then the marginal distribution of $\tuple y$ is $\cube{p,n}$. The parameter $\delta$ is the level of \emph{noise} between the original point and its perturbation; the lower the noise (the closer $\delta$ is to $0$), the more similar to $\tuple x$ we expect $\tuple y$ to be. This is further reflected in the following fact, which states that the \kl{Fourier decomposition} of $\TN[(p)]{f,1-\delta}$ resembles the decomposition of $f$, as long as the noise parameter is not far from $0$.

\begin{proposition}\label{proposition:noise:coefficients_of_noise_operator}
    Suppose that $f \colon \Bool^n \to \mathbb{R}$ and $\delta \in [0,1]$. The \kl{Fourier decomposition} of $\TN[(p)]{f,1-\delta}$ is 
    \[
        \TN[(p)]{f,1-\delta} = \sum_{S \subseteq[n]} (1-\delta)^{|S|} \cdot \hat{f}^{(p)}(S) \cdot \chi_S.
    \]
\end{proposition}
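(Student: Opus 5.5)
By linearity of $\TN[(p)]{\cdot,1-\delta}$ and uniqueness of the \kl{Fourier decomposition}, it is enough to prove that every \kl{Fourier character} is an eigenfunction of the noise operator:
\[
\TN[(p)]{\chi_S,1-\delta} = (1-\delta)^{|S|}\chi_S .
\]
Indeed, expanding $f = \sum_S \hat f^{(p)}(S)\chi_S$ and applying the operator term by term then gives the claimed expansion. The operator is linear because it is an expectation over a distribution that does not depend on $f$.

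Fix $\tuple x$ and $S$. In $\tuple y \sim \N[(p)]{\tuple x,1-\delta}$ the coordinates are independent, so
\[
\E[\tuple y]{\chi_S(\tuple y)} = \prod_{i\in S}\E[y_i]{\chi_i(y_i)} .
\]
For a single coordinate, $y_i = x_i$ with probability $1-\delta$. With probability $\delta$, $y_i$ is a fresh sample from $\cube{p}$, and then its contribution is $\E[p]{\chi_i}=0$. Hence $\E[y_i]{\chi_i(y_i)} = (1-\delta)\chi_i(x_i)$. Taking the product over $i\in S$ gives $(1-\delta)^{|S|}\chi_S(\tuple x)$. For $S=\emptyset$ we have $\chi_\emptyset = 1$ and the factor is $1$, which is consistent.

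No step here is a real obstacle. The only points needing care are the independence across coordinates in the definition of the noisy distribution, which justifies factoring the expectation, and the mean-zero property of $\chi_i$ under $\cube{p}$.
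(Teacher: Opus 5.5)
Your proof is correct and follows essentially the same route as the paper: by linearity you reduce to showing each character $\chi_S$ is an eigenfunction with eigenvalue $(1-\delta)^{|S|}$, then factor the expectation over the independent noisy coordinates and use $\E[p]{\chi_i} = 0$. Your explicit appeal to uniqueness of the Fourier decomposition to identify the resulting sum as the expansion is a small detail that the paper leaves implicit.
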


\begin{proof}
    Observe that $\mathbf{T}^{(p)}_{1-\delta}$ is a linear operator. Therefore, it suffices to show that $\TN[(p)]{\chi_S,1-\delta} = (1-\delta)^{|S|} \cdot \chi_S$. This holds trivially if $S = \emptyset$. If $S$ is not empty, suppose that $\tuple x \in \Bool^n$ and $\tuple y \sim \N[(p)]{\tuple x, 1-\delta}$. We obtain:
    \begin{align*}
        \TN[(p)]{\chi_S,1-\delta}(\tuple x) &= \E[\tuple y]{\chi_S(\tuple y)} = \prod_{i \in S} \E[y_i]{ \chi_i(y_i) } \\
        &= \prod_{i \in S} (1-\delta) \cdot \chi_i(x_i) + \delta \cdot \E[p]{\chi_i} \\
        &= (1-\delta)^{|S|} \cdot \chi_S(\tuple x). \qedhere
    \end{align*}
\end{proof}

In pair with the \kl{noise operator} is the notion of \textit{noise sensitivity}, which is a measure of how prone to perturbation of input values a function is.

\begin{definition}[\intro{Noise sensitivity}]\AP
    Let $p, \delta \in [0,1]$ and $f : \Bool^n \to \Bool$. Let $\tuple x \sim \cube{p, n}$ be a $p$-biased vector and $\tuple y \sim \N[(p)]{\tuple x, 1-\delta}$. The \kl{noise sensitivity} over $\cube{p}$ of $f$ at $1-\delta$ is defined as
    \[
        \NS[(p)]{f,1-\delta}= \Pr \big[ f(\tuple x) \neq f(\tuple y)].
    \]
\end{definition}

\begin{claim}\label{noise:claim:connection_between_noise_and_coefficients}
    Suppose that $p, \delta \in [0,1]$ and $f : \{0,1 \}^n \to \{0,1\}$. Then
    \[
        \ip{f^{(p)}}{\TN[(p)]{f, 1-\delta}} = \E[p]{f} - \frac{1}{2} \cdot \NS[(p)]{f, 1-\delta}.
    \]
\end{claim}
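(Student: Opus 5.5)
The plan is to unfold the left-hand side by the definition of the \kl{noise operator} and then use that $f$ is $\{0,1\}$-valued. Take $\tuple x \sim \cube{p,n}$ and $\tuple y \sim \N[(p)]{\tuple x, 1-\delta}$. By the definition of the inner product in $L^2(\Bool^n,\cube{p})$,
\[
\ip{f^{(p)}}{\TN[(p)]{f,1-\delta}} = \E[\tuple x \sim \cube{p}]{f(\tuple x)\cdot \E[\tuple y]{f(\tuple y)}} = \E{f(\tuple x) f(\tuple y)},
\]
where the last expectation is over the joint law of $(\tuple x,\tuple y)$.

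Next, since $f$ takes values in $\Bool$, we have the pointwise identity $f(\tuple x) f(\tuple y) = \tfrac12\big(f(\tuple x) + f(\tuple y) - (f(\tuple x)-f(\tuple y))^2\big)$. Moreover, $(f(\tuple x)-f(\tuple y))^2$ is exactly the indicator of the event $f(\tuple x)\neq f(\tuple y)$. Taking expectations therefore gives
\[
\E{f(\tuple x)f(\tuple y)} = \tfrac12\big(\E{f(\tuple x)} + \E{f(\tuple y)}\big) - \tfrac12 \Pr[f(\tuple x)\neq f(\tuple y)].
\]

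Finally, as noted right after the definition of the noise operator, the marginal law of $\tuple y$ is again $\cube{p,n}$, so $\E{f(\tuple y)} = \E[p]{f}$. The first bracket is then $\E[p]{f}$, and the probability term is by definition $\NS[(p)]{f,1-\delta}$, which yields the claim. The only point requiring any care is the marginal statement for $\tuple y$. Each coordinate $y_i$ equals $x_i \sim \cube{p}$ with probability $1-\delta$ and a fresh $\cube{p}$ sample with probability $\delta$, independently across coordinates, so $y_i \sim \cube{p}$ and the coordinates remain independent. This handles all $\delta\in[0,1]$, including the degenerate cases $\delta\in\{0,1\}$; the endpoints $p\in\{0,1\}$ need no special treatment because the argument never uses the characters $\chi_S$.
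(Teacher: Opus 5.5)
Your proof is correct and matches the paper's argument: you rewrite the inner product as $\mathbb{E}[f(\overline{x})f(\overline{y})]$, apply the identity $ab=\frac{1}{2}\bigl(a+b-(a-b)^2\bigr)$ for $a,b\in\{0,1\}$, and use that $\overline{y}$ is again $p$-biased. Your explicit check of the marginal of $\overline{y}$ and of the degenerate parameter values spells out details the paper leaves implicit.
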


\begin{proof}
    Fix $p,\delta$ and $f$. Let $\tuple x \sim \cube{p,n}$ and $\tuple y \sim \N[(p)]{\tuple x, 1-\delta}$. Observe that $\ip{f}{\TN[(p)]{f, 1-\delta}} = \E{f(\tuple x) f(\tuple y)}$. We obtain:
    \begin{align*}
        \E{f(\tuple x) f(\tuple y)} &= \frac{1}{2} \left(\E{f(\tuple x)} + \E{f(\tuple y)} - \Pr[f(\tuple x) \neq f(\tuple y)] \right) \\
        &= \E[p]{f} - \frac{1}{2} \cdot \NS[(p)]{f, 1-\delta}. \qedhere
    \end{align*}
\end{proof}

\section{Influence and random 2-to-1 minors}\label{sec:general}
In this section, we aim to identify and generalize the phenomenon of what we call \textit{influence preservation} under random 2-to-1 minors. The starting point of our considerations is the following result of \cite{brakensiek2021conditional}, crucial in their hardness result for \kl{Ordered PCSPs} whose polymorphisms admit influential coordinates over the \kl{Shapley distribution}.

\begin{theorem}[{\cite[Lemma 4.4]{brakensiek2021conditional}}]\label{influence:theorem:conditional_sv_preservation}
    Let $\delta > 0$. There are $N = N(\delta) > 0$ and $\tau = \tau(\delta) > 0$ such that the following holds. Suppose that $f : \{0,1\}^{2n} \to \{0,1\}$ is an \kl(func){increasing} function with $2n \geq N$. Then
    \[
        \forall \, i \in [2n] : \Inf[\shap]{f, i} \geq \delta \implies \Pr_\pi \Big[ \Inf[\shap]{\minor{f}{\pi}, \pi(i)} \geq \tau \Big] \geq \tau,
    \]
    where $\pi : [2n] \to [n]$ is a uniformly random \kl{2-to-1 minor map}.
\end{theorem}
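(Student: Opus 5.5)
Since $f$ is increasing, \cref{shapley:claim:connection_between_shapley_and_biased} together with the combinatorial description of Shapley values gives $\sum_j \Inf[\shap]{f,j} \le 1$. My plan is to transfer the question to $p$-biased influences, where Fourier tools are available, and then integrate back. First I locate a window of biases where $i$ is influential. By Owen's formula $\int_0^1 \Inf[(p)]{f,i}\,dp \ge \delta$, and influences are at most $1$, so the set $P=\{p : \Inf[(p)]{f,i}\ge \delta/2\}$ has measure at least $\delta/2$. For increasing $f$, the Margulis--Russo formula says $\frac{d}{dp}\E[p]{f}=\I[(p)]{f}$, and hence $\int_0^1 \I[(p)]{f}\,dp\le 1$. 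By Markov, the set $P'=\{p\in P\cap[\delta/8,1-\delta/8] : \I[(p)]{f}\le K\}$ with $K=8/\delta$ still has measure at least $\delta/8$. So on a set of biases of measure $\Omega(\delta)$, coordinate $i$ has $p$-influence at least $\delta/2$ and $f$ has bounded total $p$-influence.

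Next comes a fixed-$p$ preservation step, which is the heart of the argument and the main obstacle. For $p\in P'$ and a random 2-to-1 $\pi$, let $j$ be the partner of $i$, so $\pi^{-1}(\pi(i))=\{i,j\}$. The minor $g=\minor{f}{\pi}$ evaluated on $\tuple y\sim\cube{p,n}$ equals $f$ on a correlated input in which $x_i=x_j$, while all other pairs are likewise duplicated. I would bound $\Inf[(p)]{g,\pi(i)}$ from below by the probability that flipping both $x_i$ and $x_j$ changes $f$. For increasing $f$, if flipping $i$ alone at $\tuple x^{j\to 1}$ with $x_i$ moving $0\to1$ changes $f$, then flipping both from $0$ to $1$ also changes $f$, by monotonicity. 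The difficulty is that the input is not $\cube{p,2n}$ but the ``paired'' distribution. To handle this, I would compare the two via a noise argument. Drawing $\tuple x\sim\cube{p,2n}$ and then forcing the pairs to agree is a re-randomization of half the coordinates. Bounded total influence $K$ implies low-degree concentration via \cref{corollary:influence:total_influence_formula}: the Fourier weight above degree $d$ is at most $pK/d$. By \cref{proposition:noise:coefficients_of_noise_operator} and \cref{noise:claim:connection_between_noise_and_coefficients}, this makes $f$ insensitive to noise at a small rate.

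The plan for this step is to condition on $j$ and average over $\pi$, using that the identification acts on the remaining $2n-2$ coordinates like a random pairing. I would then show that $\E[\pi]{\Inf[(p)]{\minor{f}{\pi},\pi(i)}}\ge c(\delta,K)$ as long as $n\ge N$; the condition $n \ge N$ makes the contribution from $\pi$ pairing $i$ with a specific influential $j$ negligible. Concretely, the pivotality of $i$ in $f$ at a $p$-biased point should survive with constant probability, because only a bounded-influence mass of coordinates can change when pairs are merged. I expect this to require a careful coupling: I would couple $(\tuple x, \tuple x')$, with $\tuple x'$ the paired version, so that they differ on a random set whose expected influence-weighted size is at most about $p K$, not too large. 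If this fails, I would instead restrict to a small noise rate by first replacing $f$ with $\TN[(p)]{f,1-\eta}$.

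Finally I integrate. Since $\Inf[(p)]{g,\pi(i)}\le 1$, a bound $\E[\pi]{\cdot}\ge c$ for each $p\in P'$ gives $\E[\pi]{\Inf[\shap]{g,\pi(i)}}\ge c\delta/8$, using Owen's formula for $g$ of arity $n$. Because $\Inf[\shap]{g,\cdot}\le 1$, a reverse Markov argument then yields $\Pr_\pi[\Inf[\shap]{g,\pi(i)}\ge c\delta/16]\ge c\delta/16$. I therefore take $\tau = c\delta/16$.
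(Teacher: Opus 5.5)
Your reduction to a fixed bias is sound. The set $P'$ of measure at least $\delta/8$ (Owen's formula, Margulis--Russo, Markov) is fine. So is the observation that for increasing $f$, merging $i$ with its partner $j$ loses nothing: if $i$ is pivotal at either value of $x_j$, then flipping $x_i=x_j$ jointly changes $f$. The final integration with reverse Markov is also fine. The gap is the step you call the heart of the argument: that $\mathbb{E}_\pi\,\mathbf{Inf}^{(p)}[f^{\pi},\pi(i)]\ge c(\delta,K)$ after the other $2n-2$ coordinates are randomly paired. You only sketch it, and the sketched mechanism cannot work. Forcing pairs to agree changes about $2p(1-p)(n-1)$ coordinates, which is a constant-rate perturbation. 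Low-degree concentration from $\mathbf{I}^{(p)}[f]\le K$ only gives $\mathbf{NS}^{(p)}_{1-\eta}[f]\to 0$ as $\eta\to 0$; at constant rate even a dictator is noise sensitive. Smoothing $f$ to $\mathbf{T}^{(p)}_{1-\eta}f$ does not make the pairing a small perturbation either. Your coupling estimate fares no better. An influence-weighted size of about $pK=8p/\delta\ge 1$ (for $p\in P'$) is vacuous as a bound on a probability. Moreover, bounding a correlated change of many coordinates by a sum of single-coordinate influences is not justified, since the intermediate hybrids are not $\mu_p$-distributed.

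The missing idea is to compare distributions rather than couple points. Let $\pi_0$ merge $i$ with $j$, and set $h(\overline y)=\mathbf{1}[f^{\pi_0}(\overline y0)\neq f^{\pi_0}(\overline y1)]$ on the remaining coordinates. Then exactly $\mathbb{E}_{\pi_1}\mathbf{Inf}^{(p)}[f^{\pi},\pi(i)]=\mathbb{E}_{z\sim\mathbf{PB}^{\mu_p}}[h(z)]$, where $\mathbf{PB}^{\mu_p}$ pulls a $p$-biased $\overline y\in\{0,1\}^{n-1}$ back through a uniformly random pairing. Pointwise coupling is the wrong lens here: $\mathbb{E}_{\mathbf{PB}}[x_a]=p=\mathbb{E}_{\mu_p}[x_a]$, even though your coupling changes $x_a$ with constant probability. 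The paper bounds $\mathbb{E}_{\mathbf{PB}}[h]$ from below using two facts. First, $\mathbf{PB}^{\mu_p}(z)\ge C\mu_p(z)$ for every even-weight $z$, by a Stirling estimate. Second, $h$ inherits small total influence, so flipping a single low-influence coordinate moves most of $h^{-1}(1)\cap\{\text{odd weight}\}$ onto even weights at constant-factor cost. (Alternatively, you could approximate $h$ by a Friedgut junta and use that a random pairing almost never pairs two junta coordinates, so the junta coordinates of $z$ are i.i.d.\ $\mu_p$. But then you must also show that sets of small $\mu_p$-measure have small $\mathbf{PB}$-measure.) This argument yields the paper's Cube Influence Preservation theorem. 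Invoking it for $p\in P'$, with $\mathbf{I}^{(p)}[f]\le K\le\gamma\cdot 2n$ for large $n$, completes your outline; the paper uses this same pattern for its Ordered hardness theorem. For the statement at hand, the paper instead skips the detour through $\mu_p$. It verifies that the Shapley family is reasonable and applies the Influence Preservation Lemma directly, using $\mathbf{I}_{\Phi}[f]\le 1$.
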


It should be noted that the conclusion of \cref{influence:theorem:conditional_sv_preservation} is far from trivial. In particular, randomness is necessary for it to hold, as there are \kl(func){increasing} functions with \kl{2-to-1} \kl{minors} that do not preserve influence \cite[Theorem 5.1]{brakensiek2021conditional}. Furthermore, for general functions, even randomness is not sufficient. The simplest example is the parity function $\tuple x \mapsto \sum x_i \,(\text{mod } 2)$ of even arity. Although every coordinate has maximal influence over any probability distribution, any \kl{2-to-1} \kl{minor} is a constant function with no influential coordinates.

The original proof of \cref{influence:theorem:conditional_sv_preservation} in \cite[Lemma 4.4]{brakensiek2021conditional} is combinatorial in nature and draws heavily on monotonicity; taking an analytical approach, we manage to extend it to more general classes of functions and distributions. In the following parts of this section, we introduce the abstract notion of \emph{reasonable distributions}, which unites all distributions applicable to our results; in particular, these distributions capture both $\cube{p}$ and $\shap$. Then the main technical result is stated: the \kl{Influence Preservation Lemma}. In the end, we derive \kl{Cube Influence Preservation} and \kl{Shapley Influence Preservation}, which are special cases of the general result for $\cube{p}$ and $\shap$ respectively.

In the following sections, a few examples of concrete applications of this abstract lemma are provided. In particular, we will derive hardness results under $\cube{p}$ for \kl{Polynomial Threshold} (\cref{main:theorem:hardness_for_ptfs_over_cube}) and \kl(PCSP){Unate} (\cref{main:theorem:hardness_for_unate_over_cube}) \PCSP s, which combined yield \cref{main:theorem:hardness_for_ptfs_and_unate_biased_distribution}.

\subsection{Pull-back and reasonable distributions}

Before stating the \kl{Influence Preservation Lemma} formally, we must define relevant notions. Throughout this section, by a \textit{family of distributions} we mean a collection $\Omega = \{ \Omega_n : n \geq 1 \}$ such that $\Omega_n$ is a probability distribution over $\{0,1\}^n$ for every $n$. In particular, $\cube{p}$ and $\shap$ are treated as such families. If $\tuple x \in \{0,1\}^n$, we might write $\Omega(\tuple x)$ instead of $\Omega_n(\tuple x)$ if the dimension is irrelevant or clear from the context. Given $\pi: [n] \to [m]$ and $\tuple y \in \{0,1\}^m$, the \emph{pull-back} of $\tuple y$ through $\pi$, denoted by $\pi^{-1}(\tuple y)$, is the tuple $\tuple x \in \{0,1\}^n$ such that $x_i = y_{\pi(i)}$.

We now present \emph{pull-back distributions}, whose significance in the context of \kl{influence} and random \kl{2-to-1} \kl{minors} was first observed in \cite{onrich2to1}.

\begin{definition}[\kl{Pull-back distribution}]
    Suppose that $\Omega$ is a family of distributions. By $\gpullback{\Omega}$ we denote the \textit{pull-back distribution} of $\Omega$, which is a family of distributions $\{ \gpullback{\Omega}_m : m \geq 1 \}$, defined as follows. For every even dimension $2n$, we have
    \[
        \gpullback{\Omega}_{2n}(\tuple x) = \Pr_{\pi} \bigg[ \Pr_{\tuple y \sim \Omega_n} \Big[ \pi^{-1}(\tuple y) = \tuple x \Big] \bigg],
    \]
    where $\pi : [2n] \to [n]$ is a uniformly random \kl{2-to-1 minor map}. If $m$ is odd, we define $\gpullback{\Omega}_m$ to be arbitrary.
\end{definition}

\AP
We will only be interested in pull-back distributions of even dimensions. Intuitively, a tuple $\tuple x \sim \gpullback{\Omega}_{2n}$ is obtained by sampling a random \kl{2-to-1 map} $\pi : [2n] \to [n]$ and a tuple $\tuple y \sim \Omega_n$, and then ``pulling back" the tuple $\tuple y$ through $\pi^{-1}$ to output $\pi^{-1}(\tuple y)$. In particular, $\ham{\tuple x}$ must be even whenever $\gpullback{\Omega}(\tuple x) > 0$.
Pull-back distributions will be our main tool for modeling expected coordinate influence under random \kl{2-to-1} \kl{minors}. Therefore, we should expect our extension of \cref{influence:theorem:conditional_sv_preservation} to capture only distributions that are somehow \emph{compatible} with their corresponding pull-back counterparts. 
Apart from that, we want our distributions to satisfy a few other generic properties, which we list in the following definition of \emph{reasonable distributions}.

\begin{definition}[\intro{Reasonable distributions}]\AP
    We say that a family of distributions $\Omega$ is \emph{reasonable} if it satisfies all the following properties:
    \begin{enumerate}[leftmargin=*]
        \itemAP (\intro{Symmetry}) For every $n \geq 1$ and $\tuple x, \tuple y \in \Bool^n$, we have
        \[
            \ham{\tuple x} = \ham{\tuple y} \implies \Omega_n(\tuple x) = \Omega_n(\tuple y).
        \]
        When $\Omega$ is symmetric, we write $\Omega_n(k)$ to denote the measure of any tuple $\tuple x \in \{0,1\}^n$ such that $\ham{\tuple x} = k$.
        \itemAP (\intro{Pull-back compatibility}) There exists $C > 0$ such that for every $n \geq 1$ and $\tuple x \in \{0,1\}^{2n}$, we have
        \[  
            |\tuple x| \text{ is even } \implies \Omega(\tuple x) \geq C \cdot \gpullback{\Omega}(\tuple x).
        \]
        \itemAP For every $\varepsilon > 0$, there exist a range $[\alpha, \beta] \subseteq [0,1]$, $\lambda \in (0,1)$, and $N \geq 1$ such that for every $n \geq N$:
        \begin{gather*}
            \Omega_n \Big( \big\{ \tuple x : |\tuple x| \in [\alpha n, \beta n] \big\} \Big) \geq 1 - \varepsilon, \\ 
            \forall \, k \in \{0, 1, \dots, n\} : \Omega_n\Big( \big\{ \tuple x : |\tuple x| = k \big\} \Big) < \varepsilon, \tag{\intro{flatness}} \\ 
            \forall \, k \in [\alpha n, \beta n] : \frac{\Omega_n(k \pm 1)}{\Omega_n(k)} \in (\lambda, 1/\lambda), \tag{\intro{smoothness}} \\
            \forall \, k \in [\alpha n, \beta n] : \frac{\Omega_{n \pm 1}(k)}{\Omega_n(k)} \in (\lambda, 1/\lambda), \tag{\intro{consistency}}
        \end{gather*}
        where $\pm 1$ means that the statement should hold regardless of the sign. 
    \end{enumerate}
\end{definition}

The notion of \kl{reasonable distributions} is designed in a way to capture both the classical $p$-\kl{biased distribution} and \kl{Shapley distribution}, while ensuring enough structure for meaningful results. However, we note that the axioms of reasonable distributions exclude some probability distributions which could be of interest in the landscape of Boolean \PCSP s. \kl{Flatness} disqualifies, for example, distributions defined on a \emph{slice} of the Boolean hypercube, considered in \cite{invariance_principle_on_the_slice, multislices}. \kl{Symmetry} prohibits product distributions that weight coordinates according to their semantic properties (we will see an example of such distributions in \cref{sec:monotonicity} for \kl{unate} functions). \kl{Smoothness} and \kl{consistency} are supposed to ensure that, in a relaxed manner, the distribution behaves similarly to a product distribution.

We believe that these limitations could be mitigated with further analytical advances. However, in this paper, we settle on the fact that our theory unifies the already field-tested \kl{Shapley distribution} with arguably the most fundamental probability distribution in the study of Boolean functions: the $p$-\kl{biased distribution}.

With the notion of \kl{reasonable distributions} in hand, we are in a position to state our main abstract result: the \reintro{Influence Preservation Lemma}, which asserts that significant coordinate influence is preserved with constant probability under random \kl{2-to-1 minors}, as long as the underlying distribution is \kl{reasonable}, and the considered function's \kl{total influence} is not very large. Essentially, it shows that the parity function, which we mentioned earlier, is the most disrupting case for preservation of \kl{influence}, as its \kl{total influence} is the largest possible among all Boolean functions. Although the proof is elementary, it is also quite technical. Because of this, we defer it to \cref{appendix:sec:influence_preservation}.

\begin{restatable}[\intro{Influence Preservation Lemma}]{lemma}{influencepreservation}\AP
    Suppose that $\Omega$ is a \kl{reasonable} family of distributions and $\delta > 0$. There are constants $N = N(\Omega, \delta)$, $\gamma = \gamma(\Omega, \delta) > 0$, and $\tau = \tau(\Omega, \delta) > 0$ such that the following holds. Suppose that $f : \{0,1\}^{2n} \to \{0,1\}$ is a function with $2n \geq N$. If $\, \I[\Omega]{f} \leq \gamma \cdot (2n)$, then 
    \[
        \forall \, i \in [2n] : \Inf[\Omega]{f, i} \geq \delta \implies \Pr_\pi \Big[ \Inf[\Omega]{\minor{f}{\pi}, \pi(i)} \geq \tau \Big] \geq \tau,
    \]
    where $\pi : [2n] \to [n]$ is a uniformly random \kl{2-to-1 minor map}.
\end{restatable}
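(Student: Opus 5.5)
Fix $i \in [2n]$ with $\Inf[\Omega]{f,i} \geq \delta$. The plan is to compute $\E[\pi]{\Inf[\Omega]{\minor{f}{\pi}, \pi(i)}}$ by rewriting it as a quantity for $f$ under the pull-back distribution. Once we have a lower bound of order $\delta$ on this expectation, the probability statement follows from a reverse Markov argument, because influences of Boolean functions lie in $[0,1]$. Write $j$ for the partner of $i$, that is, the unique $j \neq i$ with $\pi(j) = \pi(i)$. Flipping $y_{\pi(i)}$ in $\tuple y$ is the same as flipping both coordinates $i$ and $j$ in $\pi^{-1}(\tuple y)$. Hence
\[
\E[\pi]{\Inf[\Omega]{\minor{f}{\pi},\pi(i)}} = \E[\tuple x \sim \gpullback{\Omega}]{\big(f(\tuple x) - f(\tuple x \oplus \{i,j\})\big)^2},
\]
where $j$ is a uniformly random partner of $i$, and by construction $x_i = x_j$. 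Averaging over $j$ gives an average of ``double-flip'' influences $\mathbf{1}[f(\tuple x)\neq f(\tuple x\oplus\{i,j\})]$ on pairs with equal values.

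The main step is to lower bound this double-flip quantity under $\gpullback{\Omega}$ using the single-flip influence under $\Omega$. First we move from $\gpullback{\Omega}$ to $\Omega$. A tuple drawn from $\gpullback{\Omega}$ conditioned on $x_i = x_j$ looks like a draw from $\Omega$ on even-weight tuples, with density controlled from both sides. One side comes from \kl{pull-back compatibility}. The other side comes from \kl{symmetry} together with \kl{smoothness}, \kl{consistency} and \kl{flatness} on the bulk $|\tuple x| \in [\alpha\cdot 2n, \beta \cdot 2n]$. Restricting to the bulk costs only $\varepsilon \ll \delta$, and there the two measures are within constant factors of each other (up to a $\lambda$-dependent constant). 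So it suffices to bound, under $\Omega$, the probability over $\tuple x$ and a random $j$ with $x_j = x_i$ that $f(\tuple x) \neq f(\tuple x\oplus\{i,j\})$.

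Now use the triangle inequality: $f(\tuple x)\neq f(\tuple x\oplus i)$ implies that either $f(\tuple x)\neq f(\tuple x\oplus\{i,j\})$ or $f(\tuple x\oplus i)\neq f(\tuple x\oplus i\oplus j)$. Therefore
\[
\Pr[\text{double flip}] \geq \Inf{f,i} - \Pr_{\tuple x, j}\big[f(\tuple z)\neq f(\tuple z\oplus j)\big], \qquad \tuple z = \tuple x\oplus i.
\]
The point $\tuple z$ has weight shifted by one from that of $\tuple x$, so by \kl{smoothness} it is distributed within a constant factor of $\Omega$ on the bulk. The index $j$ is nearly uniform among coordinates of a fixed value. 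Hence the subtracted term is at most about $C_\lambda \cdot \I[\Omega]{f}/(\min(\alpha,1-\beta)\cdot 2n) \leq C' \gamma$. Choosing $\gamma$ small relative to $\delta$ makes the right-hand side at least $\delta/2$ minus the constant-factor losses, which gives an expectation of at least $c\delta$.

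The main obstacle is the change of measure. It has to be carried out carefully because the parity constraint (even weight) and the conditioning on $x_j = x_i$ interact with the density comparison. The cleanest route is to decompose everything by weight level $k$. Within a level, symmetry makes every distribution uniform, so only the level weights need comparing, and \kl{flatness} ensures that the pivotal mass of coordinate $i$ is not concentrated on a single level. I expect to use \kl{consistency} to compare $\Omega_{2n}$ with $\Omega_{2n\pm1}$ at the same $k$ when removing or adding the paired coordinate. Finally, with $\E{\Inf} \geq c\delta$ and $\Inf\le 1$, we get $\Pr[\Inf \geq c\delta/2] \geq c\delta/2$, so $\tau = c\delta/2$ works for $2n \geq N$.
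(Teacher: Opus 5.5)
Your argument works in outline, but it takes a genuinely different route from the paper. The paper factors $\pi=\pi_1\circ\pi_0$, and proceeds in two stages.
\begin{itemize}
\item It first merges $i$ with a partner $j$ that is uniform on $[2n-1]$ and independent of the input. It bounds the merged coordinate's influence in $\minor{f}{\pi_0}$ below by $\alpha\delta-\alpha^{-1}\mathbf{Inf}_\Omega[f,j]$ and averages over $j$.
\item It then passes to the derivative $h$ of $\minor{f}{\pi_0}$ in the merged coordinate. It shows that $h$ has $\Omega$-mass bounded below and small total influence, and deduces that $\mathbb{E}_{\gpullback{\Omega}}[h]$ is bounded below, fixing parity by flipping a low-influence coordinate of $h$. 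Moving between $\Omega_{2n}$, $\Omega_{2n-1}$ and $\Omega_{2n-2}$ is where consistency is used.
\end{itemize}
You instead write $\mathbb{E}_\pi\,\mathbf{Inf}_\Omega[\minor{f}{\pi},\pi(i)]$ exactly as a double-flip probability under $\gpullback{\Omega}_{2n}$, with $j$ uniform among the $j\neq i$ satisfying $x_j=x_i$. This uses symmetry: given $\tuple x$, the pairing is uniform among those consistent with $\tuple x$. You then make a single change of measure in dimension $2n$ and charge the triangle-inequality error to $\mathbf{I}_\Omega[f]$.

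This is shorter and never uses consistency. It needs no analogue of the paper's bound on $\mathbf{I}_\Omega[h]$. Parity is also easier, because $\mathrm{Piv}(i)$ is invariant under $\tuple x\mapsto\tuple x\oplus i$. The price is that your set of candidate partners has linear size only when $|\tuple x|$ stays away from $0$ and $2n$. The paper's uniform $j$ never needs this.

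Three points need repair.

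\textbf{First: which direction of pull-back compatibility you need.} The only comparison you use is $\gpullback{\Omega}(\tuple x)\ge C\,\Omega(\tuple x)$ for even $|\tuple x|$. This is the direction the paper verifies for $\mu_p$ and $\Phi$ and uses in its own proof; the displayed definition has the inequality typed the other way round. Your claim that the two measures agree up to constant factors on the bulk is false. For $\mu_{1/2}$ and $|\tuple x|=2k$, the ratio $\gpullback{\Omega}(\tuple x)/\Omega(\tuple x)=2^n\binom{n}{k}/\binom{2n}{2k}$ is exponentially large at $k=n/4$. No such bound can come from the local conditions (smoothness, consistency). Fortunately you never use it, so drop it.

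\textbf{Second: the leading term after restricting to even weights.} Once you restrict to even weights, the leading term is $\Omega(\mathrm{Piv}(i)\cap\mathrm{Even})$, not $\mathbf{Inf}_\Omega[f,i]$. You can bound it using the involution $\tuple x\mapsto\tuple x\oplus i$. This map preserves $\mathrm{Piv}(i)$, swaps parity, and by smoothness changes $\Omega$ by at most a factor $\lambda$ on the bulk. This gives $\Omega(\mathrm{Piv}(i)\cap\mathrm{Even})\ge\lambda(\delta-\varepsilon)/(1+\lambda)$. The fact that flatness prevents concentration on a single level is not what matters here.

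\textbf{Third: the bulk must avoid the extremes.} Your error bound divides by $\min(\alpha,1-\beta)$. Reasonable parameters may have $\alpha=0$ or $\beta=1$, and the paper uses $[0,1]$ for $\mu_p$. So shrink the bulk first. Let $L_k=\binom{m}{k}\Omega_m(k)$. Smoothness gives $L_{k+1}/L_k>\lambda(m-k)/(k+1)\ge 2$ whenever $k+1\le\lambda m/4$. By flatness, the levels below $\lambda m/4$ therefore carry mass $O(\varepsilon)$, and the same holds symmetrically at the top.

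\textbf{Choice of constants.} Your main term is now of order $\lambda_{\varepsilon_0}\delta$ rather than $\delta$. Moreover $\lambda_\varepsilon$ may degrade with $\varepsilon$; for $\Phi$, $\lambda\asymp\varepsilon$. So a single $\varepsilon\ll\delta$ is not enough, and you should fix constants in order:
\begin{enumerate}
\item take $\varepsilon_0=\delta/2$ for the parity step;
\item take $\varepsilon_1\ll\lambda_{\varepsilon_0}\delta$ for the error term;
\item take $\gamma$ small in terms of $\lambda_{\varepsilon_1}$ and the shrunken bulk.
\end{enumerate}
With these fixes, your reverse-Markov conclusion goes through as written.
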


In practice, to obtain hardness results for Boolean \PCSP s, one may combine \kl{Influence Preservation Lemma} with the \kl{random 2-to-1 condition} according to the following recipe:
\begin{enumerate}[leftmargin=*]
    \item Select a \kl{reasonable} family of distributions $\Omega$ (typically $\cube{p}$ or $\shap$);
    \item Show that every \kl{polymorphism} has bounded (e.g., sublinear) \kl{total influence} over $\Omega$;
    \item Prove that the number of influential coordinates of any \kl{polymorphism} over $\Omega$ is bounded by a constant.
\end{enumerate}

We emphasize that (3) does not directly imply hardness using weaker sources of hardness, such as the \kl{multiple choice hardness condition}: there are examples of functions in which all influential coordinates lose relevance in a maliciously constructed \kl{minor}. An example of such an adversarial \kl(func){increasing} function is given in \cite[Theorem 5.1]{brakensiek2021conditional}, and in \cref{sec:polynomials} we present a corresponding example in the class of \kl{polynomial threshold functions}.

When the three conditions above are met, we establish that $\Pol(\A, \B)$ satisfies the \kl{random 2-to-1 hardness condition} by defining the choice function $\Sel$ to pick all influential coordinates of a \kl{polymorphism}. Condition (2) allows us to apply the \kl{Influence Preservation Lemma}, ensuring that $\Sel$ is compatible with random \kl{2-to-1 minors}, while condition (3) guarantees that $\Sel$ distinguishes only constantly many coordinates. Although the \kl{Influence Preservation Lemma} abstracts much of this framework, verifying conditions (2) and (3) typically depends on structural properties of the \PCSP\ under consideration. 

In the following sections, we illustrate this reasoning through several examples in the settings of \kl(PCSP){Polynomial Threshold} and \kl(PCSP){Unate} \PCSP s. Before turning to these examples, we conclude this section by establishing two instantiations of the \kl{Influence Preservation Lemma}, corresponding to the $p$-biased and Shapley distributions.

\subsection{Biased and Shapley distributions are reasonable}

First, we formally show that $\cube{p}$ is \kl{reasonable}.

\begin{restatable}{proposition}{biasedisreasonable}\label{influence:proposition:biased_is_reasonable}
    The family of distributions $\cube{p} = \big\{ \cube{p,n} : n \geq 1 \big\}$ is \kl{reasonable} for every $p \in (0,1)$.
\end{restatable}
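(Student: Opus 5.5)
We verify the four defining properties of \kl{reasonable distributions} for $\Omega = \cube{p}$ one at a time. \kl{Symmetry} is immediate, since $\cube{p,n}(\tuple x) = p^{\ham{\tuple x}}(1-p)^{n-\ham{\tuple x}}$ depends only on $\ham{\tuple x}$.

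For \kl{pull-back compatibility}, we compute $\gpullback{\cube{p}}_{2n}$ explicitly. Fix $\tuple x \in \Bool^{2n}$ with $\ham{\tuple x} = 2k$. The pull-back $\pi^{-1}(\tuple y)$ equals $\tuple x$ only if $\pi$ pairs the $1$-coordinates of $\tuple x$ among themselves and the $0$-coordinates among themselves, and $\tuple y$ is $1$ exactly on the $k$ images of the $1$-pairs. Hence
\[
\gpullback{\cube{p}}_{2n}(\tuple x) = \Pr_\pi[\pi \text{ respects } \tuple x] \cdot p^k (1-p)^{n-k} \le p^k(1-p)^{n-k}.
\]
We must compare this with $\cube{p,2n}(\tuple x) = p^{2k}(1-p)^{2n-2k}$. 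The ratio $p^k(1-p)^{n-k}$ is exponentially larger, so the trivial bound $\Pr_\pi \le 1$ is insufficient. We therefore estimate $\Pr_\pi[\pi \text{ respects } \tuple x]$ exactly. The number of perfect matchings of $[2n]$ is $(2n-1)!!$, and the number respecting $\tuple x$ is $(2k-1)!!\,(2n-2k-1)!!$. The uniformly random 2-to-1 map then adds a labeling, which cancels in the ratio. So
\[
\Pr_\pi = \frac{(2k-1)!!\,(2n-2k-1)!!}{(2n-1)!!} = \frac{\binom{n}{k}}{\binom{2n}{2k}}.
\]
The desired inequality $\cube{p,2n}(\tuple x) \ge C\, \gpullback{}(\tuple x)$ thus reduces to
\[
\binom{n}{k}\, p^k(1-p)^{n-k} \le C^{-1} \binom{2n}{2k}\, p^{2k}(1-p)^{2n-2k},
\]
that is, $\Pr[\mathrm{Bin}(n,p)=k] \le C^{-1}\Pr[\mathrm{Bin}(2n,p)=2k]$ uniformly in $n$ and $k$. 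This is the main obstacle, because it must hold even at the tails. Here $\binom{n}{k}^2/\binom{2n}{2k} \approx \sqrt{\pi n}\cdot\Theta(\ldots)$ by Stirling, and the tails behave like large deviations at different rates. A cleaner route is needed.

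I would handle this via generating functions, or directly via the observation that the left side is $\Pr[\mathrm{Bin}(n,p)=k]$ while the right side is $\Pr[\mathrm{Bin}(2n,p)=2k]$. It is not clear this holds with a uniform constant. I would check small cases; for instance $k=0$ gives $(1-p)^n$ versus $(1-p)^{2n}$, which fails. So a uniform constant $C$ is impossible unless the definition implicitly allows it, and I would re-examine the pull-back probability.

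On re-examination, the failure at $k=0$ is only apparent. There $\Pr_\pi = 1$, so $\gpullback{}(\tuple 0) = (1-p)^n$, which exceeds $(1-p)^{2n}$. This is a genuine issue, so I would first verify whether the paper's $\gpullback{}$ intends $\tuple y \sim \Omega_n$ in a way making the extremes comparable. If the literal definition fails, the natural fix is to note that the needed comparison is only used on the bulk $\ham{\tuple x} \in [2\alpha n, 2\beta n]$, and to prove compatibility there via Stirling: the ratio is $\Theta_{p,\alpha,\beta}(1)\cdot \exp(-n\,\cdot\mathrm{KL\ gap})$, which is bounded when $k/n$ is near $p$.

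The remaining properties are routine. For \kl{flatness}, take $[\alpha,\beta] = [p-\eta, p+\eta]$ and apply Chernoff; the local central limit theorem gives $\max_k \Pr[\mathrm{Bin}(n,p)=k] = O(1/\sqrt n)$. For \kl{smoothness}, use $\Omega_n(k+1)/\Omega_n(k) = p/(1-p)$. For \kl{consistency}, use $\Omega_{n+1}(k)/\Omega_n(k) = 1-p$. Both ratios are constants, so any $\lambda < \min(p,1-p)/\max(p,1-p)$ works.
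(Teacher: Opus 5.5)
Symmetry, flatness and smoothness are fine and essentially as in the paper, but the pull-back compatibility step has a genuine gap. Your computation $\gpullback{\cube{p}}(\tuple x)=\frac{\binom{n}{k}}{\binom{2n}{2k}}\,p^k(1-p)^{n-k}$ for $\ham{\tuple x}=2k$ is correct. You are also right that the inequality as literally printed in the definition, $\cube{p,2n}(\tuple x)\ge C\cdot\gpullback{\cube{p}}(\tuple x)$, fails: at $\tuple x=\tuple{0}$ it says $(1-p)^{2n}\ge C(1-p)^n$. However, your proposed repair does not work. In that direction the ratio is $\Pr[\mathrm{Bin}(2n,p)=2k]/\Pr[\mathrm{Bin}(n,p)=k]\approx 2^{-1/2}e^{-nD(k/n\,\|\,p)}$, with $D$ the Kullback--Leibler divergence. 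This tends to $0$ for every fixed $k/n\neq p$. It is bounded below only when $|k-pn|=O(\sqrt n)$, never uniformly on a bulk $[2\alpha n,2\beta n]$ with $n$-independent $\alpha<p<\beta$, which is all that reasonable parameters provide. Moreover, even where it holds, that direction is not the one needed downstream.

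The paper's proof establishes the reverse inequality, $\gpullback{\cube{p}}(\tuple z)\ge C\cdot\cube{p,2n}(\tuple z)$ for every even-weight $\tuple z$, and this is the form the appendix uses. The lemma lower-bounding the expectation of $h$ under $\gpullback{\Omega}$ goes through $\gpullback{\Omega}(\mathbf{1}\cap\mathbf{Even})\ge C\cdot\Omega(\mathbf{1}\cap\mathbf{Even})$. Likewise, the Shapley case is settled via $\gpullback{\shap}(\tuple z)=\frac{2n+1}{n+1}\shap[2n](\tuple z)$ with constant $1$. So the definition's inequality should be read reversed. The missing step is then the uniform bound $\Pr[\mathrm{Bin}(n,p)=k]\ge C\cdot\Pr[\mathrm{Bin}(2n,p)=2k]$ for all $0\le k\le n$, tails included, which your own $e^{-nD}$ versus $e^{-2nD}$ heuristic already predicts. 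The paper proves it in three steps:
\begin{enumerate}
\item For $k\in\{0,n\}$ the bound is immediate with $C=1$.
\item For $1\le k\le n-1$, non-asymptotic Stirling bounds give $\binom{n}{k}/\binom{2n}{2k}\ge C\,(k/n)^k(1-k/n)^{n-k}$.
\item Since $q\mapsto q^k(1-q)^{n-k}$ is maximized at $q=k/n$, we have $(k/n)^k(1-k/n)^{n-k}\ge p^k(1-p)^{n-k}$. Multiplying with the previous bound yields $\gpullback{\cube{p}}(\tuple z)\ge C\,p^{2k}(1-p)^{2n-2k}$.
\end{enumerate}
Finally, a small slip: your choice $\lambda<\min(p,1-p)/\max(p,1-p)$ ignores the consistency ratio $1-p$; at $p=1/2$ it allows $\lambda=0.9$. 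Take, e.g., $\lambda=\min(p,1-p)/2$ instead.
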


\begin{proof}
    Fix $p \in (0,1)$. \kl{Symmetry} is trivial. \kl{Smoothness} and \kl{consistency} are clearly satisfied with $[\alpha, \beta] = [0,1]$, $\lambda = \min(p, 1-p)$, for every $\varepsilon > 0$. Since $pn$ is the mode of $\cube{p,n}$, to show \kl{flatness}, we want to argue that $\Pr[|\tuple x| = pn]$ approaches $0$ as $n$ grows. We use Stirling's approximation $n! \sim \sqrt{2\pi n}(n/e)^n$:
    \[
        \binom{n}{pn} = \frac{n!}{(pn)! \cdot ((1-p)n)!} \sim \frac{1}{\sqrt{2\pi p(1-p) n}} \cdot \frac{1}{p^{pn} \cdot (1-p)^{(1-p)n}}.
    \]
    Therefore, $\Pr[|\tuple x| = pn] = \binom{n}{pn} \cdot p^{pn} \cdot (1-p)^{(1-p)n} = \mathcal{O}(1/\sqrt{n})$, which approaches $0$.

    It remains to show the \kl{pull-back compatibility}. Fix $n \in \mathbb{N}_+$ and a tuple $\tuple z \in \{0,1\}^{2n}$ with $|\tuple z| = 2k$ for some $k \in \{0, 1, \dots, n\}$. First, we calculate the probability that for a uniformly chosen 2-to-1 map $\pi : [2n] \to [n]$, the tuple $\tuple z$ is consistent with $\pi$, i.e. that $\tuple z$ is in the image of $\tuple x \mapsto \pi^{-1}(\tuple x)$ (in other words, that $\pi$ pairs ones with ones and zeros with zeros in $\tuple z$). Let $\mathcal{S}$ be the set of all 2-to-1 maps. Every map is consistent with $\binom{n}{k}$ tuples, and hence there are $|\mathcal{S}| \cdot \binom{n}{k}$ pairs $(\pi, \tuple z)$ such that $\tuple z$ is consistent with $\pi$. By symmetry, every tuple is consistent with the same number of maps, equal to $|\mathcal{S}| \cdot \binom{n}{k} / \binom{2n}{2k}$. Dividing by the number of all maps, we obtain:
    \[
        \forall \, \tuple z \in \{0,1\}^{2n} : \Pr_\pi\left[\tuple z \, \text{ is consistent with } \pi\right] = \frac{\binom{n}{k}}{\binom{2n}{2k}}.
    \]
    We obtain the equality $\gpullback{\cube{p}}(\tuple z) = p^k \cdot (1-p)^{n-k} \cdot \binom{n}{k}/\binom{2n}{2k}$. If $k = 0$ or $k = n$, then the lemma statement clearly holds for $C = 1$. Otherwise, we use Stirling's approximation to bound the ratio of binomial coefficients:
    \begin{align*}
        \frac{\binom{n}{k}}{\binom{2n}{2k}} &= \frac{n! \cdot (2k)! \cdot (2n-2k)!}{k! \cdot (n-k)! \cdot (2n)!} \\ 
        &\sim \frac{\sqrt{4nk(n-k)} \cdot n^n \cdot (2k)^{2k} \cdot (2n-2k)^{2n-2k}}{\sqrt{2nk(n-k)} \cdot k^k \cdot (n-k)^{n-k} \cdot (2n)^{2n}} \\ 
        &\geq C \cdot \frac{k^k \cdot (n-k)^{n-k}}{n^n},
    \end{align*}
    where $C$ is a universal positive constant. Moving forward, we want to show that 
    \[
        \frac{k^k \cdot (n-k)^{n-k}}{n^n} \geq p^k \cdot (1-p)^{n-k}.  \tag{{\color{magenta}$1$}}\label{eq:influence:stirling_approximation}
    \]
    To this end, we consider the derivative of $\zeta: p \mapsto p^k \cdot (1-p)^{n-k}$ for $p \in (0,1)$. Simple calculations show that $\zeta'$ is positive in the range $(0, k/n)$, negative in the range $(k/n, 1)$, and equal to $0$ in $k/n$. Therefore, $\zeta$ attains its maximum in the range $(0,1)$ for $p = k/n$, in which case \eqref{eq:influence:stirling_approximation} becomes an equality. Finally, we obtain the desired inequality:
    \[
        \pullback{p}(\tuple z) = p^k \cdot (1-p)^{n-k} \cdot \frac{\binom{n}{k}}{\binom{2n}{2k}} \geq C \cdot  p^{2k} \cdot (1-p)^{2n-2k} = C \cdot \cube{p}(\tuple z). \qedhere
    \]
\end{proof}

Upon closer inspection of the proof of \kl{Influence Preservation Lemma}, one can observe that in the case of $\Omega = \cube{p}$, the obtained constants $\gamma,\tau$ and $N$ get weaker as $p$ gets closer to $0$ or $1$. This is a standard pattern in the study of Boolean functions --- results obtained for $\cube{1/2}$ usually directly extend to $\cube{p}$ as long as $p$ is bounded away from $0$ and $1$. Therefore, the dependencies on $\Omega$ in \kl{Influence Preservation Lemma} can be exchanged for $\lambda > 0$ such that $p \in (\lambda, 1-\lambda)$. This yields our influence preservation result for $p$-biased distributions: the \emph{Cube Influence Preservation}.

\begin{restatable}[\intro{Cube Influence Preservation}]{theorem}{cubeinfluencepreservation}\label{main:theorem:cube_influence_preservation}\AP
    Suppose that $\lambda \in (0,1/2)$ and $\delta > 0$. There are constants $N = N(\lambda, \delta)$, $\gamma = \gamma(\lambda, \delta) > 0$, and $\tau = \tau(\lambda, \delta) > 0$ such that the following holds. Suppose that $p \in (\lambda, 1-\lambda)$ and $f : \{0,1\}^{2n} \to \{0,1\}$ is a Boolean function with $2n \geq N$. If $\, \I[(p)]{f} \leq \gamma \cdot (2n)$, then 
    \[
        \forall \, i \in [2n] : \Inf[(p)]{f, i} \geq \delta \implies \Pr_\pi \Big[ \Inf[(p)]{\minor{f}{\pi}, \pi(i)} \geq \tau \Big] \geq \tau,
    \]
    where $\pi : [2n] \to [n]$ is a uniformly random \kl{2-to-1 minor map}.
\end{restatable}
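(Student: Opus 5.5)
The plan is to derive \cref{main:theorem:cube_influence_preservation} from the \kl{Influence Preservation Lemma}. The only extra content is uniformity in $p \in (\lambda, 1-\lambda)$. There are two routes, and I would use the first.

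\textbf{Route 1: uniform reasonability.} Trace the proof of the \kl{Influence Preservation Lemma}. It uses $\Omega$ only through the reasonability data:
\begin{itemize}
\item the pull-back constant $C$;
\item for each $\varepsilon$ it invokes, the range $[\alpha,\beta]$, the ratio bound $\lambda$ for smoothness and consistency, and the threshold $N$ for flatness and concentration.
\end{itemize}
The constants $N,\gamma,\tau$ it outputs are then functions of these data and $\delta$. So it suffices to check that, for $p \in (\lambda,1-\lambda)$, all these data can be chosen depending only on $\lambda$ and $\varepsilon$. I would revisit \cref{influence:proposition:biased_is_reasonable} and make each part uniform.

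\begin{itemize}
\item \emph{Smoothness and consistency.} The ratios are $p/(1-p)$, $(1-p)/p$, $1-p$ or $1/(1-p)$, and similar expressions. All lie in $(\lambda',1/\lambda')$ for some $\lambda'=\lambda'(\lambda)$, e.g. $\lambda'=\lambda/(1-\lambda)$ or $\lambda/2$.
\item \emph{Pull-back compatibility.} The constant $C$ was universal, independent of $p$. The Stirling ratio bound is $p$-free, and inequality~\eqref{eq:influence:stirling_approximation} holds for every $p$.
\item \emph{Concentration.} Take $[\alpha,\beta]=[0,1]$, so the mass condition holds trivially.
\item \emph{Flatness.} By the local limit estimate, or simply the bound $\max_k \Pr[\ham{\tuple x}=k] \le O(1/\sqrt{p(1-p)n})$ (e.g. via Berry--Esseen or the Stirling computation at the mode), this is at most $O(1/\sqrt{\lambda n})$. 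So $N$ depends only on $\varepsilon$ and $\lambda$.
\end{itemize}

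The main obstacle is that the \kl{Influence Preservation Lemma} is stated with an opaque dependence on $\Omega$. The argument therefore requires confirming, in the deferred proof, that $N,\gamma,\tau$ are monotone functions of the reasonability parameters. In particular, nothing else about $\Omega_n$ should enter, such as exact values of $\Omega_n(k)$ beyond the ratio bounds. I would reformulate the lemma in a "parametrized" form: if a family $\Omega$ satisfies reasonability with data $(C,\alpha,\beta,\lambda,N)(\varepsilon)$, then the conclusion holds with constants depending only on that data and $\delta$. Applying this to $\cube{p}$ with the uniform data above gives the theorem.

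\textbf{Route 2: compactness (fallback).} If the lemma's proof secretly uses more of $\Omega$, I would argue by compactness. Cover $[\lambda,1-\lambda]$ by finitely many points $p_j$. Then show that influences under $\cube{p}$ and $\cube{p_j}$ are comparable for nearby $p$, for both $f$ and $\minor{f}{\pi}$, and that total influences are comparable too. This fails naively, since $\mu_p$ and $\mu_{p'}$ are far apart in high dimension, so Route~1 is the real plan.
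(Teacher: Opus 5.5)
Your Route 1 is correct and is essentially the paper's argument. The paper also obtains this theorem by noting that the Influence Preservation Lemma depends on $\Omega=\mu_p$ only through reasonability data (pull-back constant, range, ratio bound, flatness threshold), and that this data can be chosen uniformly for $p\in(\lambda,1-\lambda)$; your proposal just spells that check out. One small slip: $\lambda/(1-\lambda)$ does not bound the consistency ratio $1-p$ from below when $p$ is near $1-\lambda$, but your alternative $\lambda/2$ works, and so does $\lambda$ itself (matching the paper's choice $\min(p,1-p)$).
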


As discussed earlier, the notion of \kl{reasonable distributions} is specifically designed to build a bridge between product distributions and the \kl{Shapley distribution}, which admits some properties of product spaces, in a weak sense. We now verify that, indeed, $\shap$ is \kl{reasonable}. 

\begin{restatable}{proposition}{shapleyisreasonable}\label{influence:proposition:shapley_is_reasonable}
    The family of distributions $\shap = \big\{ \shap[n] : n \geq 1 \big\}$ is \kl{reasonable}.
\end{restatable}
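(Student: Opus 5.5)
The plan is to verify each axiom of \kl{reasonable distributions} directly from the formula $\shap[n](\tuple x) = \bigl((n+1)\binom{n}{|\tuple x|}\bigr)^{-1}$. The key observation is that the Hamming weight of $\tuple x \sim \shap[n]$ is uniform on $\{0,1,\dots,n\}$: each weight class has total mass $\binom{n}{k}\cdot\bigl((n+1)\binom{n}{k}\bigr)^{-1} = 1/(n+1)$. Equivalently, $\shap[n]$ is the mixture $\int_0^1 \cube{p,n}\,dp$, by the Beta integral already used in the proof of \cref{shapley:claim:connection_between_shapley_and_biased}.

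\kl{Symmetry} holds by definition. For \kl{flatness}, every weight class has mass $1/(n+1) < \varepsilon$ once $n$ is large. Given $\varepsilon$, we take $[\alpha,\beta] = [\varepsilon/3, 1-\varepsilon/3]$, so that the mass of weights outside $[\alpha n, \beta n]$ is at most $2\varepsilon/3 + O(1/n) \le \varepsilon$ for $n \ge N$. For \kl{smoothness}, we compute
\[
\frac{\shap[n](k+1)}{\shap[n](k)} = \frac{\binom{n}{k}}{\binom{n}{k+1}} = \frac{k+1}{n-k},
\]
and for $k \in [\alpha n, \beta n]$ this lies in $[\alpha/(1-\alpha+1/n), (\beta n+1)/((1-\beta)n)]$, which is bounded away from $0$ and $\infty$ by constants depending on $\varepsilon$. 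The ratio with $k-1$ is treated the same way. For \kl{consistency}, we compute
\[
\frac{\shap[n+1](k)}{\shap[n](k)} = \frac{(n+1)\binom{n}{k}}{(n+2)\binom{n+1}{k}} = \frac{n+1}{n+2}\cdot\frac{n+1-k}{n+1},
\]
which is within $[(1-\beta)/2, 1]$ for $k \le \beta n$. The ratio $\shap[n-1](k)/\shap[n](k)$ works symmetrically, using $k \le \beta n$ to ensure $n-k$ is linear in $n$. The resulting $\lambda$ is chosen as the minimum of these bounds.

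The main step is \kl{pull-back compatibility}. I would use the mixture representation together with linearity: the pull-back operation is linear in the distribution, so $\gpullback{\shap}_{2n} = \int_0^1 \gpullback{\cube{p}}_{2n}\,dp$. Fix $\tuple z \in \{0,1\}^{2n}$ with $|\tuple z| = 2k$. The proof of \cref{influence:proposition:biased_is_reasonable} gives the exact formula
\[
\gpullback{\cube{p}}(\tuple z) = p^k(1-p)^{n-k}\binom{n}{k}\Big/\binom{2n}{2k}.
\]
Integrating, the $\binom{n}{k}$ cancels against the Beta integral $k!(n-k)!/(n+1)!$, so
\[
\gpullback{\shap}(\tuple z) = \frac{1}{(n+1)\binom{2n}{2k}},
\]
while
\[
\shap[2n](\tuple z) = \frac{1}{(2n+1)\binom{2n}{2k}}.
\]
Hence $\shap[2n](\tuple z) \ge \tfrac{1}{2}\,\gpullback{\shap}(\tuple z)$, and $C = 1/2$ works.

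I expect no serious obstacle. The only care needed is the bookkeeping at the endpoints of $[\alpha n, \beta n]$ in the smoothness and consistency checks, namely making sure that $k \pm 1$ and $n \pm 1$ do not push the ratios toward $0$ or $\infty$. This is handled by restricting to $n \ge N$ with $N$ depending on $\varepsilon$.
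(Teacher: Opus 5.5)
Your proposal is correct and follows essentially the same route as the paper: you compute the ratios directly for smoothness and consistency, and for pull-back compatibility you use the consistency probability $\binom{n}{k}/\binom{2n}{2k}$ from the biased case, combining it with $\shap[n]=\int_0^1\cube{p,n}\,dp$ where the paper plugs in $\shap[n](k)$ directly (both yield the same identity). Your constants are in fact the careful ones. The identity $\gpullback{\shap}(\tuple z)=\frac{2n+1}{n+1}\shap[2n](\tuple z)$ only gives $C=\frac12$, not the $\mathcal{C}=1$ stated in the paper, and your window $[\varepsilon/3,1-\varepsilon/3]$ absorbs the $O(1/n)$ boundary term that the paper's choice $[\varepsilon/2,1-\varepsilon/2]$ overlooks.
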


\begin{proof}
    \kl{Symmetry} and \kl{flatness} are immediate. We now show \kl{smoothness} and \kl{consistency}. Fix any $\varepsilon > 0$ and let $\alpha = \varepsilon/2, \beta= 1-\varepsilon/2$. First, we observe that if $n$ is sufficiently large, we have
    \begin{align*}
        \shap[n] \left( \left\{ \tuple x : |\tuple x| \in [\alpha n, \beta n] \right\} \right) &= 1 - \sum_{k=0}^{\floor{\alpha n}} \frac{1}{n+1} - \sum_{k=\ceil{\beta n}}^{n} \frac{1}{n+1} \\
        &\geq 1 - \frac{2(\varepsilon n/2 + 1)}{n+1} \geq 1 - \varepsilon.
    \end{align*}
    Suppose that $k \in [(\varepsilon/2) n, (1-\varepsilon/2)n]$. To show \kl{smoothness}, we write:
    \begin{gather*}
        \frac{\Omega_n(k+1)}{\Omega_n(k)} = \frac{\binom{n}{k}}{\binom{n}{k+1}} = \frac{k+1}{n-k} = \frac{k}{n-k} + o_n(1), \\ 
        \frac{\Omega_n(k-1)}{\Omega_n(k)} = \frac{\binom{n}{k}}{\binom{n}{k-1}} = \frac{n-k+1}{k} = \frac{n-k}{k} + o_n(1).
    \end{gather*}
    The function $x \mapsto x/(n-x)$ is increasing, so it admits its maximum in range $[(\varepsilon/2)n, (1-\varepsilon/2)n]$ for $x = (1-\varepsilon/2)n$, which is $(1-\varepsilon/2)/(\varepsilon/2)$. We deduce that the ratios above are indeed controlled by a constant depending only on $\varepsilon$. Similarly, for \kl{consistency} we have
    \begin{gather*}
        \frac{\Omega_{n+1}(k)}{\Omega_n(k)} = \frac{n+1}{n+2} \cdot \frac{\binom{n}{k}}{\binom{n+1}{k}} = \frac{n-k+1}{n+2} = \frac{n-k}{n} \cdot ( 1 + o_n(1) ), \\ 
        \frac{\Omega_{n-1}(k)}{\Omega_n(k)} = \frac{n+1}{n} \cdot \frac{\binom{n}{k}}{\binom{n-1}{k}} = \frac{n+1}{n-k} = \frac{n}{n-k} + o_n(1). 
    \end{gather*}
    Again, the function $x \mapsto (n-x)/n$ is decreasing, so its maximum is attained for $x = (\varepsilon n/2)n$ and is equal to $(1-\varepsilon/2)/(\varepsilon/2)$. To summarize, the choice $\lambda = c \cdot (\varepsilon/2)(1-(\varepsilon/2))$ for some universal constant $c$ satisfies the required properties.

    It remains to show \kl{pull-back compatibility}. Fix $n \geq 1$ and a tuple $\tuple z \in \{0,1\}^{2n}$ with $|\tuple z| = 2k$. Following an identical argument as in the case of biased distribution in the proof of \cref{influence:proposition:biased_is_reasonable}, we have
    \[
        \gpullback{\shap}(\tuple z) = \shap[n](k) \cdot \frac{\binom{n}{k}}{\binom{2n}{2k}} = \frac{1}{n+1} \cdot \frac{1}{\binom{2n}{2k}} = \frac{2n+1}{n+1} \cdot \shap[2n](\tuple z).
    \]
    Therefore, \kl{pull-back compatibility} is satisfied with $\mathcal{C} = 1$.
\end{proof}

By virtue of our general framework, we immediately deduce our \emph{Shapley Influence Preservation} result, which is a direct extension of \cref{influence:theorem:conditional_sv_preservation} to general Boolean functions, since the \kl{total influence} of increasing functions over $\shap$ is at most $1$.

\begin{theorem}[\intro{Shapley Influence Preservation}]\label{main:theorem:shapley_influence_preservation}\AP
    Suppose that $\delta > 0$. There are constants $N = N(\lambda, \delta)$, $\gamma = \gamma(\delta) > 0$, and $\tau = \tau(\delta) > 0$ such that the following holds. Suppose that $f : \{0,1\}^{2n} \to \{0,1\}$ is a Boolean function with $2n \geq N$. If $\, \I[\shap]{f} \leq \gamma \cdot (2n)$, then 
    \[
        \forall \, i \in [2n] : \Inf[\shap]{f, i} \geq \delta \implies \Pr_\pi \Big[ \Inf[\shap]{\minor{f}{\pi}, \pi(i)} \geq \tau \Big] \geq \tau,
    \]
    where $\pi : [2n] \to [n]$ is a uniformly random \kl{2-to-1 minor map}.
\end{theorem}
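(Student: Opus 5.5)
This is a direct instantiation of the \kl{Influence Preservation Lemma}, applied with $\Omega = \shap$. By \cref{influence:proposition:shapley_is_reasonable}, the family $\shap = \{\shap[n] : n \ge 1\}$ is \kl{reasonable}. So for the given $\delta > 0$, the lemma provides constants $N = N(\shap, \delta)$, $\gamma = \gamma(\shap, \delta) > 0$ and $\tau = \tau(\shap, \delta) > 0$. Since $\shap$ is a single fixed family with no free parameter, these constants depend only on $\delta$. (The $\lambda$ in ``$N(\lambda,\delta)$'' in the statement is vestigial: $\lambda$ here can only be the smoothness/consistency constant that \cref{influence:proposition:shapley_is_reasonable} produces for each $\varepsilon$, and that is itself determined by the family.)

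The conclusion then follows verbatim. Take $f \colon \Bool^{2n} \to \Bool$ with $2n \ge N$ and $\I[\shap]{f} \le \gamma \cdot (2n)$, and let $i$ be a coordinate with $\Inf[\shap]{f,i} \ge \delta$. The lemma gives $\Pr_\pi[\Inf[\shap]{\minor{f}{\pi},\pi(i)} \ge \tau] \ge \tau$ for a uniformly random \kl{2-to-1 minor map} $\pi$. Here the family $\shap$ is evaluated in dimension $2n$ for $f$ and in dimension $n$ for the minor $\minor{f}{\pi}$, exactly as in the lemma.

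It is also worth recording why this recovers \cref{influence:theorem:conditional_sv_preservation}. For \kl(func){increasing} $f$, the combinatorial description of Shapley values, together with \cref{shapley:claim:connection_between_shapley_and_biased}, shows $\I[\shap]{f} \le 1$. This is at most $\gamma \cdot 2n$ once $2n \ge 1/\gamma$, so one simply enlarges $N$ to $\max(N, \lceil 1/\gamma\rceil)$.

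The only real obstacle is the \kl{Influence Preservation Lemma} itself, whose proof is deferred. Given that lemma and the reasonability check, this step is purely formal. The one point that needs care is confirming that the pull-back compatibility constant $\mathcal C = 1$ obtained in \cref{influence:proposition:shapley_is_reasonable} is valid in every dimension: the identity $\gpullback{\shap}(\tuple z) = \frac{2n+1}{n+1}\shap[2n](\tuple z)$ gives $\shap[2n](\tuple z) \ge \frac12 \gpullback{\shap}(\tuple z)$, so the constant $C = 1/2$ works uniformly in $n$ and no further dependence enters.
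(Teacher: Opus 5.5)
Your proposal is correct and matches the paper's argument: the paper also obtains this theorem by instantiating the Influence Preservation Lemma with $\Omega = \shap$, using \cref{influence:proposition:shapley_is_reasonable} for reasonability. Your side remarks are also right: the $\lambda$ in $N(\lambda,\delta)$ is vestigial, and the identity $\gpullback{\shap}(\tuple z) = \frac{2n+1}{n+1}\shap[2n](\tuple z)$ yields the valid pull-back compatibility constant $C = 1/2$, not the $\mathcal C = 1$ written in the paper.
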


\section{Hardness for Polynomial Threshold PCSPs}\label{sec:polynomials}
In this section, we discuss how our general hardness framework based on the \kl{Influence Preservation Lemma} can be applied to \emph{Polynomial Threshold PCSPs}, which we formally define below.

\begin{definition}[\intro{Polynomial thresholds}]\AP
    We say that $f\colon \{0,1\}^n \to \{0,1\}$ is a degree-$k$ \reintro{polynomial threshold function} ($\PTF$) if there exists a real\footnote{One can equivalently assume that the coefficients are rational, as every polynomial $\{0,1\}^n \to \mathbb{R} \setminus \{0\}$ can be slightly perturbed (while not affecting its sign function) to obtain rational coefficients.} multilinear polynomial $Q : \{0,1\}^n \to \mathbb{R}$ of degree $k$, such that
    \[
        f(\tuple x) = \begin{cases}
            1 & \text{ if } Q(\tuple x) > 0, \\ 
            0 & \text{ otherwise.}
        \end{cases}
    \]
    For every $k \geq 1$, by $\PTF_k$ we denote the class of all degree-$k$ $\PTF$s. Additionally, if $f \in \PTF_1$, then we say that $f$ is a \intro{linear threshold function}.

    \AP
    \sloppy We say that $\PCSP(\A, \B)$ is a \intro{Polynomial Threshold PCSP} if $\Pol(\A, \B) \subseteq \PTF_k$ for some $k \geq 1$. If $\Pol(\A, \B) \subseteq \PTF_1$, then we say that $\PCSP(\A, \B)$ is a \intro{Linear Threshold PCSP}.
\end{definition}

\subsection{An adversary quadratic threshold function}
The class of \kl{Linear Threshold PCSPs} has been completely characterized: \cite{injhardness} establishes a full complexity dichotomy, with the unconditional hardness relying on a reduction from the \emph{Smooth Gap Label Cover} problem. This reduction requires a weakened form of the \kl{multiple choice hardness condition}, in which the choice function $\Sel$ must be compatible only with \kl{minors} which do not identify distinguished coordinates with each other. The reason why this suffices in the linear setting is that the influence of a coordinate in a \kl{linear threshold function} cannot be eliminated without identifying it with another coordinate carrying a weight of opposite sign. However, this intuition breaks down for general \kl{polynomial threshold functions} --- even when restricted to \kl{2-to-1 minors} --- as we illustrate by the following example.

\begin{example}\label{ptfs:example:fan}\AP
    For every $n \geq 1$, let $\intro*\gen{fan}{n} : \Bool^{4n+1} \to \Bool$ be a degree-2 \kl{polynomial threshold function}, defined as the sign of a polynomial encoded by the weighted graph below, where vertices correspond to coordinates, and vertex and edge weights represent linear and quadratic coefficients, respectively.
    \begin{center}
        \includegraphics[scale=1.1]{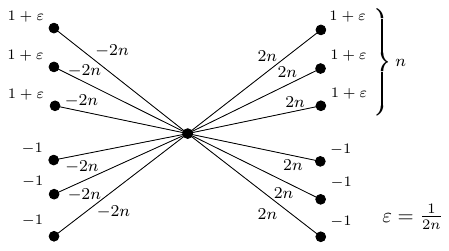}
    \end{center}
\end{example}

We show that $\gen{fan}{n}$ exhibits substantially poorer influence preservation than \kl{linear threshold functions}: although the coordinate corresponding to the middle vertex in the above definition has constant influence under $\cube{1/2}$, one can pair all other coordinates to make it irrelevant and instead create a different influential coordinate (\cref{claim:adversary_fan}).

\begin{restatable}{claim}{adversaryfan}\label{claim:adversary_fan}
    Suppose that $n \geq 1$ and assume that $4n+1$ is the unique coordinate with linear weight $0$ in $\gen{fan}{n}$. Then $\Inf[(1/2)]{\gen{fan}{n}, 4n+1} \geq \frac{1}{24} - o_n(1)$ and for every $I \subseteq [4n+1]$ with $|I| \leq n-1$, there exists a minor map $\pi : [4n+1] \to [2n+1]$ such that:
    \begin{enumerate}
        \item $\pi^{-1}(2n+1) = \{4n+1\}$,
        \item $\forall \, i \in [2n] : |\pi^{-1}(i)| = 2$ and $|\pi^{-1}(i) \cap I| \leq 1$,
        \item $\Inf[(1/2)]{\minor{ \left(\gen{fan}{n} \right)}{\pi}, \pi(4n+1)} = 0$, and
        \item $\exists \, i \in [4n] \setminus I : \Inf[(1/2)]{\minor{ \left(\gen{fan}{n} \right)}{\pi}, \pi(i)} \geq \frac{1}{2} - o_n(1)$.
    \end{enumerate}
\end{restatable}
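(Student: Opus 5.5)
The plan is a direct computation from the explicit polynomial of $\gen{fan}{n}$. Write its polynomial as $Q(\tuple x) = L(\tuple x_{[4n]}) + x_{4n+1}\cdot M(\tuple x_{[4n]}) + R(\tuple x_{[4n]})$. Here $L$ collects the linear weights of the $4n$ outer vertices, $M$ collects the weights of the edges incident to the middle vertex (its own linear weight is $0$), and $R$ collects the remaining edges among the outer vertices. Reading off the figure, I expect the outer vertices to come in groups whose edge weights to the centre occur with both signs, and with linear weights of comparable magnitude. Then the centre is pivotal exactly when $\sgn(L+R)\neq\sgn(L+R+M)$.

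For the first claim I would use the pivotal-set description $\Inf[(1/2)]{\gen{fan}{n},4n+1}=\Pr_{1/2}[\sgn(L+R)\neq \sgn(L+R+M)]$. Under $\cube{1/2}$, the relevant quantities are, after normalisation by $\sqrt n$, sums of independent bounded terms. By the multivariate central limit theorem, the pair $\big((L+R)/\sqrt n,\,M/\sqrt n\big)$ converges to a Gaussian vector, or to a Gaussian plus a deterministic drift depending on how the weights are centred. The pivotal probability then converges to an explicit orthant-type probability for two correlated Gaussians, which can be written via $\frac{1}{2\pi}\arccos$ of their correlation. I would compute this limit and check that it is at least $\frac1{24}$, with the CLT error (Berry--Esseen) absorbed into $o_n(1)$. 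If $R$ is genuinely quadratic, I would condition on the relevant subset of coordinates to make the remaining sums linear before applying the CLT.

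For the minor, the goal is a pairing $\pi$ of $[4n]$ into $2n$ pairs, with $4n+1\mapsto 2n+1$, that makes $M^\pi\equiv 0$ as a function. To do this I would pair each vertex joined to the centre by weight $+w$ with one joined by weight $-w$, so that the $x_{4n+1}$-dependence cancels identically and item 3 holds. The same pairing should also merge the linear and outer-edge terms so that $L^\pi+R^\pi$ is dominated by a single variable $y_{j}$, which must carry weight comparable to the total spread of the remaining terms. Item 4 then follows: $y_j$ is pivotal whenever the rest lies in a window, and by CLT anti-concentration that window has probability $\frac12-o_n(1)$.

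The main obstacle is combining this with the constraint coming from $I$. Every pair must contain at most one element of $I$, and the new influential coordinate must be $\pi(i)$ for some $i\notin I$. The $\pm w$ classes have size about $n$ each and $|I|\le n-1$, so I would build the pairing greedily. First, match each element of $I$ in one class with a non-$I$ element of the opposite class; a Hall-type count shows this is possible because each class has more than $n-1$ non-$I$ elements. Second, choose the "creator" pair among the leftover non-$I$ vertices, so that the designated vertex $i$ lies outside $I$. Third, pair the remaining vertices arbitrarily within the cancelling constraint. The delicate point is checking that the creator pair can always be chosen disjoint from $I$ while the cancellation of $M$ is preserved. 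This is exactly why the bound $|I|\le n-1$ rather than $|I|\le n$ is needed, and I would verify it by tracking the class sizes.
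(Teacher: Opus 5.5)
\textbf{There is a genuine gap in items 3--4.} Cancelling the centre's edges is the easy part. What item 4 needs is control of the \emph{linear} weights of the merged coordinates, and your plan does not provide it.

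In $\mathrm{fan}^{(n)}$ each outer vertex has a single quadratic weight, namely its edge to the centre, so $R\equiv 0$. The outer vertices split into four classes $A(\circ,\triangle)$ of size $n$, according to the sign $\circ$ of the linear weight and the sign $\triangle$ of the edge weight. In particular each edge-sign class has $2n$ elements, not about $n$.

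Suppose, as you propose, that every pair consists of one $+$-edge and one $-$-edge vertex. Then the merged weights are:
\begin{itemize}
\item a pair $A(-,+)$--$A(-,-)$ gives a coordinate of weight $-2$;
\item a pair $A(+,+)$--$A(+,-)$ gives weight $2+2\varepsilon$;
\item the cross pairs $A(-,+)$--$A(+,-)$ and $A(+,+)$--$A(-,-)$ give weight $\varepsilon=1/(2n)$.
\end{itemize}
Counting forces the same number $x$ of pairs of each of the first two kinds. For $x=0$ the minor is an OR of $2n$ variables. For $x\ge 2$, a heavy coordinate is pivotal only when the other heavy counts balance. This event has probability about $\binom{2x-1}{x-1}2^{1-2x}$, which is already $3/8$ at $x=2$, while the light coordinates have exponentially small influence. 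So item 4 holds only for $x=1$.

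Your steps ``match each element of $I$ with a non-$I$ element of the opposite class'' and ``pair the rest arbitrarily within the cancelling constraint'' do not enforce $x=1$. For example, take $I\subseteq A(-,+)$ with $|I|=n-1$ and match it into $A(-,-)$; this gives $x\ge n-1$. Also, no CLT or anti-concentration argument can supply the value $\frac12$ here, since every merged weight is $-2$, $2+2\varepsilon$ or $\varepsilon$. The $\frac12$ comes from a single coin flip.

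\textbf{The missing construction is the paper's.}
\begin{itemize}
\item Since $|I|\le n-1<n$, choose $(i,j,k,\ell)\in A(-,+)\times A(-,-)\times A(+,+)\times A(+,-)$ avoiding $I$.
\item Set $\pi(i)=\pi(j)$ and $\pi(k)=\pi(\ell)$.
\item Match $A(-,+)\setminus\{i\}$ with $A(+,-)\setminus\{\ell\}$, and $A(+,+)\setminus\{k\}$ with $A(-,-)\setminus\{j\}$. Each matching is between two sets of size $n-1\ge|I|$, so $I$--$I$ pairs can be avoided.
\end{itemize}
The minor is then a linear threshold function with weight $0$ on $\pi(4n+1)$, $-2$ on $\pi(i)$, $2+2\varepsilon$ on $\pi(k)$, and $\varepsilon$ on the other $2n-2$ coordinates, whose total is less than $1$. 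Hence $\pi(i)$ is pivotal whenever $y_{\pi(k)}=0$ and some light coordinate equals $1$. A direct count gives $\frac12-o_n(1)$.

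\textbf{On the first claim.} Your CLT route is a legitimate alternative. With $R\equiv 0$, the rescaled pair $(L,M)$ has a Gaussian limit, and the pivotal probability tends to an explicit $\frac1\pi\arccos\rho$. However, you never fix the weights needed to check that this limit exceeds $\frac1{24}$. The paper avoids any limit theorem. It exhibits a strict ordering of the four counts $|\overline{x}_{A(\circ,\triangle)}|$ that forces the centre to be pivotal. By exchangeability of four i.i.d.\ binomials this event has probability $1/4!-o_n(1)$, which is exactly where the constant $\frac1{24}$ comes from.
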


\begin{proof}
    \newcommand{\1}[0]{\mathbf{1}}
    Fix $n \geq 1$ and let $f = \gen{fan}{n}$. We partition the set $[4n]$ into $4$ sets: for every $(\circ, \triangle) \in \{+,-\}^2$, by $A(\circ, \triangle)$ we denote the set of $n$ coordinates with linear weights of sign $\circ$, and the single quadratic weight of sign $\triangle$.

    Let $\tuple x \sim \cube{1/2, 4n+1}$. Observe that whenever $\ham{\tuple x_{A(-,+)}} > \ham{\tuple x_{A(-,-)}} > \ham{\tuple x_{A(+, +)}} > \ham{\tuple x_{A(+,-)}}$, we have $f(\tuple x) \neq f(\tuple x \oplus (4n+1))$. Since the probability of such an event is $1/4! - o_1(n)$, we obtain the lower-bound for $\Inf[(1/2)]{f, 4n+1}$.

    Fix a set $I \subseteq [4n+1]$ with $|I| \leq n-1$. There exists a tuple $(i, j, k, \ell) \in A(-, +) \times A(-,-) \times A(+, +) \times A(+,-)$ such that $\{i,j,k,\ell\} \cap I = \emptyset$. We obtain the desired minor map $\pi$ by pairing $\pi(i) = \pi(j)$, $\pi(k) = \pi(\ell)$, and pairing coordinates in $A(-,+) \setminus \{i\}$ with $A(+,-) \setminus \{\ell\}$ and $A(+,+) \setminus \{k\}$ with $A(-, -) \setminus \{j \}$, in such a way to avoid pairing two coordinates from $I$. The resulting function $g = \minor{f}{\pi}$ is a \kl{linear threshold function}, where the weight of $\pi(4n+1)$ is $0$ (and therefore $\pi(4n+1)$ has influence $0$), the weight of $\pi(\{i,j\})$ is $-2$, the weight of $\pi(\{k, \ell\})$ is $2+2\varepsilon$, and all of the remaining coordinates have weight $\varepsilon = 1/(2n)$. We finish by observing that given $\tuple y \sim \cube{1/2, 2n+1}$, we have $g(\tuple y) \neq g(\tuple y \oplus \pi(i))$ whenever $y_{\pi(k)} = 0$ and $\ham{\tuple y} \geq 2$.
\end{proof}

\subsection{Hardness over biased distribution}
In spite of the adversary example we presented, we prove that introducing randomness is sufficient to recover influence preservation, not only for quadratic examples such as $\gen{fan}{n}$, but for all bounded-degree $\PTF$s. Using the \kl{Influence Preservation Lemma} established in previous section, we are able to show that, assuming the \richconj, every \kl{Polynomial Threshold PCSP} is \NP-hard provided that all of its polymorphisms have at least one coordinate with significant influence over the $p$-\kl{biased distribution}.

\begin{restatable}{theorem}{hardnessptfsovercube}\label{main:theorem:hardness_for_ptfs_over_cube}
    Suppose that $(\A,\B)$ is a \kl{Polynomial Threshold PCSP} \kl(PCSP){template}. If
    \[
        \exists \, p, \delta > 0 : \forall \, f \in \Pol(\A, \B) : \max_i \Inf[(p)]{f, i} \geq \delta,
    \]
    then $\PCSP(\A, \B)$ is \NP-hard, assuming the \richconj.
\end{restatable}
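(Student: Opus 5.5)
We will verify the \kl{random 2-to-1 hardness condition} for $\Pol(\A,\B)$ and then invoke \cref{prelims:theorem:random_2_to_1_condition_implies_hardness}, following the three-step recipe after the \kl{Influence Preservation Lemma} with $\Omega = \cube{p}$. First, we may assume $p \in (0,1)$; fix $\lambda$ with $p \in (\lambda, 1-\lambda)$ and apply \kl{Cube Influence Preservation} (\cref{main:theorem:cube_influence_preservation}) with parameters $\lambda$ and $\delta' = \delta$. This yields $N, \gamma, \tau$. The choice function will be $\Sel(f) = \{ i : \Inf[(p)]{f,i} \geq \tau'\}$ for a threshold $\tau' = \min(\delta,\tau)$, so that every polymorphism has $\Sel(f) \neq \emptyset$ by hypothesis, and the preserved coordinate $\pi(i)$ lands in $\Sel(\minor f\pi)$ with probability $\geq \tau$.

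The two structural inputs needed for bounded-degree $\PTF$s are: (2) sublinear total influence, and (3) a constant bound on the number of coordinates with influence at least $\tau'$. For (2) I would use the known bound on the average sensitivity of degree-$k$ $\PTF$s under $\cube{p}$ (Kane / Diakonikolas et al.~\cite{Harsha2009BoundingTS}, transferred from the uniform to the $p$-biased cube, e.g. via the Gotsman--Linial style argument): $\I[(p)]{f} \leq \mathcal{O}_{k,p}(n^{1-c_k})$ for some $c_k>0$, which is at most $\gamma \cdot 2n$ once $2n$ exceeds some $N'(k,p,\gamma)$. For (3), rather than bounding influential coordinates directly, note that the same total influence bound only gives $|\Sel(f)| \leq n^{1-c_k}/\tau'$, which is not constant; so instead I would use a regularity/critical-index argument: a degree-$k$ $\PTF$ whose polynomial is $\epsilon$-regular has all influences small by the \kl{Invariance Principle} and \kl{Carbery-Wright anti-concentration}, and the regularity lemma for $\PTF$s (Diakonikolas--Servedio--Tan--Wan, Harsha--Klivans--Meka) says that after restricting a constant number $K(k,\epsilon,\tau')$ of head variables, most restrictions are either regular or close to constant. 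Coordinates outside the head then have influence $<\tau'$, giving $|\Sel(f)| \leq K$. I expect this step, making the regularity lemma yield a deterministic constant cap on the number of $\tau'$-influential coordinates under $\cube{p}$ (rather than the uniform measure), to be the main obstacle; I would first try transferring via the standard observation that $\cube{p}$ with $p$ bounded away from $0,1$ is handled by the biased versions of hypercontractivity and invariance.

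Finally, small arities are handled trivially: for $\ar(f) < \max(N,N')$, $\Sel(f)$ has size at most that constant, and for any $f$ of even arity below the threshold, any fixed $i \in \Sel(f)$ satisfies $\pi(i) \in \Sel(\minor f \pi)$ with probability at least the probability that a random 2-to-1 map $\pi$ lands in a fixed set, which is bounded below by a constant depending only on the arity bound (since $\minor f\pi$ is itself a polymorphism, so $\Sel(\minor f\pi)\neq\emptyset$, and there are finitely many $\pi$). Taking $\mathcal C = \max(K, \max(N,N'))$ and the minimum of the probability constants completes the random 2-to-1 condition, hence NP-hardness under the \richconj.
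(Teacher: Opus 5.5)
Your skeleton matches the paper's: verify the random 2-to-1 condition for $\Pol(\A,\B)$, with $\Sel$ selecting the $\mu_p$-influential coordinates, and use Cube Influence Preservation at large arity. The step that fails is the small-arity case. Let $N_0=\max(N,N')$. For $\ar(f)=2n<N_0$ you keep $\Sel(f)=\{i:\mathbf{Inf}^{(p)}_i[f]\ge\tau'\}$. You then claim that $\pi(i)\in\Sel(f^\pi)$ with probability bounded below by a constant, because $\Sel(f^\pi)\neq\emptyset$ and there are finitely many $\pi$. Neither fact gives this:
\begin{itemize}
\item Nonemptiness of $\Sel(f^\pi)$ says nothing about whether it meets $\pi(\Sel(f))$.
\item Finiteness of the set of 2-to-1 maps yields a positive lower bound only once some $\pi$ is already known to be good.
\end{itemize}
Below $N_0$ you have no influence-preservation statement. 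So nothing in your argument excludes an $f$ where, for every 2-to-1 $\pi$, all of $\pi(\Sel(f))$ has influence below $\tau'$ in $f^\pi$ while some $\pi(j)$ with $j\notin\Sel(f)$ is influential. The probability is then $0$.

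The paper avoids this by setting $\Sel(f)=[\ar(f)]$ whenever $\ar(f)$ is below the threshold. For $2n$ below it, this gives $\pi(\Sel(f))=[n]=\Sel(f^\pi)$. For $2n$ above it, the event $\mathbf{Inf}^{(p)}_{\pi(i)}[f^\pi]\ge\tau$ still forces $\pi(i)\in\Sel(f^\pi)$, whether or not $n$ is below the threshold. Alternatively, lower the selection threshold to at most $\min(p,1-p)^{N_0}$. Then at small arities $\Sel$ is exactly the set of essential coordinates, and you can use that $f^\pi$ is non-constant with $\emptyset\neq\mathrm{ess}(f^\pi)\subseteq\pi(\mathrm{ess}(f))$. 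Your $\mathcal C$ already dominates $N_0$, so either fix is free.

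Your route to the constant bound on influential coordinates differs from the paper's and is much heavier than necessary. The regularity-lemma argument is sound in outline: adaptively restrict a constant-size head until each leaf is regular or near-constant, then bound non-head influences via invariance and Carbery--Wright. But it needs a $p$-biased regularity lemma plus a separate proof that regular PTFs have uniformly small influences.

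The paper instead uses the $p$-biased noise-sensitivity bound for degree-$k$ PTFs. This is obtained by re-running Harsha--Klivans--Meka with $p$-biased hypercontractivity and invariance, which is what your step (2) really requires rather than a black-box transfer from $\mu_{1/2}$. That bound gives low-degree concentration $\sum_{|S|>d}\hat f(S)^2\le\varepsilon$ with $d=d(\lambda,\varepsilon,k)$.
\begin{itemize}
\item Taking $\varepsilon=\lambda(1-\lambda)\delta/2$, each coordinate of influence at least $\delta$ has $\sum_{S\ni i,\,|S|\le d}\hat f(S)^2\ge\lambda(1-\lambda)\delta/2$. Since $\sum_{|S|\le d}|S|\hat f(S)^2\le d$, there are at most $2d/(\lambda(1-\lambda)\delta)$ such coordinates.
\item For every $\varepsilon$, the same concentration gives $\mathbf{I}^{(p)}[f]\le(d+\varepsilon n)/(p(1-p))$, hence $\mathbf{I}^{(p)}[f]=o(n)$, which is your step (2).
\end{itemize}
What you flagged as the main obstacle is a two-line Parseval consequence of the ingredient you already need.
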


The rest of this section is devoted to proving \cref{main:theorem:hardness_for_ptfs_over_cube}. We start with a brief description of the proof strategy. Recall that our general approach discussed in \cref{sec:general} requires that the \kl{total influence} of any \kl{polymorphism} of a \kl{Polynomial Threshold PCSP} is not too large, and that only a bounded number of coordinates can have significant influence. We establish these properties through an analytic study of \kl{polynomial threshold functions}.

\AP
The key property of \PTF s that we exploit is their \intro{low-degree concentration}: informally, most of the weight of their \kl{Fourier coefficients} is concentrated on sets of small size. We derive this fact from our extension of the \emph{noise sensitivity} bounds for \PTF s of \cite{Harsha2009BoundingTS, bounding_ptfs_2} established over the distribution $\cube{1/2}$.\footnote{The noise sensitivity bounds of \cite{Harsha2009BoundingTS, bounding_ptfs_2} were later improved in \cite{bounding_ptfs_3, bounding_ptfs_4} for the unbiased distribution $\cube{1/2}$. For our purposes, we are content with the weaker bounds}.
In \cref{appendix:sec:noise_stability_of_ptfs}, we present an adaptation of the noise sensitivity result \cite[Theorem 1.3]{Harsha2009BoundingTS} to $\cube{p}$, together with a proof of the following proposition.

\begin{restatable}[\kl{Low-degree concentration} of $\PTF$s]{proposition}{ptfslowdegreeconcentration}\label{ptfs:proposition:ptfs_are_low_degree_concentrated}
    Suppose that $\lambda \in (0,1/2), p \in (\lambda, 1-\lambda), \varepsilon > 0$ and $k \geq 1$. There exists a constant $d = d(\lambda, \varepsilon, k)$ such that if $f \in \PTF_k$, then
    \[
        \sum_{|S| > d} \hat{f}^{(p)}(S)^2 \leq \varepsilon.
    \]
\end{restatable}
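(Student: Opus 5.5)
The plan is the standard route from noise sensitivity to Fourier tails, using the $p$-biased adaptation of \cite[Theorem 1.3]{Harsha2009BoundingTS} (to be established in \cref{appendix:sec:noise_stability_of_ptfs}). That adaptation should give a bound of the form
\[
\NS[(p)]{f,1-\delta} \le \eta_{\lambda,k}(\delta)
\]
for every $f \in \PTF_k$ and every $p \in (\lambda, 1-\lambda)$. Here $\eta_{\lambda,k}(\delta) \to 0$ as $\delta \to 0$ (for instance $\eta = O_{\lambda,k}(\delta^{c_k})$ for some $c_k>0$), uniformly in the arity $n$.

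Granting this, I relate noise sensitivity to the Fourier tail. By \cref{noise:claim:connection_between_noise_and_coefficients} and \cref{proposition:noise:coefficients_of_noise_operator}, together with \kl{Parseval},
\[
\tfrac12 \NS[(p)]{f,1-\delta} = \E[p]{f} - \ip{f}{\TN[(p)]{f,1-\delta}} = \sum_{S} \hat f(S)^2 - \sum_S (1-\delta)^{|S|}\hat f(S)^2 = \sum_S \bigl(1-(1-\delta)^{|S|}\bigr)\hat f(S)^2 .
\]
The first equality is the claim rearranged. The second uses that $f$ is Boolean, so $\E[p]{f} = \E[p]{f^2} = \norm{f}^2 = \sum_S \hat f(S)^2$. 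For $|S| > d$ the weight satisfies $1-(1-\delta)^{|S|} \ge 1-(1-\delta)^{d}$. Choosing $d = \lceil 1/\delta \rceil$ makes this weight at least $1 - 1/e$. Since all other terms are nonnegative,
\[
\sum_{|S|>d} \hat f^{(p)}(S)^2 \le \frac{1}{2(1-1/e)}\,\NS[(p)]{f,1-\delta} \le \NS[(p)]{f,1-\delta}.
\]
To conclude, I pick $\delta = \delta(\lambda,\varepsilon,k)$ small enough that $\eta_{\lambda,k}(\delta) \le \varepsilon$ and set $d = \lceil 1/\delta\rceil$. This $d$ depends only on $\lambda,\varepsilon,k$, as required.

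The main obstacle is entirely the noise-sensitivity bound over $\cube{p}$ that is uniform in $p \in (\lambda,1-\lambda)$. The Fourier step above is routine.

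For that bound I would follow the Harsha–Klivans–Meka argument. First, apply a critical-index/regularity decomposition: use restrictions to reduce to a regular $\PTF$, or to one essentially determined by few variables. Second, in the regular case, use the \kl{Invariance Principle} to pass to Gaussian inputs. The $p$-biased characters $\chi_i^{(p)}$ are hypercontractive with constants depending only on $\lambda$, so the invariance error stays controlled. Third, use \kl{Carbery-Wright anti-concentration} to bound the probability that noise moves $Q$ across $0$. Throughout, every hypercontractivity constant must be tracked as a function of $\lambda$ only, and the $p$-biased noise must be shown to couple with correlated Gaussians appropriately; this bookkeeping is where care is needed.
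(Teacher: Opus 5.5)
Your proposal is correct and follows the paper's proof essentially verbatim. Like you, the paper derives the statement from the $p$-biased extension of \cite[Theorem 1.3]{Harsha2009BoundingTS} via \cref{noise:claim:connection_between_noise_and_coefficients}, \cref{proposition:noise:coefficients_of_noise_operator} and Parseval; it picks $d$ with $1-(1-\delta)^d\ge 1/2$ where you take $d=\lceil 1/\delta\rceil$. One remark on your sketch of the noise-sensitivity bound: the paper never couples the noisy pair with correlated Gaussians. It bounds $\Pr\bigl[|Q(\overline{\mathbf X})-Q(\overline{\mathbf Y})|\ge\sqrt{\varepsilon/\Delta}\bigr]$ directly on the biased cube by a second-moment and Markov argument. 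The Invariance Principle and Carbery--Wright are used only for anti-concentration of $Q(\overline{\mathbf X})$ alone, so that part of the bookkeeping you flag is not needed.
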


Following the general recipe described in \cref{sec:general}, we first show that $\PTF$s have sublinear \kl{total influence} over $\cube{p}$.

\begin{proposition}\label{ptfs:proposition:ptfs_have_sublinear_influence_biased}
    Suppose that $\lambda \in(0,1/2), k \geq 1, p \in (\lambda, 1-\lambda)$ and $f : \{0,1\}^n \to \{0,1\}$ is a $\PTF$ of degree $k$. Then
    \[
        \I[(p)]{f} \leq o_{\lambda, k}(n).
    \]
\end{proposition}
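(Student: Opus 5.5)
The plan is to combine \cref{corollary:influence:total_influence_formula} with the \kl{low-degree concentration} of $\PTF$s from \cref{ptfs:proposition:ptfs_are_low_degree_concentrated}. By \cref{corollary:influence:total_influence_formula},
\[
    \I[(p)]{f} = \frac{1}{p(1-p)} \sum_{S \subseteq [n]} |S| \cdot \hat{f}^{(p)}(S)^2 .
\]
Since $p \in (\lambda, 1-\lambda)$, the prefactor is at most $1/\lambda^2$. So it suffices to show that $\sum_S |S| \hat f(S)^2 \le o_{\lambda,k}(n)$. Concretely, for every $\varepsilon > 0$ we want $\sum_S |S| \hat f(S)^2 \le \varepsilon n$ for all sufficiently large $n$, uniformly over $f \in \PTF_k$ and $p \in (\lambda,1-\lambda)$.

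Fix $\varepsilon > 0$ and let $d = d(\lambda, \varepsilon, k)$ be the constant from \cref{ptfs:proposition:ptfs_are_low_degree_concentrated}. Split the sum at level $d$. For sets with $|S| \le d$ we bound $|S| \le d$. By \kl{Parseval}, $\sum_S \hat f(S)^2 = \norm{f}^2 = \E[p]{f} \le 1$, because $f$ is $\{0,1\}$-valued. Hence the low part contributes at most $d$. For sets with $|S| > d$ we bound $|S| \le n$, and low-degree concentration gives $\sum_{|S|>d} \hat f(S)^2 \le \varepsilon$. Hence the high part contributes at most $\varepsilon n$.

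Altogether,
\[
    \I[(p)]{f} \le \frac{1}{\lambda^2}\bigl(d(\lambda,\varepsilon,k) + \varepsilon n\bigr).
\]
For $n \ge d/\varepsilon$ this is at most $2\varepsilon n/\lambda^2$. Since $\varepsilon$ is arbitrary and $d$ depends only on $(\lambda,\varepsilon,k)$, this gives $\I[(p)]{f} \le o_{\lambda,k}(n)$. One could make the rate explicit by choosing $\varepsilon = \varepsilon(n) \to 0$ slowly so that $d(\lambda,\varepsilon(n),k) = o(n)$, but this is not needed for the statement.

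Given the low-degree concentration result, no step here is a real obstacle: the argument is a two-line splitting. The genuine difficulty sits entirely inside \cref{ptfs:proposition:ptfs_are_low_degree_concentrated}, namely the $p$-biased noise sensitivity bound for $\PTF$s proved in the appendix. The one point needing care is uniformity: the constant $d$ must be independent of $n$, of $f$, and of the exact $p$ within $(\lambda, 1-\lambda)$. This is exactly how that proposition is stated.
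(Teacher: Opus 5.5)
Your proof is correct and matches the paper's argument: apply the total influence formula, split the Fourier weight at the low-degree-concentration level $d(\lambda,\varepsilon,k)$, bound the low part by $d$ using Parseval and the high part by $\varepsilon n$. The only cosmetic difference is that you bound the prefactor $1/(p(1-p))$ by $1/\lambda^2$, whereas the paper uses $1/(\lambda(1-\lambda))$.
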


\begin{proof}
    Fix $\lambda,k,p$ and $f$ as in the statement. Combining \kl{Parseval identity}, \cref{corollary:influence:total_influence_formula} and \cref{ptfs:proposition:ptfs_are_low_degree_concentrated}, for every $\varepsilon > 0$ there exists $d = d(\lambda, \varepsilon, k)$ such that for sufficiently large $n$,
    \begin{align*}
        \I[(p)]{f} &\leq \frac{1}{\lambda(1-\lambda)} \cdot \sum_{S \subseteq [n]} |S| \cdot \hat{f}(S)^2 \\
        &\leq \frac{d \cdot \norm{f}^2}{\lambda(1-\lambda)} + \frac{1}{\lambda(1-\lambda)} \sum_{|S| > d} |S| \cdot \hat{f}(S)^2 \leq \frac{2\varepsilon \cdot n}{\lambda(1-\lambda)}.
    \end{align*}
    Since $\varepsilon > 0$ can be arbitrarily small, $\I[(p)]{f}$ must be sublinear in terms of $n$.
\end{proof}

With \cref{ptfs:proposition:ptfs_have_sublinear_influence_biased} in hand, we can already conclude that $\PTF$s preserve influence through random 2-to-1 minors by the \kl{Cube Influence Preservation}. However, the last ingredient is still missing: we must show that there cannot be arbitrarily many coordinates with influence greater than a set threshold. Conveniently, this fact is also a simple consequence of \kl{low-degree concentration}.

\begin{proposition}\label{ptfs:proposition:ptfs_have_constant_number_of_indices_with_big_influence}
    Suppose that $\lambda \in (0,1),  \delta > 0, k \geq 1$ and $p \in (\lambda, 1-\lambda)$. There exists a constant $\mathcal{C} = \mathcal{C}(\lambda, \delta, k)$ such that the following holds. If $f \in \PTF_k$, then
    \[
        \Big| \big\{ i \in [\ar(f)] : \Inf[(p)]{f, i} \geq \delta \big \} \Big| \leq \mathcal{C}.
    \]
\end{proposition}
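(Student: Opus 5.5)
The plan is to use \kl{low-degree concentration} (\cref{ptfs:proposition:ptfs_are_low_degree_concentrated}) together with the Fourier formula for influence (\cref{proposition:influence:alternative_definitions_of_inf}). The idea is to split each coordinate's influence into a low-degree part and a high-degree tail. The total low-degree contribution summed over all coordinates is bounded by a constant. The tail is uniformly small for every coordinate.

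Fix $\lambda,\delta,k,p$ and $f\in\PTF_k$, and write $\hat f=\hat f^{(p)}$. Set $\varepsilon = \tfrac12\,\delta\,\lambda(1-\lambda)$. Apply \cref{ptfs:proposition:ptfs_are_low_degree_concentrated} to get $d=d(\lambda,\varepsilon,k)$ with $\sum_{|S|>d}\hat f(S)^2\le\varepsilon$. By \cref{proposition:influence:alternative_definitions_of_inf}, for each $i$ we have
\[
\Inf[(p)]{f,i}=\frac{1}{p(1-p)}\Big(\sum_{S\ni i,\,|S|\le d}\hat f(S)^2+\sum_{S\ni i,\,|S|>d}\hat f(S)^2\Big).
\]
Since $p\in(\lambda,1-\lambda)$, we have $p(1-p)\ge\lambda(1-\lambda)$, so the second sum contributes at most $\varepsilon/(\lambda(1-\lambda))=\delta/2$ to the influence. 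Hence every $i$ with $\Inf[(p)]{f,i}\ge\delta$ satisfies
\[
L_i:=\sum_{S\ni i,\,|S|\le d}\hat f(S)^2\ge \tfrac{\delta}{2}\,p(1-p)\ge\tfrac{\delta}{2}\lambda(1-\lambda).
\]

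Next, sum the low-degree parts over all coordinates:
\[
\sum_i L_i=\sum_{|S|\le d}|S|\,\hat f(S)^2\le d\sum_S\hat f(S)^2=d\,\norm{f}^2\le d,
\]
using \kl{Parseval} and the fact that $f$ is $\{0,1\}$-valued, so $\norm{f}^2=\E[p]{f}\le1$. Therefore the number of coordinates with influence at least $\delta$ is at most
\[
\mathcal C = \frac{2d}{\delta\lambda(1-\lambda)},
\]
which depends only on $\lambda,\delta,k$.

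There is no real obstacle here beyond invoking \cref{ptfs:proposition:ptfs_are_low_degree_concentrated}, whose proof is deferred to the appendix. The one point to check is that $d$ is uniform in $p\in(\lambda,1-\lambda)$, which that proposition provides. If the statement intends $\lambda\in(0,1)$ rather than $(0,1/2)$, we may shrink $\lambda$ to $\min(\lambda,1-\lambda)$ without loss.
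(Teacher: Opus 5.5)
Your proof is correct and is essentially the paper's own argument. You choose $d$ from low-degree concentration so that the tail mass $\sum_{|S|>d}\hat f(S)^2$ is at most $\tfrac12\delta\lambda(1-\lambda)$, use the Fourier formula for influence to conclude that every coordinate of influence at least $\delta$ carries low-degree weight at least $\tfrac12\delta\lambda(1-\lambda)$, and bound the total $\sum_{|S|\le d}|S|\,\hat f(S)^2\le d\norm{f}^2\le d$ (your closing remark about $\lambda$ is unnecessary, since for $\lambda\ge 1/2$ the interval $(\lambda,1-\lambda)$ is empty and the statement is vacuous).
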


\begin{proof}
    Fix $\lambda, \delta, k, p$ and $f$ as in the statement, and let $A$ be the set of coordinates $i \in [\ar(f)]$ such that $\Inf[(p)]{f, i} \geq \delta$. \cref{ptfs:proposition:ptfs_are_low_degree_concentrated} implies that there exists $d = d(\lambda,\delta,k)$ such that $\sum_{|S| > d} \hat{f}(S)^2 \leq \lambda(1-\lambda) \cdot\delta/2$. Furthermore, by \cref{proposition:influence:alternative_definitions_of_inf} we have that if $\Inf[(p)]{f, i} \geq \delta$, then $\sum_{S \ni i} \hat{f}(S)^2 \geq \lambda(1-\lambda) \cdot \delta$. Therefore, we have $\sum_{S \ni i, |S| \leq d} \hat{f}(S)^2 \geq \lambda(1-\lambda)(\delta/2)$. Summing over the coordinates in $A$, it follows that
    \[
        |A| \lambda(1-\lambda)(\delta/2) \leq \sum_{i \in A} \sum_{\substack{S \ni i \\ |S| \leq d}} \hat{f}(S)^2 \leq \sum_{|S| \leq d} |S| \hat{f}(S)^2 \leq d \norm{f}^2 \leq d.
    \]
    As a consequence, $|A|$ is bounded by a constant depending only on $\lambda, \delta$ and $k$.
\end{proof}

Finally, we are able to combine \kl{Cube Influence Preservation} with all the properties of $\PTF$s established in this section to prove that \kl{minions} consisting of $\PTF$s of bounded degree with influential coordinates satisfy the \kl{random 2-to-1 hardness condition}, which together with \cref{prelims:theorem:random_2_to_1_condition_implies_hardness} implies \cref{main:theorem:hardness_for_ptfs_over_cube}.

\begin{proposition}\label{ptfs:proposition:ptfs_minions_satisfy_random_2_to_1_condition}
    Suppose that $p \in (0,1)$, $k \geq 1$ and $\minion{M} \subseteq \PTF_k$ is a \kl{minion}. If
    \[
        \exists \, \delta > 0 : \forall \, f \in \minion M : \max_{i} \, \Inf[(p)]{f, i} \geq \delta,
    \]  
    then $\minion{M}$ satisfies the \kl{random 2-to-1 hardness condition}.
\end{proposition}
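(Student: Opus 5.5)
We define the selection function by thresholding influence and then combine \cref{ptfs:proposition:ptfs_have_sublinear_influence_biased}, \cref{ptfs:proposition:ptfs_have_constant_number_of_indices_with_big_influence} and the \kl{Cube Influence Preservation}. First fix $\lambda \in (0,1/2)$ with $p \in (\lambda, 1-\lambda)$. Apply \cref{main:theorem:cube_influence_preservation} with parameters $\lambda$ and $\delta$ to obtain $N, \gamma, \tau_0 > 0$. Then set
\[
  \Sel(f) = \big\{ i \in [\ar(f)] : \Inf[(p)]{f,i} \geq \tau_0 \big\} \quad \text{if this set is nonempty},
\]
and otherwise let $\Sel(f) = \{1\}$. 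Since $\tau_0 \le \delta$ may be assumed after shrinking $\tau_0$, every $f \in \minion M$ has a coordinate of influence $\ge \delta \ge \tau_0$, so $\Sel(f)$ is in fact always nonempty. By \cref{ptfs:proposition:ptfs_have_constant_number_of_indices_with_big_influence}, applied with threshold $\tau_0$, we get $|\Sel(f)| \le \mathcal C(\lambda, \tau_0, k)$. This is a constant, so the first condition holds.

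For the second condition, take $f \in \minion M$ of arity $2n$. Pick $i$ with $\Inf[(p)]{f,i} \ge \delta$; then $i \in \Sel(f)$. If $2n \ge N$ and $\I[(p)]{f} \le \gamma \cdot 2n$, the \kl{Cube Influence Preservation} gives the following with probability at least $\tau_0$ over $\pi$: $\Inf[(p)]{\minor f\pi, \pi(i)} \ge \tau_0$. Since $\minor f\pi \in \minion M$ (minions are closed under minors), this means $\pi(i) \in \Sel(\minor f\pi)$. Hence $\pi(\Sel(f)) \cap \Sel(\minor f\pi) \ne \emptyset$ with probability at least $\tau_0$.

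The total influence hypothesis is supplied by \cref{ptfs:proposition:ptfs_have_sublinear_influence_biased}. It gives $\I[(p)]{f} \le o_{\lambda,k}(2n)$, so there is $N'$ depending only on $\lambda, k, \gamma$ with $\I[(p)]{f} \le \gamma \cdot 2n$ whenever $2n \ge N'$. For this we must check that the proof of that proposition gives a bound uniform over all $f \in \PTF_k$ and all $p \in (\lambda,1-\lambda)$. It does, since $d$ depends only on $(\lambda,\varepsilon,k)$.

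The remaining issue, and the only real obstacle, is small arity $2n < \max(N,N')$. There are finitely many such arities. For each of them we fall back on a trivial selection. Let $M_0 = \max(N,N')$, and for $\ar(f) < M_0$ enlarge $\Sel(f)$ to all of $[\ar(f)]$; this keeps $|\Sel(f)| \le \max(\mathcal C, M_0)$. If $\ar(f) = 2n < M_0$, every $\pi$ satisfies $\pi(\Sel(f)) = [n]$, and this meets $\Sel(\minor f\pi)$ because $\Sel$ is always nonempty. In that case the probability is $1$. The other case is $\ar(f) \ge M_0$ but $n < M_0$: there $\Sel(\minor f\pi) = [n]$ contains $\pi(i)$ automatically. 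Taking $\tau = \tau_0$ and $\mathcal C' = \max(\mathcal C, M_0)$ completes the verification of the \kl{random 2-to-1 hardness condition}.
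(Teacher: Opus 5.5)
Your proof is correct and follows essentially the same route as the paper: threshold the $p$-biased influence at the Cube Influence Preservation constant (shrunk to at most $\delta$), bound the selection size via \cref{ptfs:proposition:ptfs_have_constant_number_of_indices_with_big_influence}, get the total-influence hypothesis uniformly from \cref{ptfs:proposition:ptfs_have_sublinear_influence_biased}, and select all coordinates below an arity cutoff. You are slightly more careful than the paper in two places. First, you apply the preservation theorem to a coordinate of influence at least $\delta$, whereas the paper picks ``any $i \in \Sel(f)$'', which only guarantees influence at least $\tau$. Second, you explicitly handle the case where $f$ has arity above the cutoff but its minor $\minor{f}{\pi}$ falls below it.
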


\begin{proof}
    Fix $p, k,\minion M$ and $\delta$ as in the statement. Let $\gamma, \tau$ and $N$ be the constants obtained from \kl{Cube Influence Preservation} applied to the $p$-biased distribution and $\delta$. By \cref{ptfs:proposition:ptfs_have_sublinear_influence_biased} we can assume that $N$ is large enough so that if $\ar(f) \geq N$, then $\I[(p)]{f} \leq \gamma  \cdot \ar(f)$. Moreover, we assume that $\tau \leq \delta$; Otherwise we set $\tau := \delta$.
    
    We define the choice function $\Sel : \minion M \to \mathcal{P}(\mathbb{N})$ as
    \[
        \Sel(f) = \begin{cases}
            \big\{ i \in [\ar(f)] \text{ such that } \Inf[(p)]{f, i} \geq \tau \big\} & \text{ if } n \geq N, \\ 
            [\ar(f)] & \text{ if } n < N.
         \end{cases}
    \]
    We argue that $\Sel$ satisfies required conditions. First, \cref{ptfs:proposition:ptfs_have_constant_number_of_indices_with_big_influence} implies that there exists a constant $\mathcal{C} = \mathcal{C}(\minion M)$ such that $|\Sel(f)| \leq \mathcal{C}$ for every $f \in \minion M$. It remains to show that $\Sel$ is compatible with random 2-to-1 minors. Suppose that $f \in \minion M$ has arity $2n$. From assumptions and the fact that $\tau \leq \delta$, we have $\Sel(f) \neq \emptyset$. If $2n < N$, then for every $\pi : [2n] \to [n]$ we have $\Sel(\minor{f}{\pi}) = [\ar(\minor{f}{\pi})]$ and clearly $\Pr[\pi(\Sel(f)) \cap \Sel(\minor{f}{\pi}) \neq \emptyset] = 1$.

    In the remaining case, we have $2n \geq N$, which means that we can apply the \kl{Cube Influence Preservation} to $f$. Pick any $i \in \Sel(f)$. We deduce that 
    \begin{align*}
        \tau &\leq \Pr_{\pi: [2n] \to [n]} \bigg[ \Inf[(p)]{\minor{f}{\pi}, \pi(i)} \geq \tau \bigg] = \Pr_{\pi : [2n] \to [n]} \bigg[ \pi(i) \in \Sel(\minor{f}{\pi}) \bigg] \\ 
             &\leq \Pr_{\pi : [2n] \to [n]} \bigg[ \pi(\Sel(f)) \cap \Sel(\minor{f}{\pi}) \neq \emptyset \bigg].
    \end{align*}
    This means that $\Sel$ together with constants $\mathcal{C}$ and $\tau$ witness the fact that $\minion M$ satisfies the \kl{random 2-to-1 hardness condition}.
\end{proof}

\subsection{Hardness over Shapley distribution}
We now explain how to translate the hardness proof for \kl{Polynomial Threshold PCSPs} from $p$-biased distributions to the \kl{Shapley distribution}. The argument leverages \kl{low-degree concentration} across a spectrum of $p$-biased distributions, together with the relationship between influence over $\cube{p}$ and $\shap$ established in \cref{shapley:claim:connection_between_shapley_and_biased}. Using this approach, we obtain the following hardness result, which is an analogue of \cite[Theorem 4.8]{brakensiek2021conditional} for \kl{Polynomial Threshold PCSPs}.

\begin{theorem}\label{main:theorem:hardness_for_ptfs_over_shapley}
    Suppose that $(\A,\B)$ is a \kl{Polynomial Threshold PCSP} template. If
    \[
        \exists \, \delta > 0 : \forall \, f \in \Pol(\A, \B) : \max_i \Inf[\shap]{f, i} \geq \delta,
    \]
    then $\PCSP(\A, \B)$ is \NP-hard, assuming the \richconj.
\end{theorem}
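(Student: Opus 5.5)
The proof will mirror the $p$-biased case: we show that any minion $\minion M \subseteq \PTF_k$ in which every member has a coordinate of Shapley value at least $\delta$ satisfies the \kl{random 2-to-1 hardness condition}. \cref{prelims:theorem:random_2_to_1_condition_implies_hardness} then gives the theorem. The plan is to use \kl{Shapley Influence Preservation} (\cref{main:theorem:shapley_influence_preservation}) in place of \kl{Cube Influence Preservation}. This requires two ingredients for $\shap$: sublinear \kl{total influence} and a bounded number of coordinates with Shapley value above a threshold. Both will be transferred from $\cube{p}$ via the identity $\Inf[\shap]{f,i} = \int_0^1 \Inf[(p)]{f,i}\,dp$ of \cref{shapley:claim:connection_between_shapley_and_biased}. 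Summing this identity over $i$ gives $\I[\shap]{f} = \int_0^1 \I[(p)]{f}\,dp$.

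For the sublinear total influence, fix a small $\lambda>0$ and split the integral into $p\in(\lambda,1-\lambda)$ and the two tails. On the middle range, \cref{ptfs:proposition:ptfs_have_sublinear_influence_biased} gives $\I[(p)]{f} \le o_{\lambda,k}(n)$ uniformly in $p$. The bound depends only on $\lambda$, and we take the uniformity from the proof via \cref{ptfs:proposition:ptfs_are_low_degree_concentrated}. On the tails we use the trivial bound $\I[(p)]{f}\le n$, so they contribute at most $2\lambda n$. Choosing $\lambda$ small and then $n$ large gives $\I[\shap]{f} \le \gamma \cdot n$ for any prescribed $\gamma$ once $\ar(f) \ge N(\gamma,k)$.

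Bounding the number of coordinates with $\Inf[\shap]{f,i} \ge \tau$ is the main obstacle. A large Shapley value does not single out one $p$, and the influential set at $\cube{p}$ moves with $p$. I would first choose $\lambda = \tau/4$. Then $\Inf[\shap]{f,i}\ge\tau$ forces $\int_\lambda^{1-\lambda}\Inf[(p)]{f,i}\,dp \ge \tau/2$, since each tail contributes at most $\lambda$. Next I would rerun the Fourier argument of \cref{ptfs:proposition:ptfs_have_constant_number_of_indices_with_big_influence} in integrated form. For every $p\in(\lambda,1-\lambda)$, low-degree concentration gives a single $d=d(\lambda,\varepsilon,k)$ with
\[
\sum_i \sum_{S\ni i,\ |S|\le d} \hat f^{(p)}(S)^2 \le d.
\]
Also, $\sum_i \Inf[(p)]{f,i}$ restricted to the high-degree part is at most $\frac{1}{\lambda(1-\lambda)}\sum_{|S|>d}|S|\hat f^{(p)}(S)^2$. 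That is not bounded, so I would not sum it over all $i$. Instead, for each $i$ separately, $\Inf[(p)]{f,i} \le \frac{1}{\lambda(1-\lambda)}\bigl(\sum_{S\ni i, |S|\le d}\hat f^{(p)}(S)^2 + \varepsilon\bigr)$. Integrating over $p$ and choosing $\varepsilon \ll \tau\lambda(1-\lambda)$, every coordinate with Shapley value at least $\tau$ has
\[
\int_\lambda^{1-\lambda}\sum_{S\ni i,|S|\le d}\hat f^{(p)}(S)^2\,dp \gtrsim \tau\lambda(1-\lambda).
\]
Summing over such $i$ and integrating the total bound $d$ over $p$ yields $|A| \lesssim d/(\tau\lambda(1-\lambda))$, a constant depending only on $\tau$ and $k$.

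With these two facts, the choice function is defined as in \cref{ptfs:proposition:ptfs_minions_satisfy_random_2_to_1_condition}. Take $\gamma,\tau,N$ from \cref{main:theorem:shapley_influence_preservation} for the given $\delta$, with $\tau\le\delta$, and enlarge $N$ so that the total influence bound holds. Set $\Sel(f)=\{i : \Inf[\shap]{f,i}\ge\tau\}$ for $\ar(f)\ge N$ and $\Sel(f)=[\ar(f)]$ otherwise. $\Sel(f)$ is nonempty by hypothesis and of bounded size by the previous step. Shapley Influence Preservation applied to any $i\in\Sel(f)$ gives $\Pr_\pi[\pi(i)\in\Sel(f^\pi)]\ge\tau$. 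This establishes the random 2-to-1 condition, and hence \NP-hardness under the \richconj.
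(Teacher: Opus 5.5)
Your proposal is correct and follows essentially the paper's route: it transfers sublinear total influence and the bounded number of influential coordinates from $\cube{p}$ to $\shap$ via the integral identity of \cref{shapley:claim:connection_between_shapley_and_biased}, cutting off tails of width $\lambda$ and running the same integrated Fourier count as the paper's lemma, and then reuses the choice-function argument from the biased case. One small precision, which the paper's cube-case proof also glosses over: apply \kl{Shapley Influence Preservation} to a coordinate with $\Inf[\shap]{f,i}\ge\delta$ (it lies in $\Sel(f)$ because $\tau\le\delta$), not to an arbitrary $i\in\Sel(f)$, since the lemma's hypothesis requires influence at least $\delta$.
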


We show that, under $\shap$, bounded-degree $\PTF$s have sublinear \kl{total influence} and only constantly many influential coordinates. With these two facts in hand, the rest of the proof of \cref{main:theorem:hardness_for_ptfs_over_shapley} is identical to the case of $\cube{p}$ in \cref{main:theorem:hardness_for_ptfs_over_cube}; we omit it for the sake of brevity. We start with a bound on \kl{total influence}.

\begin{proposition}\label{shapley:proposition:ptfs_have_small_total_shapley}
    Suppose that $k \geq 1$ and $f : \{0,1\}^n \to \{0,1\}$ is a $\PTF$ of degree $k$. Then
    \[
        \I[\shap]{f} \leq o_k(n).
    \]
\end{proposition}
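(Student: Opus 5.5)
We propose to integrate the $p$-biased bound over $p$. By \cref{shapley:claim:connection_between_shapley_and_biased}, summing over all coordinates,
\[
    \I[\shap]{f} = \int_0^1 \I[(p)]{f} \, dp.
\]
We fix $\varepsilon > 0$, choose $\lambda = \varepsilon/4$, and split the integral into the middle range $p \in (\lambda, 1-\lambda)$ and the two tails $p \in (0,\lambda] \cup [1-\lambda, 1)$.

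\textbf{Middle range.} Here \cref{ptfs:proposition:ptfs_have_sublinear_influence_biased} applies. Its proof gives a bound that is uniform in $p \in (\lambda, 1-\lambda)$. Indeed, the constant $d = d(\lambda,\varepsilon',k)$ from \cref{ptfs:proposition:ptfs_are_low_degree_concentrated} depends only on $\lambda$, not on $p$. So for $n \ge n_0(\lambda,\varepsilon',k)$ we get
\[
    \I[(p)]{f} \le \frac{2\varepsilon' n}{\lambda(1-\lambda)}
\]
for all such $p$. We choose $\varepsilon' = \varepsilon\lambda(1-\lambda)/4$. Then the middle part of the integral is at most $\varepsilon n/2$.

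\textbf{Tails.} The trivial bound $\I[(p)]{f} \le n$ holds for any Boolean $f$, since each influence is a probability. With this bound the two tails together contribute at most $2\lambda n = \varepsilon n/2$.

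Adding the two parts gives $\I[\shap]{f} \le \varepsilon n$ for all $n \ge n_0(\varepsilon,k)$. Since $\varepsilon$ is arbitrary, $\I[\shap]{f} \le o_k(n)$.

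\textbf{Main obstacle.} The only delicate point is the uniformity in $p$ of the sublinear bound. The statement of \cref{ptfs:proposition:ptfs_have_sublinear_influence_biased} hides the dependence as $o_{\lambda,k}(n)$. We must check that the threshold $n_0$ beyond which $\I[(p)]{f} \le 2\varepsilon' n/(\lambda(1-\lambda))$ holds does not depend on the particular $p$ or on $f$, only on $\lambda$, $\varepsilon'$, and $k$. This follows from the proof: the split at level $d$ gives
\[
    \I[(p)]{f} \le \frac{d + \varepsilon' n}{\lambda(1-\lambda)},
\]
and $d \le \varepsilon' n$ once $n \ge d/\varepsilon'$.

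Measurability of $p \mapsto \I[(p)]{f}$ is not an issue, because it is a polynomial in $p$.
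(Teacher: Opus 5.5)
Your proposal is correct and follows the same route as the paper. You apply Owen's identity (\cref{shapley:claim:connection_between_shapley_and_biased}), bound the two tails of width $\varepsilon/4$ trivially by $n$, and use the sublinear $p$-biased bound of \cref{ptfs:proposition:ptfs_have_sublinear_influence_biased} on the middle range. Your explicit check that the threshold $n_0$ depends only on $\lambda,\varepsilon',k$, and not on $p$ or $f$, supplies a uniformity detail the paper leaves implicit.
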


\begin{proof}
    Fix $k$ and $f$ as in the statement. Combining \cref{shapley:claim:connection_between_shapley_and_biased} with \cref{ptfs:proposition:ptfs_have_sublinear_influence_biased} gives that for every $\varepsilon > 0$ and sufficiently large $n$, we have
    \begin{align*}
        \I[\shap]{f} &= \int_0^{\frac{\varepsilon}{4}} \I[(p)]{f} \, dp + \int_{\frac{\varepsilon}{4}}^{1 - \frac{\varepsilon}{4}} \I[(p)]{f} \, dp + \int_{1-\frac{\varepsilon}{4}}^1 \I[(p)]{f} \, dp 
                    \\ &\leq \frac{\varepsilon \cdot n}{2} + \int_{\frac{\varepsilon}{4}}^{1 - \frac{\varepsilon}{4}} o_{\varepsilon,k}(n) \, dp \leq \varepsilon \cdot n. 
    \end{align*}
    As $\varepsilon > 0$ can be arbitrarily small, the proof is finished.
\end{proof}

Finally, we show that only a constant number of coordinates can have significant influence over $\shap$ in a $\PTF$ of bounded degree.

\begin{lemma}\label{shapley:lemma:constant_number_of_influential_coordinates_shapley_ptfs}
    Suppose that $\delta > 0$ and $k \geq 1$. There exists a constant $\mathcal{C} = \mathcal{C}(\delta, k)$ such that the following holds. If $f \in \PTF_k$, then
    \[
        \Big| \big\{ i \in [\ar(f)] : \Inf[\shap]{f, i} \geq \delta \big \} \Big| \leq \mathcal{C}.
    \]
\end{lemma}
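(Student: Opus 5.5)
The plan is to reduce the Shapley statement to the biased one via \cref{shapley:claim:connection_between_shapley_and_biased}, together with \cref{ptfs:proposition:ptfs_have_constant_number_of_indices_with_big_influence} applied uniformly over a compact range of $p$. Fix $\delta > 0$ and $k \geq 1$, and let $f \in \PTF_k$ have arity $n$. Let $A$ be the set of coordinates with $\Inf[\shap]{f, i} \geq \delta$. By \cref{shapley:claim:connection_between_shapley_and_biased},
\[
    \Inf[\shap]{f, i} = \int_0^1 \Inf[(p)]{f, i} \, dp .
\]
Since $f$ is Boolean, $\Inf[(p)]{f,i} \leq 1$ for every $p$. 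Set $\lambda = \delta/4$. The contribution of $p \in [0,\lambda] \cup [1-\lambda,1]$ is at most $2\lambda = \delta/2$. Therefore every $i \in A$ satisfies
\[
    \int_{\lambda}^{1-\lambda} \Inf[(p)]{f, i} \, dp \geq \delta/2 .
\]

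Next, I apply a Markov/averaging argument in $p$ on the interval $[\lambda, 1-\lambda]$, whose length is less than $1$. Fix $i \in A$. The set $P_i = \{ p \in [\lambda,1-\lambda] : \Inf[(p)]{f,i} \geq \delta/4 \}$ must have Lebesgue measure at least $\delta/4$. Otherwise the integral above would be less than $\delta/4 \cdot 1 + 1 \cdot \delta/4 = \delta/2$, a contradiction. Integrating the indicator functions and using Fubini,
\[
    |A| \cdot \frac{\delta}{4} \leq \sum_{i \in A} \int_\lambda^{1-\lambda} \mathbf{1}[p \in P_i] \, dp = \int_\lambda^{1-\lambda} \Big| \big\{ i : \Inf[(p)]{f,i} \geq \delta/4 \big\} \Big| \, dp .
\]
Here I use the convention $\lambda' = \lambda/2$, so that every such $p$ lies in $(\lambda', 1-\lambda')$. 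For each such $p$, \cref{ptfs:proposition:ptfs_have_constant_number_of_indices_with_big_influence} with parameters $(\lambda', \delta/4, k)$ bounds the integrand by a constant $\mathcal{C}'(\lambda', \delta/4, k)$. This constant is uniform in $p$, because the proposition's constant depends only on $\lambda'$, not on $p$ itself. Hence $|A| \leq 4\mathcal{C}'/\delta$, which depends only on $\delta$ and $k$, as required.

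The main point needing care is this uniformity in $p$. The constant $d$ from \cref{ptfs:proposition:ptfs_are_low_degree_concentrated}, and hence the constant $\mathcal{C}$ in \cref{ptfs:proposition:ptfs_have_constant_number_of_indices_with_big_influence}, must depend only on the distance of $p$ from $\{0,1\}$. As stated, it does. A minor technical point is measurability of $p \mapsto \Inf[(p)]{f,i}$, which needs no argument: it is a polynomial in $p$. One could avoid Fubini by pigeonholing to a single $p$, but that loses uniformity across different $i$. The averaging argument above handles all $i \in A$ simultaneously, so I would use it.
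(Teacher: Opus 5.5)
Your proof is correct and follows essentially the same route as the paper. Both proofs restrict Owen's integral to $[\delta/4, 1-\delta/4]$, use that the low-degree concentration constant is uniform for $p$ in this range, and integrate over $p$ while summing over the influential coordinates. The one difference is packaging: you threshold $\Inf[(p)]{f,i}$ at $\delta/4$ by a Markov step and invoke \cref{ptfs:proposition:ptfs_have_constant_number_of_indices_with_big_influence} pointwise as a black box, whereas the paper integrates the low-degree Fourier weight $\sum_{S \ni i, |S| \le d} \hat{f}^{(p)}(S)^2$ directly and bounds $\sum_{|S| \le d} |S| \, \hat{f}^{(p)}(S)^2 \le d$.
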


\begin{proof}
    Fix $\delta, k$ and $f$ as in the statement and let $\lambda = \delta/4$. Let $A$ be the set of coordinates $i \in [\ar(f)]$ such that $\Inf[\shap]{f, i} \geq \delta$. \cref{ptfs:proposition:ptfs_are_low_degree_concentrated} implies that there exists $d = d(\delta, k)$ such that 
    \[
        \forall \, p \in (\lambda, 1-\lambda) : \sum_{|S| > d} \hat{f}^{(p)}(S)^2 \leq \lambda(1-\lambda) \cdot (\delta/4).
    \]
    Let $i \in A$. By \cref{proposition:influence:alternative_definitions_of_inf} and the choice of $d$, for every $p \in (\lambda, 1-\lambda)$, we have
    \begin{align*}
        \sum_{\substack{S \ni i \\ |S| \leq d}} \hat{f}^{(p)}(S)^2 &\geq \lambda(1-\lambda) \cdot \Inf[(p)]{f, i} - \sum_{|S| > d} \hat{f}^{(p)}(S)^2 \\ 
            &\geq \lambda(1-\lambda) \left(\Inf[(p)]{f, i} - \delta/4 \right).
    \end{align*}
    
    By \cref{shapley:claim:connection_between_shapley_and_biased} we have $\int_{\lambda}^{1-\lambda} \Inf[(p)]{f, i} \, dp \geq \delta/2$. Hence,
    \begin{align*}
        \int_{\lambda}^{1-\lambda} \sum_{\substack{S \ni i \\ |S| \leq d}} \hat{f}^{(p)}(S)^2 \, dp &\geq \lambda(1-\lambda) \left( \int_\lambda^{1-\lambda} \Inf[(p)]{f, i} \, dp - \delta/4 \right),
    \end{align*}
    which is at least a constant $\Delta(\delta) = \lambda(1-\lambda)\cdot(\delta/4) > 0$. Summing over the coordinates in $A$, we obtain the final inequality:
    \begin{align*}
        \Delta \cdot |A| &\leq \int_\lambda^{1-\lambda} \sum_{i \in A} \sum_{\substack{S \ni i \\ |S| \leq d}} \hat{f}^{(p)}(S)^2 \, dp  \\ 
        &\leq \int_\lambda^{1-\lambda} \sum_{|S| \leq d} |S| \cdot \hat{f}^{(p)}(S)^2 \, dp \leq d. \qedhere
    \end{align*}
\end{proof}

\section{Monotonicity}\label{sec:monotonicity}
Understanding the structure of monotone \kl{polymorphisms} is, undoubtedly, a necessary stepping stone towards classifying arbitrary Boolean \PCSP s.
One of the most general results so far is a conditional dichotomy for so-called \reintro{Ordered} Boolean \PCSP s \cite{brakensiek2021conditional}.
In algebraic terms, these \PCSP s are exactly those whose \kl{polymorphisms} are monotonically \reintro(func){increasing} functions.
\begin{example}
    A simple example of an \kl{Ordered PCSP} is defined by the following template
    \begin{align*}
        \relstr T^{\le} &= \left( \Bool; \leq, ~\mathsf{1in3} \right) \\
        \relstr H_2^{\le} &= \left( \Bool; \leq, ~\mathsf{NAE}_3 \right)
    \end{align*}
    By the dichotomy of \cite{brakensiek2021conditional}, it is tractable.
    An easy way to understand why is to observe that it has infinitely many symmetric \kl{polymorphisms}, for instance $\genthr{1/3}{3n+1}$ for every $n \ge 1$.
\end{example}

Our aim is to first investigate how the original tractability proof for \kl{Ordered PCSPs} takes advantage of the fundamental concept of sharp thresholds, and then to argue that this approach extends beyond this class of \PCSP s.
Namely, the main result of this section is a tractability condition for \PCSP s which we call \kl(PCSP){Unate}; they retain monotone properties in a certain weak sense, and in particular capture all \kl{Ordered PCSPs}.
We illustrate that a \kl{Unate PCSP} can be solved efficiently when the \kl{polymorphism} \kl{minion} exhibits sharp threshold features (the formal definition is given in the following subsection).

As an intermediate step, we will also present how the dichotomy proof for \kl{Ordered PCSPs} of \cite{brakensiek2021conditional} can be phrased in the language of our new framework.

\subsection{Ordered PCSPs}\label{sec:ordered}
\AP
Recall that a Boolean \PCSP\ is called \intro{Ordered} if all its \kl{polymorphisms} are \intro(func){increasing} functions, that is
\[
\left(\forall{i:a_i \le b_i}\right) \implies f(a_1, \dots, a_n) \le f(b_1, \dots, b_n).
\]
Following the algebraic approach to \PCSP s, we study the \kl{polymorphism} \kl{minions} of \kl{Ordered PCSPs}, which are \kl{minions} consisting of only \kl(func){increasing} functions.
We note that the \kl{polymorphisms} of any \PCSP\ template can be restricted to \kl(func){increasing} functions by expanding the template with the pair $(\le, \le)$.\footnote{Here $\le$ denotes the binary relation $\{(a, b) \in \Bool^2 \mid a \le b\}$.}

Let $f : \Bool^n \to \Bool$ be an \kl(func){increasing} function.
Observe that $\E[p]{f}$, the expectation of $f$ under the $p$-\kl{biased distribution}, increases monotonically from 0 to 1 when $p$ goes from 0 to 1.
Furthermore, unless $f$ is a constant function, the map $[0,1] \ni p \mapsto \E[p]{f}$ is strictly increasing.
\begin{example}
    Consider the projection function on $n$ variables
    \[
    \gen{proj}{n}_i(x_1, \dots, x_n) = x_i
    \]
    Clearly it is an \kl(func){increasing} function, and $\E[p]{\gen{proj}{n}_i} = p$. The expectation grows at a steady rate as $p$ increases. The \kl{total influence} is $\I[(p)]{\gen{proj}{n}_i} = 1$.
\end{example}
At the same time, there are functions that behave rather differently.
\begin{example}
    Consider the majority function on $2n+1$ variables
    \[
    \gen{maj}{n}(x_1, \dots, x_{2n+1}) = \begin{cases}
        1, &\quad\text{if } \sum_i x_i > n \\
        0, &\quad\text{otherwise}
    \end{cases}
    \]
    It is easy to see that
    \[
    \E[p]{\gen{maj}{n}} = \cube{p}\left(\left\{\tuple x \in \Bool^{2n+1} : \ham{\tuple x} > n\right\}\right) = \sum_{k = n+1}^{2n+1} \binom{2n+1}{k} \cdot p^k (1-p)^{2n+1-k},
    \]
    i.e., the ``tail'' of the binomial expansion of $(p+(1-p))^{2n+1}$.
    In particular $\E[p]{\gen{maj}{n}} = p$ for $p \in \{0, 1/2, 1\}$, and $\E[p]{\gen{maj}{n}} + \E[1-p]{\gen{maj}{n}} = 1$.
    
    Furthermore, for a fixed $p > 1/2$, the expectation $\E[p]{\gen{maj}{n}} \to 1$ as $n \to \infty$.
    It is often convenient to draw the graph of $\E[p]{f}$ in terms of $p \in [0,1]$.
    For the majority function, it looks more or less as follows:
    \begin{center}
        \vspace{0.5em}
        \includegraphics[width=0.95\linewidth]{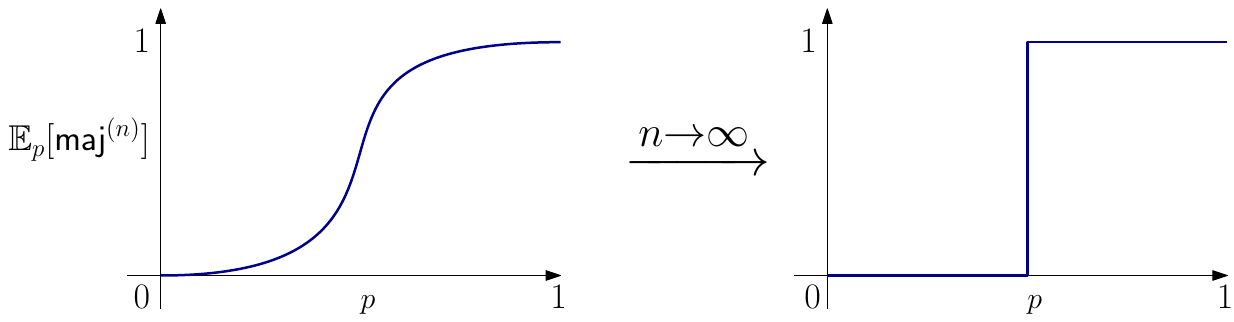}
    \end{center}
    Observe that this curve gets steeper and steeper at $p = 1/2$ as $n$ increases.
    In other words, the derivative
    \[
    \frac{d}{dp} \E[\frac{1}{2}]{\gen{maj}{n}} \to +\infty \quad\text{ as } n \to \infty.
    \]
\end{example}

\AP
These two examples highlight a crucial connection between $\E[p]{f}$ and $\I[(p)]{f}$ for \kl(func){increasing} functions \cite{margulis,russo}:
\begin{equation}
    \qquad\qquad
    \frac{d}{dp} \, \E[p]{f} = \I[(p)]{f}.
    \tag{\text{\intro{Margulis-Russo formula}}}
\end{equation}
Under the uniform distribution, the \kl{total influence} of $\gen{maj}{n}$ is unbounded --- the function becomes more sensitive as its arity increases, and therefore the graph of its expected value gets steeper.

On the other hand, one can derive a non-trivial upper bound on the \kl{total influence} in terms of $n$.
\begin{lemma}[See \cref{unate:lemma:total_influence_is_sqrt}]\label{incr:lemma:total_influence_is_sqrt}
    Let $p \in (0, 1)$. If $f : \{0,1\}^n \to \{0,1\}$ is \kl(func){increasing}, then
    \[
        \I[(p)]{f} \leq \sqrt{\frac{n}{p(1-p)}}
    \]
\end{lemma}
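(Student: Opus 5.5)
The plan is the classical argument: for increasing functions, influences coincide, up to normalization, with the degree-one Fourier coefficients; we then apply Cauchy--Schwarz and \kl{Parseval}.

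\textbf{Step 1: influence as a degree-one coefficient.} Fix $p \in (0,1)$, an \kl(func){increasing} $f : \Bool^n \to \Bool$ and $i \in [n]$. Let $\Delta_i(\tuple x) = f(\tuple x^{\, i \to 1}) - f(\tuple x^{\, i \to 0})$. Since $f$ is increasing and Boolean, $\Delta_i \in \{0,1\}$, so $\Delta_i^2 = \Delta_i$ and $\Inf[(p)]{f,i} = \E[p]{\Delta_i}$. Next compute
\[
\hat f^{(p)}(\{i\}) = \E[p]{f \chi_i} = \frac{1}{\sqrt{p(1-p)}}\,\E[p]{f(\tuple x)(x_i - p)}.
\]
Conditioning on $x_i$, and using that the rest of $\tuple x$ is independent of $x_i$ under the product measure, gives
\[
\E[p]{f(\tuple x)(x_i-p)} = p(1-p)\,\E[p]{f(\tuple x^{\, i\to 1})} - (1-p)p\,\E[p]{f(\tuple x^{\, i \to 0})} = p(1-p)\,\E[p]{\Delta_i}.
\]
Hence $\hat f^{(p)}(\{i\}) = \sqrt{p(1-p)}\,\Inf[(p)]{f,i}$. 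This mirrors the computation in the proof of \cref{proposition:influence:alternative_definitions_of_inf}.

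\textbf{Step 2: Cauchy--Schwarz and Parseval.} Summing Step 1 over $i$ and applying Cauchy--Schwarz,
\[
\I[(p)]{f} = \frac{1}{\sqrt{p(1-p)}}\sum_{i=1}^n \hat f^{(p)}(\{i\}) \le \frac{\sqrt n}{\sqrt{p(1-p)}}\Big(\sum_{i} \hat f^{(p)}(\{i\})^2\Big)^{1/2}.
\]
By the \kl{Parseval identity}, $\sum_i \hat f^{(p)}(\{i\})^2 \le \norm{f}^2 = \E[p]{f^2} \le 1$, because $f$ is Boolean. This yields $\I[(p)]{f} \le \sqrt{n/(p(1-p))}$, as claimed. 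One could also use $\norm{f}^2 - \hat f(\emptyset)^2$ to get a slightly better constant, but it is unnecessary here.

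\textbf{Main obstacle.} The only step needing care is Step 1, namely the identity linking influence and the degree-one coefficient. It uses monotonicity essentially: for non-increasing coordinates $\Delta_i$ can be $-1$, and the sum of the $\hat f(\{i\})$ no longer controls the influences. For the later unate generalization (\cref{unate:lemma:total_influence_is_sqrt}), the same argument should work after replacing $\hat f(\{i\})$ by $|\hat f(\{i\})|$, since each $\Delta_i$ then has a constant sign.
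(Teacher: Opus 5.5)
Your proof is correct and is essentially the paper's argument, which the paper gives in the more general unate setting (\cref{unate:lemma:total_influence_is_sqrt}): write each influence as $|\hat f^{(p)}(\{i\})|/\sqrt{p(1-p)}$, then apply Cauchy--Schwarz and Parseval, exactly as your closing remark anticipates. Your direct conditioning on $x_i$ to compute the degree-one coefficient is slightly more careful than the paper's Fourier expansion, which drops the constant term $\hat f(\emptyset)$ when taking the expectation of $f$ with $x_i$ fixed; this is harmless there, since that term cancels in the difference.
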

\noindent We will prove this fact later for all \kl{unate} functions.

Informally, we say that a sequence of \kl(func){increasing} functions $\{f_n\}_n$ exhibits a sharp threshold behavior if $\E[p]{f_n}$ goes from almost zero to almost one in ever shorter intervals, as $n \to \infty$. The running example of the majority function illustrates it very well; in general, however, the point $p=1/2$ does not necessarily have to lie within the ``threshold interval''.

Kalai \cite{kalai2004shapley} proved that a sharp threshold in a sequence of \kl(func){increasing} functions is equivalent to diminishing maximum \kl{Shapley values} as $n \to \infty$.
The authors of \cite{brakensiek2021conditional} utilized this property for the tractability part of their dichotomy for \kl{Ordered PCSPs}.
\kl{Shapley values} also played a central role in the hardness proof thereof, but in a rather combinatorial way.
Unfortunately, the combinatorial properties of \kl{Shapley values} do not scale well to a more general setting, which motivates investigating them through an analytical lens.

In this subsection, we demonstrate how the proof of the dichotomy result for \kl{Ordered PCSPs} by \cite{brakensiek2021conditional} can be phrased within our framework.
Namely, we make use of our new hardness tool that reasons about the preservation of classical influence under random \kl{2-to-1} \kl{minors}, while Brakensiek et al. \cite{brakensiek2021conditional} deployed an ad-hoc preservation of \kl{Shapley values}.
We believe that our version of the proof explains the underlying structure of \kl{Ordered PCSPs} in a more transparent way.

\paragraph{Tractability}

As declared, we prove that an \kl{Ordered PCSP} is tractable as long as its \kl{polymorphism} \kl{minion} exhibits a sharp threshold.
This follows from the following result and \cref{th:tractablelist}.
\begin{theorem}[Tractable \kl{Ordered PCSPs}]\label{th:ordered-tractability}
    Let $\minion M$ be a \kl{minion} of Boolean \kl(func){increasing} functions.
    If\hspace{.1em}\footnote{We implicitly use the (1-dimensional) Lebesgue measure throughout this section.}
    \[
    \forall{\, \varepsilon > 0}~\exists{\, f \in \minion M} : \left\lvert \left\{ p : \E[p]{f} \in (\varepsilon, 1-\varepsilon)\right\} \right\rvert < \varepsilon
    \]
    then $\minion M$ contains a constant function or includes one of the minions $\MAX, \MIN, \THR{t}$ for some $t \in (0,1)$.
\end{theorem}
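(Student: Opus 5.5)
We plan to extract from the sharp-threshold witnesses a convergent threshold location and then obtain the required minors by identifying coordinates. For each $m$, pick $f_m \in \minion M$ with $\lvert\{p : \E[p]{f_m} \in (1/m, 1-1/m)\}\rvert < 1/m$. If some $f_m$ is constant we are done, so assume none is. Then $p \mapsto \E[p]{f_m}$ is a continuous, strictly increasing polynomial from $0$ to $1$. Hence the set in question is an interval $(a_m, b_m)$ with $b_m - a_m < 1/m$. Passing to a subsequence, $a_m, b_m \to t \in [0,1]$. We then split into three cases: $t = 0$ (aim for $\MAX$), $t = 1$ (aim for $\MIN$), and $t \in (0,1)$ (aim for $\THR{t}$).

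The key tool is the behaviour of expectations under random-looking identification minors. Fix a target arity $k$ and a target symmetric function $g$ among $\gen{max}{k}$, $\gen{min}{k}$, $\genthr{t}{k}$. We consider minors $f_m^\pi$ for $\pi : [n_m] \to [k]$. The idea is that for $\tuple y \in \Bool^k$ with $\ham{\tuple y} = j$, the pull-back $\pi^{-1}(\tuple y)$ for a uniformly random $\pi$ (each coordinate independently mapped uniformly to $[k]$) is distributed exactly as $\cube{j/k}$ on $\Bool^{n_m}$. Therefore
\[
\E[\pi]{f_m^\pi(\tuple y)} = \E[j/k]{f_m}.
\]
In the threshold case with $tk \notin \mathbb N$ and $m$ large, $j/k$ lies outside $[a_m - \eta, b_m + \eta]$ for all $j$. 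So this expectation is within $1/m$ of $\genthr{t}{k}(\tuple y)$. A union bound over the $2^k$ inputs then gives some $\pi$ with $f_m^\pi = \genthr{t}{k}$ exactly, once $2^k/m < 1$. For $t = 0$, the points $j/k$ with $j \ge 1$ are at least $1/k > b_m$, so the expectation is near $1$, while $j = 0$ gives exactly $0$ because $f_m$ is nonconstant and increasing. This yields $\gen{max}{k}$. The case $t = 1$ is symmetric and yields $\gen{min}{k}$. Since a minion is closed under minors, $\minion M$ contains all these functions and hence the generated minions.

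The main obstacle is making the union bound legitimate, because we need a single $\pi$ that works for all $\tuple y$ simultaneously. Averaging over $\pi$ gives $\Pr_\pi[f_m^\pi(\tuple y) \ne g(\tuple y)] \le \max(\E[j/k]{f_m}, 1 - \E[j/k]{f_m})$ restricted to the wrong side, which is at most $1/m$ by the choice of $(a_m, b_m)$. Summing over $2^k$ points is then fine. The subtle part is the case $t \in (0,1)$ where $tk$ is close to an integer: $j/k$ might fall inside $(a_m, b_m)$ for the relevant $j$. This is handled because $tk \notin \mathbb N$ gives a positive gap $\min_j \lvert j/k - t\rvert$, and for $m$ large, $(a_m, b_m)$ lies within that gap. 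A second subtlety is the boundary cases: we must check that $t$ is well defined, that $a_m, b_m$ converge along a subsequence (compactness suffices), and that the $j = 0$ and $j = k$ evaluations are exact by monotonicity plus nonconstancy. For the threshold case, the resulting family $\genthr{t}{k}$ over all $k$ with $tk \notin \mathbb N$ is exactly the generating set of $\THR{t}$.
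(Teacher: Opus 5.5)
Your proposal is correct and follows essentially the same argument as the paper. Both find a limiting threshold location $t$ along a subsequence; the paper tracks the point $p_n$ with $\E[p_n]{f_n} = 1/2$ rather than your interval endpoints $a_m, b_m$, which is equivalent. Both then take a random minor map with independent uniform coordinates, so the minor's expected value at a weight-$j$ input is $\E[j/k]{f}$, and finish with a union bound over the $2^k$ inputs, using idempotency (which you correctly derive from non-constancy plus monotonicity) at the constant inputs.
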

Our proof is a minor extension of the tractability proof in \cite[Lemma 3.4]{brakensiek2021conditional}.
In particular, we demonstrate how to solve the search version of \kl{Ordered PCSPs}, while the authors of \cite{brakensiek2021conditional} were content with the decision version.
\begin{proof}
    Assume from now on that $\minion M$ does not contain a constant function. Thus every $f \in \minion M$ is \kl{idempotent}.
    
    For each $n \in \mathbb N$, let $f_n$ be the function obtained from the assumption for $\varepsilon = 1/n$. In other words,
    \[
    \left\lvert \left\{ p : \E[p]{f_n} \in \left(\frac{1}{n},\, 1-\frac{1}{n}\right) \right\} \right\rvert < \frac{1}{n}.
    \]
    Let $p_n \in (0,1)$ be the unique value such that $\E[p_n]{f_n} = 1/2$.
    Since the infinite sequence $\{p_n\}_n$ is bounded, it must include a subsequence that converges to some $t \in [0,1]$.
    For convenience, we shall replace the original functions $f_n$ with the corresponding subsequence, and assume that $p_n$ converges to $t$.
    We will argue that
    \[
    \minion M \supseteq \begin{cases}
        \MAX, &\quad\text{if } t = 0 \\
        \THR{t}, &\quad\text{if } t \in (0,1) \\
        \MIN, &\quad\text{if } t = 1
    \end{cases}
    \]

    Fix $m \in \mathbb N$ such that $tm \notin \{1, \dots, m-1\}$. We claim that the $m$-ary $\gen{max}{m}$/$\genthr{t}{m}$/$\gen{min}{m}$ is a \kl{minor} of $f_n$ for some $n$.
    Indeed, pick $n$ large enough so that $p_n$ is sufficiently close to $t$. Let $\pi : [\ar(f_n)] \to [m]$ be a random \kl{minor map} where $\pi(i)$'s are independent random variables, uniformly distributed in $[m]$.
    Denote by $g$ the (random) \kl{minor} of $f_n$ given by $\pi$.
    Fix a tuple $\tuple x \in \Bool^m$.
    Observe that the expected value of $g(\tuple x)$ is the same as the expected value of $f(\tuple y)$ where $\tuple y \sim \cube{p}$ for $p = \ham{\tuple x}/m$.
    Consequently, whenever $|\tuple x| < tm$, then $\E{g(\tuple x)} < 1/n$; and whenever $|\tuple x| > tm$, then $\E{g(\tuple x)} > 1-1/n$. Moreover $g$ is \kl{idempotent}. By union bound, the probability that $g$ is not the desired $m$-ary function is at most
    \[
    \frac{2^m}{n} < 1,
    \]
    provided $n$ was chosen sufficiently large.
\end{proof}

\paragraph{Hardness}

We are going to appeal to \cref{prelims:theorem:random_2_to_1_condition_implies_hardness} for the hardness part of the dichotomy.
\begin{theorem}[Hard \kl{Ordered PCSPs}]\label{ordered:theorem:hardness_from_cube_influence_preservation}
    Let $\minion M$ be a \kl{minion} of Boolean \kl(func){increasing} functions.
    If
    \[
    \exists\ {\varepsilon > 0}~\forall \, {f \in \minion M}~: \left\lvert \left\{ p : \E[p]{f} \in (\varepsilon, 1-\varepsilon)\right\} \right\rvert \ge \varepsilon
    \]
    then $\minion M$ satisfies the \kl{random 2-to-1 condition}.
\end{theorem}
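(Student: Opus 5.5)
We will verify the \kl{random 2-to-1 condition} directly through the \kl{Cube Influence Preservation} theorem, following the recipe of \cref{sec:general}. The first task is to convert the hypothesis into a single bias $p$ at which every $f \in \minion M$ has an influential coordinate. By the \kl{Margulis-Russo formula}, $\frac{d}{dp}\E[p]{f} = \I[(p)]{f}$. Let $J_f = \{p : \E[p]{f} \in (\varepsilon,1-\varepsilon)\}$, which is an interval of length at least $\varepsilon$. Trimming $\varepsilon/4$ from each end of $[0,1]$ leaves a subinterval $J'_f \subseteq J_f \cap [\varepsilon/4, 1-\varepsilon/4]$ of length at least $\varepsilon/2$.

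The plan is to choose $p$ uniformly from a finite grid $P \subseteq [\varepsilon/4,1-\varepsilon/4]$ with spacing at most $\varepsilon/4$, so that $J'_f$ always contains a grid point. The trouble is that this grid point depends on $f$, whereas the condition requires one fixed selection rule. We therefore define $\Sel(f)$ as a union over the finitely many grid points:
\[
\Sel(f) = \bigcup_{p \in P} \big\{ i : \Inf[(p)]{f,i} \geq \tau \big\}.
\]
Small arities are handled as in \cref{ptfs:proposition:ptfs_minions_satisfy_random_2_to_1_condition}, where $\Sel(f) = [\ar(f)]$.

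Several ingredients are needed, all uniform in $p \in [\varepsilon/4, 1-\varepsilon/4]$.
\begin{enumerate}
    \item \textbf{Sublinear total influence.} \cref{incr:lemma:total_influence_is_sqrt} gives $\I[(p)]{f} \le \sqrt{n/(p(1-p))} = o(n)$, which meets the hypothesis of Cube Influence Preservation for large arity.
    \item \textbf{Boundedly many influential coordinates.} We need $|\{i : \Inf[(p)]{f,i} \ge \tau\}|$ to be bounded. For increasing $f$, $\Inf[(p)]{f,i} = \hat f^{(p)}(\{i\})/\sqrt{p(1-p)}$ up to normalization. Parseval then gives $\sum_i \Inf[(p)]{f,i}^2 \le 1/(p(1-p))$, so at most $1/(\tau^2 p(1-p))$ coordinates can be $\tau$-influential. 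Summing over the $|P|$ grid points bounds $|\Sel(f)|$ by a constant.
    \item \textbf{Some coordinate has constant influence.} At a grid point $p_f \in J'_f$, the function $f$ is nonconstant with $\E[p_f]{f} \in (\varepsilon,1-\varepsilon)$. We would like $\max_i \Inf[(p_f)]{f,i} \ge \delta$ for some $\delta(\varepsilon) > 0$.
\end{enumerate}

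The third step is the main obstacle. With a single fixed $p$ it is false, since majority at $p = 1/2$ has all influences tending to $0$. The fix is to exploit the fact that $f$ stays in $(\varepsilon,1-\varepsilon)$ over the whole interval $J_f$, which has length at least $\varepsilon$. By Margulis-Russo, $\int_{J_f}\I[(p)]{f}\,dp \le 1$, so $\I[(p)]{f} \le 4/\varepsilon$ on at least half of $J_f$ (by measure). Wherever total influence is bounded and the variance is at least $\varepsilon(1-\varepsilon)$, Friedgut's junta theorem, or more simply KKL/Talagrand in its $p$-biased form with $p$ bounded away from $0,1$, forces some coordinate to have influence at least $\delta(\varepsilon)$.

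This is a measure argument rather than a pointwise one, and it means the grid point must be taken among points of bounded total influence. We would refine the grid to spacing much smaller than $\varepsilon$. A further technical step is to ensure that a grid point $q$ near a good $p$ inherits the large influence, using continuity of $\Inf[(p)]{f,i}$ in $p$ with a derivative bound. If that proves awkward, we would instead fall back on the Shapley route: define $\Sel$ via $\Inf[\shap]{f,i}$, use \cref{shapley:claim:connection_between_shapley_and_biased} to integrate the bound $\max_i \Inf[(p)]{f,i} \ge \delta$ over $J_f$, and apply \kl{Shapley Influence Preservation}. Increasing functions have $\I[\shap]{f} \le 1$, and Shapley values sum to at most $1$, which gives the bounded count directly.

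Finally, we apply Cube Influence Preservation at the grid point $p_f$ to a coordinate $i$ with $\Inf[(p_f)]{f,i} \ge \delta$. With probability at least $\tau$ over $\pi$, the coordinate $\pi(i)$ has $\Inf[(p_f)]{\minor f\pi,\pi(i)} \ge \tau$, so $\pi(i) \in \Sel(\minor f\pi)$, while $i \in \Sel(f)$ provided $\tau \le \delta$. This verifies the random 2-to-1 condition.
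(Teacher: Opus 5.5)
Your Steps 1 and 2, the small-arity convention and the final application of Cube Influence Preservation are fine. Step 3, which you correctly single out, is not closed by either of your proposed fixes.

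\textbf{The grid route.} Your argument produces a $\delta$-influential coordinate only at biases $p$ with $\mathbf{I}^{(p)}[f]\le 4/\varepsilon$. That set has measure $\Omega(\varepsilon)$ inside $J_f$, but its location depends on $f$ and on the arity. The complementary set $\{p\in J_f:\mathbf{I}^{(p)}[f]>4/\varepsilon\}$ has measure below $\varepsilon/4$, yet it can contain every point of a grid fixed in advance. For instance, take $x_1\vee\bigvee_k\bigl(y_k\wedge\mathrm{thr}_{q_k}(\tuple z^{(k)})\bigr)$ with each $y_k$ a long conjunction of fresh variables and $q_k$ the grid points. So no refinement of $P$ is guaranteed to hit a good bias.

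\textbf{Continuity does not rescue it.} For increasing $f$ one has $\frac{d}{dp}\mathbf{Inf}^{(p)}_i[f]=\mathbf{I}^{(p)}[f_{i\to1}]-\mathbf{I}^{(p)}[f_{i\to0}]$, where $f_{i\to b}$ fixes $x_i=b$. This admits no bound independent of the arity. For $x_1\wedge\mathrm{maj}(\tuple z)$, the influence of $x_1$ climbs from about $0$ to about $1$ in a window of width $O(1/\sqrt n)$ around $p=1/2$.

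\textbf{The Shapley fallback has the same hole.} Integrating the pointwise bound gives $\int\max_i\mathbf{Inf}^{(p)}_i[f]\,dp=\Omega(\delta\varepsilon)$. It does not give a single $i$ with $\int\mathbf{Inf}^{(p)}_i[f]\,dp=\Omega(1)$, because the maximizing coordinate may change with $p$. Exchanging the max and the integral is exactly the content of Kalai's theorem \cite{kalai2004shapley}, which your argument neither invokes nor proves.

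\textbf{The missing idea is the paper's \cref{claim:a2}.} Choose $L<R$ in $[\varepsilon/2,1-\varepsilon/2]$ with $R-L\ge\varepsilon/2$, $\mathbb{E}_L[f]\ge\varepsilon$ and $\mathbb{E}_R[f]\le1-\varepsilon$. The mean value theorem and Margulis--Russo give a \emph{single} $p\in(L,R)$ with $\mathbf{I}^{(p)}[f]\le2/\varepsilon$. Friedgut's theorem at that $p$ gives a junta set $J$ with $|J|\le M(\varepsilon)$. Monotonicity then shows that for \emph{every} $q\in(L,R)$ some $j\in J$ has $\mathbf{Inf}^{(q)}_j[f]\ge\delta(\varepsilon)$. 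The proof couples $P\sim\mu_p$ with $Q\supseteq P$, $Q\sim\mu_q$, and exhibits a family of sets $Q$ of measure $\Omega(\varepsilon)$ with $f(Q)=0$ and $f(Q\cup J)=1$. This uniformity in $q$, with all candidates drawn from one bounded set $J$, is what your argument lacks.

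The paper then finishes as follows:
\begin{itemize}
\item it pigeonholes over $J$ to get one $j$ that is $\delta$-influential on a set of biases of measure at least $(R-L)/|J|$;
\item it defines $\Sel(f)$ as the coordinates influential on a large-measure set of biases;
\item it bounds $|\Sel(f)|$ using that Shapley values sum to $1$;
\item it averages Cube Influence Preservation over $p$.
\end{itemize}

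With \cref{claim:a2} in hand, your grid scheme would in fact work and is slightly simpler. A grid of spacing below $\varepsilon/2$ in $[\varepsilon/2,1-\varepsilon/2]$ always has a point in $(L,R)$. Cube Influence Preservation is uniform for $p\in(\lambda,1-\lambda)$. Your level-one Fourier bound controls $|\Sel(f)|$ at each of the finitely many grid points.
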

\begin{proof}
    Fix $\varepsilon > 0$ which satisfies the assumption.
    Let $\lambda = \lambda(\varepsilon)$ and $\delta = \delta(\varepsilon)$ be sufficiently small constants that we define later.
    Let $\tau = \tau(\lambda, \delta)$ be the value from \kl{Cube Influence Preservation}.
    
    For each $f \in \minion M$, select the coordinates
    \[
    \Sel(f) = \left\{i : \left\lvert\left\{p \in [0,1] : \Inf[(p)]{f, i} \ge \tau\right\}\right\rvert \ge \frac{\tau}{2\varepsilon} \right\}.
    \]
    In other words, we choose those coordinates which are influential under ``many'' different $p$-biased distributions.

    We proceed in three steps:
    \begin{enumerate}[leftmargin=*]
        \item\label{it:sketch1} \sloppy showing that each $f \in \minion M$ has a coordinate $j$ such that $\Inf[(p)]{f,j} \ge \delta$ for at least $(2/\varepsilon)$-fraction of all $p$'s;
        \item\label{it:sketch2} upper-bounding the size of $\Sel(f)$;
        \item\label{it:sketch3} arguing that $\Sel(f)$ is preserved under random \kl{2-to-1} \kl{minors}.
    \end{enumerate}

    \noindent The proofs of the first two items follow closely \cite{kalai2004shapley}.
    Fix $f \in \minion M$ of arity $n$.
    
    \paragraph{\cref{it:sketch1}:} Pick the smallest $L \ge \varepsilon/2$ such that $\E[L]{f} \ge \varepsilon$; and the largest $R \le 1-\varepsilon/2$ such that $\E[R]{f} \le 1-\varepsilon$.
    If $L = \varepsilon/2$, then $\E[\varepsilon/2]{f} \ge \varepsilon$, but then $\E[\varepsilon]{f} < 1-\varepsilon$ by the assumption, thus $R \ge \varepsilon$.
    Similarly, if $R = 1-\varepsilon/2$, then $L \le 1-\varepsilon$.
    We conclude that $R - L \ge \varepsilon/2$.

    By the mean value theorem, there exists $p \in (L, R)$ such that the derivative
    \[
    \frac{d}{dp} \, \E[p]{f} = \frac{\E[R]{f} - \E[L]{f}}{R - L} \le \frac{2}{\varepsilon}
    \]
    \noindent
    We conclude by the \kl{Margulis-Russo formula} that $\I[(p)]{f} \le 2/\varepsilon$.

    Friedgut's Theorem \cite{friedguttheorem} is a classic result formalizing the connection between low \kl{average sensitivity} and juntas (i.e. functions that depend on few coordinates).
    Although his original work only dealt with the uniform distribution, we state it in a more general, $p$-biased, setting.
    \begin{theorem}[Friedgut's Junta Theorem {\cite[Section 10.3]{odonnell}}]
        For every $B, \nu, \lambda > 0$, there exists $M = M(B, \nu, \lambda) > 0$ such that the following holds. Suppose $p \in [\lambda, 1-\lambda]$ and $f : \Bool^n \to \Bool$ is a function such that $\I[(p)]{f} \leq B$. Then there exists $J \subseteq [n]$ with $|J| \leq M$ and a $J$-junta $h : \Bool^n \to \Bool$ such that $\Pr_{\, \tuple x \sim \cube p \,}[f(\tuple x) \neq h(\tuple x)] \leq \nu$.
    \end{theorem}
    We apply this theorem to our function $f$ with $B = 2/\varepsilon$ and $\nu = \varepsilon^2/4$, and obtain a junta $h$ that approximates $f$. It is straightforward to observe that, assuming $f$ is \kl(func){increasing}, $h$ can also be made \kl(func){increasing}.
    Note that $M$, the maximum possible number of \kl{essential} coordinates of $h$, is a constant depending only on $\varepsilon$.

    A function that can be ``approximated'' by a junta under some distribution has a coordinate with a significant influence under that distribution \cite{KKL, friedgut1996sharp}. However, for our purposes, we need it to be also influential in other distributions:

    \begin{claim}[Proof deferred to \cref{appendix:ordered}]\label{claim:a2}
        For any $q \in (L, R)$, there exists a coordinate $j \in J$ such that $\Inf[(q)]{f,j} \ge \delta$.
    \end{claim}
    
    We conclude that some $j \in J$ satisfies $\Inf[(q)]{f,j} \ge \delta$ for at least $1/(R-L)\ge (2/\varepsilon)$-many values of $q$.

    \paragraph{\cref{it:sketch2}:} In order to upper-bound the size of $\Sel(f)$, observe that by \cref{shapley:claim:connection_between_shapley_and_biased}
    \begin{equation}\label{eq:shapley_sum_one}
        \sum_i \int_0^1 \Inf[(p)]{f,i} \, dp = \sum_i \Inf[\shap]{f,i} = 1,
    \end{equation}
    thus there can be no more than $2\varepsilon/\tau^2$ coordinates in $\Sel(f)$.

    \paragraph{\cref{it:sketch3}:} Lastly, we shall apply \kl{Cube Influence Preservation}, which ensures that the highly influential coordinate $j$ from Item 1 is, with high probability, also influential in a random \kl{2-to-1 minor}.
    
    Let $\lambda := (2\varepsilon)^{-1}$; by \cref{incr:lemma:total_influence_is_sqrt}, we have $\I[(p)]{f} \in \mathcal{O}_\lambda(\sqrt{n})$ whenever $\lambda \le p \le 1-\lambda$.
    From \kl{Cube Influence Preservation}, we obtain that, for each coordinate $i \in \Sel(f)$ and every $p \in [\lambda, 1-\lambda]$,
    \[
    \Pr_{\pi}\left[\Inf[(p)]{\minor{f}{\pi},\pi(i)} \ge \tau \,\,\Big\lvert\,\, \Inf[(p)]{f,i} \ge \delta \right] \ge \tau.
    \]
    Therefore, for the coordinate $j$ from \cref{it:sketch1}
    \[
    \E[\pi]{ \left\lvert\left\{ p : \Inf[(p)]{\minor{f}{\pi},\pi(j)} \ge \tau \right\}\right\rvert } \ge \left( \frac{2}{\varepsilon} - 2\lambda \right) \cdot \tau = \frac{\tau}{\varepsilon}.
    \]
    Finally, a simple averaging argument gives
    \[
    \Pr_\pi\left[ \left\lvert\left\{ p : \Inf[(p)]{\minor{f}{\pi},\pi(j)} \ge \tau \right\}\right\rvert \ge \frac{\tau}{2\varepsilon} \right] \ge \frac{\tau}{2\varepsilon}
    \]
    which puts $\pi(j)$ in $\Sel(\minor{f}{\pi})$ with a constant probability.
    
    This concludes the proof.
\end{proof}

\subsection{Unate PCSPs}

In the previous section, we illustrated how a sharp threshold in a \kl{minion} of \kl(func){increasing} functions leads to tractability.
Sharp threshold is an abstract concept, so it is natural to expect that ``sharp threshold implies tractability'' holds in more general \kl{minions} that exhibit some relaxed notion of monotonicity.

\AP
In this section, we study \emph{unate} functions.
We denote the set of \reintro{increasing} coordinates of a \kl{unate} function by $\intro*\domainup{f}$, and the set of \reintro{decreasing} coordinates by $\intro*\domaindown{f}$.
A function $f$ is \reintro{unate} if $\domainup{f} \cup \domaindown{f} = [\ar(f)]$.
Note that $\domainup{f} \cap \domaindown{f}$ are exactly those coordinates on which $f$ does not depend.

A simple example of a \kl{unate} function is $(x, y) \mapsto x(1-y)$. A more practical example is an alternating-threshold function $\gen{at}{m}$. In fact, all the \kl{minions} $\MAX, \MIN, \THR{r}$, and $\AT$ consist of \kl{unate} functions.
It is worth noting, however, that \kl{unate} functions, in general, may have non-\kl{unate} \kl{minors}.
For instance, consider a function $(x_1, x_2, x_3, x_4) \mapsto \max\{x_1(1-x_2), x_3(1-x_4)\}$ which is clearly \kl{unate}.
The \kl{minor} obtained by identifying $x_2 = x_3$ is a ternary function $f(x, y, z) = \max\{x(1-y), y(1-z)\}$.
Observe that $y$ is neither \kl{increasing} nor \kl{decreasing}: $0 = f(0, 0, 0) < f(0, 1, 0) = 1$ and $1 = f(1, 0, 1) > f(1, 1, 1) = 0$.
One notable exception to this are \kl(func){increasing} functions, whose every \kl{minor} is also \kl(func){increasing}.

\AP
More generally, examples of \kl{unate} functions include all \kl(func){increasing} functions, and all \kl{linear threshold functions}.
Our focus in this section is on \PCSP s whose \kl{polymorphisms} are \kl{unate} functions, which we call \intro{Unate PCSPs}.
Note that they subsume both \kl{Ordered} and \kl(PCSP){Linear Threshold} \PCSP s.

We can structurally capture \kl{Unate PCSPs} in the same way as \kl{Ordered PCSPs} are captured via the binary inequality relation.
Namely, the \kl{polymorphisms} of any Boolean \PCSP\ template can be restricted to \kl{unate} functions by expanding the template with the following pair of relations\footnote{In this drawing, every column corresponds to a tuple in a relation.}
\begin{equation*}
    (\relstr U_L, \relstr U_R) = 
    \left(\,
    \relation{| B B B B B |}{
      \hline
      0 & 0 & 1 & 1 & 0\\
      0 & 0 & 1 & 1 & 1\\
      0 & 1 & 0 & 1 & 0\\
      0 & 1 & 0 & 1 & 1\\
      \hline
    }\,,\,
    \Bool^4 \setminus \relation{| B |}{
      \hline
      0\\
      1\\
      1\\
      0\\
      \hline
    }\,\right)
\end{equation*}

\begin{restatable}{lemma}{unatestructural}
    A Boolean \kl{minion} $\minion M$ is \kl{compatible with} $(\relstr U_L, \relstr U_R)$ if and only if all members of $\minion M$ are \kl{unate}.
\end{restatable}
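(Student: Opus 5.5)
We unfold what compatibility with $(\relstr U_L, \relstr U_R)$ means for a single $n$-ary function $f$, and then observe that identifying coordinates in a minor reduces everything to a single coordinate. The tuples of $\relstr U_L$ (the columns) are $(0,0,0,0)$, $(0,0,1,1)$, $(1,1,0,0)$, $(1,1,1,1)$ and $(0,1,0,1)$. The relation $\relstr U_R$ excludes only $(0,1,1,0)$.

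\textbf{Step 1: reformulating compatibility.} Take four tuples $\tuple x^1,\dots,\tuple x^4 \in \Bool^n$ whose coordinatewise columns lie in $\relstr U_L$. Let $S \subseteq [n]$ be the set of coordinates whose column is $(0,1,0,1)$. On every $i \notin S$ we have $x^1_i = x^2_i$ and $x^3_i = x^4_i$, while on $S$ the tuples $\tuple x^1, \tuple x^3$ are $0$ and $\tuple x^2, \tuple x^4$ are $1$. Hence there are arbitrary $\tuple a, \tuple b \in \Bool^n$ with
\[
\tuple x^1 = \tuple a^{S\to 0}, \quad \tuple x^2 = \tuple a^{S\to 1}, \quad \tuple x^3 = \tuple b^{S\to 0}, \quad \tuple x^4 = \tuple b^{S\to 1},
\]
and conversely every such choice of $S, \tuple a, \tuple b$ yields valid columns. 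So $f$ is compatible iff there are no $S, \tuple a, \tuple b$ with
\[
f(\tuple a^{S\to0}) < f(\tuple a^{S\to1}) \quad\text{and}\quad f(\tuple b^{S\to0}) > f(\tuple b^{S\to1}).
\]
If $S = \emptyset$ this would read $f(\tuple a) < f(\tuple a)$, which is impossible, so only nonempty $S$ matter.

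\textbf{Step 2: compatible implies unate.} If $\minion M$ is compatible, apply Step 1 to each $f \in \minion M$ with $S = \{i\}$. The condition then says exactly that coordinate $i$ cannot be strictly increasing at some point $\tuple a$ and strictly decreasing at another point $\tuple b$. That is, $i$ is \kl{increasing} or \kl{decreasing}, so $f$ is \kl{unate}.

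\textbf{Step 3: unate minion implies compatible.} This direction must use that $\minion M$ is a minion. For a single function it fails: $(t_1,t_2,u,d)\mapsto \max\{t_1u,\; t_2(1-d)\}$ is unate but violates the condition with $S=\{u,d\}$, $\tuple a$ having $t_1=1, t_2=0$ and $\tuple b$ having $t_1=0, t_2=1$. Suppose $f \in \minion M$ violates the condition for some nonempty $S$, $\tuple a$, $\tuple b$. Let $g = \minor{f}{\pi}$, where $\pi$ maps all of $S$ to a single new coordinate $s$ and is injective elsewhere. Then $g \in \minion M$. Write $\tuple a', \tuple b'$ for the inputs of $g$ corresponding to $\tuple a, \tuple b$ off $S$. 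We get $g(\tuple a'^{\,s\to0}) < g(\tuple a'^{\,s\to1})$ and $g(\tuple b'^{\,s\to0}) > g(\tuple b'^{\,s\to1})$. So coordinate $s$ of $g$ is neither increasing nor decreasing, contradicting that all members of $\minion M$ are unate.

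The only genuinely delicate point is Step 3. One must realize that the lemma is a statement about minions and cannot be proved function-by-function, as the example shows. The resolution is the identification minor collapsing $S$. The remaining steps are routine bookkeeping of the columns of $\relstr U_L$.
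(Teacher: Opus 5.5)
Your proof is correct and takes essentially the same route as the paper. For the forward direction, both place the column $(0,1,0,1)$ on a single coordinate that is neither increasing nor decreasing. For the converse, both pass to a minor that merges all coordinates carrying the $(0,1,0,1)$ column into one, and then use that this merged coordinate must be monotone. The only difference is that the paper also collapses the other four column types, so it works with a $5$-ary minor, whereas you keep them injective.
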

\begin{proof}
    Fix a Boolean \kl{minion} $\minion M$.
    Suppose first that it is \kl{compatible with} $(\relstr U_L, \relstr U_R)$.
    Fix any member $f\colon \Bool^n \to \Bool$ of $\minion M$; the goal is to show that it is \kl{unate}.
    We proceed by contradiction.
    Suppose a coordinate, without loss of generality, 1, is neither \kl{increasing} nor \kl{decreasing}.
    Then there exist tuples $\tuple x, \tuple y \in \Bool^{n-1}$ such that $0 = f(0\tuple x) < f(1\tuple x) = 1$ and $1 = f(0\tuple y) > f(1\tuple y) = 0$.
    Observe that $f$ applied to $(0101), (x_1x_1y_1y_1), \dots, (x_nx_ny_ny_n)$ produces
    \begin{equation*}
        f~
        \relation{| E E E E |}{
            0 & $x_1$ & $\dots$ & $x_n$\\
            1 & $x_1$ & $\dots$ & $x_n$\\
            0 & $y_1$ & $\dots$ & $y_n$\\
            1 & $y_1$ & $\dots$ & $y_n$\\
        }
        =
        \relation{| E |}{
            0\\
            1\\
            1\\
            0\\
        }
    \end{equation*}
    which contradicts \kl{compatibility with} the relation pair.

    For the opposite direction, suppose that $\minion M$ consists of \kl{unate} functions. We verify \kl{compatibility with} $(\relstr U_L, \relstr U_R)$.
    Let $f \in \minion M$ of arity $n$.
    Enumerate the tuples in $\relstr U_L$, and fix any $\pi : [n] \to [5]$.
    Consider the $4\times n$ matrix whose $i$-th column is the $\pi(i)$-th tuple in $\relstr U_L$ — we want to show that $f$ applied row-wise is not $(0,1,1,0)$.
    Let $f \xrightarrow{\pi} g$; it suffices to show that
    \[
    g~
        \relation{| E E E E E |}{
            0 & 0 & 1 & 1 & 0\\
            0 & 0 & 1 & 1 & 1\\
            0 & 1 & 0 & 1 & 0\\
            0 & 1 & 0 & 1 & 1\\
        }
        \neq
        \relation{| E |}{
            0\\
            1\\
            1\\
            0\\
        }
    \]
    Note that $g \in \minion M$, so $g$ is also \kl{unate}. In particular, its last coordinate is \kl{increasing} or \kl{decreasing}, which concludes the proof.
\end{proof}

\begin{example}
    For a more practical example of a \kl{Unate PCSP}, consider the following template\footnote{This template appeared in \cite{brakensiek2017promise}, although not in the context of \kl{unate} functions.}
    \begin{align*}
        \relstr T^{\neq} &= \left( \Bool; \neq, ~\mathsf{1in3} \right) \\
        \relstr H_2^{\neq} &= \left( \Bool; \neq, ~\mathsf{NAE}_3 \right)
    \end{align*}
    To see why it defines a \kl{Unate PCSP}, let $f : \Bool^n \to \Bool$ be an arbitrary \kl{polymorphism} of $(\relstr T^{\neq}, \relstr H_2^{\neq})$, and $i \in [n]$ be its arbitrary coordinate.
    Denote by $\tuple 1_i \in \Bool^n$ the tuple that has zeros on all positions except $i$-th.
    We will argue that whenever $f(\tuple 1_i) = 1$, the coordinate $i$ is \kl{increasing}; and whenever $f(\tuple 1_i) = 0$, it is \kl{decreasing}.
    
    Suppose first that $f(\tuple 1_i) = 1$. Then for any $\tuple x \in \Bool^n$ such that $x_i = 0$ and $f(\tuple x) = 1$, we have $(f(\tuple x), f(\tuple 1_i), f(\tuple x \oplus ([n] \setminus i)) \in \mathsf{NAE}_3$, so $f(\tuple x \oplus ([n] \setminus i)) = 0$. Moreover $f(\tuple x \oplus ([n] \setminus i)) \neq f(\tuple x \oplus i)$ by \kl{compatibility with} $\neq$, therefore $f(\tuple x \oplus i) = 1$, as required.

    The argument for the case $f(\tuple 1_i) = 0$ is identical, so we omit it.
\end{example}

On the analytical front, we will argue that this relaxed notion of monotonicity is sufficient to (1) formally define sharp threshold property for \kl{unate} \kl{minions}, and (2) prove that \kl{minions} satisfying that property are tractable.
After that, some ideas towards hardness for \kl{Unate PCSPs} are discussed, although they do not result in a dichotomy.

\paragraph{Tractability}

For \kl(func){increasing} functions, the expectation $\E[p]{f}$ grows as $p$ increases, which allowed us to define sharp thresholds in the obvious way.
For \kl{unate} functions, however, this is no longer the case.
\begin{example}
    Consider the functions $\gen{an}{n}$ defined as
    \[
    \gen{an}{n}(x_1, \dots, x_n) = \begin{cases}
        x_1, &\quad\text{if } x_1 = \dots = x_n \\
        1-x_1, &\quad\text{otherwise}
    \end{cases}
    \]
    Under the name of \emph{Almost Negation}, their combinatorial properties were studied in \cite{ficak2019symmetric}.
    These functions are in fact \kl{unate}: the first coordinate is \kl{decreasing}, while the rest are \kl{increasing}.
    Observe that
    \[
    \E[p]{\gen{an}{n}} = p^n + (1-p)\cdot \left(1-(1-p)^{n-1}\right).
    \]
    It is easy to verify that this expression is not monotone in $p \in [0,1]$: for instance, its derivative at $p = 1/2$ is $\frac{n}{2^{n-2}}-1 < 0$ for large $n$.
\end{example}
\noindent Therefore, it is not immediately clear how a sharp threshold should be defined for \kl{unate} functions. The first idea would be to simply invert the \kl{decreasing} coordinates, and consider the resulting \kl(func){increasing} function.
Unfortunately, this definition of sharp threshold is insufficient for our purposes; for example, it does not distinguish the original \kl{increasing} coordinates from the new, inverted ones.
We fix this issue in a rather pedestrian manner: by making the ``density'' parameter of this random system two-dimensional. Fix a \kl{unate} function $f : \Bool^n \to \Bool$ and numbers $p, q \in [0,1]$. Let $\tuple r \in [0,1]^n$ be the tuple such that $r_i = p$ if $i$ is an \kl{increasing} coordinate, and $r_i = q$ otherwise. We will use $\E[p,q]{f}$ as a shortcut for $\E[\mu_{\tuple r}]{f}$. In other words, $\E[p,q]{f}$ is the expected value of $f(\tuple x)$ when each $x_i$ is sampled independently from $\mu_p$ if $i$ is \kl{increasing}, and from $\mu_q$ if $i$ is \kl{decreasing}.
Similarly, we extend our notation for \kl{influence} to $\I[(p,q)]{f}$ and $\Inf[(p,q)]{f,i}$. 

We are ready to state the sharp threshold condition that, combined with \cref{th:tractablelist}, implies \PCSP\ tractability.
\begin{restatable}{theorem}{maintractability}\label{th:tractability}
    Let $\minion M$ be a Boolean \kl{minion} of \kl{unate} functions. If\hspace{.1em}\footnote{We implicitly use the (2-dimensional) Lebesgue measure throughout this section.}
    \begin{equation*}
        \forall{\,\varepsilon > 0}~\exists{\, f \in \minion M} \text{ s.t. } \left\lvert\left\{(p, q) : \E[p,q]{f} \in (\varepsilon, 1-\varepsilon) \right\}\right\rvert < \varepsilon
    \end{equation*}
    then $\minion M$ contains a constant function or includes one of $\MAX$, $\MIN$, $\THR{t}$ for some $t$, $\AT$, or their \kl{negated} counterparts.
\end{restatable}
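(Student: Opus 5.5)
The plan is to mimic the proof of \cref{th:ordered-tractability}, replacing the one-dimensional threshold point by a two-dimensional ``threshold curve''. Assume $\minion M$ has no constant function. For each $n$ let $f_n\in\minion M$ witness the hypothesis with $\varepsilon=1/n$.

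The first step is the monotonicity observation replacing \kl{Margulis-Russo}. For a \kl{unate} $f$, the map $(p,q)\mapsto \E[p,q]{f}$ is non-decreasing in $p$ and non-increasing in $q$. This follows by a coupling: raising $p$ only turns on \kl{increasing} coordinates, and raising $q$ only turns on \kl{decreasing} ones. Hence $Z_n=\{(p,q):\E[p,q]{f_n}\le 1/n\}$ is a down-right-closed set, $O_n=\{\E[p,q]{f_n}\ge 1-1/n\}$ is up-left-closed, and the band between them has measure $<1/n$. Consequently there is a non-decreasing boundary function $\varphi_n:[0,1]\to[0,1]$ separating them up to measure $1/n$. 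By Helly's selection theorem I pass to a subsequence with $\varphi_n\to\varphi$ pointwise at all but countably many $p$. At any point $(p,q)$ at positive distance from the graph of $\varphi$, $\E[p,q]{f_n}$ tends to $0$ or to $1$ uniformly on a neighbourhood; here I use monotonicity to upgrade ``outside a small-measure set'' to ``at this point''.

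The second step reduces to ``random minor computes $\E[p,q]{f_n}$''. Given a target arity $m$ and a block structure, let $\pi$ send each \kl{increasing} coordinate independently and uniformly into a block $X\subseteq[m]$, and each \kl{decreasing} coordinate uniformly into a block $Y$. The blocks may coincide, as for $X=Y=[m]$, or be disjoint, as for $\AT$ with $|X|=m+1$ and $|Y|=m$. Then for every $\tuple z$, $\E{g(\tuple z)}=\E[p,q]{f_n}$ with $p=\ham{\tuple z_X}/|X|$ and $q=\ham{\tuple z_Y}/|Y|$. A union bound over the $2^m$ inputs, exactly as in \cref{th:ordered-tractability}, then shows that some minor equals the deterministic function $\tuple z\mapsto[(p,q)\text{ lies above the curve}]$. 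This works provided no relevant $(p,q)$ lies on the graph of $\varphi$, which is ensured by choosing $m$ avoiding the countably many bad ratios, as with $tm\notin\mathbb N$ before.

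The third step is a case analysis. After possibly passing to the \kl{negated} minion, use $f(\tuple 0)$ and $f(\tuple 1)$; these are values of $\E{}$ at the corners $(0,0)$ and $(1,1)$. If $f_n(\tuple 0)\ne f_n(\tuple 1)$, say $f$ is \kl{idempotent}, use the diagonal $X=Y=[m]$. The diagonal crosses the curve at a single limit $t$, and we obtain $\MAX$, $\THR{t}$ or $\MIN$ according to whether $t=0$, $t\in(0,1)$ or $t=1$. If $f_n(\tuple 0)=f_n(\tuple 1)=0$ (the other value is symmetric via negation), then the whole diagonal lies in the zero region. Since $f$ is non-constant, the corner $(1,0)$ lies in the one region, and we use disjoint blocks to extract $\AT$.

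I expect this last subcase to be the main obstacle. $\gen{at}{m}$ requires that the decision ``above the curve'' coincide with $\ham{\tuple z_X}>\ham{\tuple z_Y}$, i.e.\ with the points $(a/(m+1),b/m)$ being separated exactly as by the diagonal. For an arbitrary limiting curve $\varphi$ lying below the diagonal, this fails. My first attempt would be to precompose with a further minor that distributes increasing coordinates unevenly, e.g.\ some into $Y$-aligned copies. A more robust option is to exploit the \kl{minion} closure once more: identify coordinates of the extracted threshold-like function to push the effective curve onto the diagonal, using that minions of the form $\{[(a/|X|,b/|Y|)\text{ above }\varphi]\}$ contain, by choosing $|X|,|Y|$ and further identifications, $\gen{at}{m}$ for all $m$. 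Should that fail, I would fall back on showing that if the curve is not the diagonal near some point, a suitable line through that point, realised by unequal block sizes, yields a $\THR{t}$-type or $\MAX/\MIN$-type family instead, which is still on the tractable list.
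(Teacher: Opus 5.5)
There is a genuine gap, and it sits exactly in the case that must produce $\AT$.

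\textbf{Where the argument breaks.} Your subcase $f_n(\tuple 0)=f_n(\tuple 1)=0$ cannot occur. The identification minor $a\mapsto f_n(a,\dots,a)$ would then be the constant $0$ function, which you have excluded. So once there are no constants, every member is idempotent or anti-idempotent, and the minion splits into two subminions accordingly. Your only remaining case is the diagonal one, and there the claim that ``the diagonal crosses the curve at a single limit $t$'' fails.

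Take the minion $\AT$ itself. $\E[p,q]{\gen{at}{m}}$ has its sharp threshold exactly along $p=q$, so every diagonal grid point $(i/m,i/m)$ lies on the limiting curve. The limits there are not $0$ or $1$ (they tend to about $1/2$), so the union bound collapses. Indeed, identifying $x_i$ with $y_i$ in $\gen{at}{m}$ gives a projection, not a threshold function. Your argument would therefore wrongly conclude that $\AT$ contains $\MAX$, $\THR{t}$ or $\MIN$. More generally, nothing in your Steps 1--2 prevents the monotone limit curve from sharing a nondegenerate segment with the diagonal. Your assertion of a unique crossing point is also unjustified as stated; one would at least need that a symmetric idempotent unate minor is increasing.

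\textbf{The missing idea.} The paper proves that $\mathbf T\cap(0,1)^2$ is a straight (possibly empty) open segment. This rests on \cref{lem:boundaryisline}. Since every minor is unate, if $g(\tuple x\tuple y)=g(\tuple x''\tuple y'')$ with $\tuple x\le\tuple x''$ and $\tuple y\le\tuple y''$, you identify half of the flipped coordinates into one coordinate. Whether that coordinate is increasing or decreasing then yields a ``midpoint'' input with the same value.

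Applying this to $g_{w,w}$ at grid approximations of two curve points $A,B$, from both sides, shows their midpoint lies on $\mathbf T$ (\cref{cl:midpoint}). Hence the curve is a segment. This gives the clean trichotomy relative to the diagonal:
\begin{itemize}
\item disjoint: $\MAX$ or $\MIN$;
\item a single crossing: $\THR{t}$;
\item coincidence: $\AT$.
\end{itemize}
In the coincident case, blocks of sizes $m+1$ and $m$ work precisely because no grid point $(i/(m+1),j/m)$ other than the corners lies on $p=q$, so $g_{m+1,m}=\gen{at}{m}$. Your tentative fixes (uneven block distributions, further identifications, lines of other slopes) do not replace this lemma. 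Without it, a partial overlap of $\mathbf T$ with the diagonal is not excluded, and then neither the diagonal grid nor the $(m+1,m)$ grid yields a determined minor.

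Your Steps 1--2 are fine and parallel the paper: coupling monotonicity, Helly's theorem in place of Arzel\`a--Ascoli, and random block minors.
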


The intuition behind this sharp threshold condition is that the graph of $\E[p,q]{f}$ exhibits an abrupt 0-1 transition when $p$ increases and $q$ decreases (see \cref{fig:at_graph}).

\begin{figure}[H]
    \centering
    \includegraphics[scale=0.5,trim=0 0.5cm -0.5cm 2.9cm,clip]{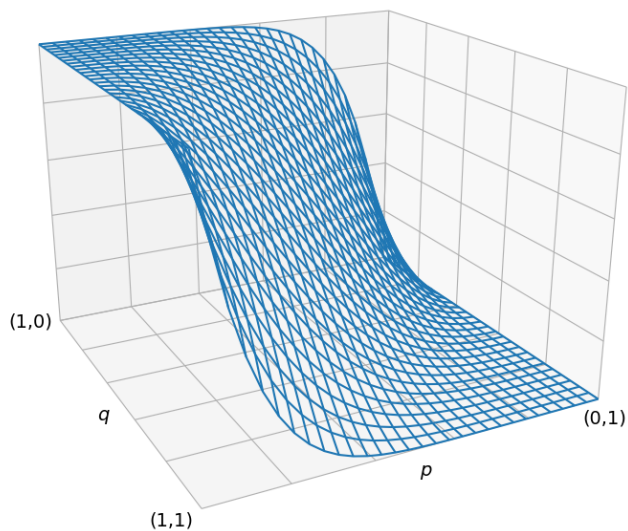}
    \caption{The graph of $\E[p,q]{\gen{at}{3}}$ with respect to $(p,q) \in [0,1]^2$. We can see that the expected value admits a sharp 0-1 transition as $p$ increases and $q$ decreases simultaneously.}\label{fig:at_graph}
\end{figure}

We now proceed to outline the high-level proof strategy; the full proof is deferred to \cref{appendix:unate}.
Before starting the discussion, we need to introduce new notation in order to simplify presentation.
Due to the geometric nature of the proof, the normal font will be used to denote points in $\mathbb R^2$ (e.g. $P, Q$) and the bold font to denote curves (e.g. $\mathbf C$).

Firstly, the set of measure less than $\varepsilon$ in the statement is a connected region in $\mathbb R^2$ enclosed by two curves and the boundary of the unit square $[0,1]^2$.
The assumption states that we can find an infinite sequence of functions in $\minion M$ whose regions converge to a measure zero set.
We will deduce from the fundamental Arzelà-Ascoli theorem that this measure zero set is a \emph{monotonic} curve from one edge of the unit square to the opposite one, call it $\mathbf T$.

Note that $\mathbf T$ splits the unit square into two connected (possibly empty) regions. For a point $P \in [0,1]^2 \setminus \mathbf T$, we will write $P < \mathbf T$ if $P$ lies in the same region as $(0,1)$, and $P > \mathbf T$ if $P$ lies in the same region as $(1,0)$. Note that when $(0, 1) \in \mathbf T$ or $(1, 0) \in \mathbf C$, one of the regions is empty. Nevertheless, it is always the case that either $P \in \mathbf T$, or $P < \mathbf T$, or $P > \mathbf T$ for every $P \in [0,1]^2$.

To summarize, we obtain a sequence of functions for which this monotonic curve acts as a sharp threshold:
\begin{lemma}\label{lem:regions-to-curve}
    Let $\minion M$ be an \kl{idempotent} \kl{minion} that satisfies the assumptions of \cref{th:tractability}.
    There exist a sequence of functions $\{f_n\}_{n \in \mathbb N} \subseteq \minion M$ and a monotonic curve $\mathbf T$ from $(0,0)$ to $(1,1)$ such that
    \[
    \forall{P \in [0,1]^2\setminus \mathbf T}:~ \lim_{n\to\infty} \E[P]{f_n} = \begin{cases}
        1, &\quad\text{if } P > \mathbf T \\
        0, &\quad\text{if } P < \mathbf T
    \end{cases}
    \]
\end{lemma}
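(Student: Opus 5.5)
We take $f_n \in \minion M$ to be the function given by the assumption of \cref{th:tractability} for $\varepsilon = 1/n$, so that $\left\lvert\{(p,q) : \E[p,q]{f_n} \in (1/n, 1-1/n)\}\right\rvert < 1/n$. The basic structural fact is monotonicity in two directions. Since increasing coordinates are sampled from $\mu_p$ and decreasing ones from $\mu_q$, a standard coupling shows that $F_n(p,q) := \E[p,q]{f_n}$ is non-decreasing in $p$ and non-increasing in $q$. Idempotency gives $F_n(1,0) = 1$ and $F_n(0,1) = 0$. On the diagonal $p=q=t$ we have $F_n(t,t) = \E[t]{f_n}$; for an idempotent function this equals $0$ at $t=0$ and $1$ at $t=1$, so $F_n(0,0)=0$ and $F_n(1,1)=1$.

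For each $n$, define the level curve $\mathbf C_n = \{(p,q) : F_n(p,q) = 1/2\}$. We parametrize it as a monotone path using the coordinate $s = p + (1-q)$ or a similar transversal direction. Concretely, along each line $p - q = c$ with $c \in [-1,1]$, $F_n$ is a polynomial. We would rather use lines of the form $p + q = \text{const}$, along which $F_n$ is non-decreasing in $p$. Along each such line, let $h_n(c)$ be the point where $F_n$ crosses $1/2$; if it is strictly monotone, this point is unique, and otherwise we take the infimum. Monotonicity of $F_n$ in both variables forces the set $\{F_n \ge 1/2\}$ to be an ``up-right/down'' closed region, meaning closed under increasing $p$ and decreasing $q$. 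Its boundary is therefore a monotone (non-decreasing) curve, which we view as the graph of a non-decreasing function $g_n$ in the rotated coordinates, or equivalently as a $1$-Lipschitz curve in the $\ell_\infty$ sense.

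The endpoint values computed above force this boundary to pass from the edge near $(0,0)$ to the edge near $(1,1)$. These $1$-Lipschitz parametrizations are uniformly bounded and equicontinuous, so Arzelà--Ascoli yields a subsequence converging uniformly to a monotone curve $\mathbf T$ from $(0,0)$ to $(1,1)$. The endpoints persist in the limit because $F_n(0,0)=0$ and $F_n(1,1)=1$ for every $n$ pin the curve's ends to those corners.

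It remains to show that $F_n(P) \to 1$ for $P > \mathbf T$ and $F_n(P) \to 0$ for $P < \mathbf T$. This is the main obstacle. Fix $P > \mathbf T$ at positive distance $r$ from $\mathbf T$. For large $n$, $P$ lies at distance at least $r/2$ on the ``high'' side of $\mathbf C_n$. Suppose $F_n(P) \le 1 - 1/n$. By monotonicity, every point $Q$ with $Q \le P$ in the partial order (smaller $p$, larger $q$) that still lies on the high side of $\mathbf C_n$ has $F_n(Q) \in [1/2, 1-1/n]$, which lies inside $(1/n, 1-1/n)$ once we account for the boundary value. Because $P$ is at distance $r/2$ from $\mathbf C_n$, the lower-left quadrant of $P$ intersected with the $r/2$-ball around $P$ contains a region of area $\Omega(r^2)$ on that side. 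This contradicts the measure bound $1/n$ once $n \gg r^{-2}$. The case $P < \mathbf T$ is symmetric, using the upper-right quadrant.

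The care needed here lies in confirming that this quadrant region indeed stays on the correct side of $\mathbf C_n$ and inside $[0,1]^2$. When $P$ is near the boundary of the square, we can instead use the quadrant pointing inward; the monotone geometry of $\mathbf C_n$ makes one of the two suitable quadrants always available. Passing to the subsequence then gives the lemma.
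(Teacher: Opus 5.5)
Your overall route matches the paper's. You take the $1/2$-level sets of $F_n$, extract a uniformly convergent subsequence with Arzel\`a--Ascoli, and contradict the measure bound on a region where monotonicity traps $F_n$ between $1/2$ and $F_n(P)$. For $P<\mathbf T$ the paper uses the rectangle spanned by $P$ and the point where the ray from $P$ toward $(1,0)$ meets $\mathbf T_n$. Your parametrization over the fibres $p+q=u$ is a clean substitute for the implicit function theorem and the length-based Arzel\`a--Ascoli: the separating curve is the graph of a $1$-Lipschitz function of $u$ (not a non-decreasing one).

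The boundary step, which you single out as delicate, is repaired incorrectly. For $P>\mathbf T$ the only admissible region is $\{Q : Q_p\le P_p,\ Q_q\ge P_q\}$, because only there does monotonicity give $F_n(Q)\le F_n(P)$. A ``quadrant pointing inward'' contains points incomparable to or above $P$, where $F_n$ can equal $1$, so no contradiction with the measure bound follows. In fact no switch is needed. A monotone curve from $(0,0)$ to $(1,1)$ leaves every point of the left and top edges either on $\mathbf T$ or on the side of $(0,1)$, so $P>\mathbf T$ forces $P_p>0$ and $P_q<1$. With $s=r/4$, the rectangle $[\max(0,P_p-s),P_p]\times[P_q,\min(1,P_q+s)]$ lies in $B(P,r/2)\cap[0,1]^2\subseteq\{F_n\ge 1/2\}$ for large $n$. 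Its area is $\min(P_p,s)\cdot\min(1-P_q,s)>0$, independent of $n$. Since $P$ is fixed, this is all you need; a uniform $\Omega(r^2)$ bound fails near the edges and is unnecessary. The paper's ray argument tacitly uses the same fact to get a non-degenerate angle. Assuming $F_n(P)<1-1/n$ for contradiction, rather than $\le$, removes the boundary-value issue. The case $P<\mathbf T$ is symmetric, since such $P$ cannot lie on the bottom or right edge.

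There is a second gap at the corners. Because $F_n(0,q)=0$ and $F_n(1,q)=1$ for every $q$, the level set $\{F_n=1/2\}$ runs from a point of the bottom edge to a point of the top edge and never contains $(0,0)$ or $(1,1)$; for majority it is the segment $p=1/2$. The values $F_n(0,0)=0$ and $F_n(1,1)=1$ do not pin it to the corners. You have two ways to fix this. You can append the bottom-edge segment from $(0,0)$ and the top-edge segment to $(1,1)$, as the paper does. Alternatively, in your fibre parametrization, set $g_n(u)$ to the upper endpoint of the fibre when $F_n<1/2$ on the whole fibre, and to the lower endpoint when $F_n\ge 1/2$ on the whole fibre. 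With that convention $g_n$ stays $1$-Lipschitz, and $g_n(0)=g_n(2)=0$ holds trivially because the fibres at $u=0$ and $u=2$ are single points. The limit curve then automatically joins $(0,0)$ to $(1,1)$.
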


\begin{center}
    \includegraphics[width=0.5\linewidth]{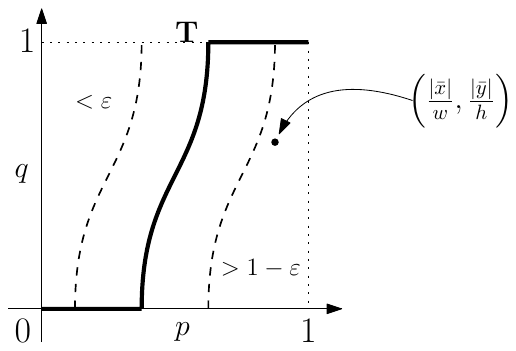}
\end{center}

Next, we employ the probabilistic method. Similarly to how random \kl{minors} produced symmetric functions for \kl{Ordered PCSPs}, we obtain:
\begin{claim}\label{cl:sign-symmetric-minors}
    For every $w, h \in \mathbb N_+$ there exists a function $g_{w,h} \in \minion M$ of arity $w+h$ that satisfies the following: For any $\tuple x\in\Bool^w, \tuple y \in \Bool^h$ such that $P := (\frac{\ham{\tuple x}}{w}, \frac{\ham{\tuple y}}{h}) \notin \mathbf T$ holds
    \[
    g_{w,h}(\tuple x \tuple y) = \begin{cases}
        1, &\quad\text{if } P > \mathbf T \\
        0, &\quad\text{if } P < \mathbf T
    \end{cases}
    \]
\end{claim}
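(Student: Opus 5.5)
We mimic the random-minor argument from the proof of \cref{th:ordered-tractability}. Take the sequence $\{f_n\}$ and curve $\mathbf T$ from \cref{lem:regions-to-curve}. Fix $w,h$. For a given $f_n$ of arity $N$, define a random minor map $\pi : [N] \to [w+h]$ as follows. Each \kl{increasing} coordinate $i \in \domainup{f_n}$ is mapped independently and uniformly into $\{1,\dots,w\}$. Each \kl{decreasing} coordinate $i \in \domaindown{f_n}\setminus\domainup{f_n}$ is mapped independently and uniformly into $\{w+1,\dots,w+h\}$. Coordinates on which $f_n$ does not depend can be sent anywhere, say into the first block. Let $g = \minor{f_n}{\pi}$; it lies in $\minion M$ and has arity $w+h$.

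The key observation concerns a fixed input $\tuple x\tuple y$ with $\tuple x\in\Bool^w$, $\tuple y\in\Bool^h$. The pulled-back tuple $\pi^{-1}(\tuple x\tuple y)$ has independent entries. Each increasing coordinate equals $1$ with probability $p=\ham{\tuple x}/w$, and each decreasing coordinate equals $1$ with probability $q=\ham{\tuple y}/h$. Hence
\[
\E[\pi]{g(\tuple x\tuple y)} = \E[p,q]{f_n}.
\]
The dummy coordinates are harmless because $f_n$ ignores them. So for $P=(p,q)\notin\mathbf T$, \cref{lem:regions-to-curve} gives that $\E[\pi]{g(\tuple x\tuple y)}$ tends to $1$ if $P>\mathbf T$ and to $0$ if $P<\mathbf T$, as $n\to\infty$.

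There are only finitely many inputs, at most $2^{w+h}$. Choose $n$ large enough that for every input with $P\notin\mathbf T$ the relevant expectation lies within $2^{-(w+h)-1}$ of its limit. By Markov's inequality applied to $g$ or $1-g$, each such input is wrong with probability less than $2^{-(w+h)-1}$. A union bound then shows that with positive probability $g$ is correct on all inputs with $P\notin\mathbf T$. Any such realization is the desired $g_{w,h}$.

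The main obstacle I expect is the bookkeeping of the increasing/decreasing split: $\domainup{f_n}$ and $\domaindown{f_n}$ depend on $n$, and the identity $\E[\pi]{g}=\E[p,q]{f_n}$ must use exactly the same partition that defines $\E[p,q]{f_n}$. That identity holds by construction. A secondary issue arises if one block of coordinates is empty, for instance if $f_n$ has no decreasing coordinates. In that case some of the last $h$ (or first $w$) coordinates are simply dummy, and the computation is unaffected, since $\E[p,q]{f_n}$ then does not depend on $q$ (or $p$). Note that we never need $g$ itself to be \kl{unate} or idempotent; membership in $\minion M$ follows from closure under minors.
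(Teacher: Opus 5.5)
Your proposal is correct and follows the paper's argument: the paper isolates the same random minor (increasing coordinates sent uniformly into the first $w$ positions, decreasing ones into the last $h$, so that $\mathbb{E}_\pi[g(\overline{x}\,\overline{y})]$ equals the $(p,q)$-biased expectation of the function) as \cref{lem:random-minor}. It then applies that lemma to $f_N$ with $N$ large enough that every one of the finitely many grid points $(i/w, j/h)\notin\mathbf T$ has expectation within $2^{-(w+h)}$ of its limit $0$ or $1$, and concludes by the union bound exactly as you do.
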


The properties of \kl{unate} functions ensure that the curve $\mathbf T' = \mathbf T \cap (0,1)^2$ is a straight line segment.
Finally, depending on the configuration of this line, we show that $\minion M$ must include one of the minions on the list:
\begin{itemize}
    \item $\mathbf T'$ above the main diagonal $(0,0) - (1,1)$ $\implies \MAX \subseteq \minion M$;
    \item $\mathbf T'$ below the main diagonal $\implies \MIN \subseteq \minion M$;
    \item $\mathbf T'$ coincides with the main diagonal $\implies \AT \subseteq \minion M$;
    \item $\mathbf T'$ intersects the main diagonal once $\implies \THR{t} \subseteq \minion M$;
\end{itemize}
where we assumed, for simplicity, that $\minion M$ is \kl{idempotent}.

\begin{remark}
    The algorithm introduced in \cite{regional}, which combines linear programming and ring equations, solves efficiently any Boolean PCSP that has certain ``regional'' polymorphisms of arbitrarily large arities.
    The functions $g_{w,h}$ obtained in \cref{cl:sign-symmetric-minors} behave much like ``regional'' functions, with two regions (above and below $\mathbf T$).
    
    The main reason we cannot use the result of \cite{regional} as a black box is because it relies on the existence of ``LP-computable'' rings $R, R' \subset \mathbb R$ such that $\mathbf T \cap (R \times R') = \emptyset$.
    Although we deduce that $\mathbf T \cap (0,1)^2$ must be a straight line segment, this does not seem sufficient to conclude that such rings always exist.
    This is the case when $\mathbf T$ coincides with the main diagonal, and in the remaining cases we exchange the whole unit square for the main diagonal and find ``regional'' functions with 1-dimensional regions.
\end{remark}

\paragraph{Towards hardness} 

Consider a Boolean \kl{minion} $\minion M$ of \kl{unate} functions that fails the assumptions of \cref{th:tractability}.
Following the proof of \cref{ordered:theorem:hardness_from_cube_influence_preservation}, it is possible to obtain, for each $f \in \minion M$, a coordinate $j$ such that $\Inf[(p,q)]{f,j} \ge \delta$ for at least a constant fraction of values $(p, q) \in [0,1]^2$.
There are 2 main obstacles to obtaining hardness from this: (1) the analog of \cref{claim:a2} is too weak, and (2) preservation of $\Inf[(p,q)]{f,j}$ under random \kl{2-to-1} \kl{minors} is not well-defined, because the underlying distribution $\mu_{\tuple r}$ depends on the semantics of the function $f$, and not just on its arity.
So far we have not been able to overcome these issues; therefore, the dichotomy for \kl{Unate PCSPs} remains open.

On the other hand, if one can ensure that these influential coordinates appear in $(p,q)$-distributions such that $p = q$, these issues disappear.
\begin{theorem}\label{main:theorem:hardness_for_unate_over_cube}
    Suppose that $(\A, \B)$ is a \kl{Unate PCSP} \kl(PCSP){template}. If
    \[
        \exists \, p,\delta > 0 : \forall \, f \in \Pol(\A, \B) : \max_i \, \Inf[(p)]{f, i} \geq \delta,
    \]
    then $\PCSP(\A, \B)$ is \NP-hard, assuming the \richconj.
\end{theorem}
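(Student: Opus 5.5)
Following the recipe of \cref{sec:general} and the template of \cref{ptfs:proposition:ptfs_minions_satisfy_random_2_to_1_condition}, we plan to show that $\Pol(\A,\B)$ satisfies the \kl{random 2-to-1 hardness condition} and then invoke \cref{prelims:theorem:random_2_to_1_condition_implies_hardness}. Fix $p,\delta$ from the assumption and set $\lambda = \min(p,1-p)/2$. Let $\gamma,\tau,N$ be the constants of \kl{Cube Influence Preservation} for $\lambda$ and $\delta$, with $\tau \le \delta$. Define $\Sel(f) = \{ i : \Inf[(p)]{f,i} \ge \tau\}$ when $\ar(f)\ge N$, and $\Sel(f)=[\ar(f)]$ otherwise. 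There are two ingredients to verify for unate $f$: sublinear \kl{total influence} over $\cube{p}$, and a constant bound on the number of coordinates with influence at least $\tau$.

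\textbf{Total influence.} We would prove \cref{unate:lemma:total_influence_is_sqrt}, i.e.\ $\I[(p)]{f} \le \sqrt{n/(p(1-p))}$ for unate $f$. Let $s_i = +1$ for $i \in \domainup{f}$ and $s_i=-1$ otherwise. Since $f$ is Boolean and unate in coordinate $i$, the derivative $f(\tuple x^{i\to1})-f(\tuple x^{i\to0})$ is $\{0,s_i\}$-valued. Hence $\Inf[(p)]{f,i} = s_i\,\E[p]{f(\tuple x^{i\to1})-f(\tuple x^{i\to0})}$. The computation in the proof of \cref{proposition:influence:alternative_definitions_of_inf} then gives $\Inf[(p)]{f,i} = s_i\hat f(\{i\})/\sqrt{p(1-p)}$. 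Summing over $i$ and applying Cauchy--Schwarz and \kl{Parseval},
\[
\I[(p)]{f} \le \frac{1}{\sqrt{p(1-p)}}\sum_i |\hat f(\{i\})| \le \sqrt{\frac{n}{p(1-p)}}\Big(\sum_i \hat f(\{i\})^2\Big)^{1/2} \le \sqrt{\frac{n}{p(1-p)}}.
\]
This is $o(n)$, so enlarging $N$ ensures $\I[(p)]{f}\le\gamma\cdot\ar(f)$ whenever $\ar(f)\ge N$.

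\textbf{Few influential coordinates.} By the same identity, each $i$ with $\Inf[(p)]{f,i}\ge\tau$ satisfies $\hat f(\{i\})^2 \ge p(1-p)\tau^2$. Parseval gives $\sum_i \hat f(\{i\})^2\le 1$, so $|\Sel(f)| \le 1/(p(1-p)\tau^2)$, a constant. The key point is that for unate functions influence equals the degree-one Fourier coefficient up to scaling. This is exactly what fails for parity and what replaces \kl{low-degree concentration} in the PTF argument.

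\textbf{Compatibility and the main subtlety.} The compatibility step is the same as in the PTF case. By assumption, every $f$ has some $i$ with $\Inf[(p)]{f,i}\ge\delta\ge\tau$, so $\Sel(f)\ne\emptyset$. For arity $2n\ge N$, \kl{Cube Influence Preservation} gives $\Pr_\pi[\pi(i)\in\Sel(\minor f\pi)]\ge\tau$. The subtle point is that minors of unate functions need not be unate, as noted above. However, $\minor f\pi\in\Pol(\A,\B)$, and the template is assumed to be \kl(PCSP){Unate}, so every polymorphism, including every minor, is unate. The constant bound on $|\Sel(\minor f\pi)|$ is therefore available, and in any case the preservation lemma needs no structure on the minor. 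I expect the only real work to be the unate influence identity above; once it is checked, the rest is routine.
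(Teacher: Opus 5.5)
Your proposal is correct and follows the paper's own proof. Both use the identity $\Inf[(p)]{f,i} = |\hat{f}(\{i\})|/\sqrt{p(1-p)}$ for unate $f$ to get the $\sqrt{n/(p(1-p))}$ total-influence bound and the constant bound on influential coordinates, then apply Cube Influence Preservation to obtain the random 2-to-1 hardness condition. You threshold $\Sel$ at $\tau \le \delta$, as in the PTF case, rather than at $\delta$, which is the correct way to make the preserved coordinate $\pi(i)$ land in $\Sel(\minor{f}{\pi})$.
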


The proof follows the already familiar pattern: upper-bound the total influence, upper-bound the number of influential coordinates, and apply the \kl{Influence Preservation Lemma}.

\begin{lemma}\label{unate:lemma:total_influence_is_sqrt}
    Let $p \in (0, 1)$. If $f : \{0,1\}^n \to \{0,1\}$ is \kl{unate}, then
    \[
        \I[(p)]{f} \leq \sqrt{\frac{n}{p(1-p)}}
    \]
\end{lemma}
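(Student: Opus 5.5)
Since $f$ is \kl{unate}, I can first reduce to the increasing case: for each decreasing coordinate $i$, compose $f$ with the negation of its $i$-th input. This does not change the influences under $\cube{p}$ if we simultaneously replace $p$ by $1-p$ in that coordinate. That would produce a non-identically-biased product measure, though. So instead I will work directly with first-order Fourier coefficients, which handle both monotonicity directions at once.

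\textbf{Step 1: influence equals a signed first-order coefficient.} Fix a coordinate $i$ and set $\Delta_i(\tuple x) = f(\tuple x^{\, i \to 1}) - f(\tuple x^{\, i \to 0})$. Unateness means $\Delta_i$ has constant sign $s_i \in \{\pm 1\}$, with values in $\{0, s_i\}$. Hence $\Delta_i^2 = s_i \Delta_i$, and so
\[
\Inf[(p)]{f,i} = \E[p]{\Delta_i^2} = s_i\, \E[p]{\Delta_i}.
\]
Next I relate $\E[p]{\Delta_i}$ to $\hat f^{(p)}(\{i\})$, following the proof of \cref{proposition:influence:alternative_definitions_of_inf}. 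Write
\[
\hat f(\{i\}) = \E[p]{f \chi_i}
= \sqrt{p(1-p)}\,\bigl(\E[p]{f \mid x_i = 1} - \E[p]{f \mid x_i = 0}\bigr)
= \sqrt{p(1-p)}\,\E[p]{\Delta_i}.
\]
The middle equality uses $\chi_i(1) = \sqrt{(1-p)/p}$ and $\chi_i(0) = -\sqrt{p/(1-p)}$. Combining the two displays gives
\[
\Inf[(p)]{f,i} = \frac{s_i\, \hat f(\{i\})}{\sqrt{p(1-p)}} = \frac{|\hat f(\{i\})|}{\sqrt{p(1-p)}}.
\]

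\textbf{Step 2: Cauchy--Schwarz and Parseval.} Summing Step 1 over all coordinates and applying Cauchy--Schwarz,
\[
\I[(p)]{f} = \frac{1}{\sqrt{p(1-p)}}\sum_{i=1}^n |\hat f(\{i\})| \le \frac{\sqrt n}{\sqrt{p(1-p)}}\Bigl(\sum_i \hat f(\{i\})^2\Bigr)^{1/2}.
\]
By the \kl{Parseval identity}, $\sum_i \hat f(\{i\})^2 \le \norm{f}^2 = \E[p]{f^2} \le 1$, since $f$ is $\{0,1\}$-valued. This gives $\I[(p)]{f} \le \sqrt{n/(p(1-p))}$, which is the claimed bound. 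Using $\sum_i \hat f(\{i\})^2 \le \mathrm{Var}_p(f) \le 1/4$ would even give half of it.

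\textbf{Main obstacle.} The only delicate point is the normalization in Step 1: checking that $\hat f^{(p)}(\{i\}) = \sqrt{p(1-p)}\,\E[p]{\Delta_i}$ exactly, with the character normalization used in this paper. Once that identity is pinned down, Step 2 is routine.
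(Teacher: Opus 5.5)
Your proposal is correct and is essentially the paper's proof: unateness turns each influence into $|\hat f^{(p)}(\{i\})|/\sqrt{p(1-p)}$, and Cauchy--Schwarz plus Parseval finish the job. Computing $\hat f^{(p)}(\{i\})$ via conditional expectations is only a cosmetic variant of the paper's expansion of $\mathbb{E}_p[f(\tuple x^{\,i\to a})]$. It also sidesteps the $\hat f(\emptyset)$ term that the paper's display omits, though that term cancels in the difference anyway.
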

\begin{proof}
    Fix $p$ and $f$ as in the statement. Let $i \in [n]$. We have
    \begin{align*}
        \Inf[(p)]{f, i} &= \E[p]{ \left| f(\tuple x^{i \to 1}) - f(\tuple x^{i \to 0}) \right|} \\
            &= \left| \E[p]{f(\tuple x^{i \to 1}) - f(\tuple x^{i \to 0}) } \right| \tag{{\color{magenta}$1$}}\label{prelims:eq:influence_for_unate}
    \end{align*}
    where the first equality follows from the fact that $f$ has Boolean values, and the second is a consequence of $f$ being \kl{unate}: observe that $f(\tuple x^{i \to 1}) - f(\tuple x^{i \to 0})$ is either always non-negative, or always non-positive, depending on whether $i$ belongs to $\domaindown{f}$ or $\domainup{f}$.
    Next, observe that $\E[p]{\chi_j(\tuple x^{i\to a})} = 0$ for every $a \in \Bool$ and every $j \neq i$. Thus
    \[
        \E[p]{f(\tuple x^{i \to a})} = \sum_{S \subseteq [n]} \hat{f}(S) \cdot \E[p]{\chi_S(\tuple x^{i \to a})} = \hat{f}\left(\{ i \} \right) \cdot \chi_i(a).
    \]
    We plug this into \eqref{prelims:eq:influence_for_unate} using the linearity of expectation:
    \[
        \Inf[(p)]{f, i} = \left| \hat{f}(\{i\}) \cdot (\chi_i(1) - \chi_i(0))  \right| = \frac{1}{\sqrt{p(1-p)}} \cdot \left|\hat{f}(\{i\}) \right|.
    \]
    To finish, we combine the obtained equality with Cauchy-Schwarz and \kl{Parseval identity}:
    \begin{align*}
        \I[(p)]{f} &= \sum_{i \in [n]} \Inf[(p)]{f, i} = \frac{1}{\sqrt{p(1-p)}} \cdot \sum_{i \in [n]} \left\lvert\hat{f}(\{i\})\right\rvert \\
        &\leq \frac{\sqrt{n}}{\sqrt{p(1-p)}} \cdot \sqrt{\sum_{i \in [n]} \hat{f}(\{i\})^2} = \sqrt{\frac{n}{p(1-p)}} \cdot \norm{f},
    \end{align*}
    which finishes the proof because $\norm{f} \leq 1$.
\end{proof}

\begin{remark}
    \cref{unate:lemma:total_influence_is_sqrt} can be extended to $\I[(p,q)]{f}$. We do not use it here, though.
\end{remark}

By \cref{unate:lemma:total_influence_is_sqrt}, we immediately deduce that \kl{unate} functions admit influence preservation over $\cube{p}$. 
Again, in order to obtain a hardness reduction based on the \kl{random 2-to-1 condition}, the number of influential coordinates in a \kl{unate} function must be upper-bounded by a constant.

\begin{proposition}\label{unate:proposition:unate_have_constant_influential_coordinates_biased}
    Suppose that $\lambda, \delta > 0$ and $p \in (\lambda, 1-\lambda)$. There exists a constant $\mathcal{C} = \mathcal{C}(\lambda, \delta)$ such that the following holds. If $f : \{0,1\}^n \to \{0,1\}$ is \kl{unate}, then
    \[
        \Big| \big\{ i \in [n] : \Inf[(p)]{f, i} \geq \delta \big \} \Big| \leq \mathcal{C}.
    \]
\end{proposition}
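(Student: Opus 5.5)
The plan is to reuse the identity established in the proof of \cref{unate:lemma:total_influence_is_sqrt}. For a \kl{unate} function $f$ and every coordinate $i$, that proof shows
\[
    \Inf[(p)]{f, i} = \frac{1}{\sqrt{p(1-p)}} \cdot \left| \hat{f}^{(p)}(\{i\}) \right|.
\]
Unateness enters only here: it lets the expectation of the absolute difference be replaced by the absolute value of the expectation, which is exactly the degree-one coefficient. The whole argument rests on this identity, so once it is in hand the rest should be a counting argument.

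Let $A$ be the set of coordinates with $\Inf[(p)]{f, i} \ge \delta$. For each $i \in A$ the identity gives
\[
    \hat{f}^{(p)}(\{i\})^2 \ge p(1-p)\,\delta^2 \ge \lambda(1-\lambda)\,\delta^2,
\]
where the second inequality holds because $p \in (\lambda, 1-\lambda)$ and $x \mapsto x(1-x)$ is minimized on that interval at its endpoints.

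Summing over $A$ and applying the \kl{Parseval identity} yields
\[
    |A| \cdot \lambda(1-\lambda)\,\delta^2 \le \sum_{i \in A} \hat{f}^{(p)}(\{i\})^2 \le \sum_{S \subseteq [n]} \hat{f}^{(p)}(S)^2 = \norm{f}^2 = \E[p]{f} \le 1.
\]
This gives the constant $\mathcal{C} = \big(\lambda(1-\lambda)\delta^2\big)^{-1}$, which depends only on $\lambda$ and $\delta$.

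I expect no real obstacle. The only point to check is that the identity from the proof of \cref{unate:lemma:total_influence_is_sqrt} is valid for every coordinate, including coordinates on which $f$ does not depend; for those, both sides are $0$. The bound uses only degree-one Fourier weight, in contrast to the \PTF\ case, which needed \kl{low-degree concentration} (\cref{ptfs:proposition:ptfs_are_low_degree_concentrated}).
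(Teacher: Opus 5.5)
Your proof is correct and is essentially the paper's argument. The paper also takes the identity $\mathbf{Inf}^{(p)}[f,i] = |\hat f(\{i\})|/\sqrt{p(1-p)}$ from the proof of \cref{unate:lemma:total_influence_is_sqrt} and bounds the total degree-one Fourier weight over $A$ by Parseval; your constant $\bigl(\lambda(1-\lambda)\delta^2\bigr)^{-1}$ is even slightly sharper than the paper's, which concludes with the weaker but still valid $|A|\cdot(\lambda(1-\lambda)\delta)^2 \le 1$.
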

\begin{proof}
    Fix $\lambda, \delta, p$ and $f$ as in the statement. Let A be the set of coordinates $i \in [n]$ with influence at least $\delta$. Recall that in the proof of \cref{unate:lemma:total_influence_is_sqrt} we showed that $\Inf[(p)]{f, i} = \left|\hat{f}(\{i\}) \right| / \sqrt{p(1-p)}$ for every $i \in [n]$. Therefore, by \kl{Parseval identity}, we have
    \[
        1 \geq \norm{f}^2 \geq \sum_{i \in A} \hat{f}\big(\{i \}\big)^2 \geq |A| \cdot p(1-p) \delta^2 \geq |A| \cdot (\lambda(1-\lambda)  \delta)^2. \qedhere
    \]
\end{proof}

\begin{proof}[Proof of \cref{main:theorem:hardness_for_unate_over_cube}]
    Hardness follows from \cref{prelims:theorem:random_2_to_1_condition_implies_hardness} by defining $\Sel$ as in \cref{unate:proposition:unate_have_constant_influential_coordinates_biased}. \cref{unate:lemma:total_influence_is_sqrt} allows us to apply \kl{Cube Influence Preservation}, which ensures that $\Sel$ is preserved under random \kl{2-to-1 minors}.
\end{proof}

Is it possible to obtain an analogue of \cref{main:theorem:hardness_for_unate_over_cube} for the \kl{Shapley distribution}?
We shall prove that \kl{unate} functions have sublinear \kl{total influence} under $\shap$, which yields influence preservation:
\begin{restatable}{proposition}{unatesublineartotalshap}\label{unate:proposition:unate_have_sublinear_total_shapley}
    If $f : \{0,1\}^n \to \{0,1\}$ is \kl{unate}, then $\I[\shap]{f} \leq o(n)$.
\end{restatable}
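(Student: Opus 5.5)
We will mirror the proof of \cref{shapley:proposition:ptfs_have_small_total_shapley}. Combining \cref{shapley:claim:connection_between_shapley_and_biased} with linearity of the integral gives
\[
    \I[\shap]{f} = \int_0^1 \I[(p)]{f} \, dp .
\]
We then split the integral at a small parameter $\eta > 0$ into the two boundary pieces $[0,\eta] \cup [1-\eta,1]$ and the middle piece $[\eta, 1-\eta]$, and bound each piece separately.

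\textbf{Boundary pieces.} We use the trivial bound $\I[(p)]{f} \le n$, which holds because every influence of a Boolean function is at most $1$. Hence the two boundary pieces contribute at most $2\eta n$.

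\textbf{Middle piece.} Here we invoke \cref{unate:lemma:total_influence_is_sqrt}, which gives
\[
    \I[(p)]{f} \le \sqrt{\frac{n}{p(1-p)}}
\]
for every unate $f$. It is in fact cleaner to integrate this bound over all of $(0,1)$ directly. The function $1/\sqrt{p(1-p)}$ is integrable on $(0,1)$, with
\[
    \int_0^1 \frac{dp}{\sqrt{p(1-p)}} = \pi ,
\]
so we obtain $\I[\shap]{f} \le \pi\sqrt{n}$. This bound is already $o(n)$, and no splitting is needed at all.

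The only point requiring care is that \cref{unate:lemma:total_influence_is_sqrt} is stated for fixed $p \in (0,1)$. We therefore apply it pointwise for every $p \in (0,1)$. The endpoints $p \in \{0,1\}$ have measure zero, so they do not affect the integral. The integrand is a polynomial in $p$, hence measurable. Together these justify applying the pointwise bound inside the integral.

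I expect no real obstacle. The substantive inputs, namely Owen's formula (\cref{shapley:claim:connection_between_shapley_and_biased}) and the Fourier-level bound of \cref{unate:lemma:total_influence_is_sqrt}, which uses unateness to write each influence as a degree-one Fourier coefficient, are both already established. The proof yields the explicit estimate $\I[\shap]{f} = \mathcal{O}(\sqrt{n})$, which is stronger than the claimed $o(n)$.
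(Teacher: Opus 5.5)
Your proof is correct, and it takes a slightly different route from the paper's, with a sharper result. Both arguments use the same two ingredients: Owen's formula (\cref{shapley:claim:connection_between_shapley_and_biased}) and the bound $\sqrt{n/(p(1-p))}$ from \cref{unate:lemma:total_influence_is_sqrt}. The difference is the endpoints. The paper splits $[0,1]$ at $\varepsilon/4$ and $1-\varepsilon/4$, bounds the two boundary pieces trivially by $\varepsilon n/2$, and bounds the middle by $\mathcal{O}_\varepsilon(\sqrt{n})$, which only yields $o(n)$. You instead note that the $p$-dependence $1/\sqrt{p(1-p)}$ is integrable on $(0,1)$, which gives the explicit bound $\pi\sqrt{n}$ in one line. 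Your measurability and endpoint remarks are fine, since $p \mapsto \I[(p)]{f}$ is a polynomial.

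The paper's splitting template is inherited from the PTF case (\cref{shapley:proposition:ptfs_have_small_total_shapley}). There the biased bound of \cref{ptfs:proposition:ptfs_have_sublinear_influence_biased} is only $o_{\lambda,k}(n)$, with no explicit control as $p \to 0,1$, so splitting is genuinely needed; that template is more robust but gives only $o(n)$. Your route exploits the explicit form of the unate bound. It is the same computation the paper itself performs for the upper bound in \cref{proposition:at_has_large_shapey_value}, and the lower bound proved there shows that $\mathcal{O}(\sqrt{n})$ is tight for unate functions.

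The preliminary boundary/middle split in your write-up is superfluous once you integrate directly, and can be dropped.
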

\begin{proof}
    Fix a \kl{unate} $f$ of arity $n$. By \cref{shapley:claim:connection_between_shapley_and_biased} and \cref{unate:lemma:total_influence_is_sqrt}, for every $\varepsilon > 0$ we have
    \begin{align*}
        \I[\shap]{f} &= \int_0^{\varepsilon/4} \I[(p)]{f} \, dp +  \int_{\varepsilon/4}^{1-\varepsilon/4} \I[(p)]{f} \, dp + \int_{1-\varepsilon/4}^1 \I[(p)]{f} \, dp \\
        &\leq (\varepsilon n/2) + \int_{\varepsilon/4}^{1-\varepsilon/4} \mathcal{O}_{\varepsilon}\big(\sqrt{n} \big) \, dp. \qedhere
    \end{align*}
\end{proof}

Nevertheless, we have not been able to determine whether \kl{unate} functions may have arbitrarily many influential coordinates under $\shap$. In \cref{appendix:unate}, we note that under this distribution they (1) are not \kl{low-degree concentrated}\footnote{The fact that even \kl{increasing} functions are not \kl{low-degree concentrated} is commonly known; we refer to \cite{noise_sensitivity_of_boolean_functions_and_percolation} for a comprehensive study of noise sensitivity and spectral concentration of \kl(func){increasing} Boolean functions.} (\cref{proposition:monotone_not_low_degree_concentrated}), and (2) can achieve $\Omega(\sqrt{n})$ \kl{total influence} (\cref{proposition:at_has_large_shapey_value}), ruling out the techniques used for \kl{polynomial threshold} and \kl{increasing} functions. Hence, this question remains open.

\section{Conclusions}
We initiated an analytical framework for studying Boolean polymorphisms, building on tools from the Fourier analysis of Boolean functions. Substantially generalizing \cite[Lemma 4.4]{brakensiek2021conditional}, we proved the \kl{Influence Preservation Lemma}, which yields a family of \PCSP\ hardness reductions from the Rich 2-to-1 Gap Label Cover problem, conjectured to be \NP-hard \cite{onrich2to1, brakensiek2021conditional}, to \kl(PCSP){Unate} and \kl(PCSP){Polynomial Threshold} \PCSP s whose polymorphisms do not admit diminishing maximal coordinate influences.
Furthermore, we introduced the notion of a sharp threshold for \kl(func){unate} functions and investigated its relation to tractable \kl{Unate PCSPs}, directly extending \cite[Lemma 3.4]{brakensiek2021conditional}.

We view our main contribution as the establishment of a general analytic framework for Boolean \PCSP s, which will hopefully be further developed to amplify the algebraic approach \cite{Bible} and advance the understanding of the largely unexplored Boolean \PCSP\ landscape.

Several questions that arose during the investigation of particular subclasses of Boolean \PCSP s remain open.
For \kl{Unate PCSPs}, we gave sufficient conditions for tractability  in \cref{th:tractability} and hardness in \cref{main:theorem:hardness_for_unate_over_cube}.

\begin{question}
    Do \cref{th:tractability} and \cref{main:theorem:hardness_for_unate_over_cube} fully characterize the complexity of all \kl{Unate PCSPs}?
\end{question}

\noindent An analogous question arises for \kl{Polynomial Threshold PCSPs}: 
\begin{question}
    Do all \NP-hard problems \kl{Polynomial Threshold PCSPs} satisfy \cref{main:theorem:hardness_for_ptfs_over_cube}?
    More generally, does this class of \PCSP s admit a computational dichotomy?
\end{question}

\renewcommand{\arraystretch}{1.5}
\begin{table}[H]
    \centering
    \captionsetup{width=.85\textwidth}
    \caption{A summary of known and new Rich 2-to-1 hardness reductions for Boolean $\PCSP$s based on the preservation of influence over $\cube{p}$ and $\shap$.}
    \begin{tabular}{|c|c|c|c|}
    \hline 
    & \kl{Ordered} & \kl(PCSP){Unate} & \kl(PCSP){Polynomial Threshold} \\ 
    \hline
    $\cube{p}$  & \cref{ordered:theorem:hardness_from_cube_influence_preservation} & \cref{main:theorem:hardness_for_unate_over_cube} & \cref{main:theorem:hardness_for_ptfs_over_cube} \\ 
    \hline
    $\shap$  & \cite[Theorem 4.8]{brakensiek2021conditional} & \makecell{Influence is preserved (\cref{unate:proposition:unate_have_sublinear_total_shapley}) \\ No known reduction} &  \cref{main:theorem:hardness_for_ptfs_over_shapley} \\
    \hline
    \end{tabular}
\end{table}

Another potential direction for further research is to explain the existing results on Boolean \PCSP s within our analytic framework, unifying various techniques under the umbrella of Fourier analysis. 

Finally, we note that the influence preservation result of Braverman et al. \cite[Lemma 6.21]{multislices} closely resembles our \kl{Influence Preservation Lemma}, however, it applies only to distributions defined on multi-slice domains (which are not \kl{reasonable}) and relies on a different, ad-hoc notion of influence because of that. Does this point towards some weaker, more general theory, of which both ours and their result are special cases?

\section*{Acknowledgements}
We thank Marcin Kozik for insightful conversations on analytical approaches to Boolean PCSPs.

\printbibliography

\appendix

\section{A proof of the Influence Preservation Lemma}\label{appendix:sec:influence_preservation}
Before we proceed to the proof of \kl{Influence Preservation Lemma}, we first make some general observations about \kl{reasonable distributions} and prove some technical lemmas.

\subsection{A closer look at reasonable distributions}

\begin{definition}[\intro{Reasonable parameters}]\AP
    Suppose that $\Omega$ is a \kl{reasonable} family of distributions. For every $\varepsilon > 0$, by $\varepsilon$-\emph{reasonable parameters} we call the tuple $(\alpha_\varepsilon, \beta_\varepsilon, \lambda_\varepsilon, N_\varepsilon)$ satisfying the point (2) (for given $\varepsilon$) of the definition of \kl{reasonable distributions}. 
\end{definition}

We start with two simple claims about \kl{reasonable distributions}.

\begin{claim}\label{influence:lemma:reasonable_distributions_restrictions}
    Suppose that $\Omega$ is a \kl{reasonable} family of distributions, $\varepsilon > 0$ and $(\alpha, \beta, \lambda, N)$ are $\varepsilon$-\kl{reasonable parameters} of $\Omega$. Let $n \geq N + 1$, $\tuple x \in \{0,1\}^n$ and $\tuple y \in \{0,1\}^{n-1}$ with $x_i = y_i$ for every $i \in [n-1]$. If $\tuple x| \in [\alpha n, \beta n]$ or $|\tuple y| \in [\alpha(n-1), \beta(n-1)]$, then
    \[
        \frac{\Omega_{n-1}(\tuple y)}{\Omega_n(\tuple x)} \in \left(\lambda^2, 1/\lambda^2 \right).
    \]
\end{claim}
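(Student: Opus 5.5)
The plan is to pass from $\tuple x$ to $\tuple y$ through a single intermediate quantity, using symmetry, so that consistency and smoothness each contribute one factor of $\lambda$. By \kl{symmetry}, $\Omega_n(\tuple x) = \Omega_n(\ham{\tuple x})$ and $\Omega_{n-1}(\tuple y) = \Omega_{n-1}(\ham{\tuple y})$. Since $\tuple y$ is $\tuple x$ with its last entry removed, we have $\ham{\tuple x} = \ham{\tuple y} + x_n$. Write $k = \ham{\tuple y}$. There are two cases, according to whether $x_n = 0$ or $x_n = 1$.

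\emph{Case $x_n = 0$.} Here $\ham{\tuple x} = \ham{\tuple y} = k$, and the ratio is $\Omega_{n-1}(k)/\Omega_n(k)$. If $k \in [\alpha n, \beta n]$, consistency applied at dimension $n$ (valid since $n \ge N$) gives a ratio in $(\lambda, 1/\lambda) \subseteq (\lambda^2, 1/\lambda^2)$. If instead only $k \in [\alpha(n-1), \beta(n-1)]$ holds, we apply consistency at dimension $n-1 \ge N$ with the ``$+1$'' sign. This gives $\Omega_n(k)/\Omega_{n-1}(k) \in (\lambda, 1/\lambda)$, and inverting yields the claim.

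\emph{Case $x_n = 1$.} Here $\ham{\tuple x} = k+1$. Factor the ratio as
\[
\frac{\Omega_{n-1}(k)}{\Omega_n(k+1)} = \frac{\Omega_{n-1}(k)}{\Omega_n(k)} \cdot \frac{\Omega_n(k)}{\Omega_n(k+1)},
\]
or alternatively through $\Omega_{n-1}(k+1)$, choosing whichever intermediate point lies in the good range. Each factor is controlled by a single application of consistency or smoothness, giving $\lambda$ per factor and $\lambda^2$ in total.

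The main obstacle is bookkeeping at the boundary of the interval. For example, $\ham{\tuple x} \in [\alpha n, \beta n]$ does not guarantee $k = \ham{\tuple x} - 1 \in [\alpha n, \beta n]$. So I would choose the intermediate point to be the one whose weight is already known to be in range. If $\ham{\tuple x} = k+1 \in [\alpha n, \beta n]$, smoothness at $k+1$ gives $\Omega_n(k)/\Omega_n(k+1) \in (\lambda, 1/\lambda)$. Then I need consistency for $\Omega_{n-1}$ versus $\Omega_n$ at weight $k$, or alternatively I route through $\Omega_{n-1}(k+1)$ via consistency at $k+1$ in dimension $n$, followed by smoothness in dimension $n-1$ at $k+1$. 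The latter step requires $k+1 \in [\alpha(n-1), \beta(n-1)]$, which can fail by an $O(1)$ margin, namely when $k+1 \in (\beta(n-1), \beta n]$.

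Smoothness is stated symmetrically in $\pm 1$, so from a point in range one can step to either neighbour. The clean route in the $x_n=1$ case is therefore as follows. If $\ham{\tuple x} \in [\alpha n, \beta n]$, use consistency at $(n, k+1)$ to reach $\Omega_{n-1}(k+1)$. It then remains to compare $\Omega_{n-1}(k+1)$ with $\Omega_{n-1}(k)$. If $k+1 \in [\alpha(n-1), \beta(n-1)]$, smoothness applies directly. Otherwise I use the alternative path: smoothness at $(n, k+1)$ to reach $\Omega_n(k)$, then consistency. If the boundary case still obstructs, I would slightly shrink or enlarge $[\alpha,\beta]$ (e.g.\ use $(\alpha', \beta')$ strictly inside) and take $N$ larger, which I expect to be harmless. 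The symmetric case $\ham{\tuple y} \in [\alpha(n-1), \beta(n-1)]$ is handled the same way, stepping from $(n-1, k)$.
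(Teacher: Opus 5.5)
Your strategy (split on $x_n$, then pass through one intermediate weight using one consistency step and one smoothness step) is the paper's. Your $x_n=0$ case is complete, and it is more careful than the paper's, which does not spell out the dimension-$(n-1)$ use of consistency. The gap is in the case $x_n=1$, which is where the whole content of the claim lies: you never show that an admissible intermediate weight always exists. Three things are left open:
\begin{enumerate}
\item After smoothness at $(n,k+1)$ you write ``then consistency'' without checking that $k\in[\alpha n,\beta n]$ or $k\in[\alpha(n-1),\beta(n-1)]$.
\item The sub-case $|\tuple y|\in[\alpha(n-1),\beta(n-1)]$ is dismissed as analogous.
\item Your fallback of moving $\alpha,\beta$ or enlarging $N$ is not available. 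The tuple $(\alpha,\beta,\lambda,N)$ is fixed by the hypothesis, and the conclusion must hold for all $n\ge N+1$ and all weights in the given windows, so altering the parameters proves a different statement.
\end{enumerate}

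The missing observation is one you state but do not use: smoothness compares $\Omega_{n-1}(k)$ with $\Omega_{n-1}(k+1)$ as soon as \emph{either} $k$ or $k+1$ lies in $[\alpha(n-1),\beta(n-1)]$.

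\emph{Sub-case $|\tuple x|=k+1\in[\alpha n,\beta n]$.} Here $k+1\ge\alpha n\ge\alpha(n-1)$, and because $\beta\le 1$ we have $k\le\beta n-1\le\beta(n-1)$. So in your bad case $k+1>\beta(n-1)$ we get $\alpha(n-1)\le\beta(n-1)-1<k\le\beta(n-1)$. Smoothness at $(n-1,k)$ with sign $+1$ then completes your route (consistency at $(n,k+1)$, then smoothness in dimension $n-1$), which is exactly the paper's route.

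\emph{Sub-case $|\tuple y|=k\in[\alpha(n-1),\beta(n-1)]$.} Consistency at $(n-1,k)$ with sign $+1$ gives $\Omega_n(k)/\Omega_{n-1}(k)\in(\lambda,1/\lambda)$. Since $k\le\beta(n-1)\le\beta n$ and, because $\alpha\le 1$, $k+1\ge\alpha(n-1)+1\ge\alpha n$, one of $k,k+1$ lies in $[\alpha n,\beta n]$. Smoothness in dimension $n$ then finishes.

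Both sub-cases tacitly use that the windows have length at least $1$. This follows from flatness when $\varepsilon<1/2$: a window of mass at least $1-\varepsilon$, made of layers each of mass less than $\varepsilon$, must contain at least two integers. The paper's proof relies on this silently as well.
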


\begin{proof}
    We consider two cases based on the value of $x_n$. If $x_n = 0$, we have $|\tuple y| = |\tuple x|$ and hence
    \begin{align*}
        \Omega_{n-1}(|\tuple y|) = \Omega_{n-1}(|\tuple x|) \overset{\text{(\kl{consistency})}}{\in} (\lambda, 1/\lambda) \cdot \Omega_n(|\tuple x|).
    \end{align*}
    If $x_n = 1$, we have $|\tuple y| + 1 = |\tuple x|$. We consider two sub-cases. If $|\tuple x| \in [\alpha n, \beta n]$, then $|\tuple y| \in [\alpha n -1, \beta n -1]$. This implies $\{ |\tuple y|, |\tuple y + 1| \} \cap [\alpha(n-1), \beta(n-1) ] \neq \emptyset$. We obtain
    \begin{align*}
        \Omega_{n-1}(|\tuple y|) &\overset{\text{(\kl{smoothness})}}{\in} (\lambda, 1/\lambda) \cdot \Omega_{n-1}(|\tuple y| + 1) \\ 
        &\overset{\text{(\kl{consistency})}}{\in} \left(\lambda^2, 1/\lambda^2 \right) \cdot \Omega_n(|\tuple x|).
    \end{align*}
    In the remaining case, we assume that $|\tuple y| \in [\alpha(n-1), \beta(n-1)]$. This implies $|\tuple x| \in [\alpha(n-1) + 1, \beta(n-1) + 1]$, and in turn $\{|\tuple x| - 1, |\tuple x| \} \cap [\alpha n, \beta n] \neq \emptyset$. This yields
    \begin{align*}
        \Omega_{n-1}(|\tuple y|) &\overset{\text{(\kl{consistency})}}{\in} (\lambda, 1/\lambda) \cdot \Omega_{n}(|\tuple x| - 1) \\ 
        &\overset{\text{(\kl{smoothness})}}{\in} \left(\lambda^2, 1/\lambda^2 \right) \cdot \Omega_n(|\tuple x|). \qedhere
    \end{align*}
\end{proof}

\begin{claim}\label{influence:lemma:reasonable_set_extensions}
    Suppose that $\Omega$ is a \kl{reasonable} family of distributions and $\varepsilon > 0$. There are constants $N = N(\Omega, \varepsilon) > 0$ and $\Delta = \Delta(\Omega, \varepsilon) > 0$ such that for every $n \geq N + 1$ and set $A \subseteq \{0,1\}^{n-1}$, we have 
    \begin{enumerate}[leftmargin=*]
        \item $\Omega_{n-1}(A) \leq \Delta \cdot \Omega_n\left( A \times \{0,1\} \right) + \varepsilon$, and  
        \item $\Omega_n \left( A \times \{0,1\} \right) \leq \Delta \cdot \Omega_{n-1}(A) + \varepsilon$.
    \end{enumerate}
\end{claim}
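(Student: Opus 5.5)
The plan is to use the $\varepsilon$-reasonable parameters $(\alpha,\beta,\lambda,N')$ of $\Omega$ for a suitably smaller accuracy. Write $A \times \{0,1\}$ as the set of $n$-tuples $\tuple x$ whose first $n-1$ entries form a tuple $\tuple y \in A$. Each such $\tuple y$ has exactly two extensions, $\tuple y0$ and $\tuple y1$. Claim~\ref{influence:lemma:reasonable_distributions_restrictions} compares $\Omega_{n-1}(\tuple y)$ with $\Omega_n(\tuple y0)$ and $\Omega_n(\tuple y1)$ up to a factor $\lambda^{\pm 2}$, provided the relevant Hamming weights lie in the good window. So the argument splits each set into a \emph{typical} part, handled pointwise by that claim, and an \emph{atypical} part, whose total mass we bound by $\varepsilon$ using the first (concentration) condition of reasonability.

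Concretely, let $(\alpha,\beta,\lambda,N')$ be $(\varepsilon/2)$-reasonable parameters, set $N = N'$, and take $\Delta = \lambda^{-2}$ (a factor $2$ may be needed in one direction).

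For item 1, split $A = A_{\mathrm{good}} \cup A_{\mathrm{bad}}$, where $A_{\mathrm{good}}$ consists of those $\tuple y$ with $|\tuple y| \in [\alpha(n-1),\beta(n-1)]$. Since $n-1 \ge N'$, the concentration condition gives $\Omega_{n-1}(A_{\mathrm{bad}}) \le \varepsilon/2 \le \varepsilon$. For $\tuple y \in A_{\mathrm{good}}$, apply Claim~\ref{influence:lemma:reasonable_distributions_restrictions} to $\tuple x = \tuple y 0$; its second hypothesis holds. This yields $\Omega_{n-1}(\tuple y) < \lambda^{-2}\,\Omega_n(\tuple y0)$. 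Summing over $\tuple y \in A_{\mathrm{good}}$ gives
\[
\Omega_{n-1}(A_{\mathrm{good}}) \le \lambda^{-2}\,\Omega_n(A_{\mathrm{good}}\times\{0\}) \le \lambda^{-2}\,\Omega_n(A\times\{0,1\}),
\]
which proves item 1.

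For item 2, split $A\times\{0,1\}$ according to whether $|\tuple x| \in [\alpha n,\beta n]$. The atypical $n$-tuples have total mass at most $\varepsilon/2$ since $n \ge N'$. For each typical $\tuple x = \tuple y b$, apply Claim~\ref{influence:lemma:reasonable_distributions_restrictions} with its first hypothesis. This gives $\Omega_n(\tuple x) < \lambda^{-2}\,\Omega_{n-1}(\tuple y)$. Each $\tuple y \in A$ is counted at most twice, once for each $b$, so the typical part has mass at most $2\lambda^{-2}\,\Omega_{n-1}(A)$. Item 2 follows with $\Delta = 2\lambda^{-2}$, and this choice of $\Delta$ works for both items.

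The argument is routine; the only point needing care is bookkeeping. We must check that the window condition is applied to the right dimension, $n-1$ in item 1 and $n$ in item 2, so that the corresponding hypothesis of Claim~\ref{influence:lemma:reasonable_distributions_restrictions} is met. We must also ensure that the concentration condition is invoked at dimensions that are both at least $N'$, which holds because $n \ge N+1$.
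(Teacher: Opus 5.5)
Your proposal is correct and follows essentially the same route as the paper. Both arguments discard the atypical Hamming weights, whose mass is small by the concentration condition, and compare typical points pointwise via Claim~\ref{influence:lemma:reasonable_distributions_restrictions}, using the window in dimension $n-1$ for item 1 and in dimension $n$ for item 2. Both also absorb the two extensions $\tuple y0,\tuple y1$ into $\Delta = 2\lambda^{-2}$; the only cosmetic difference is that you take $(\varepsilon/2)$-reasonable parameters where the paper takes $\varepsilon$-reasonable ones, which is harmless.
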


\begin{proof}
    \newcommand{\V}[0]{\mathbf{V}}
    \newcommand{\W}[0]{\mathbf{W}}
    Let $(\alpha, \beta, \lambda, N)$ be the $\varepsilon$-\kl{reasonable parameters} of $\Omega$. We distinguish the following sets:
    \begin{gather*}
        \V = \left\{ \tuple x \in \{0,1\}^{n-1} : |\tuple x| \in [\alpha (n-1), \beta (n-1)] \right\} \\
        \W = \left\{ \tuple x \in \{0,1\}^n : |\tuple x| \in [\alpha n, \beta n] \right\}
    \end{gather*}
    Let $B = A \times \{0,1\}$. \cref{influence:lemma:reasonable_distributions_restrictions} and the fact that $\Omega_{n-1}(\V) \geq 1 - \varepsilon$ imply (1):  
    \[
        \Omega_n(A) = \Omega_{n-1} \left( A \cap \V \right) + \Omega_{n-1} \left( A \setminus \V \right) \leq (1/\lambda^2) \cdot \Omega_n(B) + \varepsilon.
    \]
    Similarly, \cref{influence:lemma:reasonable_distributions_restrictions} and $\Omega_n(\W) \geq 1-\varepsilon$ yield (2):
    \[
        \Omega_n(B) \leq \Omega_n\left( B \cap \W \right) + \Omega_n\left( B \setminus \W \right) \leq 2 \cdot (1/\lambda^2) \cdot \Omega_{n-1}(A) + \varepsilon. \qedhere
    \]
\end{proof}

\subsection{Preparatory Lemmas} The next three lemmas capture all the technical difficulties of the proof of \kl{Influence Preservation Lemma}; the rest will be to simply combine them.

\begin{lemma}\label{influence:lemma:first_step}
    Suppose that $\Omega$ is a \kl{reasonable} family of distributions and $\delta > 0$. There are constants $\alpha = \alpha(\Omega, \delta) > 0$ and $N = N(\Omega, \delta) > 0$ such that the following holds. Suppose that $f : \{0,1\}^m \to \{0,1\}$ is a Boolean function with $m \geq N$ such that $\Inf[\Omega]{f, m-1} \geq \delta$. Let $\pi : [m] \to [m-1]$ be a minor map such that $\pi(m-1) = \pi(m) = m-1$ and $\pi(i) = i$ for every $i \in [m-2]$. Then
    \[
        \Inf[\Omega]{\minor{f}{\pi}, m-1} \geq (\alpha \cdot \delta) - \frac{1}{\alpha} \cdot \Inf[\Omega]{f, m}.
    \]
\end{lemma}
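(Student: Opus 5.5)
We work pointwise, relating pivotal points of $g := \minor{f}{\pi}$ at coordinate $m-1$ to pivotal points of $f$ at coordinates $m-1$ and $m$. We then transfer the resulting measure inequalities between dimensions $m$ and $m-1$ using \cref{influence:lemma:reasonable_distributions_restrictions} and \cref{influence:lemma:reasonable_set_extensions}.

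\textbf{Step 1 (combinatorics).} Write points of $\{0,1\}^m$ as $(\tuple z, a, b)$ with $\tuple z \in \{0,1\}^{m-2}$. Then $g(\tuple z, c) = f(\tuple z, c, c)$, so $m-1$ is pivotal for $g$ at $(\tuple z,\cdot)$ iff $f(\tuple z,0,0) \neq f(\tuple z,1,1)$. Define
\begin{gather*}
P = \{\tuple z : \exists\, b,\ f(\tuple z,0,b) \neq f(\tuple z,1,b)\},\qquad Q = \{\tuple z : f(\tuple z,0,0)\neq f(\tuple z,1,1)\},\\
R = \{\tuple z : \exists\, c,\ f(\tuple z,c,0)\neq f(\tuple z,c,1)\}.
\end{gather*}
Fix $b$. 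The chain $(\tuple z,b,b) \to (\tuple z,1-b,b) \to (\tuple z,1-b,1-b)$ shows by the triangle inequality that $P \subseteq Q \cup R$. Moreover,
\[
\Inf[\Omega]{f,m-1} \le \Omega_m(P\times\{0,1\}^2), \qquad \Inf[\Omega]{g,m-1} = \Omega_{m-1}(Q\times\{0,1\}).
\]
Hence
\[
\delta \le \Omega_m(Q\times\{0,1\}^2) + \Omega_m(R\times\{0,1\}^2).
\]

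\textbf{Step 2 (transfer of $Q$).} Apply part (2) of \cref{influence:lemma:reasonable_set_extensions} with $A = Q\times\{0,1\} \subseteq \{0,1\}^{m-1}$ and a small error $\varepsilon$, say $\varepsilon = \delta/4$. This gives
\[
\Omega_m(Q\times\{0,1\}^2) \le \Delta\cdot \Inf[\Omega]{g,m-1} + \varepsilon .
\]
Note that here the extension is by one coordinate only, which is exactly the setting of that claim.

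\textbf{Step 3 (bounding $R$ by $\Inf[\Omega]{f,m}$).} This is the step I expect to be the main obstacle. The membership $\tuple z \in R$ only guarantees pivotality at coordinate $m$ for one value $c$ of the $(m-1)$-th coordinate, whereas $\Omega_m(R\times\{0,1\}^2)$ counts both values of $c$. I would handle this with \kl{symmetry} and \kl{smoothness}. Let $\W$ be the set of points of $\{0,1\}^m$ with Hamming weight in $[\alpha_\varepsilon m, \beta_\varepsilon m]$ for $\varepsilon$-\kl{reasonable parameters}; points outside $\W$ carry mass at most $\varepsilon$. Consider a point $(\tuple z, 1-c, d) \in \W$, where $c$ is the good value for $\tuple z$. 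It differs from $(\tuple z,c,d)$ in one coordinate, so their weights differ by one and their masses are within a factor $1/\lambda$ of each other by \kl{smoothness}. A boundary-of-band issue arises when the weight of the partner point leaves the band; it is absorbed into an enlarged error using \kl{flatness}, or by taking the band for a smaller $\varepsilon$. Pivotal points come in pairs $(\tuple z,c,0),(\tuple z,c,1)$, both counted in $\Inf[\Omega]{f,m}$. We conclude
\[
\Omega_m(R\times\{0,1\}^2) \le (1+1/\lambda)\,\Inf[\Omega]{f,m} + 2\varepsilon .
\]

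\textbf{Step 4 (conclusion).} Combining Steps 1--3 gives
\[
\delta \le \Delta\cdot\Inf[\Omega]{g,m-1} + (1+1/\lambda)\,\Inf[\Omega]{f,m} + 3\varepsilon .
\]
With $\varepsilon = \delta/4$ this rearranges to
\[
\Inf[\Omega]{g,m-1} \ge \frac{\delta}{4\Delta} - \frac{1+1/\lambda}{\Delta}\,\Inf[\Omega]{f,m}.
\]
Take $\alpha = \min\bigl(1/(4\Delta),\ \Delta/(1+1/\lambda)\bigr)$, and take $N$ to be the maximum of the thresholds required by the two claims for this $\varepsilon$. Both $\alpha$ and $N$ depend only on $\Omega$ and $\delta$, which proves the lemma.
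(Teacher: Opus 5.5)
Your proof is correct, and it reorganizes the paper's argument in a useful way. Both proofs rest on the same triangle inequality along $(\tuple z,b,b)\to(\tuple z,1-b,b)\to(\tuple z,1-b,1-b)$.

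\textbf{The paper's route.} It applies this inequality pointwise to diagonal points and lower-bounds $\mathbf{Inf}_\Omega[f^{\pi},m-1]$ directly in dimension $m-1$. To do so it first shows that the diagonal part $\mathbf{V}\cap\mathbf{Same}\cap\mathbf{Piv}(m-1)$ of the pivots has mass at least $\lambda_0\cdot(5/16)\cdot\delta$. This takes a $\mathbf{Same}/\mathbf{Diff}$ case split, a band $\mathbf W$ trimmed by one layer, and a smoothness step. It then pushes this mass down to dimension $m-1$ with \cref{influence:lemma:reasonable_distributions_restrictions}, and pushes the $\mathbf{Piv}(m)$ term up again. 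By the time the second additive error enters, the main term has already shrunk to order $\lambda_0^3\delta$, so a second, finer set of reasonable parameters is needed.

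\textbf{Your route.} You work with cylinders over fibers $\tuple z\in\{0,1\}^{m-2}$ and upper-bound $\mathbf{Inf}_\Omega[f,m-1]$ instead. This buys three simplifications:
\begin{itemize}
\item The cylinder $P\times\{0,1\}^2$ absorbs diagonal and off-diagonal pivots alike, so no $\mathbf{Same}/\mathbf{Diff}$ split is needed.
\item The change of dimension is a single application of part (2) of \cref{influence:lemma:reasonable_set_extensions}.
\item Every additive error is charged against $\delta$ itself, so the single choice $\varepsilon=\delta/4$ suffices.
\end{itemize}
The price is that $R\times\{0,1\}^2$ overcounts the wrong value of coordinate $m-1$. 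You pay for this with one smoothness step, which plays the role of the paper's $\mathbf{Diff}\to\mathbf{Same}$ map.

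\textbf{A minor remark on Step 3.} The boundary-of-band issue you anticipate does not arise. Smoothness as defined only requires the point whose mass you are bounding, $(\tuple z,1-c,d)$, to lie in the band; the partner point may fall outside it. This gives $\Omega_m(R\times\{0,1\}^2)\le(1+1/\lambda)\,\mathbf{Inf}_\Omega[f,m]+\varepsilon$ directly, with no enlarged error.
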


\begin{proof}
    \newcommand{\V}[0]{\mathbf{V}}
    \newcommand{\W}[0]{\mathbf{W}}
    \newcommand{\X}[0]{\mathbf{X}}
    \newcommand{\Y}[0]{\mathbf{Y}}
    \newcommand{\Piv}[0]{\mathbf{Piv}}
    \newcommand{\Same}[0]{\mathbf{Same}}
    \newcommand{\Diff}[0]{\mathbf{Diff}}
    Fix $\Omega, \delta, f$ and $\pi$ as in the statement. Furthermore, let $(\alpha_0, \beta_0, \lambda_0, N_0)$ be the $(\delta/16)$-\kl{reasonable parameters} of $\Omega$. We distinguish several subsets of the input space:
    \begin{gather*}
        \V = \left\{ \tuple x \in \{0,1\}^m : |\tuple x| \in [\alpha_0 m, \beta_0 m] \right\} \\
        \W = \left\{ \tuple x \in \{0,1\}^m : |\tuple x| \in [\alpha_0 m + 1, \beta_0 m - 1] \right\} \\
        \forall \, i \in \{m-1, m \} : \Piv(i) = \left\{ \tuple x \in \{0,1\}^m : f(\tuple x) \neq f(\tuple x \oplus i) \right\} \\
        \Same = \left\{ \tuple x \in \{0,1\}^m: x_{m-1} = x_m \right\} \\ 
        \Diff = \{0,1\}^m \setminus \Same
    \end{gather*}
    Directly from the definition of \kl{reasonable parameters}, we have $\Omega_m(\V) \geq 1-(\delta/8)$. Moreover, we observe that $\W \subseteq \V$ is essentially the set $\V$ without at most two layers of $\{0,1\}^m$. From \kl{flatness} of $\Omega$, we know that these two layers have measure at most $\delta/16$ each. Therefore, we have $\Omega(\V \setminus \W) \leq \delta/8$ and $\Omega_m(\W) \geq 1 - (\delta/4)$.

    Observe that we can bound $\Inf[\Omega]{\minor{f}{\pi}, m-1}$ as follows:
    \begin{align*}
        \Inf[\Omega]{\minor{f}{\pi}, m-1} &\geq \Pr_{\tuple x \sim \Omega_{m-1}} \left[ \pi^{-1}(\tuple x) \in \Piv(m-1) \right] \\
        &- \Pr_{\tuple x \sim \Omega_{m-1}} \left[ (\pi^{-1}(\tuple x) \oplus (m-1)) \in \Piv(m) \right]. \tag{{\color{magenta}$1$}}\label{influence:eq:how_to_bound_inf_m_1}
    \end{align*}
    We will bound both terms in \eqref{influence:eq:how_to_bound_inf_m_1} separately. First, we argue that $\Omega_m(\V \cap \Same \cap \Piv(m-1))$ has significant measure. We have $\Omega_m(\Piv(m-1)) = \Inf[\Omega]{f, m-1} \geq \delta$, which implies $\Omega_m(\V \cap \Piv(m-1)) \geq (7/8) \cdot \delta$. Since $\Same \sqcup \Diff = \{0,1\}^m$, we have two cases: either $\Omega_m(\V \cap \Same \cap \Piv(m-1)) \geq (7/16) \cdot \delta$, or $\Omega_m(\V \cap \Diff \cap \Piv(m-1)) \geq (7/16) \cdot \delta$. Assume the latter case, in which we will obtain a weaker bound. Observe that 
    \begin{align*}
        \forall \, \tuple x \in \{0,1\}^m &: \tuple x \in \left( \W \cap \Diff \cap \Piv(m-1) \right) \\
        &\implies (\tuple x \oplus (m-1)) \in \left( \V \cap \Same \cap \Piv(m-1) \right).
    \end{align*}
    We combine this observation with \kl{smoothness} of $\Omega$ to bound:
    \begin{align*}
        \Omega_m\left(\V \cap \Same \cap \Piv(m-1) \right) &\geq \lambda_0 \cdot \Omega_m \left( \W \cap \Diff \cap \Piv(m-1) \right) \\
        &\geq \lambda_0 \cdot (5/16) \cdot \delta.
    \end{align*}
    The inequality above combined with \cref{influence:lemma:reasonable_distributions_restrictions} implies the following bound on the first term of \eqref{influence:eq:how_to_bound_inf_m_1}:
    \begin{align*}
        \Pr_{\tuple x \sim \Omega_{m-1}} \left[ \pi^{-1}(\tuple x) \in \Piv(m-1) \right] &\geq \Omega_{m-1} \left( \left\{ \tuple x : \pi^{-1}(\tuple x) \in \V \cap \Same \cap \Piv(m-1) \right\}\right) \\ 
        &\geq \lambda_0^2 \cdot \Omega_m \left( \V \cap \Same \cap \Piv(m-1) \right) \\ 
        &\geq \lambda_0^3 \cdot (5/16) \cdot \delta. \tag{{\color{magenta}$2$}}\label{influence:eq:first_term_is_large}
    \end{align*}
    Moving forward, let $(\alpha_1, \beta_1, \lambda_1, N_1)$ be the $(\lambda_0^3 \cdot (1/16) \cdot \delta)$-\kl{reasonable parameters} of $\Omega$. Note that $\lambda_0$ depends only on $\Omega$ and $\delta$, and therefore the same is true for all these parameters. Let $\X \subseteq \{0,1\}^{m-1}$ be the set of points $\tuple x$ with $|\tuple x| \in [\alpha_1 (m-1), \beta_1 (m-1)]$. We have $\Omega_{m-1}\left(\{0,1\}^{m-1} \setminus \X \right) \leq (\lambda_0^3 \cdot (1/16) \cdot \delta)$. Suppose that $\tuple x \in \X$ is equal to $(\tuple y \cdot s)$ for $\tuple y \in \{0,1\}^{m-2}$ and $s \in \{0,1\}$. Then
    \[ 
        |\pi^{-1}(\tuple x) \cdot (m-1)| = |\pi^{-1}(\tuple y \cdot s) \oplus (m-1)| = |\tuple y \cdot s \cdot (\neg s)| = |\tuple x \cdot (\neg s)|.
    \]
    In particular, \cref{influence:lemma:reasonable_distributions_restrictions} gives $\Omega_{m-1}(\tuple x) \leq (1/\lambda_1^2) \cdot \Omega_m(\pi^{-1}(\tuple x) \oplus (m-1))$. This yields the following bound of the second term of \eqref{influence:eq:how_to_bound_inf_m_1}:
    \begin{align*}
        &\Pr_{\tuple x \sim \Omega_{m-1}} \left[ (\pi^{-1}(\tuple x) \oplus (m-1)) \in \Piv(m) \right] \\
                                         &\quad \leq \, \Omega_{m-1} \left( \left\{ \tuple x \in \X: (\pi^{-1}(\tuple x) \oplus (m-1)) \in \Piv(m) \right\} \right) \Omega_{m-1} \left( \{0,1\}^{m-1} \setminus \X \right) \\ 
                                         &\quad \leq \, (1/\lambda_1^2) \cdot \Omega_m \left( \Piv(m) \right) + (\lambda_0^3 \cdot (1/16) \cdot \delta) \\ 
                                         &\quad = \, (1/\lambda_1^2) \cdot \Inf[\Omega]{f, m} + (\lambda_0^3 \cdot (1/16) \cdot \delta). \tag{{\color{magenta}$3$}}\label{influence:eq:second_term_is_small}
    \end{align*}
    Combining \eqref{influence:eq:how_to_bound_inf_m_1}, \eqref{influence:eq:first_term_is_large} and \eqref{influence:eq:second_term_is_small}, we finally obtain:
    \[
        \Inf[\Omega]{\minor{f}{\pi}, m-1} \geq (\lambda_0^3 / 4) \cdot \delta - (1/\lambda_1^2) \cdot \Inf[\Omega]{f, m}. \qedhere
    \]
\end{proof}

\begin{lemma}\label{influence:lemma:derivative_has_small_total_influence}
    Suppose that $\Omega$ is a \kl{reasonable} family of distributions and $\zeta > 0$. There are constants $\gamma =~\gamma(\Omega, \zeta)$ and $N = N(\Omega, \zeta)$ such that the following holds. Suppose that $f : \{0,1\}^m \to \{0,1\}$ is a Boolean function with $m \geq N$ and $\I[\Omega]{f} \leq \gamma \cdot m$. Furthermore, let $\pi : [m] \to [m-1]$ be a minor map such that $\pi(m-1) = \pi(m) = m-1$ and $\pi(i) = i$ for every $i \in [m-2]$. Finally, let $h : \Bool^{m-2} \to \Bool$ be defined as
    \[
        h(\tuple x) = \begin{cases}
            1 & \text{if } \minor{f}{\pi}(\tuple x 0) \neq \minor{f}{\pi}(\tuple x 1), \\
            0 & \text{otherwise}.
        \end{cases}
    \]
    Then 
    \[
        \I[\Omega]{h} \leq \zeta \cdot (m-2). 
    \]
\end{lemma}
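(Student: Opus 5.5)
We bound each influence of $h$ by influences of $f$ and then sum. Write $g = \minor{f}{\pi}$, so that $g(\tuple y s) = f(\tuple y s s)$ for $\tuple y \in \{0,1\}^{m-2}$ and $s \in \{0,1\}$. Then $h(\tuple y) = f(\tuple y 00) \oplus f(\tuple y 11)$, where $\oplus$ is addition mod $2$.

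\textbf{Step 1: pointwise bound.} Fix $j \in [m-2]$ and suppose $h(\tuple y) \neq h(\tuple y \oplus j)$. Since $h$ is the XOR of $f(\tuple y 00)$ and $f(\tuple y 11)$, flipping $j$ must change at least one of them. Hence
\[
\mathbf{1}[h(\tuple y)\neq h(\tuple y\oplus j)] \le \mathbf{1}[\tuple y00 \in \mathrm{Piv}_f(j)] + \mathbf{1}[\tuple y11 \in \mathrm{Piv}_f(j)],
\]
where $\mathrm{Piv}_f(j)$ is the set of inputs at which flipping coordinate $j$ changes $f$.

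\textbf{Step 2: change of measure.} Taking expectations over $\tuple y \sim \Omega_{m-2}$ gives $\Inf[\Omega]{h,j} \le \Omega_{m-2}(A^0_j) + \Omega_{m-2}(A^1_j)$, where $A^s_j = \{\tuple y : \tuple y ss \in \mathrm{Piv}_f(j)\}$. We want to compare $\Omega_{m-2}(A^s_j)$ with $\Omega_m$ of the set $A^s_j \times \{ss\} \subseteq \mathrm{Piv}_f(j)$.

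Apply \cref{influence:lemma:reasonable_distributions_restrictions} twice (dimension $m-2 \to m-1 \to m$), restricted to points of Hamming weight in the middle range $[\alpha(m-2), \beta(m-2)]$. On that range the appended pair $ss$ shifts the weight by at most $2$, so smoothness and consistency apply. This yields
\[
\Omega_{m-2}(\tuple y) \le \lambda^{-4}\, \Omega_m(\tuple y ss).
\]
Points outside the middle range have total mass at most $\varepsilon$ under $\Omega_{m-2}$.

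If we paid this $\varepsilon$ separately for every $j$, the error would total $\varepsilon m$, which is only useful if $\varepsilon$ is tuned against $\zeta$. That is acceptable, but cleaner is to sum first. Summing over $j$ and exchanging sum and expectation,
\[
\I[\Omega]{h} \le \E[\tuple y \sim \Omega_{m-2}]{\textstyle\sum_{s} \#\{j : \tuple y ss \in \mathrm{Piv}_f(j)\}}.
\]
The contribution of $\tuple y$ outside the middle range is at most $2(m-2)\varepsilon$. On the middle range the pointwise density comparison applies, so that contribution is at most
\[
\lambda^{-4} \sum_j \Omega_m(\mathrm{Piv}_f(j)) \cdot 2 \le 2\lambda^{-4}\, \I[\Omega]{f} \le 2\lambda^{-4}\gamma m.
\]

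\textbf{Step 3: choice of constants.} Choose $\varepsilon = \zeta/4$. Let $(\alpha,\beta,\lambda,N_0)$ be the corresponding $\varepsilon$-reasonable parameters, and set $\gamma = \zeta\lambda^4/8$. Take $N$ large enough that $N \ge N_0 + 2$ and that $m/(m-2) \le 2$ holds for $m \ge N$. Then
\[
\I[\Omega]{h} \le (\zeta/2)(m-2) + (\zeta/4)\,m \le \zeta (m-2)
\]
after adjusting constants slightly (for instance $\varepsilon = \zeta/8$ and $\gamma = \zeta\lambda^4/16$).

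\textbf{Main obstacle.} The hardest part is the two-step density comparison in Step 2. \cref{influence:lemma:reasonable_distributions_restrictions} requires the weight to lie in the good range at one of the two dimensions. Appending $ss$ changes the weight by $0$ or $2$, which can push it just outside $[\alpha m, \beta m]$ at the boundary. I would handle this by applying the claim step by step, checking at each step that the weight condition holds for the smaller tuple: the condition at dimension $m-2$ gives it, and if needed one can pass through $m-1$ using smoothness, which costs a factor $\lambda^{-2}$ per step. Everything else is bookkeeping.
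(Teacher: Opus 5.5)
Your argument is right in substance and is a streamlined version of the paper's. The paper goes through the intermediate function $f^{\pi}$ in two stages:
\begin{itemize}
\item it first shows that $\mathbf{I}_\Omega[f^{\pi}]$ is small by comparing $\Omega_{m-1}$ with $\Omega_m$, using one application of \cref{influence:lemma:reasonable_distributions_restrictions} and an auxiliary constant $\zeta_0$;
\item it then bounds $\mathbf{Inf}_\Omega[h,i]$ by $\mathbf{Inf}_\Omega[f^{\pi},i]$ by comparing $\Omega_{m-2}$ with $\Omega_{m-1}$.
\end{itemize}
You instead use $h(\bar y)=f(\bar y00)\oplus f(\bar y11)$ (XOR) to pass from $h$ to $f$ in one step, and you sum over $j$ before splitting by weight. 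This gives a single error term and avoids both $\zeta_0$ and the $+1$ term for coordinate $m-1$ of $f^{\pi}$. Your factor $2$ in the middle-range bound is unnecessary: $(\bar y,s)\mapsto \bar y ss$ is injective, so that contribution is at most $\lambda^{-4}\mathbf{I}_\Omega[f]$.

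The cost of composing two comparisons is exactly the boundary problem you flagged, and your proposed patch via smoothness does not work in general.
\begin{itemize}
\item \emph{Where it fails.} The second application (dimension $m-1\to m$) needs $|\bar y|+s\in[\alpha(m-1),\beta(m-1)]$ or $|\bar y|+2s\in[\alpha m,\beta m]$. This fails, for example, when $s=0$ and $|\bar y|=\lceil\alpha(m-2)\rceil<\alpha(m-1)$.
\item \emph{Why smoothness cannot help.} Smoothness and consistency (at the fixed $\varepsilon$-parameters) say nothing about $\Omega_m(k)$ for $k<\lceil\alpha m\rceil-1$. When $\alpha>1/2$ one can have $\lceil\alpha(m-2)\rceil=\lceil\alpha m\rceil-2$. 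So the pointwise bound $\Omega_{m-2}(\bar y)\le\lambda^{-4}\Omega_m(\bar y ss)$ cannot be derived from these axioms there.
\item \emph{The fix.} Discard those points and use the middle range $|\bar y|\in[\alpha(m-2)+1,\beta(m-2)-1]$. On this range $|\bar y|+s\in[\alpha(m-1),\beta(m-1)]$, so both applications of the Claim go through. This removes at most two extra layers of $\{0,1\}^{m-2}$, each of mass $<\varepsilon$ by flatness. Hence the outside mass is at most $3\varepsilon$, and you get $\mathbf{I}_\Omega[h]\le 6\varepsilon(m-2)+\lambda^{-4}\gamma m$. Taking $\varepsilon=\zeta/12$, $\gamma=\lambda^4\zeta/4$ and $m\ge\max(4,N_0+2)$ finishes the proof.
\end{itemize}
This is the same trimming trick the paper uses with the set $\mathbf W$ in \cref{influence:lemma:first_step}. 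The paper's two-stage proof of this lemma avoids the issue because each comparison is a single application of the Claim, with the weight condition checked at the smaller dimension.
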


\begin{proof}
    \newcommand{\V}[0]{\mathbf{V}}
    \newcommand{\W}[0]{\mathbf{W}}
    \newcommand{\Piv}[0]{\mathbf{Piv}}
    Fix $\Omega$ and $\zeta$ as in the lemma statement. Instead of providing explicit values for $\gamma$ and $N$, we will argue that they can be appropriately chosen during the proof.
    
    Our first goal is to show that the total influence of $\minor{f}{\pi}$ cannot be too large --- that is, we can ensure that $\I[\Omega]{f} \leq \zeta_0 \cdot (m-1)$ for arbitrarily small $\zeta_0$. 
    
    Let $\zeta_0 > 0$ and $(\alpha_0, \beta_0, \lambda_0, N_0)$ be the $(\zeta_0/2)$-\kl{reasonable parameters} of $\Omega$ and $\V = \{ \tuple z \in \{0,1\}^{m-1} : |\tuple z| \in [\alpha_0(m-1), \beta_0(m-1)] \}$. Clearly, we have $\Inf[\Omega]{\minor{f}{\pi}, m-1} \leq 1$. Pick any $i \in [m-2]$. \cref{influence:lemma:reasonable_distributions_restrictions} implies that $\Omega_{m-1}(\tuple z) \leq (1/\lambda_0^2) \cdot \Omega_m(\pi^{-1}(\tuple z))$ for every $\tuple z \in \V$. This yields the following bound:
    \begin{align*}
        \Inf[\Omega]{\minor{f}{\pi}, i} &\leq \Omega_{m-1}\left( \{0,1\}^{m-1} \setminus \V \right) + \Omega_{m-1} \left( \V \cap \left\{ \tuple z
        : f(\pi^{-1}(\tuple z)) \neq f(\pi^{-1}(\tuple z \oplus i)) \right\} \right) \\
        &\leq (\zeta_0/2) + (1/\lambda_0^2) \cdot \Omega_n \left( \left\{ \tuple x \in \{0,1\}^m : f(\tuple x) \neq f(\tuple x \oplus i) \right\} \right) \\ 
        &\leq (\zeta_0/2) + (1/\lambda_0^2) \cdot \Inf[\Omega]{f, i}.
    \end{align*}
    Summing over all coordinates from $[m-1]$, we obtain the following.
    \begin{align*}
        \I[\Omega]{\minor{f}{\pi}} &\leq 1 + \frac{\zeta_0 \cdot (m-2)}{2} + \frac{\I[\Omega]{f}}{\lambda_0^2} \\ 
        &\leq (m-1) \cdot \left( \frac{1}{m-1} + \frac{m-2}{m-1} \cdot \frac{\zeta_0}{2} + \frac{m}{m-1} \cdot \frac{\gamma}{\lambda_0^2}\right).
    \end{align*}
    It is now clear that the last factor in the inequality above can be strictly smaller than $\zeta_0$ for appropriately chosen $\gamma = \gamma(\Omega, \zeta_0)$ and $N = N(\Omega, \zeta_0)$.

    The second step is to bound $\I[\Omega]{h}$. Let $(\alpha, \beta, \lambda, N)$ be the $(\zeta/2)$-\kl{reasonable parameters} of $\Omega$ and $\W = \{ \tuple z \in \{0,1\}^{m-2} : |\tuple z| \in [\alpha(m-2), \beta(m-2)] \}$. Fix any coordinate $i \in [m-2]$ and let $\Piv(i) \subseteq \{0,1\}^{m-2}$ be the set of points $\tuple x$ such that $h(\tuple x) \neq h(\tuple x \oplus i)$. Observe that given $\tuple x \in \Piv(i)$, we must have $\minor{f}{\pi}(\tuple x 0) \neq \minor{f}{\pi}(\tuple x 0 \oplus i)$ or $\minor{f}{\pi}(\tuple x 1) \neq \minor{f}{\pi}(\tuple x 1 \oplus i)$, as otherwise we would have:
    \begin{align*}
        h(\tuple x) = 1 &\iff \minor{f}{\pi}(\tuple x 0) \neq \minor{f}{\pi}(\tuple x 1) \\ 
        &\iff \minor{f}{\pi}(\tuple x 0 \oplus i) = \minor{f}{\pi}(\tuple x 1 \oplus i) \iff h(\tuple x \oplus i) = 1,
    \end{align*}
    which is in contradiction with $\tuple x \in \Piv(i)$. This observation along with \cref{influence:lemma:reasonable_distributions_restrictions} allows us to upper-bound $\Inf[\Omega]{h, i}$ in terms of $\Inf[\Omega]{\minor{f}{\pi}, i}$ as follows:
    \begin{align*}
        \Inf[\Omega]{h, i} &\leq \Omega_{n-2} \left( \{0,1\}^{m-2} \setminus \W \right) + \Omega_{m-2} \left( \W \cap \Piv(i) \right) \\ 
        &\leq \frac{\zeta}{2} + \frac{1}{\lambda^2} \cdot \sum \left\{ \min_{s \in \{0,1\}} \Omega_{m-1}(\tuple x s) : \tuple x \in \W \cap \Piv(i) \right\} \\
        &\leq \frac{\zeta}{2} + \frac{1}{\lambda^2} \cdot \Inf[\Omega]{\minor{f}{\pi}, i}.
    \end{align*}
    Summing over all coordinates in $[m-2]$, we obtain:
    \[
        \I[\Omega]{h} \leq \frac{\zeta \cdot (m-2)}{2} + \frac{\I[\Omega]{\minor{f}{\pi}}}{\lambda^2} \leq (m-2) \cdot \left( \frac{\zeta}{2} + \frac{m-1}{m-2} \cdot \frac{\zeta_0}{\lambda^2} \right).
    \]
    Again, we can see that the last factor can be smaller than $\zeta$ for sufficiently small $\zeta_0$ and sufficiently large $m$, which is ensured with a proper choice of $\gamma$ and $N$, which depends only on $\Omega$ and $\zeta$.
\end{proof}

\begin{lemma}\label{influence:lemma:expectance_in_pullback_does_not_vanish}
    Suppose that $\Omega$ is a \kl{reasonable} family of distributions and $\delta > 0$. There are constants $\gamma =~\gamma(\Omega, \delta) >0$, $\beta = \beta(\Omega, \delta) > 0$ and $N = N(\Omega, \delta) > 0$ such that the following holds. Suppose that $h : \{0,1\}^{2n} \to \{0,1\}$ is a Boolean function with $2n \geq N$ such that $\E[\Omega]{h(\tuple x)} \geq \delta$ and $\I[\Omega]{h} \leq \gamma \cdot (2n)$. Then 
    \[
        \E[\tuple z \sim \gpullback{\Omega}]{h(\tuple z)} \geq \beta.
    \]
\end{lemma}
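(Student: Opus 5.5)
The plan is to reduce everything to \emph{layers} of the cube, since $\Omega$ is \kl{symmetric} and so is $\gpullback{\Omega}$. The pull-back is invariant under permutations of $[2n]$, because a uniformly random \kl{2-to-1 map} composed with a permutation is again uniform. It is supported on even layers, and the mass of layer $2k$ under $\gpullback{\Omega}_{2n}$ equals $\Omega_n(\{\tuple y : \ham{\tuple y} = k\})$. For each $\ell \in \{0,\dots,2n\}$ let $a_\ell$ be the fraction of points of weight $\ell$ on which $h = 1$. Both expectations then become weighted averages of the same sequence:
\[
    \E[\Omega]{h} = \sum_{\ell} \Omega_{2n}(L_\ell)\, a_\ell, \qquad \E[\gpullback{\Omega}]{h} = \sum_{k} \Omega_n(L'_k)\, a_{2k},
\]
where $L_\ell$ and $L'_k$ denote the layers of $\Bool^{2n}$ and $\Bool^n$. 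By the first \kl{flatness} condition, $\Omega_n$ puts mass at least $1-\varepsilon$ on $k \in [\alpha n, \beta n]$. So the pull-back puts mass at least $1-\varepsilon$ on the even layers $2k \in [\alpha(2n), \beta(2n)]$, i.e.\ on the same middle window where $\Omega_{2n}$ is concentrated.

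The second step turns the small \kl{total influence} into slow variation of $a_\ell$. Each point of $L_\ell$ has $2n-\ell$ upward edges, and $a_\ell \ne a_{\ell+1}$ forces a proportional number of edges between $L_\ell$ and $L_{\ell+1}$ on which $h$ changes. Double counting these edges gives roughly
\[
    \sum_\ell \Omega_{2n}(L_\ell)\,(2n-\ell)\,\lvert a_{\ell+1}-a_\ell\rvert \lesssim \I[\Omega]{h}.
\]
Here I use \kl{smoothness} to compare $\Omega$-masses of the two endpoints of an edge in the middle window; the outside of the window is discarded at cost $\varepsilon$. With $\I[\Omega]{h} \le \gamma\cdot 2n$ and $\ell \le \beta\cdot 2n$, this gives $\sum_{\ell\ \text{middle}} \Omega_{2n}(L_\ell)\lvert a_{\ell+1}-a_\ell\rvert \le O_{\Omega,\varepsilon}(\gamma)$. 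In particular, $a$ at odd layers is close, in $\Omega$-average, to $a$ at the neighbouring even layer. Hence $\sum_k \Omega_{2n}(L_{2k})\, a_{2k} \ge \delta/2 - O(\gamma) - O(\varepsilon)$, where smoothness handles the reweighting $\Omega_{2n}(L_{2k+1}) \le \lambda^{-1}\Omega_{2n}(L_{2k})$. This shows that $h$ has substantial mass on the even middle layers under $\Omega$.

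The last and hardest step is to transfer this from $\Omega_{2n}$-weights on layer $2k$ to $\Omega_n$-weights on layer $k$. \kl{Pull-back compatibility} only gives $\Omega_{2n}(L_{2k}) \ge C\,\Omega_n(L'_k)$, which is the wrong direction. My plan is to show $a$ is nearly constant over the whole middle window, so the precise weights do not matter. I would first try to strengthen the variation bound into an $L^\infty$-type statement using \kl{flatness} (every single layer has mass $<\varepsilon$) together with smoothness. The $\Omega$-weighted total variation over the window is $O(\gamma)$. If the window layer masses are all comparable up to a constant factor, this gives $\max_{\ell,\ell'} \lvert a_\ell - a_{\ell'}\rvert = O(\gamma)\cdot \min_\ell \Omega(L_\ell)^{-1}$. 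But this is only useful if the layers have mass $\Omega(1)$, which flatness forbids. If this fails, I would fall back on a windowed argument: partition the middle range into blocks of constant width $w$, use smoothness to make all layers in a block comparable (factor $\lambda^{w}$), and use the variation bound to make $a$ nearly constant on each block. I would then argue block by block that the pull-back mass of a block is at least a constant times its $\Omega_{2n}$-mass. Establishing this block-level comparison is where I expect the real difficulty, since for a reversed inequality I can only see pull-back compatibility combined with the concentration of both distributions on the same middle window as the source. Choosing $\varepsilon$ and $\gamma$ small relative to $\delta$ and $C$, and $N$ larger than all reasonable-parameter thresholds, then yields $\E[\gpullback{\Omega}]{h} \ge \beta(\Omega,\delta)$.
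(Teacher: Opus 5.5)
Your first two steps are sound and reach the same intermediate target as the paper: a lower bound $\Omega_{2n}(h^{-1}(1)\cap\mathrm{Even})\ge c(\Omega,\delta)$. The paper gets there more cheaply. If the odd part of $h^{-1}(1)$ has mass $\ge\delta/2$, it fixes one coordinate $i$ with $\Inf[\Omega]{h,i}\le\gamma$ and uses the injection $\tuple z\mapsto\tuple z\oplus i$ on the non-pivotal odd points of the middle window, losing only a factor $\lambda$ by smoothness. In your version, comparing the layer masses $\Omega_{2n}(L_{2k+1})$ and $\Omega_{2n}(L_{2k})$ needs the binomial factor $(2n-2k)/(2k+1)$ on top of smoothness, so you get $c\,\delta$ rather than $\delta/2$; this is harmless.

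The genuine gap is the last step, which you leave open. It cannot be closed from the inequality you are working with. If all you know is $\Omega_{2n}\ge C\,\gpullback{\Omega}$ on even points, the other axioms do not tie the layer profile of $\Omega_{2n}$ to that of $\Omega_n$, because consistency only links $n$ to $n\pm1$. One can build symmetric, flat, smooth, consistent families satisfying that inequality in which, along some $n_j\to\infty$:
\begin{itemize}
\item $\Omega_{n_j}$ gives $o(1)$ mass to relative weights in $[0.6,0.8]$;
\item $\Omega_{2n_j}$ gives them constant mass.
\end{itemize}
The indicator of those layers has $O(1)$ total influence and vanishing pull-back expectation. So the block-level comparison you hope for is false in that generality. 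The obstruction sits in the weights, not in the regularity of $(a_\ell)$.

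The missing point is that pull-back compatibility is meant the other way round: $\gpullback{\Omega}(\tuple x)\ge C\,\Omega(\tuple x)$ for every even-weight $\tuple x$. This is what the paper actually verifies for its two examples. For the biased cube it proves $\gpullback{\cube{p}}(\tuple z)\ge C\,\cube{p}(\tuple z)$, and for Shapley it shows $\gpullback{\shap}(\tuple z)=\frac{2n+1}{n+1}\shap[2n](\tuple z)$. It is also what its proof of this lemma uses. The displayed inequality in the definition has its sides swapped. Indeed the literal version already fails for $\cube{p}$: on layer $2k$ the ratio $\gpullback{\cube{p}}/\cube{p}$ is about $\sqrt2\,e^{nD(k/n\,\|\,p)}$, with $D$ the binary KL divergence, which is exponentially large away from $k\approx pn$.

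With the intended direction the final step is a single line:
\[
\E[\gpullback{\Omega}]{h}\ge\gpullback{\Omega}(h^{-1}(1)\cap\mathrm{Even})\ge C\,\Omega_{2n}(h^{-1}(1)\cap\mathrm{Even}).
\]
Your steps 1--2 (or the paper's single-coordinate flip) then finish the proof. You can drop the attempt to make $a_\ell$ nearly constant over the window entirely. It is hopeless anyway, since a threshold on the Hamming weight already has $O(1)$ total influence.
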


\begin{proof}
    \newcommand{\V}[0]{\mathbf{V}}
    \newcommand{\1}[0]{\mathbf{1}}
    \newcommand{\Piv}[0]{\mathbf{Piv}}
    \newcommand{\Even}[0]{\mathbf{Even}}
    \newcommand{\Odd}[0]{\mathbf{Odd}}
    Fix $\Omega, \delta$ as in the lemma statement. Let $(\alpha, \beta, \lambda, N)$ be the $(\delta/4)$-\kl{reasonable parameters} of $\Omega$ and $\V = \{ \tuple x \in \{0,1\}^{2n} : |\tuple x| \in [\alpha \cdot 2n, \beta \cdot 2n] \}$. Furthermore, let $C > 0$ be the universal constant from \kl{pull-back compatibility} of $\Omega$. We will show that $N$ together with the following constants satisfy the statement conditions:
    \[
        \gamma = \delta/8 \qquad \text{ and } \qquad \beta = C \cdot \lambda \cdot \delta/8.
    \]
    Suppose that $h : \{0,1\}^{2n} \to \{0,1\}$ is as in the statement. Let $\1 = h^{-1}(1)$ and let $\Even \sqcup \Odd = \{0,1\}^{2n}$ be the partition of input space into sets of points with an even and odd number of ones.

    We have $\Omega_{2n}(\1 \cap (\Even \cup \Odd)) \geq \delta$. We consider two cases. First, suppose that $\Omega_{2n}(\1 \cap \Even) \geq \delta/2$. In this case, we have 
    \begin{align*}
        \E[\tuple z \sim \gpullback{\Omega}]{h(\tuple z)} &\geq \sum_{\tuple z \in \Even} \gpullback{\Omega}(\tuple z) \cdot h(\tuple z) = \gpullback{\Omega}\left( \1 \cap \Even \right) \\
        &\geq C \cdot \Omega\left( \1 \cap \Even \right) \geq C \cdot \delta/2 \geq \beta,
    \end{align*}
    which finishes the proof. In the remaining case, we have $\Omega(\1 \cap \Odd) \geq \delta/2$. Fix any coordinate $i \in [2n]$ such that $\Inf[\Omega]{h, i} \leq \gamma$, which must exist because of the bound on the total influence of $h$. Let $\Piv(i) \subseteq~\{0,1\}^{2n}$ be the set of points $\tuple z$ such that $h(\tuple z) \neq h(\tuple z \oplus i)$. Clearly, we have $\Omega(\Piv(i)) = \Inf[\Omega]{h, i} \leq \gamma$.

    Moving forward, observe that $\Omega(\V \cap \1 \cap \Odd) \geq \delta/4$ and that for every $\tuple z \in (\V \cap \1 \cap \Odd) \setminus \Piv(i)$, the point $(\tuple z \oplus i)$ is in $\1 \cap \Even$ and $\Omega(\tuple z \oplus i) \geq \lambda \cdot \Omega(\tuple z)$. Therefore, we have 
    \begin{align*}
        \Omega \left(\1 \cap \Even \right) &\geq \sum \left\{ \Omega(\tuple z \oplus i) : \tuple z \in (\V \cap \1 \cap \Odd) \setminus \Piv(i) \right\} \\
                                        &\geq \lambda \cdot \Omega\left( (\V \cap \1 \cap \Odd) \setminus \Piv(i) \right) \\
                                        &\geq \lambda \cdot (\delta/4 - \gamma),
    \end{align*}
    which is at least $\lambda \cdot \delta/8$ by the choice of $\gamma$. An argument similar to the previous case, we get that $\E[\gpullback{\Omega}]{h} \geq C \cdot \lambda \cdot \delta/8$, which is at least $\beta$. 
\end{proof}

\subsection{Main proof} With all the technical lemmas in hand, we are ready to combine them into the proof of \kl{Influence Preservation Lemma}.

\influencepreservation*

\begin{proof}
  \newcommand{\X}[0]{\mathbf{X}}
  \newcommand{\V}[0]{\mathbf{V}}
  \newcommand{\W}[0]{\mathbf{W}}
  \newcommand{\Piv}[0]{\mathbf{Piv}}
  Fix $\Omega$ and $\delta$ as in the statement. Instead of providing explicit values for the constants $\gamma, \tau$ and $N$, we will show that they can be chosen appropriately during the proof.

  Let $f$ be as in the theorem statement. Suppose that $i \in [2n]$ is such that $\Inf[\Omega]{f, i} \geq \delta$. Without loss of generality, we assume $i = 2n$. Our goal is to show that a significant portion of $\Inf[\Omega]{f, 2n}$ is preserved through a uniformly random $2$-to-$1$ minor map $\pi$ with constant probability, as long as $\gamma$ is sufficiently small and $N$ is sufficiently large. 

  To this end, we divide the drawing process of $\pi$ into two steps. First, we randomly choose $j \in [2n-1]$ and define a minor map $\pi_0 : [2n] \to [2n-1]$ that identifies $2n$ with $j$ and does not perform any other identification. After that, we uniformly draw a $2$-to-$1$ minor map $\pi_1 : [2n - 1] \to [n]$ that pairs coordinates from $[2n-1] \setminus \{ \pi_0(2n) \}$. It is straightforward to see that the distribution of the composition $\pi_1 \circ \pi_0$ is uniform across all $2$-to-$1$ maps $\pi : [2n] \to [n]$. Therefore, it is sufficient to show that a constant portion of $\Inf[\Omega]{f, 2n}$ is preserved with constant probability after each of these two steps (see \cref{influence:fig:two_steps_minor}).

  \paragraph{Step I: Pairing the significant coordinate} In the first step, we draw $\pi_0: [2n] \to [2n-1]$. Without loss of generality, assume that $\pi_0(2n) = 2n-1$. \cref{influence:lemma:first_step} implies that there exists a constant $\alpha = \alpha(\Omega, \delta) > 0$ such that
  \begin{align*}
    \E[\pi_0]{ \Inf[\Omega]{\minor{f}{\pi_0}, 2n-1} } &\geq (\alpha \cdot \delta) - (1/\alpha)\cdot \E[j \sim [2n-1]]{ \Inf[\Omega]{f, j} } \\ 
                                                      &= (\alpha \cdot \delta) - \frac{1}{\alpha \cdot (2n-1)} \cdot \I[\Omega]{f} \\ 
                                                      &\geq (\alpha \cdot \delta) - \frac{\gamma \cdot 2 n}{\alpha \cdot (2n-1)} \\ 
                                                      &\geq \delta_0(\Omega, \delta),
  \end{align*}
  where $\delta_0 > 0$ with sufficiently small $\gamma$ and sufficiently large $N$, depending only on $\Omega$ and $\delta$. Using the fact that every random variable $\X \in [0,1]$ admits $\Pr[\X \geq \E{\X}/2] \geq \E{\X}/2$, we obtain
  \[
    \Pr_{\pi_0} \left[ \Inf[\Omega]{\minor{f}{\pi_0}, 2n-1} \geq \delta_0/2 \right] \geq \delta_0/2. \tag{{\color{magenta}$1$}}\label{eq:influence:first_step_identification}
  \]

  \begin{figure}
    \centering
    \includegraphics[scale=0.75]{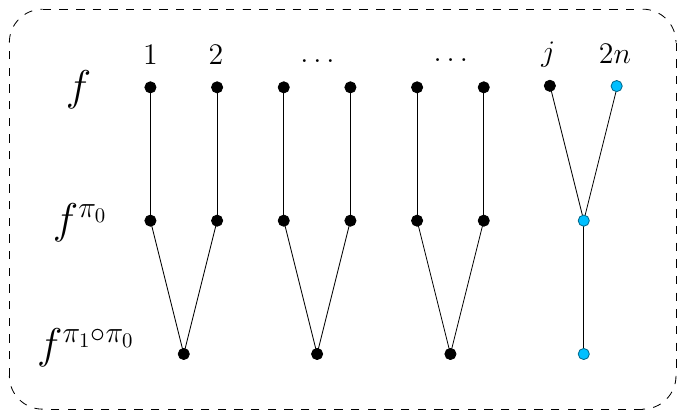}
    \caption{
        We split the random choice of $\pi$ into two steps: (1) choosing the coordinate $j$ with which the influential coordinate $2n$ should be identified, and (2) choosing the pairing of remaining coordinates. We show that after each step, the image of coordinate $2n$ remains influential with constant probability.
    }\label{influence:fig:two_steps_minor}
  \end{figure}
  
  \paragraph{Step II: Pairing remaining coordinates} Fix any $\pi_0$, which witnesses the event in \eqref{eq:influence:first_step_identification}. Without loss of generality, we assume that $\pi_0$ identifies $2n$ with $2n-1$. We now proceed to the second step: choosing uniformly at random a minor map $\pi_1 : [2n-1] \to [n]$, which pairs coordinates $\{1, \dots, 2n-2 \}$. We can assume that $\pi_1(2n-1) = n$. Let $\pi = \pi_1 \circ \pi_0$. We want to understand the expected influence of coordinate $n$ in $\minor{f}{\pi}$. To this end, we analyze the function $h : \Bool^{2n-2} \to \Bool$ defined as $h(\tuple y) = 1 \iff \minor{f}{\pi_0}(\tuple y 0) \neq \minor{f}{\pi_0}(\tuple y 1)$. Let $\Piv(2n-1)$ be the set of points $\tuple x$ such that $\minor{f}{\pi_0}(\tuple x) \neq \minor{f}{\pi_0}(\tuple x \oplus (2n-1))$. Clearly, we have $\Omega_{2n-1}(\Piv(2n-1)) = \Inf[\Omega]{\minor{f}{\pi_0}, 2n-1} \geq \delta_0/2$. By \cref{influence:lemma:reasonable_set_extensions} applied with $\varepsilon_0 := \delta_0/4$, there is $\Delta_0 = \Delta_0(\Omega, \delta)$ such that
  \begin{align*}
    \E[\Omega]{h} &= \Omega_{2n-2} \left( \left\{ \tuple y : \minor{f}{\pi_0}(\tuple y  0) \neq \minor{f}{\pi_0}(\tuple y 1) \right\} \right) \\ 
    &\geq \Delta_0 \cdot \left( \Omega_{2n-1} \left( \Piv(2n-1) \right) - \varepsilon_0 \right) \geq \delta_1(\Omega, \delta).
  \end{align*}
  This bound allows us to apply \cref{influence:lemma:expectance_in_pullback_does_not_vanish} to $h$, as long as $\I[\Omega]{h} \leq \zeta \cdot (2n-2)$ for some $\zeta = \zeta(\Omega, \delta)$. However, this can be ensured by choosing appropriate $\gamma$ and $N$, due to \cref{influence:lemma:derivative_has_small_total_influence}. As a consequence of this argument, we obtain a constant $\beta = \beta(\Omega, \delta) > 0$ such that $\E[\gpullback{\Omega}]{h} \geq \beta$. 

  Let $\Piv(n)$ be the set of points $\tuple x \in \{0,1\}^n$ such that $\minor{f}{\pi}(\tuple x) \neq \minor{f}{\pi}(\tuple x \oplus n)$. Recall that $\pi_1$ is a uniformly random $2$-to-$1$ map $[2n-2] \to [n-1]$ and $\pi = \pi_1 \circ \pi_0$. Let $\Delta_1 = \Delta_1(\Omega, \delta)$ be the constant from \cref{influence:lemma:reasonable_set_extensions} for $\varepsilon_1 := \beta/2$. Directly from the definition of \kl{pull-back distribution}, we obtain:
  \begin{align*}
    \beta \leq \E[\tuple z \sim \gpullback{\Omega}]{ h(\tuple z) } &= \E[\pi_1]{ \E[\tuple z \sim \Omega_{n-1}]{ h\left( \pi_1^{-1}(\tuple z) \right) }  } \\ 
                                                                              &= \E[\pi_1]{ \Pr_{\tuple z \sim \Omega_{n-1}} \left[ \minor{f}{\pi_0}\left( \pi_1^{-1}(\tuple z) 0 \right) \neq \minor{f}{\pi_0} \left( \pi_1^{-1}(\tuple z)  1 \right) \right] } \\
                                                                              &= \E[\pi_1]{ \Pr_{\tuple z \sim \Omega_{n-1}} \left[ \minor{f}{\pi} ( \tuple z  0 ) \neq \minor{f}{\pi} ( \tuple z  1 ) \right] } \\ 
                                                                              &= \E[\pi_1]{ \Omega_{n-1} \left( \left\{ \tuple z : \minor{f}{\pi} ( \tuple z 0 ) \neq \minor{f}{\pi} ( \tuple z 1 ) \right\} \right) } \\ 
                                                                              &\leq \Delta_1 \cdot \E[\pi_1]{ \Omega_n \left( \Piv(n) \right) } + \varepsilon_1 \\ 
                                                                              &= \Delta_1 \cdot \E[\pi_1]{ \Inf[\Omega]{\minor{f}{\pi}, n} } + \varepsilon_1,
  \end{align*}
  and by the choice of $\varepsilon_1$, we get 
  \[
      \E[\pi_1]{ \Inf[\Omega]{\minor{f}{\pi}, n} } \geq (1/\Delta_1) \cdot (\beta/2) \geq \delta_2(\Omega, \delta).
  \]
  Using $\Pr[\X \geq \E{\X}] \geq \E{\X}/2$ one more time and plugging $n = \pi(2n)$, we obtain
  \[
    \Pr_{\pi_1} \left[ \Inf[\Omega]{\minor{f}{\pi}, \pi(2n)} \geq \delta_2/2 \right] \geq \delta_2/2. \tag{{\color{magenta}$2$}}\label{eq:influence:second_step_identification}
  \]

  \paragraph{Conclusion} Combining \eqref{eq:influence:first_step_identification} and \eqref{eq:influence:second_step_identification}, we obtain the final estimate:
  \[
    \Pr_{\pi} \left[ \Inf[\Omega]{\minor{f}{\pi}, \pi(2n)} \geq \delta_2/2 \right] \geq (\delta_0/2) \cdot (\delta_2/2),
  \]
  which clearly finishes the proof by the choice $\tau := (\delta_0 \cdot \delta_2)/4$.
\end{proof}

\section{Polynomial threshold functions over biased distributions are low-degree concentrated}\label{appendix:sec:noise_stability_of_ptfs}
In this section, we prove a \kl{low-degree concentration} result for bounded-degree $\PTF$s over the $p$-biased distribution, which is an extension of an already known result over $\cube{1/2}$.

\ptfslowdegreeconcentration*

\cref{ptfs:proposition:ptfs_are_low_degree_concentrated} is a consequence of a bound on the noise sensitivity of $\PTF$s over $\cube{p}$. We will prove the following result, which is an extension of \cite[Theorem 1.3]{Harsha2009BoundingTS} and \cite[Theorem 1.3]{bounding_ptfs_2} from $\cube{1/2}$ to the general $p$-biased distributions.

\begin{restatable}[Extension of {\cite[Theorem 1.3]{Harsha2009BoundingTS}}]{theorem}{noisesensitivitymaintheorem}\label{noise_sensitivity:theorem:main_theorem}
  Suppose that $\lambda \in (0,1/2), p \in (\lambda, 1-\lambda)$ and $f \in \PTF_k$. Then
  \[
    \NS[(p)]{f, 1-\delta} \leq \mathcal{O}_{\lambda, k} \left( \delta^{1/(8k+2)} \right).
  \]
\end{restatable}

We note that our proof of \cref{noise_sensitivity:theorem:main_theorem} is almost identical to the original proof of \cite[Theorem 1.3]{Harsha2009BoundingTS}, with some minor adjustments arising from the more general setting; it is a standard pattern in Fourier Analysis, that results obtained for the unbiased cube $\cube{1/2}$ usually extend to $\cube{p}$ as long as $p$ is bounded away from $0$ and $1$.

Before we proceed to the proof of the noise sensitivity bound, we show how to derive \kl{low-degree concentration} from it. We use the standard connection between \kl{noise sensitivity}, \kl{noise operator} and \kl{Fourier coefficients} (see e.g. \cite[Proposition 3.3]{odonnell}).

\begin{proof}[\cref{noise_sensitivity:theorem:main_theorem} $\implies$ \cref{ptfs:proposition:ptfs_are_low_degree_concentrated}]
    Fix $\lambda, p, \varepsilon$, and $f$ as in the statement. Observe that \cref{noise_sensitivity:theorem:main_theorem} implies that $\NS[(p)]{f,1-\delta}$ approaches $0$ as $\delta \to 0$. Therefore, we can choose $\delta = \delta(\lambda, \varepsilon, k) > 0$ such that $\NS[(p)]{f,1-\delta} \leq \varepsilon$. By \cref{noise:claim:connection_between_noise_and_coefficients} it follows that $\norm{f}^2 - \ip{f}{\TN[(p)]{f,1-\delta}} \leq \varepsilon/2$. Using \cref{proposition:noise:coefficients_of_noise_operator} and the \kl{Parseval identity}, we obtain that
    \begin{align*}
        \varepsilon/2 &\geq \norm{f}^2 - \ip{f}{\TN[(p)]{f,1-\delta}} \\ 
        &= \sum_{S \subseteq [n]} \hat{f}(S)^2 - \sum_{S \subseteq [n]} (1-\delta)^{|S|} \cdot \hat{f}(S)^2 \\ 
        &= \sum_{S \subseteq [n]} \left(1-(1-\delta)^{|S|}\right) \cdot \hat{f}(S)^2.
    \end{align*}
    Let $d = d(\lambda, \varepsilon, k)$ be such that $(1-(1-\delta)^d) \ge 1/2$. We can check that $d$ satisfies the required condition:
    \[
        \sum_{|S| > d} \hat{f}(S)^2 \leq \frac{\varepsilon/2}{\left(1-(1-\delta)^d\right)} \leq \varepsilon. \qedhere
    \]
\end{proof}

\subsection{Preliminary definitions and facts}

\AP
The rest of this section is devoted to proving \cref{noise_sensitivity:theorem:main_theorem}. We start by recalling notions introduced in \cite{Harsha2009BoundingTS} and establishing the setting. We define the sign function $\sgn : \mathbb{R} \to \{0,1\}$ as $\sgn(x) = 1$ if and only if $x > 0$. Given a multilinear polynomial $Q$ in $n$ variables $x_1, \dots, x_n$, by $Q_S$ we denote its coefficient corresponding to monomial $\prod_{i \in S} x_i$. Furthermore, we say that $Q$ is \intro(PTF){normalized} if $\sum_{|S| > 0} Q_S^2 = 1$.

Given a $\PTF$ $f = \sgn(P)$ for some multilinear polynomial $P : \{0,1\}^n \to \mathbb{R}$ of bounded degree, we will analyze $P$ under a $p$-biased distribution. First, we argue that the degree of the \kl{Fourier decomposition} of $P$ over $\cube{p}$ (for any $p \in (0,1)$ is at most the degree of $P$. Indeed, if $P = \sum_{S \subseteq [n]} a_S \cdot \prod_{i \in S} x_i$ over $\{0,1\}^n$, then its \kl{Fourier decomposition} over $\cube{p}$ is given by applying a linear transformation $x_i = p + \sqrt{p(1-p)} \cdot \chi^{(p)}_i$ for every $i \in [n]$. After this operation, every monomial of $P$ with non-zero coefficient $P_S \neq 0$ becomes
\begin{align*}
    P_S \cdot \prod_{i \in S} x_i &= P_S \cdot \prod_{i \in S} \left(p + \sqrt{p(1-p)} \cdot \chi^{(p)}_i \right) \\ 
    &= \sum_{T \subseteq S} P_S \cdot p^{|S|-|T|} \left(\sqrt{p(1-p)}\right)^{|T|} \cdot \prod_{i \in T} \chi^{(p)}_i,
\end{align*}
which clearly does not produce any monomials of degree greater than $|S|$. 

This observation implies that whenever $f : \Bool^n \to \Bool$ is a $\PTF$ of degree $k$, a corresponding polynomial $Q$ over $\{ \chi_i^{(p)} : i \in [n] \}$ of degree $k$ can be found such that $f = \sgn(Q)$. We formalize this with the notion of \emph{sign-representation}.

\begin{definition}[\intro{Sign-representation}]\AP
    Suppose that $f: \Bool^n \to \Bool$ is a $\PTF$ of degree $k$ and $p \in (0,1)$. By a degree-$k$ sign-representation of $f$ over $\cube{p}$, we call a degree-$k$, \kl(PTF){normalized}, multilinear polynomial $Q$ in $n$ variables such that 
    \[
        f(x_1, \dots, x_n) = \sgn \left( Q \left(\chi_1^{(p)}(x_1), \dots, \chi_n^{(p)}(x_n) \right) \right).
    \]
\end{definition}

By the above argument, we know that every degree-$k$ $\PTF$ has a sign-representation of degree $k$ for every $p \in (0,1)$.

Moving forward, we introduce a more concise notion of coordinate influence. Given a polynomial $Q$ of $n$ variables, for every $i \in [n]$, we denote $w_i^2(Q) = \sum_{S \ni i} Q_S^2$. For the rest of this section, we will assume that coordinates are sorted in a decreasing order with respect to $w_i^2(Q)$, i.e., $w_1^2(Q) \geq w_2^2(Q) \geq \dots \geq w_n^2(Q)$. We also write $\sigma^2_i(Q) = \sum_{j \geq i} w_j^2(Q)$. For conciseness, we write $w_i^4(Q)$ and $\sigma_i^4(Q)$ instead of $(w_i^2(Q))^2$ and $(\sigma_i^2(Q))^2$. The central notion of this section is the \emph{regularity}, defined as follows.

\begin{definition}[\intro(PTF){Regularity}]\AP
  A multilinear polynomial $Q$ is $\varepsilon$-regular if 
  \[
    \sum_{i \in [\ar(Q)]} w_i^4(Q) \leq \varepsilon^2 \cdot \sigma^4_1(Q).
  \]
  A function $f \in \PTF_k$ is $\varepsilon$-regular over $\cube{p}$ if it has a $\varepsilon$-regular \kl{sign-representation} over $\cube{p}$.
\end{definition}

We note that the notion of regularity implies that $\max_i w_i^2(Q)$ is small.

\begin{claim}\label{noise_sensitivity:claim:regularity_is_small_influence}
    If $Q$ is a \kl(PTF){normalized}, $\varepsilon$-\kl(PTF){regular}, multilinear polynomial of degree $k$, then
    \[
        \max_{i \in [\ar(Q)]} w_i^2(Q) \leq \varepsilon \cdot k.
    \]
\end{claim}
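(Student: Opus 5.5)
Fix a coordinate $i$ and use only two facts about $Q$: the $\varepsilon$-regularity inequality $\sum_j w_j^4(Q) \le \varepsilon^2 \sigma_1^4(Q)$, and the bound $\sigma_1^2(Q) \le k$ coming from normalization and the degree.

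\textbf{Step 1: the $\sigma$ bound.} By definition,
\[
\sigma_1^2(Q) = \sum_j w_j^2(Q) = \sum_j \sum_{S \ni j} Q_S^2 = \sum_{S} |S| \cdot Q_S^2 .
\]
The empty set contributes nothing, and every nonempty $S$ with $Q_S \neq 0$ has $|S| \le k$ because $Q$ has degree $k$. Hence
\[
\sigma_1^2(Q) \le k \sum_{|S|>0} Q_S^2 = k,
\]
where the last equality is normalization.

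\textbf{Step 2: single coordinate.} For a fixed $i$, keep only the $i$-th term of the regularity sum:
\[
w_i^4(Q) \le \sum_j w_j^4(Q) \le \varepsilon^2 \sigma_1^4(Q).
\]
Since $w_i^2(Q)$ and $\sigma_1^2(Q)$ are nonnegative, taking square roots gives $w_i^2(Q) \le \varepsilon \, \sigma_1^2(Q)$.

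\textbf{Step 3: conclusion.} Combining with Step 1, $w_i^2(Q) \le \varepsilon \, \sigma_1^2(Q) \le \varepsilon k$. As $i$ was arbitrary, this bound holds for $\max_i w_i^2(Q)$, as claimed.

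There is no real obstacle. The only points needing care are that the square root in Step 2 is legitimate because both sides are nonnegative, and that the constant term $Q_\emptyset$ never enters any $w_j^2(Q)$, so normalization (which excludes $S=\emptyset$) is exactly the right hypothesis for Step 1.
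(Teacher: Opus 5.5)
Your proof is correct and follows essentially the same route as the paper: bound $w_i^4(Q)$ by the full regularity sum $\sum_j w_j^4(Q)\le \varepsilon^2\sigma_1^4(Q)$, then use $\sigma_1^2(Q)=\sum_S |S|\,Q_S^2\le k$, which follows from normalization and the degree bound. The only cosmetic difference is that you take the square root before applying $\sigma_1^2(Q)\le k$, whereas the paper carries the fourth powers through to $(\varepsilon k)^2$ and takes the root at the end.
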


\begin{proof}
    From \kl(PTF){normalization} and \kl(PTF){regularity}, for every $j \in [\ar(Q)]$ we have
    \begin{align*}
        w_j^4(Q) &\leq \sum_{i \in [\ar(P)]} w_i^4(Q) \leq \varepsilon^2 \cdot \left( \sum_{i \in [\ar(Q)]} w_i^2(Q) \right)^2 \\
        &\leq \varepsilon^2 \cdot \left( \sum_{0 < |S| \leq k} |S| \cdot Q_S^2\right)^2 \leq (\varepsilon \cdot k)^2. \qedhere
    \end{align*}
\end{proof}

The most crucial idea behind the proof of \cref{noise_sensitivity:theorem:main_theorem} is that $\varepsilon$-\kl(PTF){regular}, bounded-degree polynomials admit \emph{anti-concentration} over the $p$-biased distributions. In other words, for every range $I \subseteq \mathbb{R}$, the probability over $\cube{p}$ that the polynomial's value ends up in $I$, diminishes as $|I|$ and $\varepsilon$ approach $0$. This fact is deduced by combining two profound analytical results: the \emph{Invariance Principle} of Mossel, O’Donnell and Oleszkiewicz \cite{invariance_principle}, and the \emph{anti-concentration bound for Gaussian polynomials} of Carbery and Wright \cite{carbery_wright}. 

In both of the following two statements, $\mathcal{N}$ denotes the standard Gaussian distribution with mean $0$ and variance $1$. By $\chardist{p,n}$ we denote the distribution of tuples $(x_1, \dots, x_n)$ such that $x_i = \chi_i^{(p)}(z_i)$ given $(z_1, \dots, z_n) \sim \cube{p,n}$.

\begin{theorem}[\intro{Invariance Principle}, {\cite[Theorem 2.1]{invariance_principle}}]\AP
  Suppose that $\lambda \in (0,1/2)$, $p \in (\lambda, 1-\lambda)$ and $Q$ is a \kl(PTF){normalized}, multilinear polynomial of degree $k$ in $n$ variables. If $Q$ is $\varepsilon$-\kl(PTF){regular}, then for every $t \in \mathbb{R}$:
  \[
    \left| \Pr_{\tuple x \sim \chardist{p,n}} \big[ Q (\tuple x ) \leq t \big] - \Pr_{\tuple y \sim \mathcal{N}^n} \big[ Q (\tuple y) \leq t \big] \right| \leq \mathcal{O}_{\lambda, k} \left( \varepsilon^{1/8k} \right).
  \]
\end{theorem}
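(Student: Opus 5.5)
The plan is to derive the statement as a direct instantiation of the invariance principle of Mossel, O'Donnell and Oleszkiewicz \cite[Theorem 2.1]{invariance_principle}, which is stated for general orthonormal product ensembles. The work is mostly in checking that our setting meets its hypotheses with constants depending only on $\lambda$ and $k$. I would also sketch the Lindeberg-type argument behind it, to make clear where each dependence enters.

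\textbf{Step 1: checking the hypotheses.} First, the ensemble $\chardist{p,n}$ is a product of independent coordinates. Each coordinate $\chi_i^{(p)}(z_i)$ has mean $0$ and variance $1$ and takes two values, with minimum atom probability $\min(p,1-p) > \lambda$. Consequently each coordinate is $(2,3,\eta)$-hypercontractive with some $\eta = \eta(\lambda) > 0$. Second, normalization of $Q$ means $\sum_{|S|>0} Q_S^2 = 1$, i.e. $\mathrm{Var}[Q] = 1$ under both $\chardist{p,n}$ and $\mathcal{N}^n$, and $Q$ has degree $k$. Third, the invariance principle needs small maximal influence $\max_i w_i^2(Q) \le \tau$. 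This is supplied by \cref{noise_sensitivity:claim:regularity_is_small_influence}, which gives $\tau \le \varepsilon k$ for an $\varepsilon$-regular normalized $Q$. Plugging these into \cite[Theorem 2.1]{invariance_principle} gives a bound of the form $\mathcal{O}(k \cdot \lambda^{-\mathcal{O}(1)} \cdot \tau^{1/8k})$, which is $\mathcal{O}_{\lambda,k}(\varepsilon^{1/8k})$ once $\tau \le \varepsilon k$ is substituted.

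\textbf{Step 2: the self-contained route.} If one wanted to avoid the black box, I would proceed in two stages.
\begin{enumerate}
\item Fix a $C^3$ function $\psi_\rho$ that approximates $\mathbf{1}[\,\cdot \le t\,]$, transitioning over an interval of width $\rho$, with $\|\psi_\rho'''\|_\infty = \mathcal{O}(\rho^{-3})$. Replace the coordinates $\chi_i$ by Gaussians one at a time (the Lindeberg hybrid argument). Taylor-expanding to third order, the first two moments match, so each swap costs $\mathcal{O}(\rho^{-3}) \cdot \E{|D_i Q|^3}$.
\item By hypercontractivity this is at most $\rho^{-3}\eta^{-3k}\,(w_i^2)^{3/2}$. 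Summing over $i$ and using $\sum_i w_i^2 \le k$ gives a total error $\mathcal{O}(\rho^{-3}\eta^{-3k} k\,\tau^{1/2})$ for the smooth test function.
\item Pass from $\psi_\rho$ back to the indicator on the Gaussian side using the Carbery--Wright bound, $\Pr[|Q(\tuple y)-t| \le \rho] = \mathcal{O}(k\rho^{1/k})$. This uses $\mathrm{Var}[Q]=1$.
\item On the discrete side, sandwich the indicator between two shifted smooth functions and reuse the same Gaussian anti-concentration.
\item Balance $\rho^{-3}\tau^{1/2}$ against $\rho^{1/k}$ by choosing $\rho$ a suitable power of $\tau$. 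This yields a power of $\tau$ on the order of $1/(6k+2)$. The exponent $1/8k$ stated in the theorem is weaker, so it suffices.
\end{enumerate}

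\textbf{Main obstacle.} I expect the delicate point to be the dependence on $p$. The hypercontractivity constant $\eta$ of a $p$-biased bit degrades as $p \to 0$ or $1$, and this is exactly why the bound is uniform only over $p \in (\lambda, 1-\lambda)$. It must be checked that all constants ($\eta^{-3k}$, the Carbery--Wright constant, and the $k$ from \cref{noise_sensitivity:claim:regularity_is_small_influence}) are absorbed into $\mathcal{O}_{\lambda,k}$ and do not depend on $n$ or $t$. The rest is routine bookkeeping.
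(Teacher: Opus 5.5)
Your Step 1 matches how the paper uses this result: the paper quotes it from \cite[Theorem 2.1]{invariance_principle} without proof. The only bridging needed is what you check — the $p$-biased characters form an independent mean-zero, unit-variance ensemble whose third moments (equivalently, hypercontractivity constants) depend only on $\lambda$, and the regularity claim supplies the maximal-influence hypothesis $\max_i w_i^2(Q)\le\tau$ with $\tau=\varepsilon k$. Your Lindeberg sketch in Step 2 is a sound but unnecessary unpacking of that black box, and its $\tau^{1/(6k+2)}$ rate is at least as good as the stated one.
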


\begin{theorem}[\intro{Carbery-Wright anti-concentration bound}, {\cite[Theorem 8]{carbery_wright}}]\AP
   Suppose that $Q$ is a \kl(PTF){normalized} polynomial of degree $k$ in $n$ variables, $I \subseteq \mathbb{R}$ is an interval of length $\alpha$. Then
  \[
    \Pr_{\tuple y \sim \mathcal{N}^n} \big[ Q(\tuple y) \in I \big] \leq \mathcal{O}_k \left( \alpha^{1/k} \right).
  \]
\end{theorem}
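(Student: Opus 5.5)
Since this is a classical result quoted from \cite{carbery_wright}, the plan is to reduce the stated form to the general Carbery--Wright inequality for log-concave measures and indicate how that inequality is proved. First fix the interval $I = [t - \alpha/2, t + \alpha/2]$ and set $R = Q - t$. Then
\[
\Pr_{\tuple y \sim \mathcal N^n}[Q(\tuple y) \in I] = \Pr[|R(\tuple y)| \le \alpha/2].
\]
The key normalization observation is that, under $\mathcal N^n$, the multilinear monomials $\prod_{i\in S} y_i$ are orthonormal. Hence
\[
\E{R^2} = (Q_\emptyset - t)^2 + \sum_{|S|>0} Q_S^2 \ \ge\ 1,
\]
using that $Q$ is \kl(PTF){normalized}. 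So $R$ is a degree-$k$ polynomial with $\norm{R}_2 \ge 1$, uniformly in $t$. This is why no dependence on $t$ appears in the statement.

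Second, I would invoke the general Carbery--Wright inequality: for any log-concave probability measure $\mu$ on $\mathbb R^n$, any degree-$k$ polynomial $R$, and $q \ge 1$,
\[
\mu\big(\{|R| \le \eta\}\big) \le C\, q\,\eta^{1/k}\,\big(\E[\mu]{|R|^{q/k}}\big)^{-1/q}.
\]
Taking $q = 2k$ gives
\[
\Pr[|R|\le \alpha/2] \le \mathcal O_k\big(\alpha^{1/k}\big)\,\norm{R}_2^{-1/k} \le \mathcal O_k\big(\alpha^{1/k}\big),
\]
since $\norm{R}_2 \ge 1$. Should the version at hand use a lower moment than $2$, Gaussian \kl{hypercontractivity} for degree-$k$ polynomials ($\norm{R}_2 \le (q-1)^{-k/2}\cdots$, equivalently $\norm{R}_{q'} \ge c_k \norm{R}_2$ for small $q'$) converts $\norm{R}_2\ge 1$ into the required bound on $\norm{R}_{q'}$, again losing only a constant depending on $k$.

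Third, for a self-contained proof of the log-concave inequality, I would follow the localization route. The Lov\'asz--Simonovits/Kannan--Lov\'asz--Simonovits needle decomposition reduces any inequality comparing two integrals against a log-concave measure to the case of a log-concave measure supported on a segment, i.e.\ a one-dimensional statement. On a line, $R$ restricts to a univariate polynomial of degree $\le k$. There a Remez-type inequality holds: the set $\{|r| \le \eta\,\sup_J |r|\}$ occupies at most a $4\eta^{1/k}$ fraction of an interval $J$. Combining this with one-dimensional log-concave comparisons (Chebyshev/Lagrange interpolation at $k+1$ points) bounds the small-ball probability by $\eta^{1/k}$ relative to an $L^{q/k}$ norm.

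I expect the main obstacle to be this localization step. One must phrase the target as a pair of integral inequalities of the right homogeneous form, so that the needle decomposition applies and the one-dimensional constants do not depend on $n$. Everything else is routine bookkeeping of constants depending only on $k$.
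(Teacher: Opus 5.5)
Your proposal is correct and takes the same route as the paper, which gives no argument of its own and simply invokes the general log-concave inequality of \cite[Theorem 8]{carbery_wright}; your shift $R = Q - t$, the bound $\mathbb{E}[R^2] = (Q_\emptyset - t)^2 + \sum_{|S|>0} Q_S^2 \ge 1$ from orthonormality of multilinear monomials under $\mathcal{N}^n$, and the choice $q = 2k$ are exactly the routine reduction the paper leaves implicit. Your third step (localization plus a Remez-type inequality) only sketches a proof of the cited theorem itself and is not needed here.
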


By combining the two above theorems, we obtain the following anti-concentration bound for bounded-degree multilinear polynomials over the $p$-biased distribution.

\begin{corollary}\label{noise_sensitivity:corollary:hypercube_anticoncentration}
  Suppose that $\lambda \in (0,1/2), p \in (\lambda, 1-\lambda)$ and $Q$ is a \kl(PTF){normalized}, multilinear polynomial of degree $k$ in $n$ variables. If $Q$ is $\varepsilon$-regular, then for every range $I \subseteq \mathbb{R}$ of length at most $\alpha$, we have
  \[
    \Pr_{\tuple x \sim \chardist{p,n}}\big[ Q(\tuple x) \in I \big] \leq \mathcal{O}_{\lambda,k} \left( \alpha^{1/k} + \varepsilon^{1/8k} \right).
  \]
\end{corollary}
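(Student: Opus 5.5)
The plan is to combine the two preceding theorems by transporting the interval probability from the $p$-biased space to Gaussian space, then applying anti-concentration there. Let $I = [a,b]$ with $b - a \le \alpha$; open and half-open intervals are handled identically by monotonicity. Write
\[
  \Pr_{\tuple x \sim \chardist{p,n}}\big[Q(\tuple x) \in I\big] \le \Pr_{\tuple x}\big[Q(\tuple x) \le b\big] - \Pr_{\tuple x}\big[Q(\tuple x) < a\big].
\]
The strict inequality is the only wrinkle. I will handle it by writing $\Pr_{\tuple x}[Q(\tuple x) < a] = \lim_{t \uparrow a}\Pr_{\tuple x}[Q(\tuple x)\le t]$. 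Alternatively, since $\chardist{p,n}$ is finitely supported, there is $t < a$ with $\Pr_{\tuple x}[Q(\tuple x) < a] = \Pr_{\tuple x}[Q(\tuple x) \le t]$, and $t$ may be taken arbitrarily close to $a$.

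Next I apply the \kl{Invariance Principle} to each of the two cumulative probabilities, at thresholds $b$ and $t$. Each substitution of the Gaussian cumulative probability costs $\mathcal{O}_{\lambda,k}(\varepsilon^{1/8k})$. This gives
\[
  \Pr_{\tuple x}\big[Q(\tuple x)\in I\big] \le \Pr_{\tuple y \sim \mathcal N^n}\big[Q(\tuple y) \in (t,b]\big] + 2\cdot\mathcal{O}_{\lambda,k}\big(\varepsilon^{1/8k}\big).
\]
The interval $(t,b]$ has length at most $\alpha + (a - t)$. Applying the \kl{Carbery-Wright anti-concentration bound} to it, and noting that $Q$ is normalized and of degree $k$ as required, bounds the Gaussian term by $\mathcal{O}_k\big((\alpha + (a-t))^{1/k}\big)$. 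Letting $t \uparrow a$, so that $a - t \to 0$, and using that all the constants are independent of $t$, yields $\mathcal{O}_{\lambda,k}(\alpha^{1/k} + \varepsilon^{1/8k})$.

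The main point to verify is normalization compatibility. Carbery-Wright is stated for normalized polynomials, and "normalized" here means $\sum_{|S|>0} Q_S^2 = 1$, which ignores the constant term. A constant shift $Q \mapsto Q - c$ only translates $I$ and does not change its length. So we may assume $Q_\emptyset = 0$, in which case $\mathrm{Var}[Q(\tuple y)] = 1$ under $\mathcal N^n$ and the Carbery-Wright hypothesis is met. Everything else is bookkeeping of constants, which depend only on $\lambda$ and $k$.
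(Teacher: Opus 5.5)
Your proposal is correct and is the argument the paper intends: the paper gives no separate proof, only the remark that the corollary follows by combining the Invariance Principle, applied at the endpoints of $I$, with the Carbery--Wright bound on the Gaussian side. Your treatment of the strict inequality via the finite support of the $p$-biased distribution, and your shift that removes the constant term, are the routine details the paper leaves implicit.
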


Another well known result we will use is the \emph{hypercontractivity} theorem of Bonami, Gross and Beckner for the $p$-biased distribution. Hypercontractivity is perhaps one of the most significant concepts in analysis of Boolean (and not only Boolean) functions, with a rich history and a plethora of applications. Here, we only cite the inequality we will use\footnote{We note that the original statement of \cite[Theorem 6]{tail_bounds_boolean_functions} regards bounded-degree functions on $\{-1,1\}^n$, where every variable $x_i$ is drawn to be $1$ with probability $p$. This is equivalent to our rephrasing, as this distribution and $\chardist{p,n}$ are related by a linear transformation $x_i \mapsto (x_i + 1 - 2p)/(2\sqrt{p(1-p)})$, which does not increase the function degree. Furthermore, the constant $B(p)$ in the original statement depends only on $\lambda$ such that $p \in (\lambda, 1-\lambda)$, which can be easily verified by analyzing the explicit value they provide, which originates from \cite{oleszkiewicz_khinchine}.}; we refer to \cite[Chapters 9 and 10]{odonnell} for more on this topic.

\begin{theorem}[\intro{(2,4)-hypercontractivity}, { \cite[Theorem 6]{tail_bounds_boolean_functions}}]\label{noise_sensitivity:theorem:hypercontractivity}\AP
  Suppose that $\lambda \in (0,1/2)$, $p \in (\lambda, 1-\lambda)$ and $Q$ is a multilinear polynomial of degree $k$ in $n$ variables. Then 
  \[
    \E[\tuple x \sim \chardist{p,n}]{ Q(\tuple x)^4 } \leq \mathcal{O}_{\lambda, k}(1) \cdot \E[\tuple x \sim \chardist{p,n}]{ Q(\tuple x)^2 }^2.
  \]
\end{theorem}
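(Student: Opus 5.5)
The plan is to prove the slightly stronger quantitative statement $\E[\tuple x \sim \chardist{p,n}]{Q(\tuple x)^4} \le K^{k}\cdot \E[\tuple x \sim \chardist{p,n}]{Q(\tuple x)^2}^2$ for an explicit $K=K(\lambda)$. I would follow the classical Bonami induction on the number of variables $n$, exactly as in the unbiased case (cf.\ \cite[Chapter 9]{odonnell}). The only change is that the fourth and third moments of a single $p$-biased character are no longer $1$ and $0$. Instead they are bounded in terms of $\lambda$.

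First I record the one-variable facts for $x=\chi^{(p)}(z)$ with $z\sim\cube{p}$. We have $\E{x}=0$ and $\E{x^2}=1$. Moreover $|\E{x^3}|=|1-2p|/\sqrt{p(1-p)}$ and $\E{x^4}=(1-3p+3p^2)/(p(1-p))$. For $p\in(\lambda,1-\lambda)$ both are at most $M:=1/(\lambda(1-\lambda))$.

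For the inductive step (the base case $n=0$, $Q$ constant, is trivial), write $Q=Q_0+x_nQ_1$, where $Q_0,Q_1$ do not involve $x_n$, $\deg Q_0\le k$ and $\deg Q_1\le k-1$. Because $x_n$ is independent of the other variables, the second moment splits as
\[
\E{Q^2}=\E{Q_0^2}+\E{Q_1^2}.
\]
Expanding the fourth power and using independence together with $\E{x_n}=0$ gives
\[
\E{Q^4}=\E{Q_0^4}+6\,\E{Q_0^2Q_1^2}+4\,\E{x_n^3}\,\E{Q_0Q_1^3}+\E{x_n^4}\,\E{Q_1^4}.
\]
The cubic cross term is the only new ingredient. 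I bound it pointwise by AM--GM, $|Q_0Q_1^3|\le \tfrac12(Q_0^2Q_1^2+Q_1^4)$, so that it is absorbed into the other two terms. The right-hand side is then at most
\[
\E{Q_0^4}+(6+2M)\,\E{Q_0^2Q_1^2}+3M\,\E{Q_1^4}.
\]
Next I apply Cauchy--Schwarz, $\E{Q_0^2Q_1^2}\le\sqrt{\E{Q_0^4}\,\E{Q_1^4}}$. The induction hypothesis (on $n-1$ variables) gives $\E{Q_0^4}\le K^k\E{Q_0^2}^2$ and $\E{Q_1^4}\le K^{k-1}\E{Q_1^2}^2$. Writing $a=\E{Q_0^2}$ and $b=\E{Q_1^2}$, it suffices that
\[
K^k a^2+(6+2M)K^{k-1/2}ab+3MK^{k-1}b^2\le K^k(a+b)^2 .
\]
This holds as soon as $(6+2M)K^{-1/2}\le 2$ and $3MK^{-1}\le 1$. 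For instance $K=(3+M)^2$ works, and this depends only on $\lambda$.

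I expect the main obstacle to be only bookkeeping: the nonzero third moment is what breaks the symmetric unbiased argument, and the AM--GM absorption above is the step I would verify most carefully. The constant one obtains this way, $K^k=\mathcal O_{\lambda,k}(1)$, is exactly the form claimed in \cref{noise_sensitivity:theorem:hypercontractivity}, so no sharper tools are needed.
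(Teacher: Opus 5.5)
Your argument is correct. The one-variable moments are right, with $|\mathbb{E}[x^3]|\le 1/\sqrt{p(1-p)}\le M$ and $\mathbb{E}[x^4]\le 1/(p(1-p))\le M$. The pointwise bound $|Q_0Q_1^3|\le\frac12(Q_0^2Q_1^2+Q_1^4)$ absorbs the cubic term, and $K=(3+M)^2$ closes the induction. The induction hypothesis should be read as ``degree at most $k$, for all $k$ simultaneously,'' which is harmless since $K\ge 1$.

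Your route differs from the paper's, which gives no proof at all. The paper quotes Theorem 6 of \cite{tail_bounds_boolean_functions}, stated for $p$-biased $\pm 1$ bits. It transfers that result to $\chardist{p,n}$ via the degree-preserving affine change of variables $x_i\mapsto (x_i+1-2p)/(2\sqrt{p(1-p)})$. It then checks, by inspecting the explicit constant (which comes from Oleszkiewicz's Khintchine-type inequality), that the constant depends only on $\lambda$.

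The citation buys better constants and places the statement inside the general hypercontractivity theory. Your Bonami-style induction buys a short, self-contained proof with the explicit constant $\bigl(3+\frac{1}{\lambda(1-\lambda)}\bigr)^{2k}$. It also avoids both the change of conventions and the bookkeeping of how an external constant depends on $p$.

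The paper only ever uses this bound in the form $\mathcal{O}_{\lambda,k}(1)$: in two-function hypercontractivity, and in the two random-restriction lemmas together with Paley--Zygmunt. So your weaker constant costs nothing there.
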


\begin{corollary}[\intro{Two-function hypercontractivity}]\label{noise_sensitivity:corollary:two_function_hypercontractivity}\AP
  Suppose that $\lambda \in (0,1/2)$, $p \in (\lambda, 1-\lambda)$ and $P, Q$ are multilinear polynomials of degree $k$ in $n$ variables. Then 
  \[
    \E[\tuple x \sim \chardist{p,n}]{ P(\tuple x)^2 \cdot Q(\tuple x)^2 } \leq \mathcal{O}_{\lambda, k}(1) \cdot \E[\tuple x \sim \chardist{p,n}]{ P(\tuple x)^2 } \cdot \E[\tuple x \sim \chardist{p,n}]{ Q(\tuple x)^2 }.
  \]
\end{corollary}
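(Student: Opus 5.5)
The plan is to derive the two-function bound from the single-function \kl{(2,4)-hypercontractivity} of \cref{noise_sensitivity:theorem:hypercontractivity} via Cauchy--Schwarz. Write $\tuple x \sim \chardist{p,n}$ throughout. First apply Cauchy--Schwarz to the product $P(\tuple x)^2 \cdot Q(\tuple x)^2$:
\[
    \E{P(\tuple x)^2 Q(\tuple x)^2} \leq \sqrt{\E{P(\tuple x)^4}} \cdot \sqrt{\E{Q(\tuple x)^4}}.
\]
Next, since $P$ and $Q$ are each multilinear of degree $k$, I would invoke \cref{noise_sensitivity:theorem:hypercontractivity} separately for each:
\[
    \E{P(\tuple x)^4} \leq K \cdot \E{P(\tuple x)^2}^2
    \qquad\text{and}\qquad
    \E{Q(\tuple x)^4} \leq K \cdot \E{Q(\tuple x)^2}^2,
\]
with the same constant $K = \mathcal{O}_{\lambda,k}(1)$ in both.

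Taking square roots and multiplying gives
\[
    \E{P(\tuple x)^2 Q(\tuple x)^2} \leq K \cdot \E{P(\tuple x)^2} \cdot \E{Q(\tuple x)^2},
\]
which is exactly the claim with constant $K$. The constant depends only on $\lambda$ and $k$, since that is all the constant in \cref{noise_sensitivity:theorem:hypercontractivity} depends on.

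I do not expect any real obstacle. The only point to check is that the hypotheses of \cref{noise_sensitivity:theorem:hypercontractivity} hold for each polynomial individually. Neither normalization nor any relation between $P$ and $Q$ is needed, because hypercontractivity is applied to each factor on its own rather than to the product $PQ$, whose degree is $2k$. Had one instead applied hypercontractivity to $PQ$, one would only get a bound in terms of $\E{(PQ)^2}$ itself, which is useless here.
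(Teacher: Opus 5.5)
Your proposal is correct and is essentially the paper's own proof: Cauchy--Schwarz bounds the expectation of $P^2Q^2$ by the square root of the product of the fourth moments, and then (2,4)-hypercontractivity is applied to $P$ and to $Q$ separately. The constant obtained is the hypercontractivity constant, which depends only on $\lambda$ and $k$, as in the paper.
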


\begin{proof}
  It suffices to combine the Cauchy-Schwarz inequality $\E{\mathbf{X}\mathbf{Y}}\leq \sqrt{\E{\mathbf{X}^2} \E{\mathbf{Y}^2}} $ with \cref{noise_sensitivity:theorem:hypercontractivity}:
  \begin{align*}
    \E{ P(\tuple x)^2 \cdot Q(\tuple x)^2 } &\leq \sqrt{\E{ P(\tuple x)^4 } \cdot \E{ Q(\tuple x)^4 }} \\
    &\leq \mathcal{O}_{\lambda, k}(1) \cdot \E{ P(\tuple x)^2 } \cdot \E{ Q(\tuple x)^2 }. \qedhere
  \end{align*}
\end{proof}

The last ingredient we introduce before proceeding to our proofs is the classic \emph{Paley-Zygmunt inequality}, which we prove for the reader's convenience.

\begin{theorem}[\intro{Paley-Zygmunt inequality}]\label{noise_sensitivity:theorem:paley_zygmunt}\AP
  Suppose that $\mathbf{X} \geq 0$ is a random variable with finite variance and $t \in [0,1]$. Then
  \[
    \Pr \Big[ \mathbf{X} > t \cdot \E{\mathbf{X}} \Big] \geq (1- t)^2 \cdot \frac{\E{\mathbf{X}}^2}{\E{\mathbf{X}^2}}.
  \]
\end{theorem}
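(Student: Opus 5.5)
The plan is to derive the Paley-Zygmunt inequality from a single application of Cauchy--Schwarz, after splitting $\E{\mathbf{X}}$ according to whether $\mathbf{X}$ exceeds the threshold $t\cdot\E{\mathbf{X}}$. Write $\mu = \E{\mathbf{X}}$. We may assume $\mu > 0$. If $\mu = 0$, the right-hand side is $0$ (interpreting $0/\E{\mathbf{X}^2}$ as $0$; if also $\E{\mathbf{X}^2}=0$, then $\mathbf{X}=0$ almost surely and we adopt the same convention), so there is nothing to prove. Finite variance guarantees $\E{\mathbf{X}^2} < \infty$, and $\mu>0$ forces $\E{\mathbf{X}^2}>0$.

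\textbf{Step 1: split the expectation.} Let $A$ be the event $\mathbf{X} > t\mu$ and decompose
\[
\mu = \E{\mathbf{X}\cdot \mathbf{1}_{A^c}} + \E{\mathbf{X}\cdot\mathbf{1}_{A}}.
\]
On $A^c$ we have $0 \le \mathbf{X} \le t\mu$, using nonnegativity of $\mathbf{X}$. Hence the first term is at most $t\mu$, which gives
\[
\E{\mathbf{X}\mathbf{1}_A} \ge (1-t)\mu.
\]

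\textbf{Step 2: apply Cauchy--Schwarz.} Bound the second term as
\[
\E{\mathbf{X}\mathbf{1}_A} \le \sqrt{\E{\mathbf{X}^2}\cdot \E{\mathbf{1}_A^2}} = \sqrt{\E{\mathbf{X}^2}\cdot \Pr[A]}.
\]
Since $t \in [0,1]$, the quantity $(1-t)\mu$ is nonnegative. Squaring the combined inequality is therefore legitimate and yields
\[
(1-t)^2\mu^2 \le \E{\mathbf{X}^2}\Pr[A].
\]
Dividing by $\E{\mathbf{X}^2}>0$ gives exactly the claimed bound.

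None of these steps is a real obstacle. The only points needing care are the degenerate case $\mu=0$ and the use of $t\le 1$, which keeps $(1-t)\mu\ge 0$ so that squaring preserves the inequality.
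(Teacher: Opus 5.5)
Your proof is correct and follows the paper's proof. Like the paper, you split $\mathbb{E}[\mathbf{X}]$ over the events $\mathbf{X} \le t\,\mathbb{E}[\mathbf{X}]$ and $\mathbf{X} > t\,\mathbb{E}[\mathbf{X}]$, bound the first part by $t\,\mathbb{E}[\mathbf{X}]$ and the second by Cauchy--Schwarz, and rearrange; your explicit handling of the degenerate case $\mathbb{E}[\mathbf{X}]=0$ (where $\mathbf{X}=0$ almost surely, so $\mathbb{E}[\mathbf{X}^2]=0$ as well) and of the nonnegativity needed before squaring makes the paper's ``rearranging'' step fully precise.
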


\begin{proof}
    Let $\mathbf{1}^{\leq}$ be equal to $1$ if $\mathbf{X} \leq t \cdot \E{\mathbf{X}}$ and $0$ otherwise. Similarly define $\mathbf{1}^{>}$. Observe that $\E{\mathbf{X} \cdot \mathbf{1}^{\leq}} \leq t \cdot \E{\mathbf{X}}$. On the other hand, we have $\E{\mathbf{X} \cdot \mathbf{1}^>} \leq \sqrt{\E{\mathbf{X}^2} \cdot \Pr[\mathbf{X} > t \cdot \E{\mathbf{X}}]}$ by Cauchy-Schwarz. We obtain that
    \begin{align*}
        \E{\mathbf{X}} &= \E{\mathbf{X} \cdot \mathbf{1}^\leq} + \E{\mathbf{X} \cdot 1^>} \\ 
        &\leq t \cdot \E{\mathbf{X}} + \E{\mathbf{X}^2]^{1/2} \cdot \Pr[\mathbf{X} > t \cdot \E{\mathbf{X}}}^{1/2}.    
    \end{align*}
    The statement now follows by rearranging the terms in above inequality.
\end{proof}

\subsection{Noise sensitivity of regular polynomial threshold functions}

We now proceed to the proof of \cref{noise_sensitivity:theorem:main_theorem}. We start with the proof for the special case when $f$ is $\varepsilon$-regular over the corresponding distribution, to which we will eventually reduce the general case, as we shall see later.

\begin{lemma}\label{noise_sensitivity:lemma:the_regular_case}
  Suppose that $\lambda \in (0,1/2), p \in (\lambda, 1-\lambda)$, and $f :\Bool^n \to \Bool$ is a $\PTF$ of degree $k$. If $f$ is $\varepsilon$-\kl(PTF){regular} over $\cube{p}$, then
  \[
    \NS[(p)]{f,1-\varepsilon} \leq \mathcal{O}_{\lambda, k} \left( \varepsilon^{1/8k} \right).
  \]
\end{lemma}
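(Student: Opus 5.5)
Let $Q$ be an $\varepsilon$-regular \kl{sign-representation} of $f$ over $\cube{p}$, so that $f(\tuple z) = \sgn(Q(\tuple x))$ with $\tuple x = (\chi_i^{(p)}(z_i))_i \sim \chardist{p,n}$. Let $\tuple z' \sim \N[(p)]{\tuple z, 1-\varepsilon}$ and let $\tuple x'$ be its character image. Then $f(\tuple z) \neq f(\tuple z')$ forces one of two events: either $|Q(\tuple x)| \le \alpha$, or $|Q(\tuple x) - Q(\tuple x')| \ge \alpha$. Here $\alpha > 0$ is a threshold that I will fix at the end.

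The first event I would handle with \cref{noise_sensitivity:corollary:hypercube_anticoncentration} applied to the interval $[-\alpha,\alpha]$. This gives probability $\mathcal{O}_{\lambda,k}(\alpha^{1/k} + \varepsilon^{1/8k})$.

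For the second event I would use Chebyshev, which requires bounding $\E{(Q(\tuple x)-Q(\tuple x'))^2}$.
\begin{itemize}
\item Both $\tuple x$ and $\tuple x'$ have marginal $\chardist{p,n}$, and $\E{Q(\tuple x)Q(\tuple x')} = \ip{Q}{\TN[(p)]{Q,1-\varepsilon}}$.
\item By \cref{proposition:noise:coefficients_of_noise_operator} this yields
\[
\E{(Q(\tuple x)-Q(\tuple x'))^2} = 2\sum_S \bigl(1-(1-\varepsilon)^{|S|}\bigr) Q_S^2 \le 2k\varepsilon \sum_{|S|>0} Q_S^2 = 2k\varepsilon .
\]
\item The constant term $Q_\emptyset$ cancels in the difference, so it needs no control. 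The last equality uses \kl(PTF){normalization}.
\end{itemize}
Chebyshev then bounds the second event by $2k\varepsilon/\alpha^2$.

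It remains to balance the two bounds. Taking $\alpha = \varepsilon^{1/4}$ gives
\[
\alpha^{1/k} + 2k\varepsilon/\alpha^2 = \varepsilon^{1/4k} + 2k\varepsilon^{1/2},
\]
and both terms are $\mathcal{O}_k(\varepsilon^{1/8k})$. Adding the $\varepsilon^{1/8k}$ error from the invariance principle, the total is $\mathcal{O}_{\lambda,k}(\varepsilon^{1/8k})$, which is the claimed bound.

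The regularity hypothesis is used only through the anti-concentration corollary. Hypercontractivity is not needed in this regular case; I expect it to be needed only for the later reduction from general to regular PTFs.

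I expect no serious obstacle. The one point requiring care is that the noise model (resampling each coordinate from $\cube{p}$ with probability $\varepsilon$) matches the operator $\TN[(p)]{\cdot,1-\varepsilon}$. This is exactly the content of \cref{proposition:noise:coefficients_of_noise_operator}, so the Fourier identity for the second moment applies verbatim to the real-valued polynomial $Q$ in the characters.
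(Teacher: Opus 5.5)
Your proof is correct and matches the paper's argument: you split into an anti-concentration event and a large-deviation event for $Q(\tuple x)-Q(\tuple x')$, bound the second moment by $2k\varepsilon$, and apply Markov/Chebyshev. The differences are cosmetic. You get the second moment from \cref{proposition:noise:coefficients_of_noise_operator} rather than by computing $\mathbb{E}[\mathbf{X}_i\mathbf{Y}_i]=1-\varepsilon$ directly, and you use the threshold $\varepsilon^{1/4}$ instead of the paper's $\sqrt{\varepsilon/\Delta}$ with $\Delta=\varepsilon^{1/8k}$; both choices give $\mathcal{O}_{\lambda,k}(\varepsilon^{1/8k})$.
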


\begin{proof}
    \newcommand{\X}[0]{\mathbf{X}}
    \newcommand{\Y}[0]{\mathbf{Y}}
    Fix $\lambda, p, f$, and $\varepsilon$ as in the statement. Let $Q$ be a degree-$k$, $\varepsilon$-regular \kl{sign-representation} of $f$ over $\cube{p}$. Let $\tuple x \sim \mu_p$ and $\tuple y \sim \N[(p)]{\tuple x, 1-\varepsilon}$. Then, let $\tuple \X = (\X_1, \dots, \X_n) = (\chi_1^{(p)}(x_1), \dots, \chi_n^{(p)}(x_n))$ and $\tuple \Y = (\Y_1, \dots, \Y_n) = (\chi_1^{(p)}(y_1), \dots, \chi_n^{(p)}(y_n))$. Note that both $\tuple \X$ and $\tuple \Y$ have marginal distributions $\chardist{p,n}$. We can bound the noise sensitivity of $f$ as follows. For every $\Delta \in (0,1)$, we have: 
  \begin{align*}
    \NS[(p)]{f,1-\varepsilon} &= \Pr \left[ \sgn \left(Q \left(\tuple \X \right) \right) \neq \sgn \left(Q \left(\tuple \Y \right) \right) \right] \\ 
                              &\leq \Pr \left[ \left| Q \left( \tuple \X \right) \right| \leq \sqrt{\varepsilon/\Delta} \right] + \Pr \left[ \left| Q \left(\tuple \X \right) - Q \left(\tuple \Y \right) \right| \geq \sqrt{\varepsilon/\Delta} \right].
  \end{align*}
  The first term can be simply bound using \cref{noise_sensitivity:corollary:hypercube_anticoncentration}:
  \[
    \Pr \left[ \left| Q \left( \tuple \X \right) \right| \leq \sqrt{\varepsilon/\Delta} \right] \leq \mathcal{O}_{\lambda, k} \left(\varepsilon^{1/2k} / \Delta^{1/2k} + \varepsilon^{1/8k} \right) \tag{{\color{magenta}$1$}}\label{noise_sensitivity:eq:first_part}
  \]
  We now proceed to the second term. Using the law of total expectation $\E{\X \Y} = \E{\X \E{\Y | \X}}$, $\E{\X_i} = 0$ and $\E{\X_i^2} = 1$, we obtain that for every $i \in [n]$:
  \begin{align*}
    \E{\X_i \cdot \Y_i} &= \E{\X_i \cdot \E{\Y_i | \X_i}} = \E{ \X_i \left( (1-\varepsilon)\X_i + \varepsilon \E{\X_i} \right)} \\ &= (1-\varepsilon) \E{\X_i^2} = (1-\varepsilon).
  \end{align*}
  For every $S \subseteq [n]$, let $\X_S = \prod_{i \in S} \X_i$ and $\Y_S = \prod_{i \in S} \Y_i$. Observe that $\E{\X_i - \Y_i} = 0$, and $\E{(\X_S - \Y_S)(\X_T - \Y_T)} = 0$ whenever $S \neq T$. The following bound follows:
  \begin{align*}
      \E{ \left( Q \left( \tuple \X \right) - Q \left( \tuple \Y \right) \right)^2} &= \E{ \left( \sum_{S \subseteq [n]} Q_S \cdot \left( \X_S - \Y_S \right) \right)^2 } \\
      &= \sum_{S \subseteq [n]} Q_S^2 \cdot \E{(\X_S - \Y_S)^2} \\ 
      &= 2 \sum_{S \subseteq [n]} Q_S^2 \cdot \left( 1 - 2\E{\X_S \cdot \Y_S} \right) \\
      &= 2 \sum_{S \subseteq [n]} Q_S^2 \cdot \left(1- (1-\varepsilon)^{|S|} \right).
  \end{align*}
  Furthermore, observe that $(1-(1-\varepsilon)^{|S|}) \leq \varepsilon \cdot |S|$ by the Bernoulli's inequality\footnote{Bernoulli's inequality states that $(1+x)^r \geq 1+rx$ for every integer $r \geq 0$ and real number $x \geq -1$.}, so the last term is at most $2 \varepsilon \cdot k$, because $Q$ is \kl(PTF){normalized} and has degree $k$.
  
  With the bound on expectancy in hand, we apply the Markov inequality to obtain that
  \[
    \Pr \left[ \Big| Q \left( \tuple \X \right) - Q \left(\tuple \Y \right) \Big| \geq \sqrt{\varepsilon/\Delta} \right] \leq \frac{2 \varepsilon \cdot k}{\varepsilon/\Delta} = 2 \Delta \cdot k. \tag{{\color{magenta}$2$}}\label{noise_sensitivity:eq:second_part}
  \]
  Combining \eqref{noise_sensitivity:eq:first_part} and \eqref{noise_sensitivity:eq:second_part}, we obtain $$\NS[(p)]{f,1-\varepsilon} \leq \mathcal{O}_{\lambda, k} \left( \varepsilon^{1/2k} / \Delta^{1/2k} + \varepsilon^{1/8k} + \Delta \right)$$. It remains to plug in $\Delta \coloneqq \varepsilon^{1/8k}$.
\end{proof}

\subsection{Random restrictions of polynomial threshold functions}
\AP
In order to reduce \cref{noise_sensitivity:lemma:the_regular_case} to the general case, we will prove two results regarding \emph{random restrictions} of \kl{polynomial threshold functions}. Suppose that $f : A^n \to B$, $M \leq n$ and $\tuple x \in A^M$; by $\rest{f}{M}{\tuple x}$ we denote the \intro(PTF){restriction} of $f$ obtained by assigning values from $\tuple x$ to the first $M$ coordinate values, i.e., the $(n-M)$-ary function defined as $\tuple y \mapsto f(\tuple x \tuple y)$.

We now introduce the notion of \emph{critical indices}, which are designed to capture the boundary separating influential coordinates of a function.

\begin{definition}[\intro{Critical index}]\AP
    Given $\varepsilon > 0$ and a multilinear polynomial $Q$, we define the $\varepsilon$-critical index $K(Q, \varepsilon)$ of $Q$ as\footnote{Recall that we assume $w_1^2(Q) \geq w_2^2(Q) \geq \dots \geq w_{\ar(Q)}^2(Q)$.}
    \[
        K(Q, \varepsilon) = \min { i \geq 0} \text{ s.t. } \forall \, j > i : w_j^2(Q) \leq \varepsilon^2 \cdot \sigma_{i+1}^2(Q).
    \]
\end{definition}

We now show a quite intuitive fact; if we randomly assign values to all variables up to the critical index, we obtain a \kl(PTF){regular} \kl(PTF){restriction} with constant probability.

\begin{lemma}\label{noise_sensitivity:lemma:random_restriction_regular}
    Suppose that $\lambda \in (0,1/2), p \in (\lambda, 1-\lambda)$ and $k \geq 1$. There exist constants $\alpha = \alpha(\lambda, k) > 0, \tau = \tau(\lambda, k) > 0$ such that the following holds. Suppose that $\varepsilon > 0$ and $Q$ is a degree-$k$, multilinear polynomial in $n$ variables. Let $K(Q, \varepsilon)$ be the $\varepsilon$-\kl{critical index} of $Q$. Then
    \[
      \Pr_{\tuple x \sim \chardist{p, K}} \Big[ \rest{Q}{K}{\tuple x} \text{ is $(\alpha \cdot \varepsilon)$-\kl(PTF){regular}} \, \Big] \geq \tau.
    \]
\end{lemma}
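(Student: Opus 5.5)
The plan is to follow the argument of \cite{Harsha2009BoundingTS}: the restricted polynomial is regular once the tail weight $\sigma_{K+1}^2$ survives the restriction, and the definition of the \kl{critical index} controls the individual $w_j^2$ with $j>K$. Write $Q = \sum_{S} Q_S \chi_S$, where $\chi_S$ ranges over products of the $p$-biased characters. For a fixed head assignment $\tuple x \in \Bool^K$ (coordinates $1,\dots,K$), the restriction is the polynomial in the tail variables $j > K$ with coefficients
\[
R_T(\tuple x) = \sum_{U \subseteq [K]} Q_{U \cup T}\, \chi_U(\tuple x), \qquad T \subseteq \{K+1,\dots,n\}.
\]
Each $R_T$ is a degree-$\le k$ polynomial in the head variables. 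By orthonormality,
\[
\E[\tuple x]{R_T(\tuple x)^2} = \sum_{U \subseteq [K]} Q_{U\cup T}^2 .
\]
Since regularity is scale invariant (both sides of the defining inequality are homogeneous of degree $4$ in the coefficients), normalization can be ignored.

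The first step is to control the random quantity
\[
\mathbf{W}_j = w_j^2(\rest{Q}{K}{\tuple x}) = \sum_{T \ni j} R_T^2 .
\]
Its expectation is exactly $w_j^2(Q)$ restricted to sets meeting the tail at $j$, so $\E{\mathbf{W}_j} \le w_j^2(Q)$. Summing over $j > K$ with multiplicity $|T| \le k$ gives
\[
\E{\sigma^2(\text{restriction})} \;\ge\; \sum_{j>K} \E{\mathbf{W}_j}/k \;\ge\; \sigma_{K+1}^2(Q)/k .
\]
This uses $\sum_{j>K} \sum_{T \ni j} R_T^2 \ge \sum_{T \ne \emptyset} R_T^2$, together with the fact that $\sigma_{K+1}^2(Q)$ counts each set $S$ at most $k$ times over the tail.

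The second step is the lower tail. Put $\mathbf{X} = \sum_{j > K} \mathbf{W}_j$, a sum of squares of degree-$k$ polynomials in the head variables. Applying \kl{Two-function hypercontractivity} to each product $R_T^2 R_{T'}^2$ gives
\[
\E{\mathbf{X}^2} = \sum_{T,T'} \E{R_T^2 R_{T'}^2} \le \mathcal{O}_{\lambda,k}(1)\, \E{\mathbf{X}}^2 .
\]
The \kl{Paley-Zygmunt inequality} with $t = 1/2$ then yields
\[
\Pr\!\left[\mathbf{X} \ge \tfrac12 \E{\mathbf{X}}\right] \ge \tau_0(\lambda,k) .
\]

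The third step is the upper tail for $\sum_{j>K} \mathbf{W}_j^2$. Using hypercontractivity once more (\cref{noise_sensitivity:theorem:hypercontractivity} applied to $\mathbf{W}_j$ through its square-root pieces, expanded via the two-function version),
\[
\E{\mathbf{W}_j^2} \le C_{\lambda,k}\, w_j^4(Q) .
\]
By the critical index property, $w_j^2(Q) \le \varepsilon^2 \sigma_{K+1}^2(Q)$ for every $j > K$, so
\[
\sum_{j>K} w_j^4(Q) \le \varepsilon^2 \sigma_{K+1}^2(Q) \cdot \sigma_{K+1}^2(Q) = \varepsilon^2 \sigma_{K+1}^4(Q) .
\]
Markov's inequality gives
\[
\Pr\!\left[\sum_j \mathbf{W}_j^2 > M C_{\lambda,k}\, \varepsilon^2 \sigma_{K+1}^4\right] \le 1/M .
\]
Choose $M = 2/\tau_0$. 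A union bound then gives probability at least $\tau_0/2$ that both good events hold. On that event,
\[
\sum_j \mathbf{W}_j^2 \le M C\, \varepsilon^2 \sigma_{K+1}^4 \le M C\, \varepsilon^2 (2k)^2\, \mathbf{X}^2 .
\]
Because the tail coordinates are re-sorted by $\mathbf{W}_j$, we have $\mathbf{X} = \sigma_1^2$ of the restriction. Hence the restriction is $(\alpha\varepsilon)$-\kl(PTF){regular} with $\alpha = 2k\sqrt{MC}$ and $\tau = \tau_0/2$, where the regularity parameter degrades by the constant factor $\alpha$ as the statement allows.

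The main obstacle is the fourth-moment bound $\E{\mathbf{W}_j^2} \lesssim w_j^4$. Here $\mathbf{W}_j^2 = \sum_{T,T' \ni j} R_T^2 R_{T'}^2$, and the two-function hypercontractivity bound gives $\E{R_T^2 R_{T'}^2} \le C\,\E{R_T^2}\,\E{R_{T'}^2}$. Summing over $T, T'$ then bounds this by $C (\E{\mathbf{W}_j})^2 \le C w_j^4(Q)$, as needed. I would verify carefully that the constant comes only from the degree $k$ and from $\lambda$, via the footnoted remark on \cite{tail_bounds_boolean_functions}.
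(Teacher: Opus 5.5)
Your proposal is correct and follows the paper's proof essentially step for step. The paper also uses two-function hypercontractivity plus Paley--Zygmunt to show that $\sum_{j>K} w_j^2(R)$ stays above half its mean with constant probability. It also bounds $\mathbb{E}[w_j^4(R)] \le C\, w_j^4(Q)$, combines this with the critical-index inequality $w_j^2(Q)\le \varepsilon^2\sigma_{K+1}^2(Q)$ and Markov to control $\sum_{j>K} w_j^4(R)$, and finishes with a union bound. The only differences are cosmetic. The paper uses the exact identity $\mathbb{E}[\sum_{j>K} w_j^2(R)] = \sigma_{K+1}^2(Q)$, since each $U \ni j$ splits uniquely into head and tail parts, so your $1/k$ loss in Step 1 is unnecessary. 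Your expansion of $\mathbb{E}[\mathbf{X}^2]$ in Step 2 omits the weights $|T|\,|T'|$, which does not affect the bound.
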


\begin{proof}
    \newcommand{\X}[0]{\mathbf{X}}
    \newcommand{\Y}[0]{\mathbf{Y}}
    Fix $\lambda, p, k, Q$ and $K$ as in the statement. Let $\tuple x \sim \chardist{p, K}$ and denote $R = \rest{Q}{K}{\tuple x}$. We write $R$ as
    \begin{gather*}
        R(\tuple y) = \sum_{S \subseteq [K+1, n]} R_S(\tuple x) \cdot \prod_{i \in S} y_i, \text{ where }\\
        \forall \, S \subseteq [K+1, n] : R_S(\tuple x) = \sum_{T \subseteq [K]} Q_{S \cup T} \cdot \prod_{i \in T} x_i.
    \end{gather*}    
    \sloppy Our first goal is to upper-bound the expected value of $\X = \sum_{i > K} w_i^4(R)$. Observe that $\E{R_S(\tuple x)^2} = \sum_{T \subseteq [K]} Q_{S \cup T}^2$ from \kl{Parseval identity}. For every $i \in [K+1, n]$, we apply \kl{two-function hypercontractivity} to obtain:
    \begin{align*}
        \E{ w_i^4(R) } &= \E{ \left( \sum_{S \ni i} R_S(\tuple x)^2 \right)^2 }  
                       = \sum_{S \cap T \ni i}\E{ R_S(\tuple x)^2 \cdot R_T(\tuple x)^2 } \\
                       &\leq \mathcal{O}_{\lambda, k}(1) \cdot \sum_{S \cap T \ni i} \E{ R_S(\tuple x)^2 } \cdot \E{ R_T(\tuple x)^2 } \\ 
                       &= \mathcal{O}_{\lambda, k}(1) \cdot \left( \sum_{S \ni i} \E{ R_S(\tuple x)^2 } \right)^2 
                       = \mathcal{O}_{\lambda, k}(1) \cdot w_i^4(Q).
    \end{align*}
  By the definition of \kl{critical index}, for every $i > K$, we have $w_i^2(Q) \leq \varepsilon^2 \cdot \sigma_{K+1}^2(Q)$. By summing the above bound over coordinates $i > K$, we obtain that
  \begin{align*}
    \E{\X} &\leq \mathcal{O}_{\lambda, k}(1) \cdot \sum_{i > K} w_i^4(Q) \\
    &\leq \mathcal{O}_{\lambda, k}(1) \cdot \left( \varepsilon^2 \sigma_{K+1}^2 \right)\left( \sum_{i > K} w_i^2(Q) \right)\\
    &= \mathcal{O}_{\lambda, k}(1) \cdot \varepsilon^2 \sigma^4_{K+1}(Q). \tag{{\color{magenta}$1$}}\label{noise_sensitivity:eq:random_restriction_1}
  \end{align*}
  Moving forward, we want to show that the random variable $\Y = \sigma_{K+1}^2(R)$ is bounded away from zero with constant probability. To this end, we want to bound the first and second moments of $\Y$ and apply the \kl{Paley-Zygmunt inequality}. We start with the expected value:
  \[
    \E{\Y} = \sum_{i > K} \E{ \sum_{S \ni i} R_S(\tuple x)^2 } = \sum_{i > K} \left( \sum_{S \ni i} Q_S^2 \right) = \sigma^2_{K+1}(Q).
  \]
  Next, we bound the second moment using the already established $\E{w_i^4(R)} \leq \mathcal{O}_{\lambda, k}(w_i^4(Q))$ and Cauchy-Schwarz inequality:
  \begin{align*}
    \E{ \Y^2 } &= \E{ \left( \sum_{i > K} w_i^2(R) \right)^2 } = \sum_{i, j > K} \E{ w_i^2(R) \cdot w_j^2(R) } \\
    &\leq \sum_{i, j > K} \sqrt{ \E{ w_i^4(R) } \E{ w_j^4(R) } } \\ 
                                    &\leq \mathcal{O}_{\lambda, k}(1) \cdot \sum_{i, j > K} w_i^2(Q) \cdot w_j^2(Q) \\ 
                                    &= \mathcal{O}_{\lambda, k}(1) \cdot \left( \sum_{i > K} w_i^2(Q) \right)^2 = \mathcal{O}_{\lambda, k}(1) \cdot \sigma^4_{K+1}(Q).
  \end{align*}
  Combining the two above bounds with \kl{Paley-Zygmunt inequality} for $t = 1/2$, we obtain that there exists a constant $\tau = \tau(\lambda, k) > 0$, such that 
  \[
    \Pr \left[ \sigma_{K+1}^2(R) \geq \frac{1}{2} \cdot \sigma^2_{K+1}(Q) \right] \geq \left(1- \frac{1}{2}\right)^2 \cdot \frac{\E{\Y}^2}{\E{\Y^2}} \geq 2\tau. \tag{{\color{magenta}$2$}}\label{noise_sensitivity:eq:random_restriction_2}
  \]
  To finish the proof, we observe that combining \eqref{noise_sensitivity:eq:random_restriction_1} with Markov inequality for $\X$, we deduce that there exists a constant $\alpha = \alpha(\lambda, k) > 0$, such that 
  \[
    \Pr \left[ \sum_{i > K} w_i^4(R) \geq (\alpha \cdot \varepsilon)^2 \cdot  \frac{\sigma^4_{K+1}(Q)}{4} \right] \leq \frac{\E{\X}}{(\alpha \cdot \varepsilon)^2 (\sigma_{K+1}^4(Q)/4)} \leq \tau.
  \]
  Combining this inequality and \eqref{noise_sensitivity:eq:random_restriction_2}, we finally obtain that 
  \begin{align*}
    \Pr[ R \text{ is $(\alpha \cdot \varepsilon)$-\kl(PTF){regular}}] = \Pr \left[ \sum_{i > K} w_i^4(R) \leq (\alpha \cdot \varepsilon)^2 \sigma^4_{K+1}(R) \right] \geq \tau,
  \end{align*}
  which finishes the proof.
\end{proof}

\AP
We say that a polynomial $Q$ in $n$ variables is $\varepsilon$-\intro{determined} over $\chardist{p, n}$ if either $\Pr[Q(\tuple x) > 0]$ or $\Pr[Q(\tuple x) \leq 0]$ is at most $\varepsilon$, given that $\tuple x \sim \chardist{p, n}$. The next lemma asserts that if the \kl{critical index} is sufficiently large, then a random \kl(PTF){restriction} obtained by fixing values of variables up to the critical index yields a strongly determined function with constant probability.

\begin{lemma}\label{noise_sensitivity:lemma:random_restriction_determined}
    Suppose that $\lambda \in (0,1/2), p \in (\lambda, 1-\lambda)$ and $k \geq 1$. There exist constants $\beta = \beta(\lambda, k) > 0,c = c(\lambda, k) > 0,\tau = \tau(\lambda, k) > 0$ such that the following holds. Suppose that $\varepsilon > 0$ and $Q$ is a degree-$k$, multilinear polynomial in $n$ variables. If the $\varepsilon$-\kl{critical index} $K(Q, \varepsilon)$ is at least $c \cdot \log_{1-\varepsilon^2}(\varepsilon/2)$, then 
    \[
        \Pr_{\tuple x \sim \chardist{p, K}} \left[ \rest{Q}{K}{\tuple x} \text{ is $(\beta \cdot \varepsilon)$-\kl{determined} } \right] \geq \tau.
    \]
\end{lemma}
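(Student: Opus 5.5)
We follow the strategy of \cite[Lemma 5.4]{Harsha2009BoundingTS}, adapted to $\chardist{p,n}$. The idea is that when the critical index is large, the weights $w_i^2(Q)$ decay geometrically along the first $K$ coordinates. Consequently, the ``head'' part of $Q$ (monomials touching $[K]$) carries much more weight than the ``tail'' $\sigma^2_{K+1}(Q)$. A random restriction of the head then produces a constant term whose magnitude dominates the remaining polynomial with constant probability.

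\textbf{Step 1: geometric decay.} Set $L = c\log_{1-\varepsilon^2}(\varepsilon/2)$. For every $i \le K$ the critical-index condition fails at $i-1$, so some $j \ge i$ has $w_j^2(Q) > \varepsilon^2\sigma_i^2(Q)$. Since the weights are sorted, this gives $w_i^2(Q) > \varepsilon^2\sigma_i^2(Q)$, hence $\sigma_{i+1}^2 < (1-\varepsilon^2)\sigma_i^2$. Iterating up to $L \le K$ yields $\sigma^2_{L+1}(Q) \le (\varepsilon/2)^{c}\sigma^2_1(Q)$, which is tiny relative to $\sigma_1^2$ for a suitable $c$.

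\textbf{Step 2: anti-concentration of the constant term.} Write the restriction as $R(\tuple y) = R_\emptyset(\tuple x) + \sum_{S\neq\emptyset} R_S(\tuple x)\tuple y_S$. The constant term is $R_\emptyset(\tuple x) = \sum_{T\subseteq[K]} Q_T \tuple x_T$, a degree-$k$ polynomial in $\tuple x$. Its variance is $\sum_{\emptyset\neq T\subseteq[K]}Q_T^2$. This is at least $\sigma_1^2(Q)/k - \sigma^2_{K+1}(Q)$, up to the monomials that meet both $[K]$ and its complement. Those mixed monomials I would handle by running the decay argument at $L$ rather than $K$: restrict only the first $L$ coordinates to control the head, then note that all remaining weight is at most $\sigma_{L+1}^2$. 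This requires carefully comparing the restriction at $K$ with a two-stage restriction (first $L$, then $L+1..K$), and I expect it to be the main obstacle. By (2,4)-hypercontractivity and Paley--Zygmunt (\cref{noise_sensitivity:theorem:paley_zygmunt}), with probability at least some $\tau_1(\lambda,k)$ we get $|R_\emptyset(\tuple x)| \ge \tfrac12\sqrt{\sigma_1^2(Q)/k}$. Here I am applying it to $R_\emptyset^2$ centered at the constant $Q_\emptyset$; the uncentered case only helps, via $\E{R_\emptyset^2}$.

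\textbf{Step 3: the tail is small.} The nonconstant part $R-R_\emptyset$ has expected squared norm, over $\tuple x$, bounded by the mixed weight. As in the proof of \cref{noise_sensitivity:lemma:random_restriction_regular}, this is at most $k\,\sigma^2_{K+1}(Q)$ (or $k\sigma^2_{L+1}(Q)$ in the two-stage version). By Markov, with probability at least $1-\tau_1/2$, $\E[\tuple y]{(R-R_\emptyset)^2} \le (2k/\tau_1)\sigma^2_{L+1}(Q)$. On the intersection of the two events, which has probability at least $\tau_1/2 =: \tau$, the ratio
\[
\E[\tuple y]{(R-R_\emptyset)^2}/R_\emptyset^2 \le O_{\lambda,k}\big((\varepsilon/2)^c\big).
\]
We then apply (2,4)-hypercontractivity together with Markov to the fourth moment of $R-R_\emptyset$ over $\tuple y$, i.e. Chebyshev with the fourth moment. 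This gives
\[
\Pr_{\tuple y}\big[\,|R(\tuple y)-R_\emptyset| \ge |R_\emptyset|\,\big] \le O_{\lambda,k}\big((\varepsilon/2)^{2c}\big) \le \beta\varepsilon
\]
once $c$ is chosen large enough. Hence $\sgn R = \sgn R_\emptyset$ except with probability $\beta\varepsilon$, i.e. $R$ is $(\beta\varepsilon)$-determined.
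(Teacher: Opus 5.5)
Your argument is essentially the paper's: the constant term of the restriction, $R_\emptyset=Q^{\subseteq K}$, is anti-concentrated via Paley--Zygmunt plus (2,4)-hypercontractivity. The tail $Q^{\not\subseteq K}$ is made negligible by the geometric decay $\sigma^2_{i+1}(Q)<(1-\varepsilon^2)\sigma^2_i(Q)$, and Markov over the restricted coordinates finishes. Your per-$\tuple x$ fourth-moment Chebyshev over $\tuple y$, in place of the paper's joint Markov-plus-averaging step, is an equally valid variant.

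The step you flag as the main obstacle is not one, and no two-stage restriction at $L$ is needed. Every $S\not\subseteq[K]$, mixed or not, contains an index $j>K$, so $\sum_{S\not\subseteq[K]}Q_S^2\le\sigma^2_{K+1}(Q)$; this is the same bound you already use in Step 3. Hence these monomials contribute at most $k\,\sigma^2_{K+1}(Q)$ to $\sigma_1^2(Q)=\sum_S|S|Q_S^2$, which gives $\sum_{\emptyset\neq T\subseteq[K]}Q_T^2\ge\sigma_1^2(Q)/k-\sigma^2_{K+1}(Q)$ exactly. The paper avoids even this by bounding the tail by $k(1-\varepsilon^2)^{K-1}\sum_S Q_S^2$ and rearranging against $\sum_{S\subseteq[K]}Q_S^2$.
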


\begin{proof}
    Fix $\lambda, p, k, \varepsilon, Q$ and $K$ as in the statement. Let $\tuple x \sim \chardist{p, n}$. We decompose $Q$ as $Q^{\subseteq K} + Q^{\not \subseteq K}$ where
    \[
        Q^{\subseteq K}(\tuple x) = \sum_{S \subseteq [K]} Q_S \cdot \prod_{i \in S} x_i.
    \]
    First, we show that that $(Q^{\subseteq K})^2$ is significant with constant probability. To this end, we combine the \kl{Paley-Zygmunt inequality} for $t = 1/2$ with \kl{(2,4)-hypercontractivity} to deduce that there exists a constant $\tau = \tau(\lambda, k) > 0$, such that:
    \[
        \Pr \left[ \left( Q^{\subseteq K} \right)^2 \geq \frac{1}{4} \cdot \E{\left( Q^{\subseteq K} \right)^2 }\right] \geq \frac{ 9 \cdot \E{ \left( Q^{\subseteq K} \right)^2 }^2 }{ 16 \cdot \E{ \left( Q^{\subseteq K}\right)^4 }} \geq 2\tau. \tag{{\color{magenta}$1$}}\label{noise_sensitivity:eq:P_L_is_large}
    \]
    Moving forward, we show that as $c$ grows, the expected value $(Q^{\not \subseteq K})^2$ vanishes. Observe that for every $i < K$, from the definition of \kl{critical index}, we have $w_i^2(Q) > \varepsilon^2 \cdot \sigma_i^2(Q)$. Therefore, 
    \[
        \sigma_i^2(Q) = w_i^2(Q) + \sigma_{i+1}^2(Q) > \varepsilon^2 \cdot \sigma^2_i(Q) + \sigma_{i+1}^2(Q).
    \]
    In particular, we have $\sigma_{i+1}^2(Q) < (1-\varepsilon^2) \cdot \sigma^2_i(Q)$. By iterating this inequality, we obtain that $\sigma^2_K(Q) \leq (1-\varepsilon^2)^{K-1} \cdot \sigma_1^2(Q)$. We now have 
    \begin{align*}
        \E{\left( Q^{\not \subseteq K} \right)^2 } &= \sum_{S \not \subseteq [K]} Q_S^2  \leq \sigma_K^2(Q) \leq (1-\varepsilon^2)^{K-1} \cdot \sigma_1^2(Q) \\ 
        &\leq k (1-\varepsilon^2)^{K-1} \cdot \sum_{S \subseteq [n]} Q_S^2.
    \end{align*}
    For sufficiently large $c = c(\lambda, k) > 0$, we obtain that
    \begin{align*}
        \E{\left( Q^{\not \subseteq K} \right)^2 } &=\sum_{S \not \subseteq [K]} Q_S^2 \leq \frac{k(1-\varepsilon^2)^{K-1}}{1-k(1-\varepsilon^2)^{K-1}} \cdot \sum_{S \subseteq [K]} Q_S^2 \\ &< \frac{\varepsilon}{4} \cdot \E{\left( Q^{\subseteq K} \right)^2}.
    \end{align*}
    We continue with Markov inequality to upper-bound the probability that $(Q^{\not \subseteq K})^2$ is large:
    \[
        \Pr \left[ \left( Q^{\not \subseteq K} \right)^2 \geq \frac{1}{4} \cdot \E{ \left( Q^{\subseteq K} \right)^2 } \right] \leq \frac{\E{(Q^{\not \subseteq K})^2}}{\E{(Q^{\subseteq K})^2}/4} < \varepsilon. \tag{{\color{magenta}$2$}}\label{noise_sensitivity:eq:P_not_L_is_small}
    \]
    Let $\tuple y = (x_1, \dots, x_K)$ and $\tuple z = (x_{K+1}, \dots, x_n)$. We think about drawing $\tuple x$ as first drawing $\tuple y$, and then $\tuple z$. From \eqref{noise_sensitivity:eq:P_not_L_is_small}, we have that the event on the left-hand side of \eqref{noise_sensitivity:eq:P_not_L_is_small}, conditioned on $\tuple y$, can have probability at least $(\varepsilon/\tau)$ for at most a $\tau$-fraction of tuples $\tuple y$. Therefore, there is probability at least $\tau$ of drawing $\tuple y$ such that the event on left-hand side of \eqref{noise_sensitivity:eq:P_L_is_large} holds, and
    \[
        \Pr_{\tuple z} \left[ \left| Q^{\not \subseteq K}(\tuple y \tuple z) \right| \geq \left| Q^{\subseteq K}(\tuple y)\right| \right] < (1/\tau) \cdot \varepsilon.
    \]
    However, note that every such $\tuple y$ gives a $(\varepsilon/\tau)$-\kl{determined} \kl(PTF){restriction} $Q^{K \to \tuple y}$. Therefore, we finish the proof by setting $\beta := 1/\tau$, which is a positive constant depending only on $\lambda$ and $k$.
\end{proof}

\subsection{Main Proof}

We are now in a position to prove \cref{noise_sensitivity:theorem:main_theorem} in full generality. We start with a claim combining \cref{noise_sensitivity:lemma:random_restriction_regular} with \cref{noise_sensitivity:lemma:random_restriction_determined} to deduce a bound on \kl{noise sensitivity} of random \kl(PTF){restrictions} of $\PTF$s.

\begin{claim}\label{noise_sensitivity:claim:aux_for_induction}
  Suppose that $\lambda \in (0,1/2), p \in (\lambda, 1-\lambda)$ and $k \geq 1$. There exist constants $c = c(\lambda, k) > 0, \Delta = \Delta(\lambda, k) > 0$ and $\tau = \tau(\lambda, k) > 0$ such that the following holds. Suppose that $\varepsilon > 0$ and $f \in \PTF_k$ has a degree-$k$ \kl{sign-representation} $Q$ over $\cube{p}$. If $M = \min(K(Q, \varepsilon), c \cdot \log_{1-\varepsilon^2}(\varepsilon/2))$, then
  \[
    \Pr_{\tuple x \sim \cube{p,M}} \left[ \NS[(p)]{ \rest{f}{M}{\tuple x} , 1-\varepsilon} \leq \Delta \cdot \varepsilon^{1/8d} \right] \geq \tau.
  \]
\end{claim}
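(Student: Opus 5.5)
We split according to which term attains the minimum $M$. The two regimes correspond exactly to the two random-restriction lemmas: \cref{noise_sensitivity:lemma:random_restriction_regular} when $M = K(Q,\varepsilon)$, and \cref{noise_sensitivity:lemma:random_restriction_determined} when $M = c\cdot\log_{1-\varepsilon^2}(\varepsilon/2) \le K(Q,\varepsilon)$. We take $c$ to be the constant from \cref{noise_sensitivity:lemma:random_restriction_determined} and $\tau$ to be the minimum of the two $\tau$'s. Throughout, a restriction $\rest{f}{M}{\tuple x}$ with $\tuple x \sim \cube{p,M}$ corresponds to $\rest{Q}{M}{\tuple x'}$ with $\tuple x' \sim \chardist{p,M}$, via $x'_i = \chi_i^{(p)}(x_i)$. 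Also, the restricted polynomial is still multilinear of degree at most $k$ in the $\chi^{(p)}$ variables of the remaining coordinates.

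\textbf{Case $M = K(Q,\varepsilon)$.} By \cref{noise_sensitivity:lemma:random_restriction_regular}, with probability at least $\tau$ the restriction $R = \rest{Q}{K}{\tuple x'}$ is $(\alpha\varepsilon)$-regular. Assume for the moment that $R$ is non-constant; normalizing $R$ then preserves both its sign and its regularity, so $R$ normalized is a degree-$k$ sign-representation of $\rest{f}{M}{\tuple x}$ over $\cube{p}$. We then want to apply \cref{noise_sensitivity:lemma:the_regular_case}. The catch is that this lemma bounds noise sensitivity at $1-\varepsilon'$ where $\varepsilon'$ is the regularity parameter, whereas the claim needs noise rate $1-\varepsilon$.

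\textbf{The main obstacle} is this parameter mismatch. I would resolve it by rerunning the proof of \cref{noise_sensitivity:lemma:the_regular_case} with the two parameters kept separate. The anti-concentration term uses regularity $\alpha\varepsilon$ and gives $\mathcal{O}_{\lambda,k}(\alpha\varepsilon)^{1/8k}$. The Markov term uses noise $\varepsilon$ and gives $2\Delta k$. Since $\alpha = \alpha(\lambda,k)$ is a constant, the total is $\mathcal{O}_{\lambda,k}(\varepsilon^{1/8k})$, which matches the claimed $\Delta\cdot\varepsilon^{1/8d}$ with $d = k$. It remains to handle the degenerate sub-case where $R$ is constant, or where its non-constant part vanishes. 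Then $\rest{f}{M}{\tuple x}$ is constant and its noise sensitivity is $0$, so the event holds trivially.

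\textbf{Case $M = c\log_{1-\varepsilon^2}(\varepsilon/2) \le K(Q,\varepsilon)$.} Here the hypothesis of \cref{noise_sensitivity:lemma:random_restriction_determined} holds, but fixed at $M$ coordinates rather than $K$. I would first check that its proof goes through with $M$ in place of $K$. The only property it uses is the geometric decay $\sigma_{i+1}^2 < (1-\varepsilon^2)\sigma_i^2$ for all $i < M$, and this holds because $M \le K$. Granting this, with probability at least $\tau$ the restriction is $(\beta\varepsilon)$-determined. For such a restriction $g$, say $\Pr[g=1] \le \beta\varepsilon$, a union bound over the two correlated points (each marginally $p$-biased) gives
\[
\NS[(p)]{g,1-\varepsilon} \le \Pr[g(\tuple y)=1] + \Pr[g(\tuple z)=1] \le 2\beta\varepsilon.
\]
Since $\varepsilon \le \varepsilon^{1/8k}$ for $\varepsilon \le 1$, this is at most $\Delta\varepsilon^{1/8k}$ once $\Delta \ge 2\beta$. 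For $\varepsilon > 1$ the bound is trivial after enlarging $\Delta$. Taking $\Delta$ as the maximum of the constants from both cases finishes the claim.
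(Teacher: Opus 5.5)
Your proposal is correct and follows the paper's proof in structure: the same split on which term attains $M$, the same two random-restriction lemmas, and the same $2\beta\varepsilon$ bound in the determined case. Your remarks about renormalizing the restricted polynomial, and about applying the determined-restriction lemma at $M \le K(Q,\varepsilon)$ rather than at $K(Q,\varepsilon)$, spell out points the paper passes over silently. The one local difference is the $\alpha\varepsilon$-versus-$\varepsilon$ mismatch, which the paper resolves without reopening \cref{noise_sensitivity:lemma:the_regular_case}. If $\alpha \le 1$, $(\alpha\varepsilon)$-regularity already implies $\varepsilon$-regularity. If $\alpha > 1$, the paper uses $\mathbf{NS}^{(p)}_{1-\varepsilon}[g] \le \mathbf{NS}^{(p)}_{1-\alpha\varepsilon}[g]$ (noise sensitivity is monotone in the noise rate) and applies the lemma at $\alpha\varepsilon$.
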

\begin{proof}
    \newcommand{\X}[0]{\mathbf{X}}
  Fix $\lambda, p, k, \varepsilon, f, Q$ and $K$. Let $\alpha, \beta, \tau$ be the constants provided by \cref{noise_sensitivity:lemma:random_restriction_regular} and \cref{noise_sensitivity:lemma:random_restriction_determined} for $\lambda$ and $k$ ($\lambda$ is the minimum of the two provided by the lemmas). Let $\tuple x \sim \cube{p, M}$, $\tuple \X = (\X_1, \dots, \X_M) = (\chi_1^{(p)}(x_1), \dots, \chi_M^{(p)}(x_M))$, and $g(\tuple y) = \rest{f}{M}{\tuple x}$. Note that $\rest{Q}{M}{\tuple \X}$ is a degree-$k$ \kl{sign-representation} of $g$ over $\cube{p}$. We consider two cases, depending on the value of $M$:
  \begin{enumerate}[leftmargin=*]
    \item $M = K(P, \varepsilon)$. \cref{noise_sensitivity:lemma:random_restriction_regular} gives that $\rest{Q}{M}{\tuple \X}$, and as a consequence $g$, is $(\alpha \cdot \varepsilon)$-\kl(PTF){regular} with probability at least $\tau$. If $\alpha \leq 1$, then $g$ is in particular $\varepsilon$-\kl(PTF){regular} and by \cref{noise_sensitivity:lemma:the_regular_case}, we have $\NS[(p)]{g, 1-\varepsilon} \leq \mathcal{O}_{\lambda, k}(\varepsilon^{1/8k})$. Otherwise, we have $\alpha > 1$ and $\NS[(p)]{g, 1-\varepsilon} \leq \NS[(p)]{g, 1-\alpha \varepsilon} \leq \mathcal{O}_{\lambda, k}(\varepsilon^{1/8k})$.
    
    \item $M = c \cdot \log_{1-\varepsilon^2}(\varepsilon/2)$. By \cref{noise_sensitivity:lemma:random_restriction_determined}, $\rest{Q}{M}{\tuple \X}$ is $(\beta \cdot \varepsilon)$-\kl{determined} with probability at least $\tau$. Therefore, there is a value $b \in \{0,1\}$ such that $\Pr_{\tuple y}[g(\tuple y) \neq b] \leq (\beta \cdot \varepsilon)$, given $\tuple y \sim \cube{p, n-M}$. If $\tuple z \sim \N[(p)]{\tuple y, 1-\varepsilon}$, then also $\Pr_{\tuple z}[g(\tuple z) \neq b] \leq (\beta \cdot \varepsilon)$ as the marginal distribution of $\tuple z$ is $\cube{p, n-M}$. Therefore, we have $\NS[(p)]{g, 1-\varepsilon} = \Pr[g(\tuple y) \neq g(\tuple z)] \leq 2(\beta \cdot \varepsilon) \leq \mathcal{O}_{\lambda, k}(\varepsilon^{1/8k})$. \qedhere
  \end{enumerate}
\end{proof}

Finally, we prove \cref{noise_sensitivity:theorem:main_theorem} by interpolating the bounds on the noise sensitivity of random restrictions to the whole function.

\noisesensitivitymaintheorem*

\begin{proof}
    \newcommand{\X}[0]{\mathbf{X}}
    Fix $\lambda, p, k$ and $f$ as in the statement. Let $c, \Delta$ and $\tau$ be constants provided by \cref{noise_sensitivity:claim:aux_for_induction} for $\lambda$ and $k$. Furthermore, let $\varepsilon > 0$, $L = c \cdot \log_{1-\varepsilon^2}(\varepsilon/2)$ and $t = \log_{1-\tau}(\varepsilon)$. From the fact that $\ln(1-x) \leq -x$ for $x < 1$, we get $\log_{1-\varepsilon^2}(\varepsilon/2) = \ln(\varepsilon/2)/\ln(1-\varepsilon^2) \geq \ln(2/\varepsilon)/\varepsilon$. Therefore, $(L \cdot t) \geq \Omega_{\lambda, k}(\ln^2(1/\varepsilon)/\varepsilon)$.
    
    We will show that for
    \[
        \delta = \frac{\varepsilon^{1/8k}}{L \cdot t} = \mathcal{O}_{\lambda, k}\left( \frac{\varepsilon^{(8k + 1)/8k} }{\ln^2(1/\varepsilon)} \right),
    \]
    we have 
    \[
        \NS[(p)]{f,1-\delta} = \mathcal{O}_{\lambda, k} \left( \varepsilon^{1/8k} \right). \tag{{\color{magenta}$1$}}\label{noise_sensitivity:eq:last_proof_what_we_show}
    \]
    Observe that $\delta < \varepsilon$ for sufficiently small $\varepsilon > 0$. Before proceeding, we explain how the theorem's statement follows from \eqref{noise_sensitivity:eq:last_proof_what_we_show}. For every sufficiently small $\zeta > 0$, there is $\varepsilon$ such that $\zeta = \varepsilon^{(8k+1)/(8k)}/\ln^2(1/\varepsilon)$ and $\zeta < \varepsilon$. Then, \eqref{noise_sensitivity:eq:last_proof_what_we_show} implies that
    \[
        \NS[(p)]{f, 1-\zeta} \leq \mathcal{O}_{\lambda, k} \left( \ln^{2/(8k+1)}(1/\zeta)  \zeta^{1/(8k+1)} \right) \leq \mathcal{O}_{\lambda, k} \left( \zeta^{1/(8k+2)} \right).
    \]

    We now proceed to proving \eqref{noise_sensitivity:eq:last_proof_what_we_show}. Denote $n = \ar(f)$ and let $Q$ be a degree-$k$ \kl{sign-representation} of $f$. For every $\tuple x \sim \cube{p, n}$, we define a sequence of sets $S_1, S_2, \dots$ as follows. Let $\tuple \X = (\X_1, \dots, \X_n) = (\chi_1(x_1), \dots, \chi_n(x_n))$. We let $S_1 \subseteq [n]$ to be the set of $M_1 \leq L$ coordinates provided by \cref{noise_sensitivity:claim:aux_for_induction} for $Q$ and $\varepsilon$ (independent of $\tuple x$). For $i \geq 1$, we denote $Z_i = S_1 \cup \dots \cup S_i$ and let $S_{i+1}$ to be the set of $M_i \leq L$ coordinates provided by  \cref{noise_sensitivity:claim:aux_for_induction} for $\rest{Q}{Z_i}{\tuple \X}$, which is the \kl(PTF){restriction} of $Q$ obtained by fixing values $\X_i$ according to $\tuple \X$, for every $i \in Z_i$. Finally, we denote $f_i^{(\tuple x)} = \sgn(\rest{Q}{Z_i}{\tuple \X})$. Note that $|Z_i| \leq i \cdot L$ and the definition of $f_i^{(\tuple x)}$ depends only on values $x_j$ such that $j \in Z_i$. 

    We call a tuple $\tuple x$ \emph{good} if there is $i \in [t]$ such that $\NS[(p)]{f^{(\tuple x)}_i, 1-\varepsilon} \leq \Delta \cdot \varepsilon^{1/8k}$. If $\tuple x$ is good, then we denote $i(\tuple x)$ to be the index $i \in [t]$ witnessing this fact. By \cref{noise_sensitivity:claim:aux_for_induction} and the choice of $t$, we have
    \[
        \Pr_{\tuple x \sim \chardist{p, n}} \left[ \tuple x \text{ is not good} \right] \leq (1-\tau)^t = \varepsilon.
    \]
    Let $\tuple x \sim \mu_p^n$ and $\tuple y \sim \N[(p)]{\tuple x, 1-\delta}$. If $\tuple x$ is good, $i = i(\tuple x)$ is defined and $|Z_i| \leq L t$. We have
    \[
        \Pr \left[ \exists \, j \in Z_i :  x_j \neq y_j \right] \leq Lt \cdot \delta = \varepsilon^{1/8k}.
    \]
    Furthermore, under the assumption that for every $j \in Z_i$, $x_j$ is fixed and $x_j = y_j$, the probability that $f(\tuple x) \neq f(\tuple y)$ is exactly $\NS[(p)]{f^{(\tuple x)}_i, 1-\delta} \leq \NS[(p)]{f_i^{(\tuple x)}, 1-\varepsilon} \leq \Delta \cdot \varepsilon^{1/8k}$, because $\delta < \varepsilon$. Combining all above observations, we deduce that if $f(\tuple x) \neq f(\tuple y)$, then either (1) $\tuple x$ is not good, or (2) there is $j \in Z_i$ such that $x_j \neq y_j$, or (3) in the remaining case, the probability that $f(\tuple x) \neq f(\tuple y)$ is at most $\Delta \cdot \varepsilon^{1/8k}$. This gives the final bound, finishing the proof of \eqref{noise_sensitivity:eq:last_proof_what_we_show} and the whole theorem:
    \[
        \NS[(p)]{f, 1-\delta} \leq \varepsilon + \varepsilon^{1/8k} + \Delta \cdot \varepsilon^{1/8k} = \mathcal{O}_{\lambda, k} \left( \varepsilon^{1/8k} \right). \qedhere
    \]
\end{proof}

\section{Ordered PCSPs: omitted proof}\label{appendix:ordered}
Let us first recall the context:
\begin{itemize}
    \item $\varepsilon, p, L, R > 0$ are numbers such that $0 < L < p < R < 1$;
    \item $f \colon \Bool^n \to \Bool$ is an \kl(func){increasing} function such that $\varepsilon \le \E[L]{f}$ and $\E[R]{f} \le 1-\varepsilon$;
    \item there exists an \kl(func){increasing} $J$-junta $h$ such that $|J| \le M = M(\varepsilon)$ and $\Pr_{\tuple x \sim \cube p}[f(\tuple x) \neq h(\tuple x)] \le \nu := \varepsilon^2/4$.
\end{itemize}
\begin{restate}{claim}{claim:a2}
    For any $q \in (L, R)$, there exists a coordinate $j \in J$ such that $\Inf[(q)]{f,j} \ge \delta = \delta(\varepsilon)$.
\end{restate}
\begin{proof}
    Without loss of generality, assume that $q > p$.
    To simplify the presentation of the proof, we will identify any subset $S$ of $[n]$ with its indicator vector $\tuple 1_S \in \Bool^n$.
    In particular, by $S \sim \mu_q$ we denote a random set $S \subseteq [n]$ obtained by first sampling $\tuple x \sim \cube{q,n}$ and outputting the unique set $S$ such that $\tuple x = \tuple 1_S$.

    Let $P_1$ and $P_0$ be sampled independently from $\cube p$ conditioned on $P_1 \subseteq J$ and $P_0 \subseteq [n] \setminus J$.
    Put $P = P_0 \cup P_1$, and observe that $P \sim \cube{p}$.
    We first show that $f(P \cup J) = 1$ with probability at least $1-\varepsilon/4$.
    Note that $f(P) = f(P_0 \cup P_1) \le f(P_0 \cup J)$ by \kl(func){increasingness} of $f$.
    Furthermore $h(P) = h(P_1)$.
    Since $P_0$ and $P_1$ are independent, we have
    \begin{align*}
        \frac{\varepsilon^2}{4} = \nu &\ge \Pr_P[f(P) = 0 \text{ and } h(P) = 1] \\
        &\ge \Pr_{P_0}[f(P_0 \cup J) = 0] \cdot \Pr_{P_1} [h(P_1) = 1].
    \end{align*}
    On the other hand, observe that $\E{h(P)} \ge \E{f(P)} - \nu \ge \varepsilon/2$ because $\E{f(P)} = \E[p]{f} > \E[L]{f} \ge \varepsilon$. Thus
    \[
        \Pr_{P_0}[f(P_0 \cup J) = 0] \cdot \Pr_{P_1} [h(P_1) = 1] \ge \Pr_{P}[f(P \cup J) = 0] \cdot \frac{\varepsilon}{2}.
    \]
    Combining these two inequalities, we obtain $\Pr[f(P \cup J) = 1] \ge 1 - \varepsilon/2$ as needed.
    
    Let $Q \supseteq P$ be obtained from $P$ by adding to it each $i \notin P$ with probability $(q-p)/(1-p)$; observe that $Q \sim \cube{q}$. From $\E[q]{f} < \E[R]{f} \le 1-\varepsilon$, we obtain that $f(Q) = 0$ with probability at least $\varepsilon$. The probability that, in addition, $f(Q \cup J) = 1$ is at least $\varepsilon - \varepsilon/2 = \varepsilon/2$ (by the union bound and \kl(func){increasingness} of $f$).

    To summarize, there is a family $\mathcal F \subseteq \mathcal P([n])$ of measure $\cube{q}(\mathcal F) \ge \varepsilon/2$ such that $f(Q) = 0$ and $f(Q \cup J) = 1$ for each $Q \in \mathcal F$.
    In order to bridge the gap between this and the individual influences, we recall that
    \begin{align*}
    \Inf[(q)]{f,j} &= \Pr_{S \sim \cube{q}}\left[ f(S) = 0 \text{ and } f(S \cup \{j\}) = 1 \right] \\
    &= \cube{q}\left( \left\{S \subseteq [n] \mid f(S) = 0 \text{ and } f(S \cup \{j\}) = 1 \right\} \right).
    \end{align*}
    Denote the family in the last term by $\mathcal S_j$.
    Every $Q \in \mathcal F$ provides at least one pair $(S, j)$ such that $Q \subseteq S \subseteq Q \cup J$, and $j \in J \setminus S$, and $S \in \mathcal S_j$. Observe that the measure
    \[
    \cube{q}(S) \ge \cube{q}(Q) \cdot \left(\frac{q}{1-q}\right)^{|S \setminus Q|} \ge \cube{q}(Q) \cdot \left( \frac{\varepsilon}{2} \right)^M
    \]
    therefore
    \[
    \sum_{j \in J} \Inf[(q)]{f,j} = \sum_{j \in J} \cube{q}(\mathcal S_j) \ge \cube{q}(\mathcal F) \cdot \left( \frac{\varepsilon}{2} \right)^M \ge \left( \frac{\varepsilon}{2} \right)^{M+1}
    \]
    which means that some $j \in J$ satisfies
    \[
    \Inf[(q)]{f,j} \ge \frac{1}{M} \cdot \left( \frac{\varepsilon}{2} \right)^{M+1}
    \]
    We define $\delta$ to be the expression on the right-hand side. Note that it is a constant depending only on $\varepsilon$.
\end{proof}

\section{Unate PCSPs: technical details}\label{appendix:unate}
We first recall the definition of $\E[p,q]{f}$. Given a \kl{unate} function $f : \Bool^n \to \Bool$ and numbers $p, q \in [0,1]$, let $\tuple r \in [0,1]^n$ be the tuple such that $r_i = p$ if $i$ is an \kl{increasing} coordinate, and $r_i = q$ otherwise. We will use $\E[p,q]{f}$ as a shortcut for $\E[\mu_{\tuple r}]{f}$.
In a similar way, we define
\[
\Inf[(p,q)]{f,i} = \E[\tuple x \sim \mu_{\tuple r}]{ \left( f(\tuple x) - f(\tuple x \oplus i) \right)^2 },
\]
and $\I[(p,q)]{f} = \sum_i \Inf[(p,q)]{f,i}$.

Recall the \kl{Margulis-Russo formula} for \kl(func){increasing} functions.
For \kl{unate} functions, a similar reasoning gives the following connection between $\E[p,q]{\cdot}$ and $\I[(p,q)]{\cdot}$:

\begin{restatable}{lemma}{unatepartialderiv}
    \label{lem:partderiv}
    Let $f$ be a \kl{unate} function. At any $(p, q) \in [0,1]^2$ the partial derivatives are
    \begin{align*}
    \frac{\partial}{\partial p} \E[p,q]{f} &= \sum_{\text{incr}~i} \Inf[(p,q)]{f,i} \\
    -\frac{\partial}{\partial q} \E[p,q]{f} &= \sum_{\text{decr}~i} \Inf[(p,q)]{f,i}
    \end{align*}
    Moreover, the map $(p,q) \mapsto \E[p,q]{f}$ is continuous on $[0,1]^2$ and smooth
    \footnote{By smooth we mean infinitely differentiable.} on $(0,1)^2$.
\end{restatable}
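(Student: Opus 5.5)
The plan is to write $\E[p,q]{f}$ as an explicit polynomial in $(p,q)$ and differentiate it term by term. Let $U = \domainup{f}$ and $D = [n]\setminus U$, so that $D \subseteq \domaindown{f}$. (Coordinates on which $f$ does not depend lie in both sets; we place them in $U$, and their influence is $0$, so the choice does not affect either formula.) Then
\[
\E[p,q]{f} = \sum_{\tuple x \in \Bool^n} f(\tuple x) \prod_{i \in U} p^{x_i}(1-p)^{1-x_i} \prod_{i \in D} q^{x_i}(1-q)^{1-x_i}.
\]
This is a polynomial in $(p,q)$. Hence it is continuous on $[0,1]^2$ and smooth (indeed analytic) on all of $\mathbb{R}^2$, which settles the regularity claim immediately. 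The two derivative identities, including those at boundary points of $[0,1]^2$, can then be read off from the polynomial itself.

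For $\partial/\partial p$, I will use the product structure. Treat each $r_i$ as an independent variable, differentiate with respect to one $r_i$, and then apply the chain rule: $\partial/\partial p = \sum_{i\in U}\partial/\partial r_i$ and $\partial/\partial q = \sum_{i\in D}\partial/\partial r_i$. For a single coordinate, $\E[\mu_{\tuple r}]{f}$ is affine in $r_i$:
\[
\E[\mu_{\tuple r}]{f} = r_i\, \E{f(\tuple x^{i\to 1})} + (1-r_i)\, \E{f(\tuple x^{i\to 0})},
\]
where both expectations are over the remaining coordinates and do not depend on $r_i$. It follows that $\partial_{r_i}\E[\mu_{\tuple r}]{f} = \E{f(\tuple x^{i\to1}) - f(\tuple x^{i\to0})}$.

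Now unateness enters, exactly as in the proof of \cref{unate:lemma:total_influence_is_sqrt}. The difference $f(\tuple x^{i\to1}) - f(\tuple x^{i\to0})$ has constant sign, nonnegative for increasing $i$ and nonpositive for decreasing $i$. Since $f$ is Boolean, its absolute value equals $(f(\tuple x) - f(\tuple x\oplus i))^2$. This gives $\partial_{r_i}\E[\mu_{\tuple r}]{f} = \pm\Inf[\mu_{\tuple r}]{f,i}$, with the plus sign for $i \in U$ and the minus sign for $i \in D$. Summing over $U$ yields the formula for $\partial/\partial p$, and summing over $D$ yields the formula for $-\partial/\partial q$.

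I do not expect a real obstacle. The only points needing care are the bookkeeping for coordinates that are both increasing and decreasing (handled by the placement above, since their influence vanishes) and the meaning of the derivatives at the boundary of the square. For the latter, I interpret them as derivatives of the polynomial, equivalently as one-sided derivatives, and the identities hold there as well because both sides are polynomials.
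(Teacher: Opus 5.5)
Your proof is correct and follows the paper's argument: write the product measure $\mu_{\tuple r}$ so that the expectation is affine in each $r_j$, identify $\partial/\partial r_j$ with the signed influence using unateness, and sum via the chain rule, with continuity and smoothness coming from the expectation being a polynomial in $(p,q)$. Your explicit treatment of coordinates that are both increasing and decreasing, and of derivatives on the boundary of the square, tidies up two points the paper leaves implicit.
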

\noindent We present a proof that is a direct generalization of the \kl{Margulis-Russo formula} proof from \cite{kalai2004shapley}.
\begin{proof}
    Let $n$ be the arity of $f$. Consider the general product probability distribution $\cube{\tuple p,n}$; to avoid double subscripts, we will sometimes denote it with $\mu(\tuple p)$. The measure of a tuple $\tuple x$ is
    
    \[
    \cube{\tuple p,n}(\tuple x) = \prod_{i : x_i = 1} p_i \prod_{i : x_i = 0} (1-p_i).
    \]
    
    Our goal is to compute the partial derivatives of $\E[\mu(\tuple p)]{f}$ w.r.t. all $p_j$'s.
    Observe that
    \begin{align*}
        \E[\mu(\tuple p)]{f} &= \sum_{\tuple x}\left\{\cube{\tuple p,n}(\tuple x) \,\,\big\lvert\,\, f(\tuple x) = 1\right\} = \\
        &= \sum_{\tuple x}\left\{ f(\tuple x) \cdot \cube{\tuple p,n}(\tuple x) \right\} = \\
        &= \sum_{\tuple x}\left\{ f(\tuple x) \cdot \prod_{i : x_i = 1} p_i \prod_{i : x_i = 0} (1-p_i) \right\}.
    \end{align*}
    Fix some $j \in [n]$. We continue the chain of equalities
    \begin{align*}
        &= \sum_{\tuple x : x_j = 0}\left\{ \Big((1-p_j)f(\tuple x) + p_j f(\tuple x \oplus j)\Big) \cdot \prod_{i : x_i = 1} p_i \prod_{i \neq j : x_i = 0} (1-p_i)\right\}.
    \end{align*}
    Note that for those $\tuple x$ that satisfy $f(\tuple x) = f(\tuple x \oplus j)$ the summand does not depend on $p_j$. Now
    \begin{align*}
    \frac{\partial}{\partial p_j} \E[\mu(\tuple p)]{f} &
            = \frac{\partial}{\partial p_j} \sum_{\tuple x : x_j = 0}\Bigg\{ \Big((1-p_j)f(\tuple x) + p_j f(\tuple x \oplus j)\Big) 
            \cdot\prod_{i : x_i = 1} p_i \prod_{i \neq j : x_i = 0} (1-p_i) \Bigg\}\\
    &
            = \sum_{\tuple x : x_j = 0}\Bigg\{ \frac{\partial}{\partial p_j}\Big((1-p_j)f(\tuple x) + p_j f(\tuple x \oplus j)\Big)
            \cdot\prod_{i : x_i = 1} p_i \prod_{i \neq j : x_i = 0} (1-p_i) \Bigg\}\\
    &= \sum_{\tuple x : x_j = 0}\left\{ \Big(f(\tuple x \oplus j) - f(\tuple x)\Big) \cdot \prod_{i : x_i = 1} p_i \prod_{i \neq j : x_i = 0} (1-p_i) \right\} \\
    &= \pm\Inf[\mu(\tuple p)]{f,j}
    \end{align*}
    where the sign depends only on whether $j$ is \kl{increasing} or \kl{decreasing}. The first part of the lemma now follows from the chain rule.

    For the second part, note that $\E[p,q]{f}$ is a polynomial of $p$ and $q$ of degree at most $n$.
    Polynomials are known to be smooth on their entire domain $\mathbb R^2$.
    Therefore, the conclusion follows immediately.
\end{proof}
\noindent
An immediate corollary of \cref{lem:partderiv} is that $\E[p,q]{f}$ is increasing in $p$ and decreasing in $q$:
\begin{corollary}\label{cor:Emonotonicity}
    Let $f$ be a \kl{unate} function. If $p \le p'$ and $q \ge q'$, then
    \[
    \E[p,q]{f} \le \E[p',q']{f}.
    \]
\end{corollary}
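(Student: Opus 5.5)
The argument rests entirely on \cref{lem:partderiv}: we integrate its sign information along a monotone path in $[0,1]^2$. Fix $p \le p'$ and $q \ge q'$. We pass from $(p,q)$ to $(p',q')$ in two axis-parallel moves, first to $(p',q)$ and then to $(p',q')$, and show that $\E[\cdot,\cdot]{f}$ does not decrease along either move.

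\textbf{First move.} Along $s \mapsto (s,q)$ with $s \in [p,p']$, the function $s \mapsto \E[s,q]{f}$ is a polynomial, hence continuous on $[p,p']$ and differentiable in the interior. By \cref{lem:partderiv}, its derivative is $\sum_{\text{incr}~i}\Inf[(s,q)]{f,i}$. This is a sum of influences, each an expectation of a square, so it is nonnegative. The mean value theorem then gives $\E[p,q]{f} \le \E[p',q]{f}$.

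\textbf{Second move.} Along $t \mapsto (p',t)$ with $t$ running from $q$ down to $q'$, \cref{lem:partderiv} gives $-\frac{\partial}{\partial t}\E[p',t]{f} = \sum_{\text{decr}~i}\Inf[(p',t)]{f,i} \ge 0$. So $t \mapsto \E[p',t]{f}$ is nonincreasing, and since $q' \le q$ we get $\E[p',q]{f} \le \E[p',q']{f}$. Chaining the two inequalities yields the claim.

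\textbf{Boundary cases.} The only subtlety is that $p,q$ may lie on the boundary of $[0,1]$, where the lemma only asserts continuity. Because $\E[p,q]{f}$ is a polynomial in $(p,q)$, the derivative formula from the proof of \cref{lem:partderiv} in fact holds on all of $[0,1]^2$, and it suffices to apply it in the open interior of each segment. Continuity at the endpoints then closes the argument.

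A coordinate that is both increasing and decreasing is non-essential, so its influence is $0$ and it contributes to neither sum. Which distribution $\mu_p$ or $\mu_q$ is assigned to such a coordinate therefore does not affect the expectation, and no extra care is needed. I expect no real obstacle here; the only point requiring attention is justifying the derivative formula up to the boundary.
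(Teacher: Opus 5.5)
Your proof is correct and is exactly the argument the paper intends: it records the corollary as an immediate consequence of \cref{lem:partderiv}, since the partial derivative in $p$ is a sum of influences over the increasing coordinates and the partial derivative in $q$ is minus a sum over the decreasing ones. One small correction: \cref{lem:partderiv} already asserts the derivative formulas at every point of $[0,1]^2$, not only continuity on the boundary, so your boundary paragraph is unnecessary, though it does no harm.
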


\subsection{Proofs omitted in Tractability}

For \kl{unate} functions $f$,
we consider $\varepsilon$-level sets of $\E[p,q]{f}$, that is
\[
\curve{\varepsilon}{f} = \{(p, q) \in [0,1]^2 : \E[p,q]{f} = \varepsilon\}.
\]
Suppose $0 < \varepsilon < 1$.
If $f$ is a constant function, then $\curve{\varepsilon}{f}$ is the empty set.
Otherwise, we will argue that it is a smooth curve and prove several important properties\footnote{We will only consider \kl{idempotent} functions, because they are sufficient for the tractability proof. However, the level sets for any \kl{unate} function also satisfy similar properties}.
Note that $\curve{\varepsilon}{f}$ is the analog of ``the unique $p$ such that $f(p) = \varepsilon$'' for \kl(func){increasing} functions.
\begin{lemma}\label{lem:curve}
    Let $f$ be an \kl{idempotent} \kl{unate} function. For any $\varepsilon \in (0,1)$, the set $\curve{\varepsilon}{f}$ is a smooth monotonic curve from the bottom edge to the top edge of the unit square.
\end{lemma}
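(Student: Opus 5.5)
The plan is to study the level set through the function $F(p,q) = \E[p,q]{f}$, which by \cref{lem:partderiv} is a polynomial in $(p,q)$, hence smooth on $(0,1)^2$ and continuous on $[0,1]^2$. By \cref{cor:Emonotonicity} it is non-decreasing in $p$ and non-increasing in $q$. We first pin down the boundary values. On the bottom edge $q=0$, every decreasing coordinate is deterministically $0$ and every increasing coordinate is sampled from $\cube{p}$.

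Idempotency gives $F(0,0) = f(\tuple 0) = 0$ and $F(1,1) = f(\tuple 1) = 1$. The remaining corners are less clear, so the strategy should not rely on them. Instead, for each fixed $q$, the map $p \mapsto F(p,q)$ is monotone. I would rather parametrize the curve by $t$ along the anti-diagonal direction, or show that for each horizontal line the level set meets it in one point, i.e. in a single point or a segment.

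The cleanest route is to show strict monotonicity along the direction $(1,-1)$, or strict monotonicity in $p$ alone, in the interior. By \cref{lem:partderiv}, $\partial_p F = \sum_{\text{incr } i} \Inf[(p,q)]{f,i}$. For $(p,q) \in (0,1)^2$, every input has positive measure, so this sum is strictly positive as soon as some increasing coordinate is essential. Likewise $-\partial_q F > 0$ whenever some essential coordinate is decreasing. Since $f$ is idempotent it is non-constant, so some coordinate is essential. This gives a case split:

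1. If $f$ has an essential increasing coordinate, then $\partial_p F > 0$ on $(0,1)^2$.
2. Otherwise all essential coordinates are decreasing. Then $f$ is antitone, so $f(\tuple 0) \ge f(\tuple 1)$, which contradicts idempotency.

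Hence the first case always holds, and by the implicit function theorem $\curve{\varepsilon}{f} \cap (0,1)^2$ is locally the graph of a smooth function $p = \phi(q)$.

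Next, monotonicity of the curve. Along the curve $\phi'(q) = -\partial_q F / \partial_p F \ge 0$, so $\phi$ is non-decreasing. This is the sense of "monotonic" we need: as $q$ increases, so does $p$, matching the curve $\mathbf T$ from $(0,0)$ to $(1,1)$ in \cref{lem:regions-to-curve}.

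To see that the curve spans from the bottom edge to the top edge, fix $q \in [0,1]$ and consider $g_q(p) = F(p,q)$. It is continuous and, by the argument above, strictly increasing for $q \in (0,1)$. Its endpoint values are $g_q(0) = \E{f(\tuple 0_{\uparrow}\,\tuple y_{\downarrow})}$ and $g_q(1)$. There is then at most one $p$ with $g_q(p)=\varepsilon$. The domain of $\phi$ is the set of $q$ with $g_q(0) \le \varepsilon \le g_q(1)$, which is an interval by monotonicity of $F$ in $q$.

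The main obstacle is showing that this interval is all of $[0,1]$, or else that the curve exits through the left/right edges in a way still consistent with "bottom to top". At $q=0$ we have $g_0(0) = F(0,0) = 0 < \varepsilon$, and at $q=1$ we have $g_1(1) = 1 > \varepsilon$. However, $g_0(1)$ or $g_1(0)$ might lie on the wrong side of $\varepsilon$, in which case the level set hits a vertical edge. I would handle this by interpreting the curve together with the relevant boundary segments, or better, by parametrizing along $t \mapsto (p,q)$ with $p-q$ fixed. Along each line $q = p - s$, the function $F$ is strictly increasing in the parameter, going from a value at the lower-left end to a value at the upper-right end. Continuity of the level set then follows from the implicit function theorem applied with the derivative $\partial_p F - \partial_q F > 0$. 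Using these anti-diagonal coordinates gives a single graph over $s$, which is how I would finally present the smooth monotonic curve connecting the bottom and top edges.
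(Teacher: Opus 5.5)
Your core mechanism is the paper's: show $\partial_p F>0$ from unateness plus idempotency, then use the implicit function theorem to write the level set as $p=\phi(q)$ with $\phi'=-\partial_q F/\partial_p F\ge 0$.

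\textbf{The gap.} The statement's conclusion that the curve runs \emph{from the bottom edge to the top edge} is never proved. You even suggest it may fail, because $F(1,0)$ or $F(0,1)$ ``might lie on the wrong side of $\varepsilon$''. They cannot. By \cref{cor:Emonotonicity} and idempotency, for every $q\in[0,1]$ we have $F(0,q)\le F(0,0)=f(\tuple 0)=0$ and $F(1,q)\ge F(1,1)=f(\tuple 1)=1$. So $F\equiv 0$ on the left edge and $F\equiv 1$ on the right edge, and every $g_q$ runs from $0$ to $1$. For $\varepsilon\in(0,1)$ the level set therefore never touches a vertical edge, and it meets every horizontal line $q=c$, $c\in[0,1]$. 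So the domain of $\phi$ is all of $[0,1]$. This is the fact behind the paper's conclusion that $\curve{\varepsilon}{f}$ is the graph of a smooth increasing function of $q\in[0,1]$. You were one step away, having already noted that $F$ is non-increasing in $q$ and that $g_0(0)=0$.

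\textbf{The workaround does not repair this.} Lines with $p-q$ fixed have direction $(1,1)$, and $F$ need not be monotone along them. For the unate, idempotent Almost Negation function $\mathrm{an}^{(n)}$, $F(p,p)$ is its $p$-biased expectation, whose derivative at $p=1/2$ is $n/2^{n-2}-1<0$ for large $n$. The quantity $\partial_pF-\partial_qF$ you invoke is the derivative along $(1,-1)$, i.e.\ along lines with $p+q$ fixed. Even after that correction, a graph over $p+q$ does not tell you which edges the curve ends on, so you would again need the edge values above.

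\textbf{A smaller point: the endpoints.} To apply the implicit function theorem at the endpoints, and so get smoothness on the closed curve, you need $\partial_pF>0$ also for $q\in\{0,1\}$. Your justification that ``every input has positive measure'' fails there. An essential increasing coordinate can have zero influence on these edges: $x_2$ in $\max\{x_1,x_2(1-y)\}$ at $q=1$ is an example. The paper handles this with an explicit pivot of positive measure. Equivalently, at $q=0$ (resp.\ $q=1$), $p\mapsto F(p,q)$ is the $p$-biased expectation of the increasing function obtained by fixing the decreasing coordinates to $0$ (resp.\ $1$). That function is non-constant, since it equals $0$ at $\tuple 0$ and $1$ at $\tuple 1$.
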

\begin{proof}    
    By \cref{lem:partderiv} both partial derivatives of $\E[p,q]{f}$ exist and are non-negative. We claim that one of them is always strictly positive: 
    \begin{claim}
        For any $(p,q) \in (0,1)\times[0,1]$, the partial derivative $\partial\E[p,q]{f}/\partial p > 0$.
    \end{claim}
    \begin{proof}
        Let $\tuple x \in \Bool^n$ be defined as
        \[
        x_i = \begin{cases}
            0, &\quad\text{if $i$ is \kl{increasing} or $q=0$} \\
            1, &\quad\text{if $i$ is \kl{decreasing} and $q > 0$}
        \end{cases}
        \]
        Let $S$ be any maximal set of \kl{increasing} coordinates of $f$ such that $f(\tuple x \oplus S) = 0$.
        Note that there exists some $j \in \domainup{f}\setminus S$ because $f(\tuple x \oplus \domainup{f}) \ge f(1, \dots, 1) = 1$.
        Now $\tuple x \oplus S$ witnesses $\Inf[(p,q)]{f,j} > 0$. Consequently
        \[
        \frac{\partial}{\partial p} \E[p,q]{f} = \sum_{\text{incr}~i} \Inf[(p,q)]{f,i} > 0 \qedhere
        \]
    \end{proof}
    We apply the Implicit Function Theorem to the two-variable function $F = [0,1]^2 \ni (p, q) \mapsto \E[p,q]{f} \in [0,1]$.
    Since $F$ is smooth by \cref{lem:partderiv}, we conclude that $\curve{\varepsilon}{f}$ is a graph of a smooth increasing function of $q \in [0,1]$.
\end{proof}

It immediately follows that the length of the curve $\curve{\varepsilon}{f}$ is bounded by 2:
\begin{observation}
    Let $f$ be an \kl{idempotent} \kl{unate} function. Then $L(\curve{\varepsilon}{f}) \le 2$ for any $\varepsilon \in (0,1)$.
\end{observation}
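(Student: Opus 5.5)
By \cref{lem:curve}, $\curve{\varepsilon}{f}$ is the graph of a monotonic function $q \mapsto p(q)$ with $q$ ranging over $[0,1]$ and values in $[0,1]$, so I would bound its length through the variation of this parametrization. Concretely, I write $\curve{\varepsilon}{f} = \{(\phi(q), q) : q \in [0,1]\}$, where $\phi : [0,1] \to [0,1]$ is smooth and non-decreasing; this is exactly the form produced by the Implicit Function Theorem in the proof of \cref{lem:curve}. For a monotone function on an interval, the arc length of its graph is at most the sum of its total variation in each coordinate. The plan is to make this precise and then read off the bound.

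Take any partition $0 = q_0 < q_1 < \dots < q_m = 1$. The polygonal approximation has length
\[
\sum_{k} \sqrt{(\phi(q_k)-\phi(q_{k-1}))^2 + (q_k - q_{k-1})^2} \le \sum_k \big(|\phi(q_k)-\phi(q_{k-1})| + (q_k-q_{k-1})\big),
\]
using $\sqrt{a^2+b^2} \le |a|+|b|$. Since $\phi$ is monotone, the first part of the right-hand side telescopes to $|\phi(1)-\phi(0)| \le 1$, and the second part telescopes to $1$. Every inscribed polygon therefore has length at most $2$, and taking the supremum gives $L(\curve{\varepsilon}{f}) \le 2$. Since $\phi$ is smooth, one could equally use the integral form $\int_0^1 \sqrt{1+\phi'(q)^2}\,dq \le \int_0^1 (1+\phi'(q))\,dq \le 2$.

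The only point needing care is that the curve genuinely spans $q\in[0,1]$ as a single graph with $\phi$ monotone. This is exactly what \cref{lem:curve} provides: the strict positivity of $\partial_p \E[p,q]{f}$ together with the non-negativity of $-\partial_q$, via \cref{lem:partderiv}, gives $\phi' \ge 0$. Once that is granted, there is no real obstacle and the bound follows at once.
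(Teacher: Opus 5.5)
Your proof is correct and follows the paper's own argument: bound the arc-length element using $\sqrt{a^2+b^2}\le |a|+|b|$, then use monotonicity to telescope each coordinate's total variation to at most $1$. The only cosmetic difference is that you work with inscribed polygons (or the graph parametrization $q\mapsto(\phi(q),q)$), which needs only continuity and monotonicity of $\phi$. The paper instead integrates $\sqrt{p'(t)^2+q'(t)^2}$ along an arbitrary smooth parametrization.
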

\begin{proof}
    \sloppy Fix a parameterization of $\curve{\varepsilon}{f}\colon [0,1] \ni t \mapsto (p(t), q(t)) \in [0,1]^2$. We have
    \begin{align*}
    L(\curve{\varepsilon}{f}) &= \int_0^1 \sqrt{\left(\frac{d}{dt}p(t)\right)^2 + \left(\frac{d}{dt}q(t)\right)^2}dt \\
    &\le \int_0^1 \left( \left|\frac{d}{dt}p(t)\right| + \left|\frac{d}{dt}q(t)\right| \right)dt &&\text{(monotonicity)}\\
    &= |p(1) - p(0)| + |q(1) - q(0)| \\
    &\le 1 + 1 = 2. &&\qedhere
    \end{align*}
\end{proof}

In the tractability proof, we rely on the following corollary of the fundamental Arzelà-Ascoli theorem when arguing uniform convergence of $1/2$-level sets:
\begin{theorem}[Theorem 2.5.14 in \cite{burago2001metricgeom}]\label{th:arzela-ascoli}
    Let $S$ be an infinite set of curves in $\mathbb{R}^2$. If there exists $N$ that bounds the length of every curve in $S$ from above, then $S$ contains a uniformly convergent sequence.
\end{theorem}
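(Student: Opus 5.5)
The plan is to reduce the statement to the classical Arzelà–Ascoli theorem for maps $[0,1]\to\mathbb R^2$. Curves are only defined up to reparameterization, so I would fix a convenient parameterization for each one. Take a curve $\mathbf C\in S$ of length $L(\mathbf C)\le N$ and reparameterize it proportionally to arc length on $[0,1]$. This gives a map $\gamma_{\mathbf C}\colon[0,1]\to\mathbb R^2$ with
\[
\|\gamma_{\mathbf C}(s)-\gamma_{\mathbf C}(t)\|\le L(\mathbf C)\,|s-t|\le N|s-t| .
\]
If a curve has constant stretches, I would first collapse them; a degenerate curve is then a constant map. After this step the family $\{\gamma_{\mathbf C}\}$ is uniformly Lipschitz with constant $N$, and hence equicontinuous.

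Equicontinuity alone is not enough: Arzelà–Ascoli also needs the family to be pointwise (equivalently, uniformly) bounded. As literally written the statement omits this, and a family of unit segments translated off to infinity has no convergent subsequence. So I would state the needed hypothesis explicitly: all curves lie in a fixed compact set, as in the setting of \cite{burago2001metricgeom}, where the ambient space is compact. This hypothesis holds in our application, since every curve $\curve{\varepsilon}{f}$ lies in $[0,1]^2$ and has length at most $2$ by the preceding observation. An alternative sufficient condition is that the starting points $\gamma_{\mathbf C}(0)$ are bounded, because together with the length bound this confines every curve to a fixed ball.

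With uniform boundedness and equicontinuity in hand, I would apply Arzelà–Ascoli on the compact interval $[0,1]$. To extract the subsequence by hand, take a diagonal subsequence converging at every rational $t\in[0,1]$; this is possible because each sequence of values is bounded in $\mathbb R^2$. Then upgrade to uniform convergence using the uniform Lipschitz bound and a finite $\eta$-net of rationals. The limit map $\gamma$ is again $N$-Lipschitz, so it defines a curve of length at most $N$; lower semicontinuity of length gives the same bound. Hence $S$, being infinite, contains a sequence of distinct curves whose parameterizations $\gamma_{\mathbf C_k}$ converge uniformly, which is what the statement asserts.

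The only real subtlety is the boundedness caveat, together with the choice of parameterization: uniform convergence of curves has to be understood through some parameterization, and the constant-speed one is what makes equicontinuity automatic. The rest is the standard diagonal argument.
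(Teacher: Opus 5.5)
Your proposal is correct and is essentially the standard proof of the cited result (Burago--Burago--Ivanov, Thm.~2.5.14), which the paper invokes without proof. Reparametrize each curve proportionally to arc length so that the family is uniformly $N$-Lipschitz on $[0,1]$, then extract a uniformly convergent subsequence by the diagonal (Arzel\`a--Ascoli) argument. Your boundedness caveat is also right: the original theorem is stated for curves in a compact metric space, the paper's restatement in $\mathbb R^2$ drops this hypothesis (translated segments are a counterexample), and the omission is harmless in the application because every $\mathbf T_n$ lies in $[0,1]^2$.
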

\noindent therefore we shall define uniform convergence now.

\begin{definition}[Uniform convergence]
    A sequence of curves $\{\gamma_n\}_{n \in \mathbb N}$, where each $\gamma_n : [0,1] \to \mathbb R^2$, is said to converge uniformly to a curve $\gamma : [0,1] \to \mathbb R^2$ if
    \[
    \sup_{t \in [0,1]} \norm{\gamma_n(t) - \gamma(t)} \to 0 \quad\text{as } n\to\infty
    \]
\end{definition}
\noindent Uniform convergence implies (but is not equivalent to) pointwise convergence:
\begin{definition}[Pointwise convergence]
    A sequence of curves $\{\gamma_n\}_{n \in \mathbb N}$, where each $\gamma_n : [0,1] \to \mathbb R^2$, is said to converge pointwise to a curve $\gamma : [0,1] \to \mathbb R^2$ if
    \[
    \forall{t \in [0,1]}~\lim_{n\to\infty} \norm{\gamma_n(t) - \gamma(t)} = 0
    \]
\end{definition}

We are ready to approach the tractability proof for \kl{Unate PCSPs}.
Recall that the proof consists of three big steps. We first prove these steps one by one, and combine them at the end.

\begin{restate}{lemma}{lem:regions-to-curve}
    Let $\minion M$ be an \kl{idempotent} \kl{minion} such that
    \[
    \exists{\varepsilon > 0}~\forall{f \in \minion M}~: \left\lvert \left\{ p : \E[p]{f} \in (\varepsilon, 1-\varepsilon)\right\} \right\rvert \ge \varepsilon
    \]
    Then there exist a sequence of functions $\{f_n\}_{n \in \mathbb N} \subseteq \minion M$ and a monotonic curve $\mathbf T$ from $(0,0)$ to $(1,1)$ such that
    \[
    \forall{P \in [0,1]^2\setminus \mathbf T}:~ \lim_{n\to\infty} \E[P]{f_n} = \begin{cases}
        1, &\quad\text{if } P > \mathbf T \\
        0, &\quad\text{if } P < \mathbf T
    \end{cases}
    \]
\end{restate}
\begin{proof}
    For each $n \in \mathbb{N}$, let $f_n \in \minion M$ be the function obtained from the assumption with $\varepsilon = 1/n$.
    Let $\mathbf T_n$ be the $1/2$-level set of $\E[p,q]{f_n}$, that is
    \[
    \mathbf T_n = \left\{(p, q) \in [0,1]^2 \,\,\Big\lvert\,\, \E[p,q]{f_n} = \frac{1}{2}\right\}.
    \]

    As observed in \cref{lem:curve}, every $\mathbf T_n$ is a monotonic curve from the bottom edge to the top edge of the unit square.
    Consequently, the set of curves $\{\mathbf T_n\}_n$ satisfies the assumption of \cref{th:arzela-ascoli}.
    
    Let $(n_k)_k$ be a sequence such that $(\mathbf T_{n_k})_k$ is uniformly convergent. In order to avoid double indexing, we replace the sequences $(\mathbf T_n)_n$, $(f_n)_n$, and $(\varepsilon_n)_n$ by the respective subsequences indexed by $(n_k)_k$. Denote by $\gamma$ the curve to which $(\mathbf T_n)_n$ converges uniformly.
    Note that $\gamma$ is a curve from the bottom edge to the top edge of the unit square, due to pointwise convergence.
    \begin{fact}
        A uniformly convergent sequence of monotonic curves converges to a monotonic curve.
    \end{fact}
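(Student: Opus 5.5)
We need to prove the Fact: a uniformly convergent sequence of monotonic curves converges to a monotonic curve. Here a monotonic curve means one along which both coordinates are non-decreasing, i.e. $t \le s$ implies $p(t) \le p(s)$ and $q(t) \le q(s)$. This is the form the curves $\mathbf T_n$ from \cref{lem:curve} take, being graphs of increasing functions of $q$.

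The plan is to pass the coordinatewise inequalities to the limit pointwise. Write $\gamma_n(t) = (p_n(t), q_n(t))$ and $\gamma(t) = (p(t), q(t))$, with $\gamma_n \to \gamma$ uniformly on $[0,1]$. Fix $t \le s$ in $[0,1]$. Monotonicity of each $\gamma_n$ gives $p_n(t) \le p_n(s)$ and $q_n(t) \le q_n(s)$ for every $n$.

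Uniform convergence implies pointwise convergence, and each coordinate is dominated by the Euclidean norm. Hence $p_n(t) \to p(t)$, $p_n(s) \to p(s)$, and likewise for the $q$-coordinates. Since weak inequalities are preserved under limits of real sequences, we get $p(t) \le p(s)$ and $q(t) \le q(s)$. So $\gamma$ is monotonic.

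It remains to check that $\gamma$ is a genuine curve. It is continuous because it is the uniform limit of the continuous maps $\gamma_n$. Its image lies in $[0,1]^2$ because the square is closed.

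There is one subtlety, which is the main obstacle if any. The curves must be parameterized on a common interval $[0,1]$ compatible with the Arzelà–Ascoli extraction in \cref{th:arzela-ascoli}. Uniform convergence is then meant for those parameterizations, and monotonicity of a curve is a property of its image traced in order. To handle this, I would fix for each $\mathbf T_n$ a monotone parameterization, for instance proportional to arc length, whose length is at most $2$. The Arzelà–Ascoli limit is taken for these parameterizations, so the argument above applies verbatim.

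The limit may be constant on subintervals, i.e. non-injective. This is harmless, since "monotonic" only requires non-strict inequalities.
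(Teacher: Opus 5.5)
Your proof is correct. The paper states this as a Fact and gives no proof, so your argument is exactly the routine verification that is being omitted: you fix constant-speed parametrizations of the $\mathbf T_n$ before invoking Arzel\`a--Ascoli, and these are $2$-Lipschitz and still coordinatewise non-decreasing, so the extracted subsequence converges uniformly as monotone parametrizations; the weak inequalities $p_n(t)\le p_n(s)$ and $q_n(t)\le q_n(s)$ then pass to the pointwise limit. The same argument also works for a parametrization-free notion of monotonicity, where the image is a chain in the coordinatewise order: for each fixed pair $t,s$, pass to a subsequence along which the order between $\gamma_n(t)$ and $\gamma_n(s)$ is constant.
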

    We let $\mathbf T$ be the curve $\gamma$ with the bottom endpoint extended to the point $(0,0)$ and the top endpoint extended to the point $(1,1)$.
    Formally, $\mathbf T$ is a piecewise continuous curve obtained by concatenating two horizontal linear segments with $\gamma$ on both sides.
    Observe that $\mathbf T$ is a monotonic curve.

    \sloppy Let $P \in [0,1]^2 \setminus \mathbf T$ be a point. It remains to prove that $\lim_{n\to\infty} \E[P]{f_n} \in \{0, 1\}$ depending on the location of $P$ w.r.t. the curve $\mathbf T$.

    \begin{claim}
        If $P < \mathbf T$, then $\lim_{n\to\infty} \E[P]{f_n} = 0$.
    \end{claim}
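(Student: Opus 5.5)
Fix a point $P = (p_0,q_0) \in [0,1]^2$ with $P < \mathbf T$. The plan is to exploit that the region strictly on the side of $(0,1)$ of $\mathbf T$ is open in $[0,1]^2$, so a small neighbourhood of $P$ still lies below the limit curve. From there, uniform convergence pushes $P$ below the level curves $\mathbf T_n$ for large $n$, and the sharp-threshold hypothesis forces $\E[P]{f_n}$ to be tiny.

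First, I will show that $P$ lies strictly below $\mathbf T_n$ in a robust sense. Since $\mathbf T$ is compact and $P \notin \mathbf T$, the distance $d$ from $P$ to $\mathbf T$ is positive. By uniform convergence, $\sup_t \norm{\mathbf T_n(t) - \gamma(t)} < d/2$ for all large $n$. Because the regions are separated by monotonic curves, the whole quadrant
\[
\{(p,q) : p \le p_0,\ q \ge q_0\}
\]
then lies on the $(0,1)$-side of $\mathbf T_n$. The same holds for the translated box $B = [p_0 - d/4, p_0] \times [q_0, q_0 + d/4]$, intersected with the square if $P$ is near the boundary.

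For the boundary case, where $P$ lies near the bottom edge or the added horizontal segments, I will use the extension of $\gamma$ to $(0,0)$ and $(1,1)$. This extension guarantees that the $(0,1)$-side of $\mathbf T$ near the edges is genuinely on the same side as the corresponding part of $\mathbf T_n$. To make this precise I will use the monotonicity of $\mathbf T_n$ as the graph of an increasing function $q \mapsto p$ (\cref{lem:curve}).

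By \cref{cor:Emonotonicity}, $\E[p,q]{f_n} \le \E[P]{f_n}$ fails only in the wrong direction. Instead I use that every point $(p,q)$ in the quadrant dominates $P$ from the upper-left, which gives $\E[p,q]{f_n} \le \E[P]{f_n}$. Being on the $(0,1)$-side of the $1/2$-level curve means $\E[P]{f_n} < 1/2$.

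Next, I will show that $\E[P]{f_n}$ is small. Suppose that for some $\eta > 0$ we have $\E[P]{f_n} \ge \eta$ along a subsequence. Take the box $B' = [p_0, p_0 + d/4] \times [q_0 - d/4, q_0]$, which lies on the $(1,0)$-side of $P$ but still below $\mathbf T_n$. By \cref{cor:Emonotonicity}, every point of $B'$ has expectation in $[\eta, 1/2)$. The box $B'$ has measure at least a constant $c(d) > 0$. Once $1/n < \min(\eta, c(d))$, this contradicts the choice of $f_n$, since the set where $\E[p,q]{f_n} \in (1/n, 1 - 1/n)$ must have measure less than $1/n$.

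The main obstacle I expect is the geometric step. Near the corners, or when $P$ is close to the artificially added horizontal segments of $\mathbf T$, the box $B'$ may leave the square or cross $\mathbf T_n$. I would handle this by shrinking the box to a size depending on the distance from $P$ to $\mathbf T$ within the square and choosing its direction to stay inside $[0,1]^2$. If that fails, I will work along the monotone direction that increases $p$ only, keeping $q = q_0$. The width of the gap between $P$ and $\mathbf T_n$ in that direction is still bounded below by uniform convergence.
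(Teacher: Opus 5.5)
Your proposal is correct and follows essentially the paper's route. Both exhibit a positive-area axis-aligned rectangle to the lower right of $P$ on which $\E[p,q]{f_n}$ is squeezed between $\E[P]{f_n}$ and $1/2$, and then invoke the sharp-threshold measure bound. The only difference is that the paper spans the rectangle between $P$ and the point $Q=\mathbf R\cap\mathbf T_n$ on the ray toward $(1,0)$, getting the upper bound $1/2$ from monotonicity together with the explicit estimate $\E[P]{f_n}\le\varepsilon_n$; you instead use a fixed box of side $d/4$ that the distance estimate keeps below $\mathbf T_n$. Your boundary worry is unnecessary: $\mathbf T$, with its added bottom segment, separates $P$ from the bottom and right edges of the square, so automatically $d\le q_0$ and $d\le 1-p_0$ and $B'$ always fits. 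That matters, because the one-dimensional fallback along $q=q_0$ has zero area and would not work.
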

    \begin{proof}
        Let $\mathbf R$ be the open ray from $P$ towards $(1,0)$.
        By assumption, it must intersect $\mathbf T$.
        For any curve $\mathbf T'$, we will measure the distance from $P$ to $\mathbf T'$ along $\mathbf R$ as $d(\mathbf T') := \left\lVert P - (\mathbf T' \cap \mathbf R) \right\rVert$ if $\mathbf T'$ intersects $\mathbf R$; otherwise it is $-\infty$.
        By assumption, we have $d(\mathbf T) > 0$.
        
        Thanks to the uniform convergence of $\{\mathbf T_n\}_n$,
        the distance to $P$ also converges:
        \[
        \lim_{n \to \infty} d(\mathbf T_n) = d(\mathbf T)
        \]
    
        Let $\beta$ be the (positive) angle between $\mathbf R$ and the horizontal line $q = 0$.
        Fix any $n$ such that $d(\mathbf T_n) \ge d(\mathbf T)/2$ and $4\varepsilon_n < d(\mathbf T)^2\cos\beta\sin\beta$.
        Let $Q = \mathbf R \cap \mathbf T_n$. By the definition of $\mathbf T_n$, we have $\E[Q]{f_n} = 0.5$.
        Consider the axis-aligned rectangle $\mathbb A$ defined by $P$ and $Q$.
        Observe that for any point $P' \in \mathbb A$ holds
        \[
        \E[P]{f_n} \le \E[P']{f_n} \le \E[Q]{f_n} = 0.5
        \]
        by \cref{cor:Emonotonicity}.
        The area of $\mathbb A$ is $\lambda_2(\mathbb A) = d(\mathbf T_n)^2\cos\beta\sin\beta > \varepsilon_n$, which means that $E_{P}[f_n] \le \varepsilon_n$ by the assumption of \cref{th:tractability}.
    
        Since $\lim_{n\to\infty} d(\mathbf T_n) = d(\mathbf T)$ and $\lim_{n\to\infty} \varepsilon_n = 0$, it follows that $\lim_{n\to\infty} \E[P]{f_n} = 0$.\qedhere
    \end{proof}
    \noindent
    The reader may convince themselves that a similar reasoning proves the opposite case:
    \begin{claim}
        If $P > \mathbf T$, then $\lim_{n\to\infty} \E[P]{f_n} = 1$.
    \end{claim}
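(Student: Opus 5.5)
The statement to prove is the last claim: if $P > \mathbf T$, then $\lim_{n\to\infty} \E[P]{f_n} = 1$. I would mirror the proof of the preceding claim with all roles reversed.

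\emph{Setting up the ray.} Since $P > \mathbf T$, $P$ lies in the region containing $(1,0)$. Let $\mathbf R$ be the open ray from $P$ towards $(0,1)$, i.e.\ the direction in which $p$ decreases and $q$ increases. The curve $\mathbf T$ is monotonic and runs from $(0,0)$ to $(1,1)$, so it separates $P$ from $(0,1)$, and $\mathbf R$ must therefore meet $\mathbf T$. Define $d(\cdot)$ as before, as the distance from $P$ along $\mathbf R$. Then $d(\mathbf T) > 0$, and by uniform convergence $d(\mathbf T_n) \to d(\mathbf T)$.

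\emph{Choosing $n$ and the rectangle.} Let $\beta$ be the angle between $\mathbf R$ and the horizontal. Pick $n$ so large that $d(\mathbf T_n) \ge d(\mathbf T)/2$ and $4\varepsilon_n < d(\mathbf T)^2 \cos\beta \sin\beta$. Set $Q = \mathbf R \cap \mathbf T_n$, so that $\E[Q]{f_n} = 1/2$. Let $\mathbb A$ be the axis-aligned rectangle spanned by $P$ and $Q$.

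For every $P' = (p',q') \in \mathbb A$ we have $p_Q \le p' \le p_P$ and $q_P \le q' \le q_Q$. Applying \cref{cor:Emonotonicity} twice gives
\[
\E[P]{f_n} \ge \E[P']{f_n} \ge \E[Q]{f_n} = 1/2 .
\]
So on all of $\mathbb A$ the expectation is at least $1/2 > \varepsilon_n$, and the region where it is $\ge 1-\varepsilon_n$ must include $P$. Suppose instead $\E[P]{f_n} < 1-\varepsilon_n$. Then every point of $\mathbb A$ has expectation in $(\varepsilon_n, 1-\varepsilon_n)$. This is a set of measure $\lambda_2(\mathbb A) = d(\mathbf T_n)^2 \cos\beta \sin\beta > \varepsilon_n$, contradicting the choice of $f_n$ from the assumption of \cref{th:tractability}. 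Hence $\E[P]{f_n} \ge 1-\varepsilon_n$ for all large $n$, and letting $n \to \infty$ gives the limit $1$.

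\emph{Main obstacle.} The one delicate point is the degenerate geometry near the extended parts of $\mathbf T$ (the horizontal segments added at the endpoints, or $P$ on the boundary of the square). There one must check that $\mathbf R$ really meets $\mathbf T_n$ for large $n$ at a positive distance, and that $\beta \in (0,\pi/2)$, so that the rectangle has positive area. If $\mathbf R$ happens to hit a horizontal extension segment, I would instead tilt the ray slightly toward a point of $\gamma$ itself; any direction into the open quadrant $\{p<p_P,\ q>q_P\}$ works equally well for the monotonicity argument.
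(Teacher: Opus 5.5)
Your proof is correct and is the mirror image of the paper's argument for the case $P < \mathbf T$; the paper itself only says this case follows by "similar reasoning". You use the ray towards $(0,1)$, the rectangle spanned by $P$ and $Q = \mathbf R \cap \mathbf T_n$, the bound $\E[P]{f_n} \ge \E[P']{f_n} \ge 1/2$ from \cref{cor:Emonotonicity}, and the contradiction with the measure bound in the assumption of \cref{th:tractability}. The tilting contingency is not needed: $P > \mathbf T$ forces $p_P > 0$ and $q_P < 1$, so $\beta \in (0,\pi/2)$, and the segment from $P$ to $(0,1)$ can only meet $\mathbf T$ on $\gamma$, possibly at its endpoint $(0,1)$.
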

    This concludes the proof of the lemma.
\end{proof}

\begin{center}
    \includegraphics[width=0.6\linewidth]{pictures/unate_tractability.pdf}
\end{center}

Recall how random \kl{minors} produced symmetric functions for \kl{Ordered PCSPs} (see \cref{th:ordered-tractability}).
Here, we aim for \kl{unate} functions that are symmetric under permutations of \kl{increasing} and under permutation of \kl{decreasing} variables:
\begin{lemma}\label{lem:random-minor}
    Let $f : \Bool^n \to \Bool$ be a \kl{unate} function. Let $w, h \in \mathbb{N}$ and $m = w + h$.
    Then $f$ has a \kl{minor} $g$ of arity $m$ such that $\forall{\,\tuple x\in\Bool^w, \tuple y \in \Bool^h}$:
    \[
    g_{w,h}(\tuple x \tuple y) = \begin{cases}
        1, &\quad\text{if } \,\E[i/w,j/h]{f} > 1 - \frac{1}{2^m} \\
        1 \text{ or } \,0, &\quad\text{if } \,\E[i/w,j/h]{f} \in \left[\frac{1}{2^m}, 1-\frac{1}{2^m}\right] \\
        0, &\quad\text{if } \,\E[i/w,j/h]{f} < \frac{1}{2^m}
    \end{cases}
    \]
    where $i$ denotes $\ham{\tuple x}$ and $j$ denotes $\ham{\tuple y}$.
\end{lemma}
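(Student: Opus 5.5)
The plan is the probabilistic method, mirroring the random-minor step in the proof of \cref{th:ordered-tractability}. We choose a random minor map $\pi : [n] \to [m]$ that respects the unate structure. Each increasing coordinate $i \in \domainup{f}$ is sent independently and uniformly to one of the first $w$ coordinates (the block carrying $\tuple x$). Each decreasing coordinate $i \in \domaindown{f} \setminus \domainup{f}$ is sent independently and uniformly to one of the last $h$ coordinates (the block carrying $\tuple y$). Coordinates on which $f$ does not depend may be treated as increasing, so there is no ambiguity. Let $g = \minor{f}{\pi}$; this is the random candidate for $g_{w,h}$.

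The key computation concerns a fixed input $\tuple x \tuple y \in \Bool^{w} \times \Bool^h$ with $\ham{\tuple x} = i$ and $\ham{\tuple y} = j$. The value $g(\tuple x \tuple y) = f(z)$, where $z_k = x_{\pi(k)}$ for increasing $k$ and $z_k = y_{\pi(k)}$ for decreasing $k$. Since the $\pi(k)$ are independent and uniform within their blocks, the bits $z_k$ are independent. Each increasing $z_k$ is Bernoulli with parameter $i/w$, and each decreasing $z_k$ is Bernoulli with parameter $j/h$. Hence $z \sim \mu_{\tuple r}$ with $\tuple r$ as in the definition of $\E[p,q]{\cdot}$, and
\[
\Pr_\pi\big[g(\tuple x \tuple y) = 1\big] = \E[i/w,\,j/h]{f}.
\]

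Next comes a union bound. Call the input $\tuple x \tuple y$ \emph{bad for $\pi$} in two cases: when $\E[i/w,j/h]{f} > 1 - 2^{-m}$ and $g(\tuple x\tuple y) = 0$, or when $\E[i/w,j/h]{f} < 2^{-m}$ and $g(\tuple x \tuple y) = 1$. In either case the event has probability strictly less than $2^{-m}$. There are $2^m$ inputs, so the expected number of bad inputs is strictly less than $1$. Therefore some fixed $\pi$ has no bad inputs, and the corresponding $g$ satisfies the three-case description. Inputs in the middle range $[2^{-m}, 1-2^{-m}]$ impose no constraint.

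I expect no real obstacle. The only care points are bookkeeping: the degenerate cases $w=0$ or $h=0$, and the domain of the definition of $\E[p,q]{\cdot}$ when $f$ has no decreasing (or no increasing) coordinates. If $w = 0$ but $f$ has increasing coordinates, the map to an empty block is undefined. I would handle this either by assuming $w,h \ge 1$, which is the case used in \cref{cl:sign-symmetric-minors} since $w,h \in \mathbb N_+$, or by noting that an empty class of coordinates makes the corresponding parameter irrelevant. It is also worth observing that the strict inequality $\Pr[\text{bad}] < 2^{-m}$ comes directly from the strict thresholds in the statement, so the union bound closes without slack issues.
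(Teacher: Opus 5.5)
Your proposal is correct and is essentially the paper's proof: the same random minor map sending increasing coordinates independently and uniformly into the first $w$ positions and decreasing ones into the last $h$, the identity $\Pr_\pi[g(\tuple x \tuple y) = 1] = \E[i/w,j/h]{f}$, and a union bound over the $2^m$ inputs. Your extra bookkeeping (the strict per-input bound below $2^{-m}$, assuming $w,h \ge 1$ so that $i/w$ and $j/h$ are defined, and assigning non-essential coordinates to one class) is slightly more careful than the paper's write-up but does not change the argument.
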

\noindent Our proof utilizes the probabilistic method akin to \cite[Lemma 3.4]{brakensiek2021conditional}.
\begin{proof}
    Let $\pi : [n] \to [m]$ be a random \kl{minor map} defined as follows: for any $i \in \domainup{f}$, let $\pi(i)$ be distributed uniformly in $[w]$; similarly for any $j \in \domaindown{f}$, let $\pi(j)$ be distributed uniformly in $[m]\setminus[w]$. Denote by $g$ the random \kl{minor} $\minor{f}{\pi}$. Let $\tuple x \in \Bool^w$ and $\tuple y \in \Bool^h$; denote by $i$ the number of ones in $\tuple x$ and by $j$ the number of ones in $\tuple y$. By construction of $\pi$ we have
    \[
    \E[\pi]{g(\tuple x \tuple y)} = \E[\frac{i}{w},\frac{j}{h}]{f}.
    \]
    By the union bound, the probability that $g$ behaves as expected is greater than zero. Therefore, such a \kl{minor} must exist.
\end{proof}

The following observation will be used later to argue that the curve $\mathbf T$ from \cref{lem:regions-to-curve} is a straight line.
More precisely, it states that, for functions in a \kl{unate} minion, if the output does not change after flipping some coordinates from 0 to 1, then it also does not change after flipping only a part of those coordinates.
\begin{lemma}\label{lem:boundaryisline}
    Let $g$ be a function whose every \kl{minor} is \kl{unate}.
    Let $\tuple x \le \tuple x''$ and $\tuple y \le \tuple y''$ such that $\ham{\tuple x''}-\ham{\tuple x}$ and $\ham{\tuple y''}-\ham{\tuple y}$ are even. If $g(\tuple x\tuple y) = g(\tuple x''\tuple y'') = a$, then
    there exist $\tuple x \le \tuple x' \le \tuple x''$ and $\tuple y \le \tuple y' \le \tuple y''$ such that $\ham{\tuple x'} = (\ham{\tuple x} + \ham{\tuple x''})/2$ and $\ham{\tuple y'} = (\ham{\tuple y} + \ham{\tuple y''})/2$, and $g(\tuple x'\tuple y') = a$.
\end{lemma}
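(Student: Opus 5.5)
We will take the midpoint by building a single minor of $g$ in which the two halves of the flipped coordinates are forced to move together. Let $A = \{i : x_i = 0, x''_i = 1\}$ be the set of flipped coordinates among the first $w$ positions, and $B$ the analogous set among the last $h$ positions. By assumption $|A| = 2a'$ and $|B| = 2b'$ are even. Split $A = A_1 \sqcup A_2$ and $B = B_1 \sqcup B_2$ into halves of equal size.

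Consider the minor $g'$ of $g$ of arity $3$ (or fewer, when some part is empty) obtained by the following identifications. Send every coordinate outside $A \cup B$ to a variable $c$. Send $A_1 \cup B_1$ to a variable $u$, and $A_2 \cup B_2$ to a variable $v$. The remaining coordinates have constant values: $x_i = x''_i$ and $y_i = y''_i$ there. The tuples of interest are then all obtained by fixing $c$ to the appropriate constant pattern. Since a minor can only identify coordinates and cannot fix constants, we instead let $c$ range over the coordinates with value $0$ and add a fourth variable $d$ for the coordinates with value $1$. Thus $g'(c,d,u,v)$ is a minor of $g$, and hence unate by assumption, and its relevant evaluations are $g'(0,1,u,v)$.

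Two evaluations are given: $g'(0,1,0,0) = g(\tuple x\tuple y) = a$ and $g'(0,1,1,1) = g(\tuple x''\tuple y'') = a$. The candidate midpoints are $(0,1,1,0)$ and $(0,1,0,1)$. They correspond to $\tuple x' = \tuple x \oplus A_1$, $\tuple y' = \tuple y \oplus B_1$ (respectively $A_2$, $B_2$), which have exactly the required Hamming weights and lie between the endpoints. Suppose for contradiction that both midpoints evaluate to $1-a$. Identifying $u$ and $v$ with each other gives no information. Instead, consider the further minor $k(c,d,z) = g'(c,d,z,1-\cdot)$; this is not a minor, so we need a different route.

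The key trick is to use the binary minor $k(c,d,z) := g'(c,d,z,z)$ together with unateness of $g'$ in $u$ directly. Fix $c=0$ and $d=1$ and look at coordinate $u$ of $g'$. We have $g'(0,1,0,0)=a$ and $g'(0,1,1,0)=1-a$, so $u$ changes the output in one direction at $v=0$. We also have $g'(0,1,0,1)=1-a$ and $g'(0,1,1,1)=a$, so $u$ changes the output in the opposite direction at $v=1$. Hence $u$ is neither increasing nor decreasing, contradicting unateness of the minor $g'$. Therefore at least one midpoint evaluates to $a$, and that midpoint is the required $(\tuple x', \tuple y')$.

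The main obstacle is bookkeeping when some of $A$, $B$, or the sets of constant coordinates are empty, since then the minor map has fewer targets. We handle this by simply dropping the unused variables, which leaves the argument unchanged. A secondary point is that the statement is applied to $g$ itself: we need "every minor of $g$ is unate", which holds by hypothesis.
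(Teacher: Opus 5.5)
Your argument is correct and is essentially the paper's proof: both identify one half of the flipped coordinates (taken across the $\tuple x$- and $\tuple y$-parts together) into a single variable and the other half into another, and then use that this merged coordinate of the resulting minor is increasing or decreasing to pick one of the two candidate midpoints. The paper does a direct case split and leaves the unflipped coordinates unidentified, whereas you argue by contradiction and also collapse those coordinates into $c,d$, which is harmless; the abandoned detours about $k(c,d,z)$ play no role and should be deleted.
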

\begin{proof}
    Let $D_x = \{i \mid x''_i > x_i\}$ and $D_y = \{i \mid y''_i > y_i\}$. By assumption $|D_x|$ and $|D_y|$ are even.
    Without loss of generality, suppose $g(\tuple x\tuple y) = g(\tuple x'' \tuple y'') = 1$.
    Pick any sets $X \subseteq D_x$ and $Y \subseteq D_y$ of size $|D_x|/2$ and $|D_y|/2$, respectively.
    
    Let $g \xrightarrow{\pi} g'$ where $\pi$ is a \kl{minor map} that identifies all coordinates within $X \cup Y$ and within $(D_x \cup D_y) \setminus (X \cup Y)$.
    If $\pi(X \cup Y)$ is an \kl{increasing} coordinate of $g'$, then $\tuple x' = \tuple x \oplus X$ and $\tuple y' = \tuple y \oplus Y$ are the required tuples as $g(\tuple x'\tuple y') \ge g(\tuple x\tuple y) = 1$.
    Otherwise it is \kl{decreasing}, and $\tuple x' = \tuple x'' \oplus X$ and $\tuple y' = \tuple y'' \oplus Y$ are the required tuples as $g(\tuple x'\tuple y') \ge g(\tuple x''\tuple y'') = 1$.
\end{proof}

We are in a position to prove that, for \kl{Unate PCSPs}, sharp threshold in its \kl{polymorphism} \kl{minion} is a sufficient condition for tractability (see \cref{th:tractablelist}).
\maintractability*

\begin{proof}
    First of all, we note that it suffices to prove the theorem for \kl{idempotent} \kl{minions}: suppose $\minion M$ consists of \kl{unate} functions. If it contains a constant function, then we are done. Otherwise it splits into two \kl{minions}: one consists of the functions with unary \kl{minor} $x \mapsto x$, and the other consists of the functions with unary \kl{minor} $x \mapsto 1-x$.
    One of them satisfies the assumption of the theorem.
    If it is the former, then it suffices to note that it is \kl{idempotent}; if it is the latter, then its \kl{negated} counterpart is \kl{idempotent}. Hence, as long as the theorem holds for \kl{idempotent} minions, the conclusion follows.
    
    From now on, we assume that $\minion M$ is \kl{idempotent}.
    \cref{lem:regions-to-curve} provides a sequence $\{f_n\}_n$ that, in the limit, exhibits a sharp threshold over a monotonic curve $\mathbf T$.

    Next, we argue that the sharp threshold provides a way to obtain \kl{minors} that are, on most inputs, symmetric under permutations of \kl{increasing} and under permutations of \kl{decreasing} coordinates:
    \begin{claim}\label{cl:sign-symmetric-minors-app}
        For every $w, h \in \mathbb N_+$ there exists a function $g_{w,h} \in \minion M$ of arity $w+h$ that satisfies the following: For any $\tuple x\in\Bool^w, \tuple y \in \Bool^h$ such that $P := (\frac{\ham{\tuple x}}{w}, \frac{\ham{\tuple y}}{h}) \notin \mathbf T$ holds
        \[
        g_{w,h}(\tuple x \tuple y) = \begin{cases}
            1, &\quad\text{if } P > \mathbf T \\
            0, &\quad\text{if } P < \mathbf T
        \end{cases}
        \]
    \end{claim}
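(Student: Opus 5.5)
The plan is to combine \cref{lem:random-minor} with the convergence statement of \cref{lem:regions-to-curve}, taking $g_{w,h}$ to be a suitable random minor of $f_n$ for $n$ large. Fix $w,h \in \mathbb N_+$ and put $m = w+h$. The set of possible points
\[
\mathcal P = \left\{ \left(\tfrac{i}{w}, \tfrac{j}{h}\right) : 0 \le i \le w,\ 0 \le j \le h \right\}
\]
is finite, with at most $(w+1)(h+1)$ elements. Let $\mathcal P' = \mathcal P \setminus \mathbf T$. For each $P \in \mathcal P'$, \cref{lem:regions-to-curve} gives $\lim_{n\to\infty}\E[P]{f_n} \in \{0,1\}$, with value $1$ if $P > \mathbf T$ and $0$ if $P < \mathbf T$. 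Because $\mathcal P'$ is finite, we can choose a single $n$ such that, for every $P \in \mathcal P'$, $\E[P]{f_n} > 1 - 2^{-m}$ when $P > \mathbf T$ and $\E[P]{f_n} < 2^{-m}$ when $P < \mathbf T$.

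Next I apply \cref{lem:random-minor} to $f_n$ with these $w,h$. It yields a minor $g$ of arity $m$ such that, for $\tuple x \in \Bool^w$ and $\tuple y \in \Bool^h$ with $P = (\ham{\tuple x}/w, \ham{\tuple y}/h)$:
\begin{itemize}
\item $g(\tuple x\tuple y) = 1$ whenever $\E[P]{f_n} > 1-2^{-m}$;
\item $g(\tuple x\tuple y) = 0$ whenever $\E[P]{f_n} < 2^{-m}$.
\end{itemize}
Combined with the choice of $n$, this is exactly the required behaviour off $\mathbf T$. Since $\minion M$ is a minion, $g \in \minion M$, so we set $g_{w,h} := g$.

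The main point to verify is the union-bound step inside \cref{lem:random-minor}. There the failure probability is a sum over all $2^m$ inputs, each contributing at most $2^{-m}$; since the middle case is excluded at the points we care about, the total is below $1$. This should go through as stated, possibly using strict inequalities or replacing $2^{-m}$ by $2^{-m-1}$ to be safe.

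One more issue is that the argument of \cref{lem:random-minor} sends increasing coordinates to the first $w$ positions and decreasing ones to the last $h$. Coordinates on which $f_n$ does not depend lie in both $\domainup{f_n}$ and $\domaindown{f_n}$; I will assign each of them to one block arbitrarily, which does not affect $\E[P]{f_n}$. If $f_n$ has no decreasing coordinates, the $q$-value is irrelevant and the statement still holds. Beyond these edge cases, I expect no real obstacle: the substance has already been done in \cref{lem:regions-to-curve}.
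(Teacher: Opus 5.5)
Your proposal is correct and is essentially the paper's proof. The paper fixes $w,h$, uses finiteness of the grid $\{(i/w, j/h)\}$ to pick $N$ so that $\E[P]{f_N}$ lies within $2^{-(w+h)}$ of its limiting value $0$ or $1$ at every grid point $P \notin \mathbf T$, and then takes $g_{w,h}$ to be the minor of $f_N$ supplied by \cref{lem:random-minor}. Your remarks on the union bound and on coordinates that are both increasing and decreasing are harmless refinements of details the paper leaves implicit.
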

    \begin{proof}
        \sloppy Fix $w, h \in \mathbb N_+$. The function $g_{w,h}$ is obtained from \cref{lem:random-minor} applied to $f_N$ for sufficiently large $N$. Namely, $N$ has to be chosen in such a way that for all points $P$ of the form $(\frac{i}{w}, \frac{j}{h}) \notin \mathbf T$, the expectation $\E[P]{f_N}$ is at a distance less than $1/2^{w+h}$ from the designated integer value (0 or 1). It remains to note that such $N$ always exists as the number of points of this form is finite.
    \end{proof}

    \newcommand{\openT}{\mathbf T_{\text{open}}}
    Let $\openT = \mathbf T \cap (0,1)^2$. We claim that $\openT$ must be a (possibly empty\footnote{The segment $\openT$ is empty e.g. for $\MAX$ and $\MIN$ minions.}) open line segment. To that end, we will prove the following property:
    \begin{claim}\label{cl:midpoint}
        For any points $A, B \in \openT$,
        the midpoint of $A$ and $B$ lies on $\openT$.
    \end{claim}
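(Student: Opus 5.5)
The plan is to argue by contradiction. We combine the \kl{minors} $g_{w,h}$ from \cref{cl:sign-symmetric-minors-app} with \cref{lem:boundaryisline}. The latter applies to each $g_{w,h}$, because $g_{w,h} \in \minion M$ and every \kl{minor} of it again lies in the \kl{unate} \kl{minion} $\minion M$.

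Fix $A, B \in \openT$ and let $M = (A+B)/2$. Since $\mathbf T$ is a monotonic (increasing) curve, we may assume $A \le B$ coordinatewise. If $A = B$ there is nothing to prove. Suppose $M \notin \mathbf T$, say $M > \mathbf T$; the case $M < \mathbf T$ is symmetric, with the roles of $0$ and $1$ swapped.

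\textbf{Step 1: perturb into the open regions.} Shift both points by the same vector $v = (-\eta, \eta)$, which points towards $(0,1)$. Set $A' = A + v$ and $B' = B + v$. The coordinatewise order $A' \le B'$ is preserved, and the new midpoint is $M' = M + v$. Because $\mathbf T$ is monotone increasing, moving a point of $\mathbf T$ in direction $(-1,1)$ lands strictly in the region $<\mathbf T$, so $A', B' < \mathbf T$. The region $>\mathbf T$ is open, so for small $\eta$ we still have $M' > \mathbf T$. The points $A, B$ lie in $(0,1)^2$, so for small $\eta$ the shifted points stay in the unit square. By openness, there are small neighbourhoods of $A', B'$ contained in $\{<\mathbf T\}$ and a neighbourhood of $M'$ contained in $\{>\mathbf T\}$.

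\textbf{Step 2: discretize.} Choose $w, h$ large, and lattice points $(i_A/w, j_A/h)$ and $(i_B/w, j_B/h)$ in the neighbourhoods of $A'$ and $B'$ respectively. We require $i_A \le i_B$, $j_A \le j_B$, and that $i_B - i_A$ and $j_B - j_A$ are even; the parity can always be fixed by moving one step of size $1/w$ or $1/h$. The midpoint lattice point $\bigl(\tfrac{i_A+i_B}{2w}, \tfrac{j_A+j_B}{2h}\bigr)$ is then within $O(1/w + 1/h)$ of $M'$, hence in $\{>\mathbf T\}$. Now take $\tuple x \le \tuple x''$ in $\Bool^w$ with weights $i_A, i_B$, and $\tuple y \le \tuple y''$ in $\Bool^h$ with weights $j_A, j_B$; nested supports make these choices possible. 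By \cref{cl:sign-symmetric-minors-app}, $g_{w,h}(\tuple x\tuple y) = g_{w,h}(\tuple x''\tuple y'') = 0$.

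\textbf{Step 3: contradiction.} \cref{lem:boundaryisline} yields $\tuple x', \tuple y'$ whose Hamming weights are exactly the midpoint weights and with $g_{w,h}(\tuple x'\tuple y') = 0$. But the point $(\ham{\tuple x'}/w, \ham{\tuple y'}/h)$ lies in $\{>\mathbf T\}$, so \cref{cl:sign-symmetric-minors-app} forces $g_{w,h}(\tuple x'\tuple y') = 1$, a contradiction. Hence $M \in \mathbf T$. Since $A, B \in (0,1)^2$, also $M \in (0,1)^2$, so $M \in \openT$.

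\textbf{Main obstacle.} I expect the delicate part to be Step 1, namely justifying rigorously that shifting a point of $\mathbf T$ by $(-\eta,\eta)$ lands strictly in $\{<\mathbf T\}$ even where $\mathbf T$ has horizontal or vertical pieces. This uses that $\mathbf T$ is a monotone curve, so it meets each line of direction $(1,-1)$ in at most one point, and that ``$<\mathbf T$'' is the side containing $(0,1)$. I would handle it via the ray argument used in the proof of \cref{lem:regions-to-curve}: along the anti-diagonal line through $A$, the curve is crossed exactly at $A$, and $(0,1)$ lies on the $(-1,1)$ side. The discretization in Step 2 is then routine bookkeeping.
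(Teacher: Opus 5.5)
Your proposal is correct and is essentially the paper's proof: approximate $A$ and $B$ by grid points lying strictly on the $<\mathbf T$ side (the paper does this with an explicit rounding $A^{(w)}, B^{(w)}$ on a grid of spacing $2/w$ and works with $g_{w,w}$), apply \cref{lem:boundaryisline} to obtain a midpoint grid point where the function equals $0$, thereby rule out $M > \mathbf T$, and then argue symmetrically from the other side. The only cosmetic difference is that you argue by contradiction using openness of the region $>\mathbf T$, whereas the paper passes to the limit $M^{(w)} \to M$; both rely on the same closedness fact.
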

    \noindent Since $\mathbf T$ is continuous, it would follow that the subcurve of $\openT$ bounded by any points $A$ and $B$ must be a line segment. Therefore the whole $\openT$ must be a line segment.
    \begin{proof}[Proof of \cref{cl:midpoint}]
        We argue that there exist two sequences of points $\{A^{(w)}\}_{w \in \mathbb N}$ and $\{B^{(w)}\}_{w \in \mathbb N}$ that converge to $A$ and $B$, respectively, and additionally, for every $w$,
        \begin{itemize}[leftmargin=*]
            \item the points $A^{(w)}$, $B^{(w)}$ as well as their midpoint lie on the grid $\left\{\left(\frac{i}{w}, \frac{j}{w}\right) \mid 0 \le i, j \le w\right\}$, and
            \item $A^{(w)}, B^{(w)} < \mathbf T$.
        \end{itemize}
        For example, let $A = (A_x, A_y)$ and $B = (B_x, B_y)$, and put
        \[
        A^{(w)} =
        \left( \frac{2}{w}\ceil*{\frac{A_x \cdot w}{2}-1}\,,\, \frac{2}{w}\floor*{\frac{A_y \cdot w}{2}+1} \right)
        \]
        and
        \[
        B^{(w)} = 
        \left( \frac{2}{w}\ceil*{\frac{B_x \cdot w}{2}-1}\,,\, \frac{2}{w}\floor*{\frac{B_y \cdot w}{2}+1} \right)
        \]
        Because $0 < A_x, A_y, B_x, B_y < 1$, the points $A^{(w)}, B^{(w)} \in [0,1]^2$ satisfy all aforementioned properties.

        Let $M^{(w)}$ denote the midpoint of $A^{(w)}$ and $B^{(w)}$. We will argue that $M^{(w)} < \mathbf T$ or $M^{(w)} \in \mathbf T$.
        First, apply \cref{lem:boundaryisline} to $g_{w,w}$ --- take any tuples $\tuple x, \tuple y, \tuple x'', \tuple y'' \in \Bool^{w}$ such that $\ham{\tuple x} = wA^{(w)}_x, \ham{\tuple y} = wA^{(w)}_y, \ham{\tuple x''} = wB^{(w)}_x, \ham{\tuple y''} = wB^{(w)}_y$.
        Observe that $g_{w,w}(\tuple x\tuple y) = g_{w,w}(\tuple x''\tuple y'') = 0$ by \cref{cl:sign-symmetric-minors-app}, since $A^{(w)} < \mathbf T$ and $B^{(w)} < \mathbf T$.
        Thus we obtain $\tuple x'$ and $\tuple y'$ from \cref{lem:boundaryisline} such that $\ham{\tuple x'} = wM^{(w)}$, $\ham{\tuple y'} = wM^{(w)}$, and $g_{w,w}(\tuple x'\tuple y') = 0$.
        Thus by \cref{cl:sign-symmetric-minors-app} it cannot happen that $M^{(w)} > \mathbf T$.
    
        Obviously, $M^{(w)}$ converges to the midpoint $M$ of $A$ and $B$. Since $\mathbf T$ is continuous, this means that $M < \mathbf T$ or $M \in \mathbf T$.
    
        The reader may convince themselves that a similar reasoning performed from the opposite side implies that $M > \mathbf T$ or $M \in \mathbf T$. Consequently, $M$ must lie on $\mathbf T$.
        It remains to note that since $A$ and $B$ lie inside $(0,1)^2$, so does their midpoint.
    \end{proof}

    Now that we know $\openT$ is a line segment, there are 3 possible configurations of $\openT$ with respect to the diagonal line $\mathbf D = \{(p,q) \in (0,1)^2 \mid p = q\}$: they either are disjoint, or intersect at exactly one point, or coincide.
    We consider these cases separately.

    \begin{claim}
        If $\openT \cap \mathbf D = \emptyset$, then $\MAX \subseteq \minion M$ or $\MIN \subseteq \minion M$.
    \end{claim}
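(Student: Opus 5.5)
Since $\openT$ is an open segment (possibly empty) disjoint from the diagonal $\mathbf D$, the whole diagonal lies strictly on one side of $\mathbf T$, except possibly at its endpoints $(0,0)$ and $(1,1)$. I will show that $\mathbf D$ lies entirely on one side and then use the functions $g_{w,h}$ of \cref{cl:sign-symmetric-minors-app}, restricted to diagonal inputs.

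\textbf{Side of the diagonal.} Because $\mathbf T$ is a monotonic curve from $(0,0)$ to $(1,1)$ and $\openT\cap\mathbf D=\emptyset$, connectedness of $\mathbf D$ gives either $P<\mathbf T$ for every $P\in\mathbf D$ or $P>\mathbf T$ for every $P\in\mathbf D$. If $\mathbf T$ passes through a corner such as $(1,0)$ or $(0,1)$, then $\openT$ may be empty and $\mathbf T$ runs along the boundary; the same dichotomy still holds for the open diagonal. Suppose first that $\mathbf D>\mathbf T$, i.e. every point $(t,t)$ with $t\in(0,1)$ lies above the threshold in the ``$=1$'' region.

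\textbf{The case $\mathbf D>\mathbf T$ gives $\MAX$.} Fix $m\ge1$ and take $g=g_{m,m}$. Define $\pi:[2m]\to[m]$ by $\pi(i)=\pi(m+i)=i$, and set $g'=\minor{g}{\pi}$, so that $g'(\tuple z)=g_{m,m}(\tuple z\tuple z)$. For $\tuple z\neq \tuple 0,\tuple 1$ the point $P=(\ham{\tuple z}/m,\ham{\tuple z}/m)$ lies on $\mathbf D$, and therefore $g'(\tuple z)=1$. For $\tuple z=\tuple 1$, idempotency gives $g'(\tuple 1)=1$, and for $\tuple z=\tuple 0$ it gives $g'(\tuple 0)=0$. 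Hence $g'=\gen{max}{m}$, and so $\MAX\subseteq\minion M$ because $\minion M$ is closed under minors. Symmetrically, if $\mathbf D<\mathbf T$, the same minor gives $g'(\tuple z)=0$ for all $\tuple z\neq\tuple 1$, so $g'=\gen{min}{m}$ and $\MIN\subseteq\minion M$.

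\textbf{Main obstacle.} The delicate step is the topological claim that a diagonal point off $\mathbf T$ is strictly on one side. This matters because \cref{cl:sign-symmetric-minors-app} only constrains $g_{w,h}$ at points not on $\mathbf T$, and $\mathbf T$ may coincide with $\mathbf D$ only at the endpoints. This is handled by observing that the finitely many grid points $(k/m,k/m)$ with $0<k<m$ all lie in the open diagonal, which avoids $\mathbf T$. The endpoints are handled by idempotency rather than by the curve. It remains to justify that the side of $\mathbf T$ is constant along the open diagonal. For this I would use that $\mathbf D\setminus\mathbf T=\mathbf D$ is connected, while $\{P<\mathbf T\}$ and $\{P>\mathbf T\}$ are disjoint relatively open sets covering $[0,1]^2\setminus\mathbf T$.
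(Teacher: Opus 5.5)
Your proof is correct and follows the paper's argument essentially verbatim: you use the same minor $\tuple z \mapsto g_{m,m}(\tuple z\tuple z)$, idempotency for the inputs $\tuple 0$ and $\tuple 1$, and the sign-symmetric-minor claim for the interior diagonal grid points $(k/m,k/m)$ with $0<k<m$. The only addition is your connectedness argument for why the whole open diagonal lies on one side of $\mathbf T$ (it applies because $\mathbf D\subseteq(0,1)^2$, which forces $\mathbf D\cap\mathbf T=\emptyset$); the paper states this step in one line without justification.
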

    \begin{proof}
        Since the two line segments do not intersect, we have either $\forall{0 < p < 1}:(p,p) > \openT$ or $\forall{0 < p < 1}:(p,p) < \openT$.
        Without loss of generality, assume that the former holds.
        
        Fix any $m$. We claim that $\gen{max}{m}$ is a \kl{minor} of $g_{m,m}$ as follows:
        \[
        \gen{max}{m}(x_1, \dots, x_m) = g_{m,m}(x_1, \dots, x_m, x_1, \dots, x_m)
        \]
        Clearly $g_{m,m}(0, \dots, 0) = 0$ and $g_{m,m}(1, \dots, 1) = 1$ by \kl{idempotency}.
        For the remaining inputs, we apply \cref{cl:sign-symmetric-minors-app}: for all $0 < i < m$ the point $(\frac{i}{m}, \frac{i}{m}) > \openT$.
    \end{proof}
    
    \begin{claim}
        If a point $(t,t) = \openT \cap \mathbf D$ for some $0 < t < 1$, then $\THR{t} \subseteq \minion M$.
    \end{claim}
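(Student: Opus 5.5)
We exhibit $\genthr{t}{m}$ as a \kl{minor} of $g_{m,m}$ for every $m$ with $tm \notin \mathbb N$, mirroring the $\MAX$/$\MIN$ case. The minor is the diagonal identification
\[
\genthr{t}{m}(x_1, \dots, x_m) = g_{m,m}(x_1, \dots, x_m, x_1, \dots, x_m).
\]
Under this identification, an input with $\ham{\tuple x} = i$ is sent to the grid point $P_i = (\frac{i}{m}, \frac{i}{m})$ on the diagonal. So it suffices to decide, for each $i$, on which side of $\mathbf T$ the point $P_i$ lies, and then read off the value from \cref{cl:sign-symmetric-minors-app}.

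First, $\openT$ is an open line segment crossing $\mathbf D$ at exactly one point $(t,t)$. Hence the diagonal points with $p \ne t$ lie off $\mathbf T$. If some $(p,p)$ with $p\neq t$ were on $\mathbf T$, it would be in $\openT \cap \mathbf D$ (since $p \in (0,1)$), contradicting uniqueness.

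Next, since $\mathbf D \setminus \{(t,t)\}$ has two connected components and each avoids $\mathbf T$, each component lies entirely on one side of $\mathbf T$. We need the side for $p<t$ to be $<\mathbf T$ and for $p>t$ to be $>\mathbf T$. This is the step requiring care. Two facts drive it. First, \cref{cor:Emonotonicity} gives that $\E[p,p]{f_n}$ is nondecreasing along the diagonal, because increasing both $p$ and $q$ is not monotone in general. Here I would instead argue geometrically: $\mathbf T$ is a monotone (increasing) curve from $(0,0)$ to $(1,1)$. Points $(p,q)$ with larger $p$ or smaller $q$ lie toward $(1,0)$, i.e. on the $>$ side. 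Second, $\openT$ is a segment through $(t,t)$ different from $\mathbf D$ and nondecreasing. It therefore has slope in $[0,\infty]\setminus\{1\}$, so it crosses $\mathbf D$ transversally, and the diagonal passes from one side to the other at $(t,t)$.

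To fix the orientation, I would use that the region where $\E[P]{f_n}\to 1$ is the $>$ side (\cref{lem:regions-to-curve}), together with \kl{idempotency}. Near $(1,1)$, points $(p,p)$ have $\E[p,p]{f_n}$ close to the value at $(1,1)$, which is $1$, but this is not uniform in $n$. Instead, I would orient the diagonal using the slope of the segment. If the slope exceeds $1$, then for $p>t$ the point $(p,p)$ lies below (to the right of) the segment, i.e. $>\mathbf T$; if the slope is below $1$, the same holds. In both cases the orientation follows by elementary comparison with the line through $(t,t)$, after checking that $\mathbf T$ outside $(0,1)^2$ only consists of boundary pieces, which do not affect points of $\mathbf D$.

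With the orientation in place, the values follow. For $0<i<m$ with $i/m<t$, \cref{cl:sign-symmetric-minors-app} gives $g_{m,m}=0$; for $i/m>t$ it gives $g_{m,m}=1$. The endpoints $i=0$ and $i=m$ are handled by idempotency. Since $tm\notin\mathbb N$, no $P_i$ equals $(t,t)$, so the minor is exactly $\genthr{t}{m}$ and $\THR{t} \subseteq \minion M$. The main obstacle is the orientation argument in the previous paragraph. I would try the elementary line-comparison first, falling back on a monotonicity check of $\E[p,p]{f_n}$ near $0$ and $1$ if the boundary pieces cause trouble.
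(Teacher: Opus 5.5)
Your reduction to the diagonal minor $e_m(\overline{x})=g_{m,m}(\overline{x},\overline{x})$ is right, as is your observation that the two halves of $\mathbf D\setminus\{(t,t)\}$ avoid $\mathbf T$ and lie on opposite sides of it. The orientation step, however, is wrong, and it is the real content of the claim.

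Write the interior segment $\mathbf T\cap(0,1)^2$ as $q-t=s(p-t)$. For $p>t$ the segment sits at height $t+s(p-t)$, which exceeds $p$ iff $s>1$. So $(p,p)$ is $>\mathbf T$ when $s>1$, but $<\mathbf T$ when $s<1$; your statement ``if the slope is below $1$, the same holds'' is false. The case $s<1$ is geometrically admissible: $\mathbf T$ can climb the left edge from $(0,0)$, follow the segment, and climb the right edge to $(1,1)$. In that case your diagonal minor would have weight profile $0,1,\dots,1,0,\dots,0,1$ instead of $\mathrm{thr}^{(m)}_t$.

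Your fallback via $\mathbb{E}_{p,p}[f_n]$ near $0$ and $1$ cannot repair this. The properties of an individual $f_n$ (unate, idempotent, sharp threshold) do not fix the orientation. Take $N$ increasing coordinates $\overline{x}$, $N$ decreasing coordinates $\overline{y}$, and $f_N=\bigl([\,|\overline{x}|/2-|\overline{y}|>-N/4\,]\wedge[\,|\overline{x}|\ge 1\,]\bigr)\vee[\,|\overline{x}|=N\,]$. It is unate and idempotent, its threshold curve has interior part $q=p/2+1/4$ through $(1/2,1/2)$, and the orientation along $\mathbf D$ is the reversed one. Also, your first ``fact'' is unavailable: the monotonicity corollary for $\mathbb{E}_{p,q}[f]$ says nothing along the diagonal, where $p$ and $q$ increase together.

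The missing idea is to use unateness of the minor $e_m$ itself; this is where the hypothesis that all of $\mathcal M$ is unate enters.

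\begin{itemize}
\item Since $e_m\in\mathcal M$, it is unate.
\item For $tm\notin\mathbb N$ it is symmetric, because its value depends only on the side of $\mathbf T$ on which $(\frac{|\overline{x}|}{m},\frac{|\overline{x}|}{m})$ lies.
\item It is idempotent.
\end{itemize}

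A symmetric unate function has all coordinates of the same type, and idempotency rules out ``all decreasing'', so $e_m$ is increasing. Now choose $m$ with $tm\notin\mathbb N$ and $\frac1m<t<1-\frac1m$. The points $(\frac1m,\frac1m)$ and $(1-\frac1m,1-\frac1m)$ lie on opposite sides of $\mathbf T$, so $e_m(1,0,\dots,0)\neq e_m(1,\dots,1,0)$. Monotonicity then forces $e_m(1,0,\dots,0)=0$, i.e.\ $(\frac1m,\frac1m)<\mathbf T$; equivalently, the segment is steeper than $\mathbf D$. This is the paper's argument. With it in place, the rest of your proof goes through: read off the values from the symmetric-minor claim and handle the endpoints by idempotency.
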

    \begin{proof}
        For every $m \in \mathbb N_+$, let
        \[
        e_m(x_1, \dots, x_m) = g_{m,m}(x_1, \dots, x_m, x_1, \dots, x_m)
        \]
        Clearly $e_m(0, \dots, 0) = 0$ and $e_m(1, \dots, 1) = 1$ by \kl{idempotency}.

        First, note that whenever $tm \notin \mathbb N$, the function $e_m$ is symmetric: for every $\tuple x$ the value $e_m(\tuple x)$ depends only on whether $(\frac{\ham{\tuple x}}{w}, \frac{\ham{\tuple y}}{h}) < \mathbf T$ or $> \mathbf T$. Since $e_m$ is symmetric and \kl{idempotent}, it must be \kl(func){increasing}.
        Pick any $m$ such that $\frac{1}{m} < t < 1-\frac{1}{m}$. Since the points $(\frac{1}{m}, \frac{1}{m})$ and $(1-\frac{1}{m}, 1-\frac{1}{m})$ lie on the opposite sides of $\mathbf T$, we have $e_m(1, 0, \dots, 0) \neq e_m(1, \dots, 1, 0)$ by \cref{cl:sign-symmetric-minors-app}. Moreover, since $e_m$ is \kl(func){increasing}, we have $e_m(1, 0, \dots, 0) < e_m(1, \dots, 1, 0)$. Therefore $(\frac{1}{m}, \frac{1}{m}) < \mathbf T$, which means that $\mathbf T$ is ''steeper'' than $\mathbf D$.

        Fix any $m$ such that $tm \notin \mathbb N$. We claim that $\genthr{t}{m} = e_m$.
        For all $0 < i < m$ such that $i < tm$ the point $(\frac{i}{m}, \frac{i}{m}) < \mathbf T$, and otherwise it is $> \mathbf T$.
        It suffices to apply \cref{cl:sign-symmetric-minors-app}.
    \end{proof}
    
    \begin{claim}
        If $\openT = \mathbf D$, then $\AT \subseteq \minion M$.
    \end{claim}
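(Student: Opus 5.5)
Under the hypothesis $\openT = \mathbf D$, the plan is to show that $\gen{at}{m}$ is a \kl{minor} of a suitable $g_{w,h}$ from \cref{cl:sign-symmetric-minors-app}, for every $m$. Concretely, take $w = m+1$ and $h = m$, and set
\[
e(x_1,\dots,x_{m+1},y_1,\dots,y_m) = g_{m+1,m}(x_1,\dots,x_{m+1},y_1,\dots,y_m),
\]
so the first $m+1$ arguments play the role of the increasing block and the last $m$ the decreasing block. For an input with $\ham{\tuple x}=i$ and $\ham{\tuple y}=j$, the relevant point is $P=(\frac{i}{m+1},\frac{j}{m})$. Because $m$ and $m+1$ are coprime, $\frac{i}{m+1}=\frac{j}{m}$ forces $(i,j)\in\{(0,0),(m+1,m)\}$. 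So every other grid point lies off the diagonal, and \cref{cl:sign-symmetric-minors-app} determines $e$ there.

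The next step is to identify which side of $\mathbf T$ corresponds to which inequality. Since $\mathbf T \supseteq \mathbf D$, and $\mathbf T$ is a monotonic curve from $(0,0)$ to $(1,1)$ that agrees with $\mathbf D$ inside the open square, $\mathbf T$ is exactly the diagonal. Convention: $P<\mathbf T$ is the side containing $(0,1)$, i.e.\ $q>p$, and $P>\mathbf T$ is the side containing $(1,0)$, i.e.\ $p>q$. Hence for $(i,j)$ not an endpoint, $e=1$ iff $\frac{i}{m+1}>\frac{j}{m}$. The claim $\frac{i}{m+1}>\frac{j}{m}\iff i>j$ needs checking:
\begin{itemize}
\item If $i>j$, then $i\ge j+1$, so $im\ge (j+1)m\ge j(m+1)$ since $j\le m$. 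Equality holds only when $j=m$ and $i=m+1$, which is the excluded endpoint. So the inequality is strict.
\item If $i\le j$, then $im\le jm\le j(m+1)$, so $\frac{i}{m+1}\le\frac{j}{m}$.
\end{itemize}
The two endpoints are handled by \kl{idempotency}: $e(\tuple 0)=0$, matching $0>0$ being false, and $e(\tuple 1)=1$, matching $m+1>m$. Therefore $e=\gen{at}{m}$ for all $m$, and $\AT\subseteq\minion M$.

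The step that needs the most care is the side-identification. It must be confirmed that the whole curve $\mathbf T$, including its extension segments to $(0,0)$ and $(1,1)$ in \cref{lem:regions-to-curve}, coincides with the diagonal, so that "$P>\mathbf T$" really means $p>q$ and no grid point other than the two corners falls on $\mathbf T$. If the extensions were not on the diagonal, I would instead use only grid points inside $(0,1)^2$ together with idempotency. Every non-corner grid point with $w=m+1,h=m$ has $p\in(0,1)$ or $q\in(0,1)$; for any remaining boundary points I would invoke \cref{cor:Emonotonicity} and monotonicity of the limit to place them on the correct side.
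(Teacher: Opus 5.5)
Your proposal is correct and follows the paper's proof: you realize $\mathrm{at}^{(m)}$ as $g_{m+1,m}$, get the two corner inputs from idempotency, and read off every other input from \cref{cl:sign-symmetric-minors-app}, using the fact that $\mathbf T$ is exactly the full diagonal. Your coprimality argument and explicit check that $\frac{i}{m+1}>\frac{j}{m}\iff i>j$ just spell out the paper's one-line assertion. The fallback in your last paragraph is unnecessary, since your monotonicity argument already forces $\mathbf T$ to be the diagonal.
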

    \begin{proof}
        It must be that $\mathbf T = \{(p,p) \mid 0 \le p \le 1\}$.
        Fix any $m$.
        We claim that $\gen{at}{m} = g_{m+1,m}$.
        Clearly $g_{m+1,m}(0, \dots, 0) = 0$ and $g_{m+1,m}(1, \dots, 1) = 1$ by \kl{idempotency}.
        For the remaining inputs, we apply \cref{cl:sign-symmetric-minors-app}: for all $i \le j$ the point $(\frac{i}{m+1}, \frac{j}{m}) < \mathbf T$, and otherwise it is $> \mathbf T$.
    \end{proof}
    This concludes the proof of the theorem.
\end{proof}

\subsection{Proofs omitted in Hardness}

\begin{restatable}{proposition}{atlargeshapley}\label{proposition:at_has_large_shapey_value}
    $\I[\shap]{\gen{at}{n}} = \Theta(\sqrt{n})$.
\end{restatable}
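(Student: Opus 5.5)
The plan is to compute the Shapley values through the integral formula of \cref{shapley:claim:connection_between_shapley_and_biased}. Write $f=\gen{at}{m}$, which has arity $n=2m+1$, and set
\[
X=\sum_{i=1}^{m+1} x_i, \qquad Y=\sum_{j=1}^{m} y_j,
\]
so that $f=1$ iff $X>Y$. Then
\[
\I[\shap]{f}=\int_0^1 \I[(p)]{f}\,dp .
\]
Under $\cube{p}$ every coordinate is drawn from the same $\mu_p$, so it suffices to compute $\Inf[(p)]{f,i}$ exactly, show it is $\Theta(1/\sqrt{m})$ on average over $p$, and multiply by the $2m+1$ coordinates.

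\textbf{Pivotality events.} A coordinate $x_i$ is pivotal iff the remaining sum $X_{-i}\sim\mathrm{Bin}(m,p)$ equals $Y\sim\mathrm{Bin}(m,p)$, with the two independent. Hence
\[
\Inf[(p)]{f,x_i}=\sum_k P_k^2, \qquad P_k=\binom{m}{k}p^k(1-p)^{m-k}.
\]
A coordinate $y_j$ is pivotal iff $X=Y_{-j}+1$, where $X\sim\mathrm{Bin}(m+1,p)$ and $Y_{-j}\sim\mathrm{Bin}(m-1,p)$ are independent. This probability is again a sum $\sum_k P'_{k+1}P''_k$ of products of binomial point masses. So in both cases the influence is $\Pr[U=V+c]$ for independent binomials $U,V$ with about $m$ trials and a shift $c\in\{0,1\}$.

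\textbf{Upper bound.} I would use the crude estimate
\[
\Pr[U=V+c]\le \max_k \Pr[U=k]\le \min\Bigl(1,\ \tfrac{C}{\sqrt{m\,p(1-p)}}\Bigr).
\]
The maximal point mass of a binomial is the standard Stirling-type estimate, already used in the flatness part of \cref{influence:proposition:biased_is_reasonable}. The only delicate point is that this bound must hold uniformly in $p$ up to the endpoints, and I expect this uniform-in-$p$ estimate to be the main obstacle. I would handle it by the $\min(1,\cdot)$ form. With it,
\[
\int_0^1 \frac{dp}{\sqrt{p(1-p)}}=\pi<\infty,
\]
so each coordinate has $\Inf[\shap]{f,i}=O(1/\sqrt m)$, and the total influence is $O(\sqrt m)=O(\sqrt n)$.

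\textbf{Lower bound.} I would restrict the integral to $p\in[1/4,3/4]$ and avoid the local CLT as follows.
\begin{enumerate}
\item By Chebyshev, $\Pr\bigl[|U-mp|\le 2\sqrt m\bigr]\ge 1/2$, and the same holds for $V$.
\item Consider the window $W=\{k : |k-mp|\le 2\sqrt m+1\}$, which has $O(\sqrt m)$ points.
\item By Cauchy--Schwarz,
\[
\sum_k \Pr[U=k]\,\Pr[V=k-c] \;\ge\; \text{the same sum over } k\in W \;\ge\; \frac{c'}{\sqrt m}.
\]
For the $x_i$ case this is just $\bigl(\sum_{k\in W}P_k\bigr)^2/|W|$. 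For the shifted or mismatched case, I would apply the inequality $\sum a_kb_k\ge\cdots$ after comparing $P'_{k+1}$ with $P''_k$ via smoothness ratios bounded on $[1/4,3/4]$. Alternatively, I would bound $\sum_{k\in W}\min(a_k,b_k)$ from below using that both point-mass sequences are comparable on $W$.
\end{enumerate}
Integrating over an interval of length $1/2$ gives $\Inf[\shap]{f,i}=\Omega(1/\sqrt m)$ for every coordinate. Summing over the $2m+1$ coordinates yields $\Omega(\sqrt n)$, which completes $\I[\shap]{\gen{at}{n}}=\Theta(\sqrt n)$.
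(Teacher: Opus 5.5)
Your argument is correct, but it differs from the paper's on both halves.

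\textbf{Upper bound.} The paper never looks at pivotal events. Alternating thresholds are unate, so \cref{unate:lemma:total_influence_is_sqrt} gives $\mathbf{I}^{(p)}[f]\le\sqrt{(2m+1)/(p(1-p))}$ for every $p\in(0,1)$ by a short Fourier argument. Integrating via \cref{shapley:claim:connection_between_shapley_and_biased} against $\int_0^1 dp/\sqrt{p(1-p)}=\pi$ then finishes. This removes what you call the main obstacle.

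Your route instead needs $\max_k\Pr[\mathrm{Bin}(N,p)=k]\le C/\sqrt{Np(1-p)}$ uniformly in $p$. That bound is true and standard, e.g.\ via ratios of consecutive point masses near the mode. However, it is not what the flatness computation in \cref{influence:proposition:biased_is_reasonable} establishes, since there $p$ is fixed and only $k=pn$ is evaluated. The $\min(1,\cdot)$ form only covers the regime where $Np(1-p)$ is bounded, so it does not by itself supply the estimate.

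\textbf{Lower bound.} The paper works directly with the Shapley measure. It computes $\Phi$ of the pivotal set $\{x_1=0,\ x_2+\dots+x_{m+1}=y_1+\dots+y_m\}$ and rewrites the sum through central binomial coefficients. It then evaluates that sum with the generating function $(1-4x)^{-1/2}$ to get $\Theta(1/\sqrt m)$, and uses the symmetry among the $m+1$ increasing coordinates. Your Chebyshev plus Cauchy--Schwarz bound on the collision probability of two binomials for $p\in[1/4,3/4]$ is softer, but more robust and needs no exact identity.

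The increasing coordinates alone already give $\Omega(\sqrt m)$, so you do not need the decreasing case for the lower bound. Your comparison there is nonetheless fine: $P'_{k+1}/P''_k=\frac{m(m+1)p(1-p)}{(k+1)(m-k)}\ge\frac{3m}{4(m+1)}$ for all $k$ when $p\in[1/4,3/4]$.
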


\begin{proof}
    \sloppy The upper bound is a simple consequence of \cref{unate:lemma:total_influence_is_sqrt}, \cref{shapley:claim:connection_between_shapley_and_biased} and the fact that $\int_0^1 \sqrt{1/x(1-x)} \, dx = \pi$:
    \begin{align*}
        \I[\shap]{\gen{at}{n}} &= \int_0^1 \I[(p)]{\gen{at}{n}} \, dp\leq  \sqrt{2n+1} \cdot \int_0^1 \frac{1}{\sqrt{p(1-p)}} \, dp = \mathcal{O}(\sqrt{n}).
    \end{align*}
    We now proceed to the lower bound. Suppose that $\{1, \dots, n+1\} = \domainup{\gen{at}{n}}$. It suffices to show that $\Inf[\shap]{\gen{at}{n}, 1} = \Omega( 1/\sqrt{n})$. Let $X \subseteq \{0,1\}^{2n+1}$ be the set of tuples $\tuple x$ such that (1) $x_1 = 0$, and (2) $x_2 + \dots + x_{n+1} = x_{n+2} + \dots + x_{2n+1}$. Clearly, we have $\Inf[\shap]{\gen{at}{n}, 1} \geq \shap(X)$, so we now want to lower-bound the measure of $X$. By Stirling's approximation $\binom{2n}{n} \sim 4^n/\sqrt{\pi n}$, we have:
    \begin{align*}
        \shap&(X) = \frac{1}{2n+2} \cdot \sum_{k = 0}^{n} \frac{\binom{n}{k}^2}{\binom{2n+1}{2k}} \\ 
                &= \frac{1}{(2n+2)(2n+1)\binom{2n}{n}} \cdot \sum_{k=0}^n (2n-2k+1) \cdot \binom{2(n-k)}{n-k} \cdot \binom{2k}{k} \\ 
                &\sim \frac{\sqrt{\pi n}}{4^n \cdot n^2} \cdot \left( \sum_{k=0}^n \binom{2(n-k)}{n-k} \binom{2k}{k} + 2\sum_{k=0}^n (n-k) \binom{2(n-k)}{n-k} \binom{2k}{k} \right) \\ 
                &\geq \Omega\left( \frac{1}{4^n \cdot n^{3/2}} \right) \cdot \Big[x^n \Big] \Big( F^2(x) + 2 \cdot x F'(x) \cdot F(x) \Big), \tag{{\color{magenta}$1$}}\label{unate:eq:lower_bound_for_influence_of_AT}
    \end{align*}
    where $F(x)$ is the generating function of the sequence $\{ \binom{2n}{n} : n \in \mathbb{N} \}$. Using the well known facts that (1) $F(x) = 1/\sqrt{1-4x}$, (2) $1/(1-4x) = \sum_{t=0}^\infty 4^t x^t$, and (3) $1/(1-4x)^2 = \sum_{t=0}^\infty 4^t(t+1)x^t$, we obtain that the last term above is
    \[
        \Big[x^n \Big]\bigg( \frac{1}{1-4x} + \frac{4x}{(1-4x)^2} \bigg) = 4^n + 4 \cdot n \cdot 4^{n-1} = 4^n \cdot (n+1).
    \]
    Putting it into \eqref{unate:eq:lower_bound_for_influence_of_AT} gives $\shap(X) = \Omega(1/\sqrt{n})$.
\end{proof}

\begin{restatable}{proposition}{monotonenotlowdegreeconcentrated}\label{proposition:monotone_not_low_degree_concentrated}
    There exists $\varepsilon > 0$ such that for every $d \geq 1$ there is an \kl(func){increasing} Boolean function $f$ with 
    \[
        \sum_{|S| > d} \hat{f}^{(1/2)}(S)^2 > \varepsilon.
    \]
\end{restatable}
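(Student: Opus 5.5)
Under the uniform distribution $\cube{1/2}$ I will take $f$ to be the majority function $\gen{maj}{m}$ on $2m+1$ variables, with $m$ large depending on $d$. Low-degree concentration of majority is governed by its noise stability, so I will pass through \cref{noise:claim:connection_between_noise_and_coefficients} and \cref{proposition:noise:coefficients_of_noise_operator}. Those two results give, for every $\rho\in(0,1)$,
\[
\sum_{S}\bigl(1-\rho^{|S|}\bigr)\hat f(S)^2 = \norm{f}^2 - \ip{f}{\TN[(1/2)]{f,\rho}} = \tfrac12 \NS[(1/2)]{f,\rho},
\]
using $\norm{f}^2=\E[1/2]{f}$ for $0/1$-valued $f$. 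On the left, split the sum at $|S|=d$. For $|S|\le d$ the factor satisfies $1-\rho^{|S|}\le 1-\rho^d$, and the total Fourier weight is at most $\norm{f}^2\le 1/2$ for majority. For $|S|>d$ the factor is at most $1$. Hence
\[
\sum_{|S|>d}\hat f(S)^2 \;\ge\; \tfrac12\NS[(1/2)]{f,\rho} - \tfrac12\bigl(1-\rho^{d}\bigr).
\]
It therefore suffices to find a fixed $\rho$ close to $1$, depending on $d$, for which majority has noise sensitivity bounded below by a constant independent of $d$. This does not work with $\rho$ fixed: Sheppard's formula gives $\NS{\mathrm{Maj},\rho}\to \arccos(\rho)/\pi$, and this tends to $0$ as $\rho\to1$. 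So a single majority is spectrally concentrated up to degree about $1/\varepsilon^2$, and it is not the right example.

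Instead I will use a noise-sensitive increasing function, namely Tribes. Take $\gentribes{w}{s}$, the OR of $s$ disjoint ANDs of width $w$, with $s\approx 2^w\ln 2$ so that $\E[1/2]{f}\approx 1/2$. It is well known (\cite{odonnell}, Ch.~4) that for any fixed $\rho<1$, $\NS[(1/2)]{\mathrm{Tribes}_n,\rho}\to \tfrac12$ as $n\to\infty$. I will verify this directly. Conditioned on a block being all ones in $\tuple x$, it remains all ones in $\tuple y$ with probability at most $\bigl((1+\rho)/2\bigr)^{w}$, which is $o(1)$. Estimating the number of satisfied blocks in $\tuple x$ and in $\tuple y$ by a Poisson approximation then shows that $f(\tuple x)$ and $f(\tuple y)$ are asymptotically independent fair bits. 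This gives $\NS{f,\rho}\ge 1/2-o(1)$.

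With this in hand, fix $d$ and choose $\rho=\rho(d)$ with $1-\rho^d\le 1/4$. The bound above then gives
\[
\sum_{|S|>d}\hat f(S)^2 \;\ge\; \tfrac12\Bigl(\tfrac12-o(1)\Bigr)-\tfrac18 \;=\; \tfrac18-o(1).
\]
Taking the Tribes function with $w$ large enough (depending on $d$) makes this exceed $\varepsilon:=1/10$. Tribes is increasing, which completes the construction.

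The main obstacle is the noise-sensitivity estimate for Tribes at fixed $\rho$, that is, the asymptotic independence of $f(\tuple x)$ and $f(\tuple y)$. I would first try citing it from \cite{odonnell}, or avoid it by computing the level-$\le d$ Fourier weight of Tribes directly. Each Fourier coefficient $\hat f(S)$ with $|S|\le d$ is exponentially small in $w$, and for $w$ large their total weight over $|S|\le d$ is $o(1)$. This is the standard statement that Tribes has vanishing low-degree weight, and it gives the claim with $\varepsilon$ just below $\mathrm{Var}(f)\approx 1/4$.
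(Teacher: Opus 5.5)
Your final argument is correct and is essentially the paper's proof. The paper also takes the Tribes function with $2^s \approx b/\ln 2$, uses the independence of the tribes to show that its noise sensitivity at a fixed noise rate stays bounded below, and then combines \cref{noise:claim:connection_between_noise_and_coefficients} with \cref{proposition:noise:coefficients_of_noise_operator} to rule out concentration on levels $\le d$; it phrases that last step as a contradiction, whereas you split the sum at level $d$ directly. You were right to abandon majority. The only bookkeeping point is that for Tribes $\norm{f}^2 = \E[1/2]{f} = 1/2 + o(1)$ rather than $\le 1/2$, and this is absorbed into your $o(1)$ term.
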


\begin{proof}
    \AP
    \sloppy Let $s, b \in \mathbb N$ such that $b$ is large and $2^{s} \approx b/(\ln 2)$. We define the Boolean function $\intro*\gentribes{s}{b}\colon \{0,1\}^{sb} \to \{0,1\}$ as follows. We divide the coordinates into $b$ blocks (tribes) of size $s$: $(x_1, \dots, x_s), (x_{s+1}, \dots, x_{2s}), \dots (x_{(b-1)s+1}, \dots, x_{bs})$ and let $\gentribes{s}{b}(\tuple x) = 1$ if and only if there is at least one tribe with all variables equal to 1; One can also think of this function as a DNF formula with $b$ clauses of size $s$. First, we observe that our choice of $s$ makes $\gentribes{s}{b}$ approximately unbiased:
    \[
        \E[1/2]{\gentribes{s}{b}} = 1 - (1-2^{-s})^b \approx 1 - \left(1-\frac{\ln 2}{b}\right)^b \to 1 - e^{-\ln 2} = \frac{1}{2}.
    \]
    Fix any $\delta > 0$. We want to show that $\NS[(1/2)]{\gentribes{s}{b}, 1-\delta}$ remains constant as $b$ increases and use its connection with the \kl{Fourier coefficients} to deduce the statement. Let $\tuple x \sim \cube{1/2, sb}$ and $\tuple y \sim \N[(1/2)]{\tuple x, 1-\delta}$. Consider the first tribe $(x_1, \dots, x_s)$. Let $A = x_1 \land \dots \land x_s$ and $B = y_1 \land \dots \land y_s$. Simple calculations yield the following:
    \begin{align*}
        \left(\Pr\left[A = 0\right]\right)^b &= (1-2^{-s})^b \approx \left( 1 - \frac{\ln 2}{b} \right)^b \to \frac{1}{2}; \\ 
        \left(\Pr \left[A = 0 \land B = 0 \right] \right)^b &\leq (1 - 2^{1-s})^b \approx \left( 1 - \frac{2 \ln 2}{b} \right)^b \to \frac{1}{4}.
    \end{align*}
    Since the tribes are identical and independent, we obtain the following estimate (for sufficiently large $b$):
    \begin{align*}
        \NS[(1/2)]{\gentribes{s}{b}, 1-\delta} &\geq \Pr[\gentribes{s}{b}(\tuple x) = 0 \land \gentribes{s}{b}(\tuple y) = 1]  \\
                                               &= \left(\Pr\left[A = 0\right]\right)^b - \left(\Pr\left[A = 0 \land B = 0\right]\right)^b \geq \frac{1}{4}.
    \end{align*}
    By \cref{noise:claim:connection_between_noise_and_coefficients} we have that $\ip{\gentribes{s}{b}}{\TN[(1/2)]{\gentribes{s}{b}, 1-\delta}} \leq 3/8$ for sufficiently large $b$. We are now in position to show that $\gentribes{s}{b}$ is not \kl{low-degree concentrated}. Suppose by contrary, that for every $\varepsilon > 0$ there exists a constant $d \geq 1$ such that
    \begin{align*}
        \sum_{0 \leq |S| \leq d} \widehat{\gentribes{s}{b}}(S)^2 \geq \norm{\gentribes{s}{b}}^2 - \varepsilon \approx \frac{1}{2} - \varepsilon.
    \end{align*}
    However, this would imply the following contradiction for sufficiently small $\delta > 0$ and $\varepsilon > 0$ by \cref{proposition:noise:coefficients_of_noise_operator}:
    \[
        \frac{3}{8} \geq \sum_{0 \leq |S| \leq d} (1-\delta)^{|S|} \cdot \widehat{\gentribes{s}{b}}(S)^2 \geq (1-\delta)^d \cdot \left(  \frac{1}{2} - \varepsilon \right) > \frac{3}{8}. \qedhere
    \]
\end{proof}

\section{From \richconj\ to random 2-to-1 hardness condition}\label{appendix:hardness_reduction}
\newcommand{\PMC}[0]{\mathsf{PMC}}
In this section, we outline the reduction from the Rich 2-to-1 Conjecture, recently postulated by \cite{onrich2to1}, to any \PCSP\ satisfying our \kl{random 2-to-1 condition}.
This reduction is encapsulated in \cref{prelims:theorem:random_2_to_1_condition_implies_hardness}; we note that the proof is based on the reduction to \kl{Ordered PCSPs} in \cite[Section 4.2]{brakensiek2021conditional}.
The reduction follows a standard framework established in \cite{austrin20172+sat,brakensiek2017promise,Bible}; in particular, it splits into two parts, reducing to an intermediate problem first.

\subsection{Rich 2-to-1 Conjecture}

The Rich 2-to-1 Conjecture followed from the long-standing quest for a perfect-completeness analog of the Unique Games Conjecture, which asserts that it is \NP-hard to determine the approximate value of a certain type of game, known as a Unique Game. In the literature, these games are usually formalized by means of \emph{Label Cover}.

\begin{definition}[\intro{Label Cover}]\AP
    An instance of \emph{Label Cover} $\Psi = (L \cup R, E, \Sigma_L, \Sigma_R, \Pi)$ consists of a bipartite graph $(L \cup R, E)$, alphabets $[\Sigma_L]$, $[\Sigma_R]$ and a set $\Pi$ of functional \intro{constraints} $\pi_e : [\Sigma_L] \to [\Sigma_R]$ for every $e \in E$. Given a \intro{labeling} $\sigma$, which assigns an element of $[\Sigma_L]$ to every vertex in $L$ and an element of $[\Sigma_R]$ to every vertex in $R$, we say that constraint $\pi_e \in \Pi$ for $e = (u, v) \in E$ is satisfied if $\pi_e(\sigma(u)) = \sigma(v)$. The value of $\Psi$ is the maximum fraction of constraints that a labeling can satisfy simultaneously.
\end{definition}

Unique games correspond to Label Cover instances where all $\pi_e$ are bijective functions; similarly, 2-to-1 games correspond to Label Cover instances where all $\pi_e$ are \kl{2-to-1} maps.
The Unique Games Conjecture postulates that it is \NP-hard to find the value of $\Psi$, even under the promise that this value is smaller than a fixed $\delta < 1$.
We usually formalize these promise problems under the name of \emph{Gap Label Cover}.

\begin{definition}[\intro{Gap Label Cover}]\AP
    Suppose that $n \geq 1$ and $0 \le \varepsilon \le \delta \le 1$. By $\Gap_n[\delta, \varepsilon]$ we denote the decision problem defined as follows: Given a \kl{Label Cover} instance $\Psi$ with $\Sigma_L, \Sigma_R \leq n$, answer $\mathsf{YES}$ if there is a \kl{labeling} satisfying at least a $\delta$-fraction of \kl{constraints}, and answer $\mathsf{NO}$ if all labelings satisfy at most an $\varepsilon$-fraction of constraints.
\end{definition}

For \PCSP\ applications, one usually requires that $\delta = 1$, which makes it a \emph{perfect-completeness} problem.
Finally, we define Rich Games.

\begin{definition}[\intro{Richness}]\AP
    An instance $\Psi$ of Label Cover is \emph{Rich 2-to-1} if
    \begin{itemize}[leftmargin=*]
        \item all constraints $\pi_e$ are \kl{2-to-1} maps, and
        \item the process of uniformly choosing a constraint adjacent to any fixed vertex in $L$ yields the uniform distribution over all \kl{2-to-1} maps $[2n] \to [n]$.\footnote{We note that in the original formulation of the conjecture, the richness condition is weaker: the distribution of paritions of $[2n]$ into pairs induced by pre-images of constraints is required to be uniform. Our formulation is more convenient for our considerations --- it comes from \cite{brakensiek2021conditional}.}
    \end{itemize}
    The variant of \kl{Gap Label Cover} restricted to Rich 2-to-1 instances is denoted by $\GapRich_n[\delta,\varepsilon]$.
\end{definition}

\begin{conjecture}[\intro{Rich 2-to-1 Conjecture} \cite{onrich2to1}]\AP
    For every $0 < \varepsilon < 1$ there exists $n$ such that the problem $\GapRich_n[1, \varepsilon]$ is \NP-hard.
\end{conjecture}

\subsection{The reduction to random 2-to-1 hardness condition}
\AP
Before proceeding to the reduction, we introduce the notions necessary to define the intermediate problem.
A \intro{minor condition} is a finite set of \intro{identities} of the form
\[
    f(x_{\pi(1)}, \dots, x_{\pi(n)}) \approx g(x_1, \dots, x_m),
\]
\AP
where $f$ and $g$ are functional symbols and $\pi : [n] \to [m]$. In addition, we require that the sets of symbols on left and right-hand sides are disjoint. Given a \kl{minion} on $(A, B)$, that is, $\minion{M} \subseteq \{f\colon A^n \to B \mid n \ge 1\}$, we say that an identity is \intro{satisfied} in $\minion{M}$ if there exists a \intro{symbol interpretation} $\zeta$ that assigns to every symbol a function in $\minion{M}$, such that 
\[
    \forall a_1, \dots, a_m \in A : \zeta(f)(a_{\pi(1)}, \dots, a_{\pi(n)}) = \zeta(g)(a_1, \dots, a_m),
\]
or, equivalently, $\zeta(f) \xrightarrow{\pi} \zeta(g)$.
Furthermore, a minor condition is satisfied in $\minion{M}$ if there exists an interpretation that satisfies all identities simultaneously.

\AP
A minor condition is \intro{trivial} if it is satisfied in every \kl{minion}. We note that if a minor condition $\Sigma$ is satisfied in the \kl{minion} of \kl{projections} on a set of at least two elements, then it is automatically trivial. To see why, observe that if such a satisfying interpretation assigns projections $\zeta(f)(\tuple x) = x_i$ and $\zeta(g)(\tuple y) = y_j$ related by an identity, then it must hold that $a_i = a_{\pi(j)}$ for every choice of $\tuple a$, and since the underlying set has at least two elements, this implies $i = \pi(j)$. Now, to find a satisfactory interpretation of $\Sigma$ in another minion on $(A, B)$, we use a small trick: take a unary function $h : A \to B$ in the minion and generate all minors of $h$ of the form $(x_1, \dots, x_n) \mapsto h(x_i)$. The functions obtained are not exactly projections, but they all have at most one \kl{essential} coordinate. In particular, the interpretation $\zeta'$ defined as $\zeta'(f)(\tuple x) = h(x_i)$ if and only if $\zeta(f)(\tuple x) = x_i$, satisfies $\Sigma$. 

Intuitively, every Label Cover instance can be treated as a minor condition.
This will comprise the first link in our reduction to the intermediate problem called \textit{Promise Minor Condition}.

\begin{definition}[\intro{Promise Minor Condition}]\AP
    Given $n \geq 1$ and a minion $\minion{M}$, the \textit{Promise Minor Condition} problem $\PMC_n(\minion{M})$ is the problem defined as follows: for an input \kl{minor condition} $\Sigma$ with \kl{identities} of arity at most $n$, answer $\mathsf{YES}$ if $\Sigma$ is \kl{trivial}, and answer $\mathsf{NO}$ if $\Sigma$ is not \kl{satisfied} in $\minion{M}$.
\end{definition}

We emphasize that the parameter $n$ is treated as a constant and the size of an instance is the number of identities. As it turns out, if $(\A, \B)$ is a \PCSP\ \kl(PCSP){template} and $\minion{M} = \Pol(\A, \B)$, then $\PMC_n(\minion{M})$ is log-space reducible to $\PCSP(\A, \B)$. This fact follows from a construction that resembles \textit{long code tests}, and is one of the most fundamental results in algebraic approach to \PCSP s. We refer the reader to \cite[Section 3.3]{Bible} for more details.

The remaining part is to show how \kl{Rich 2-to-1 Gap Label Cover} can be reduced to \kl{Promise Minor Condition}. Suppose that $\Psi = (L \cup R, E, 2n, n, \Pi)$ is a \kl{Rich 2-to-1 Label Cover} instance. We construct a minor condition from $\Psi$ in the following way. Every vertex $u \in L$ is replaced with a symbol $f_u$ of arity $2n$ and $v \in R$ with a symbol $g_v$ of arity $n$. Moreover, for every edge $e = (u, v)$ and its corresponding constraint $\pi_e \in \Pi$, we write the following identity:
\[
    f_u\big(x_{\pi_e(1)}, \dots, x_{\pi_e(2n)}\big) \approx g_v (x_1, \dots, x_n). \tag{{\color{magenta}$1$}} \label{eq:minor_condition_construction_from_label_cover}
\]
\AP
Denote the resulting minor condition by $\Sigma$. Observe that $\Sigma$ is essentially an equivalent presentation of $\Psi$: the \kl{labelings} of $\Psi$ are in 1-to-1 correspondence with the interpretations of $\Sigma$ in \kl{projections}. This automatically yields the \intro{completeness} of our reduction, i.e. preservation of $\mathsf{YES}$-instances --- if all \kl{constraints} of $\Psi$ can be simultaneously satisfied, then $\Sigma$ is \kl{trivial}.

\AP
It is the other side that is challenging: the preservation of $\mathsf{NO}$-instances, which is called \intro{soundness}. For this part to work, additional assumptions about $\minion{M}$ are needed. In our case, this assumption is the \kl{random 2-to-1 condition}. We show how to utilize this condition in the proof of the following proposition, which, together with the fact that $\PMC_n(\Pol(\A, \B))$ reduces to $\PCSP(\A, \B)$, finishes the reduction and implies \cref{prelims:theorem:random_2_to_1_condition_implies_hardness}.

\begin{proposition}
    If a \kl{minion} $\minion{M}$ satisfies the \kl{random 2-to-1 condition}, then there is a constant $\varepsilon = \varepsilon(\minion{M}) > 0$ such that $\GapRich_n[1, \varepsilon]$ poly-time reduces to $\PMC_{2n}(\minion{M})$ for every $n \geq 1$.
\end{proposition}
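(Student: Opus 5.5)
The plan is the standard soundness argument via random decoding. The map is the one already described: from a \kl{Rich 2-to-1 Label Cover} instance $\Psi$ we build the \kl{minor condition} $\Sigma$ with identities \eqref{eq:minor_condition_construction_from_label_cover}. This is clearly computable in polynomial time for fixed $n$, and \kl{completeness} was argued above. So the remaining task is soundness, which we prove in contrapositive form. Let $\mathcal{C},\tau$ and $\Sel$ witness the \kl{random 2-to-1 condition} for $\minion M$, and set $\varepsilon := \tau/\mathcal{C}^2$. We will show that if $\Sigma$ is \kl{satisfied} in $\minion M$, then $\Psi$ has a \kl{labeling} satisfying more than an $\varepsilon$-fraction of \kl{constraints}. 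This implies that NO-instances map to NO-instances.

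Fix a satisfying \kl{interpretation} $\zeta$. For every symbol we have an arity-$2n$ function $\zeta(f_u)$ or an arity-$n$ function $\zeta(g_v)$ in $\minion M$, and for each edge $e=(u,v)$ we have $\zeta(g_v) = \minor{\zeta(f_u)}{\pi_e}$. Define a random labeling by choosing independently for each $u \in L$ a uniform element of $\Sel(\zeta(f_u))$, and for each $v \in R$ a uniform element of $\Sel(\zeta(g_v))$. One point needs care: the condition only guarantees that $\Sel(\minor{f}{\pi}) \cap \pi(\Sel(f))$ is nonempty with probability $\tau$, and does not say $\Sel$ is nonempty in general. If some selection is empty, we label arbitrarily; this only weakens the bound, and the events counted below force both sets to be nonempty.

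Now fix $u \in L$ and write $f=\zeta(f_u)$. By \kl{richness}, a uniformly random edge $e=(u,v)$ adjacent to $u$ yields a uniformly random \kl{2-to-1 minor map} $\pi_e$. Moreover, $\Sel(\zeta(g_v)) = \Sel(\minor{f}{\pi_e})$ depends only on the function $\minor{f}{\pi_e}$, so the random 2-to-1 condition gives
\[
\Pr_{e}\bigl[\pi_e(\Sel(f)) \cap \Sel(\zeta(g_v)) \neq \emptyset\bigr] \geq \tau .
\]
Whenever this event holds, the random labels satisfy $\pi_e(\sigma(u)) = \sigma(v)$ with probability at least $1/\mathcal{C}^2$, since both selected sets have size at most $\mathcal{C}$. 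Averaging first over edges at $u$ and then over $u$ requires the edge distribution to be compatible with this vertex-by-vertex averaging. We will assume the standard regularity, namely that $L$-vertices have equal degrees, which is implicit in the richness formulation. Under it, the expected fraction of satisfied constraints is at least $\tau/\mathcal{C}^2$, so some labeling achieves at least this fraction.

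To finish, we choose the parameters carefully. We take $\varepsilon$ slightly smaller than $\tau/\mathcal{C}^2$, say $\varepsilon=\tau/(2\mathcal{C}^2)$, so that "more than $\varepsilon$" holds strictly. Then a $\mathsf{NO}$-instance of $\GapRich_n[1,\varepsilon]$ cannot produce a $\Sigma$ satisfied in $\minion M$. Combined with completeness, this proves the reduction. The main obstacle is not the arithmetic but justifying that the per-vertex richness distribution transfers to the global edge distribution. If $L$-degrees are not assumed uniform, we would first weight edges by $1/\deg(u)$, or note that the conjecture's instances can be taken left-regular without loss.
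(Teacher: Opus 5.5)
Your proposal is correct and follows the paper's proof essentially verbatim: the same minor condition, the same random decoding from $\Sel$, richness giving a uniformly random 2-to-1 minor at each left vertex, and the same choice $\varepsilon = \tau/(2\mathcal{C}^2)$. Your handling of possibly empty $\Sel$ sets is a small welcome refinement. The left-regularity assumption you worry about is unnecessary, however: each edge has exactly one endpoint in $L$ and the bound $\tau/\mathcal{C}^2$ holds at every $u \in L$, so the global expected fraction of satisfied constraints is a degree-weighted average of the per-vertex fractions and is therefore at least $\tau/\mathcal{C}^2$ whatever the degrees are.
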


\begin{proof}
    Fix a \kl{minion} $\minion{M}$ which satisfies the \kl{random 2-to-1 condition} with constants $\mathcal C, \tau > 0$ and a choice function $\Sel$. We put $\varepsilon = \tau/(2M^2)$.
    Given a \kl{Rich 2-to-1 Label Cover} instance $\Psi = (L \cup R, E, 2n, n, \Pi)$, we construct a minor condition $\Sigma$ as in \eqref{eq:minor_condition_construction_from_label_cover}. It is obvious that $\Sigma$ can be constructed in polynomial time. Therefore, it remains to show that this is a proper reduction.

    \paragraph{Completeness} Suppose that $\Psi$ is a $\mathsf{YES}$-instance of $\GapRich_n[1,\varepsilon]$ and all constraints in $\Pi$ are satisfied with a \kl{labeling} $\sigma : L \cup R \to [2n]$. As discussed above, $\sigma$ induces an \kl{interpretation} that assigns the \kl{projection} $\tuple x \mapsto x_{\sigma (w)}$ to the symbol corresponding to vertex $w$. This interpretation satisfies all identities, and thus $\Sigma$ is \kl{trivial}.

    \paragraph{Soundness} By contraposition, suppose that $\Sigma$ is satisfied in $\minion{M}$, which is witnessed by an interpretation $\zeta$. Our goal is to show there is a labeling $\sigma : L \cup R \to [2n]$, which satisfies more than $\varepsilon$-fraction of constraints in $\Pi$. We apply the probabilistic method. Consider a random labeling $\sigma$, such that for every $u \in L$, the value $\sigma(u)$ is chosen uniformly at random from $\Sel(\zeta(f_u))$. Similarly, $\sigma(v)$ is chosen uniformly at random from $Sel(\zeta(g_v))$ for every $v \in R$. 
    
    Observe that for every identity as in \eqref{eq:minor_condition_construction_from_label_cover}, the function $\zeta(g_v)$ is a \kl{minor} of $\zeta(f_u)$ with respect to the \kl{2-to-1} map $\pi_e$. Fix any $u \in L$ and let $\Sigma(u)$ be the set of identities in $\Sigma$ involving the symbol $f_u$. \kl{Richness} of $\Psi$ ensures that $\zeta(g_v)$, obtained by picking the right-hand side symbol $g_v$ in a random identity in $\Sigma(u)$, is distributed uniformly over all \kl{2-to-1} \kl{minors} of $\zeta(f_u)$. By the \kl{random 2-to-1 condition} then, at least a $\tau$-fraction of identities in $\Sigma(u)$ satisfies $\pi_e(\Sel(\zeta(f_u))) \cap \Sel(\zeta(g_v)) \neq \emptyset$. For every such identity, since both sets $\Sel(\zeta(f_u))$ and $Sel(\zeta(g_v))$ have sizes at most $M$, we have $\pi_e(\sigma(u)) = \sigma(v)$ with probability at least $1/M^2$ over the choice of $\sigma$. Hence, the expected fraction of satisfied constraints in $\Pi$ adjacent to $u$ is at least $\tau/M^2$. By averaging over $u \in L$, we obtain that the expected number of constraints satisfied by $\sigma$ is at least $\tau/M^2 > \varepsilon$.
\end{proof}

\end{document}